\newtheorem{thm}{Theorem}[section]
\newtheorem{cor}[thm]{Corollary}
\newtheorem{lem}[thm]{Lemma}
\newtheorem{prop}[thm]{Proposition}
\newtheorem{ex}[thm]{Example} 
\newtheorem{rem}[thm]{Remark}
\newtheorem{claim}[thm]{Claim}
\numberwithin{equation}{section}
\newcommand{\eproof}{\hfill\rule{2.2mm}{3.0mm}}
\newcommand{\R}{{\mathbb R}}
\newcommand{\Z}{{\mathbb Z}}
\newcommand{\T}{{\mathbb T}}
\newcommand{\Q}{{\mathbb Q}}
\newcommand{\N}{{\mathbb N}}
\renewcommand{\eqref}[1]{(\ref{#1})}
\newcommand{\beq}[1]{\begin{equation} \label{#1}}
\newcommand{\eeq}{\end{equation}}
\begin{document}
\title{The $abc$-problem for Gabor systems}
\author{Xin-Rong Dai}

\address{School of Mathematical and Computational Science\\
Sun Yat-Sen University, Guangzhou, 510275, P. R. China.}
\email{daixr@mail.sysu.edu.cn}

\author{Qiyu Sun}
\address{Department of Mathematics, University of Central Florida, Orlando, FL 32816, USA}
\email{qiyu.sun@ucf.edu}



\begin{abstract}
 A  Gabor system
generated by a window function $\phi$  and a rectangular lattice $a \Z\times \Z/b$
is given by
$$
{\mathcal G}(\phi, a \Z\times \Z/b):=\{e^{-2\pi i n t/b} \phi(t- m a):\ (m,  n)\in \Z\times \Z\}.$$
One of fundamental problems in  Gabor analysis is to identify  window functions $\phi$  and time-frequency shift lattices
$a \Z\times \Z/b$
such that the corresponding Gabor system ${\mathcal G}(\phi, a \Z\times \Z/b)$ is a Gabor frame for
$L^2(\R)$, the space  of all square-integrable functions on the  real line $\R$.
The range 
 of density parameters $a$ and $b$ such that  the  Gabor system
${\mathcal G}(\phi, a\Z\times \Z/b)$ is a frame for $L^2(\R)$ 
 is
fully known surprisingly only  for  few 
window functions, including
 the Gaussian window  and  totally positive windows.
Janssen's tie suggests that the range of density parameters could be  arbitrarily complicated for window functions, especially outside Feichtinger algebra.
 An eye-catching example of such a window function
is the ideal window function on an interval.
In this paper,
we provide a  full classification of   triples
$(a,b,c)$   for which
the  Gabor system ${\mathcal G}(\chi_I, a \Z\times \Z/b)$ generated by
the ideal window function $\chi_I$ on an interval $I$ of length $c$ is a Gabor frame for $L^2(\R)$.
For the  classification of such triples $(a, b, c)$ 
(i.e., the $abc$-problem for Gabor systems), we  introduce  maximal invariant sets of some piecewise linear transformations
and establish the equivalence between Gabor frame property and triviality of
maximal invariant sets. We then study
 dynamic system associated with the  piecewise linear transformations
and explore
 various 
  properties of their  maximal invariant sets. By  performing
  holes-removal surgery for  maximal invariant sets to shrink and augmentation operation for a line with marks to expand, we finally parameterize
 those triples $(a, b, c)$ for which  maximal invariant sets are trivial.
The   novel techniques
involving 
non-ergodicity of dynamical systems associated with some novel non-contractive and non-measure-preserving transformations
lead to
 our arduous answer to the $abc$-problem for Gabor systems. 



\end{abstract}

\maketitle

\vspace{-2ex}
\tableofcontents



\section{Introduction}

Let $L^2:=L^2(\R)$ be the space of all square-integrable functions on the real line $\R$ with
  the   inner product and norm  on $L^2$   denoted by $\langle \cdot, \cdot\rangle$ and
$\|\cdot\|_2$ respectively.
 In this paper,
 a  {\em Gabor system}  
generated by a window function $\phi\in L^2$  and a rectangular lattice $a \Z\times \Z/b$
is given by
\begin{equation} \label{gabor.def}
{\mathcal G}(\phi, a \Z\times \Z/b):=\{e^{-2\pi i n t/b} \phi(t- m a):\ (m,  n)\in \Z\times \Z\}\end{equation}
(\cite{BGL10,  grochenigbook, grocheniglinert03, HW01});  a {\em frame} for $L^2$  is a collection
${\mathcal F}$ of  functions in $L^2$ satisfying
\begin{equation}\label{frame.def}
0< 
\inf_{\|f\|_2=1} \Big(\sum_{\phi\in {\mathcal F}} |\langle f, \phi\rangle|^2\Big)^{1/2}\le
\sup_{\|f\|_2=1} \Big(\sum_{\phi\in {\mathcal F}} |\langle f, \phi\rangle|^2\Big)^{1/2} 
<\infty
\end{equation}
(\cite{casazzasurvey, cerdaseip02, christensenbook,  duffintams52}); and
 a {\em Gabor frame} is a  Gabor system
that forms a frame for   $L^2$ (\cite{heilsurvey, janssen01}).
Here we use $1/b$ to label the frequency spacing instead of $b$ in  other sources on Gabor theory, due to the convenience to state our
full classification of triples
$(a,b,c)$   for which
the  Gabor system ${\mathcal G}(\chi_I, a \Z\times \Z/b)$ generated by
the ideal window function $\chi_I$ on an interval $I$ of length $c$ is a Gabor frame for $L^2$.

\smallskip

 Gabor frames have  links to operator algebra and complex analysis, and they
  have been shown to be suitable for lots of applications involving time-dependent frequency. The history of the Gabor theory  could
 date back to  the completeness  claim in 1932 by von Neumann 
 \cite[p. 406]{neumannbook} and
the expansion conjecture in 1946 by Gabor \cite[Eq. 1.29]{gabor46}, with both  confirmed to be correct in later 1970.
Gabor frames
become widely studied after the landmark paper in 1986 by Daubechies, Grossmann  and Meyer,
where they proved that for any positive density parameters $a, b$ satisfying $a/b<1$
 there exists a compactly supported smooth function $\phi$ such that ${\mathcal G}(\phi, a\Z\times \Z/b)$
is a  Gabor frame \cite{dgm86}.

\smallskip

One of fundamental problems in  Gabor analysis is to identify  window functions  and time-frequency shift sets
such that the corresponding Gabor system  is a Gabor frame.
Given  a window function $\phi\in L^2$  and a rectangular lattice $a \Z\times \Z/b$,
 a well-known  necessary  condition for the  Gabor system ${\mathcal G}(\phi, a\Z\times \Z/b)$
to be  a  Gabor frame,  obtained via Banach algebra technique, is that the density parameters $a$ and $b$ satisfy
$a/b\le 1$
\cite{baggett90, daubechiesIEEE, janssen94,   landau93, Rieffel81}.
But that necessary condition on density parameters is far from providing an answer to the above fundamental problem.

\smallskip

The range of density parameters $a, b$ such that  the  Gabor system
${\mathcal G}(\phi, a\Z\times \Z/b)$ is a frame for $L^2$   is
fully known  stunningly only  for  small number of window functions $\phi$ 
\cite{grochenigstockler, janssenjfa03, janssenstrohmer,  lyubarskii92, seip92, seipwallsten92}.
Among those, the Gaussian window $\sqrt[4]{2} \exp(-\pi t^2)$ has received special attention
 \cite{gabor46, neumannbook}. It is conjectured by  Daubechies and Grossmann \cite{dg88} and later
proved independently  by Lyubarskii \cite{lyubarskii92}
and by Seip and Wallsten \cite{seip92, seipwallsten92}  via complex analysis technique
 that
the  Gabor system ${\mathcal G}(\sqrt[4]{2} \exp(-\pi t^2), a\Z\times \Z/b)$
associated with the Gaussian window is a  Gabor frame  if and only if $a/b<1$.
A significant advance on the range of density parameters was recently made by Gr\"ochenig and St\"ockler
that for a totally positive function  $\phi$ of finite type,
${\mathcal G}(\phi, a\Z\times \Z/b)$ is a Gabor frame if and only if $a/b<1$
\cite{grochenigstockler}.

\smallskip

  For a window function $\phi$ in Feichtinger's algebra, an important result proved by  Feichtinger and Kaiblinger  \cite{feichtinger04}
   states that the range of density parameters $a, b$ such that  the  Gabor system
${\mathcal G}(\phi, a\Z\times \Z/b)$ is a frame for $L^2$ is  an open domain on the plane. But for window functions outside the  Feichtinger algebra, the range of density parameters
could be  arbitrarily complicated, c.f. the  famous Janssen's tie \cite{janssen03}. A striking example of such a window function  is the ideal window function  on an interval  \cite{ckpams02, hangu, janssen03}.

\smallskip

Denote by $\chi_E$  the characteristic function on a set $E$.
Recall that  given an interval $I$,
 ${\mathcal G}(\chi_{I}, a\Z\times \Z/b)$
 is a Gabor frame if and only if ${\mathcal G}(\chi_{I+d}, a\Z\times \Z/b)$ is a Gabor frame
  for every $d\in \R$.
 By the above shift-invariance, the interval $I$ can be assumed to be
half-open  and have zero as its left endpoint, i.e., $I=[0, c)$ for some $c>0$. Thus the  range problem
for the ideal window on an interval reduces
to the so-called  $abc$-problem for Gabor  systems:
 {\em the  classification of all  triples $(a, b, c)$ of positive numbers such that ${\mathcal G}(\chi_{[0,c)}, a\Z\times \Z/b)$
is a   Gabor frame} \cite{ckpams02}.
In this paper, we   provide a complete answer to the above $abc$-problem for Gabor systems, see Theorems \ref{newmaintheorem1}--\ref{newmaintheorem5}.

\smallskip
By  dilation-invariance,
 ${\mathcal G}(\chi_{[0,c)}, a\Z\times \Z/b)$ is  a Gabor frame  if and only if
 $ {\mathcal G}(\chi_{[0,dc)}, da)\Z\times \Z/(db))$ is for any $d>0$. 
Thus  the $abc$-problem for Gabor systems can be reduced to
finding out all pairs $(a, b)$ of time-frequency spacing parameters such that ${\mathcal G}(\chi_{[0,1)}, a\Z\times \Z/b)$
are Gabor frames \cite{hangu, janssen03}, or  all pairs $(a, c)$ of time-spacing and window-size parameters
such that ${\mathcal G}(\chi_{[0,c)}, a\Z\times \Z)$ are Gabor frames.
In this paper, we 
 state our results without any normalization
  on any one of the time-spacing, frequency-spacing  and
window-size parameters
as it does not  help us much.

 \smallskip

 Our answer to the $abc$-problem for Gabor system is illustrated in Figure \ref{classification.fig}, where
 on the left subfigure we normalize the  frequency-spacing parameter $b$,  while on the right subfigure we normalize the window-size parameter $c$ and
 use the conventional frequency-spacing parameter $1/b$  as the $y$-axis,  c.f. Janssen's tie in \cite{janssen03}.
\begin{figure}[hbt]
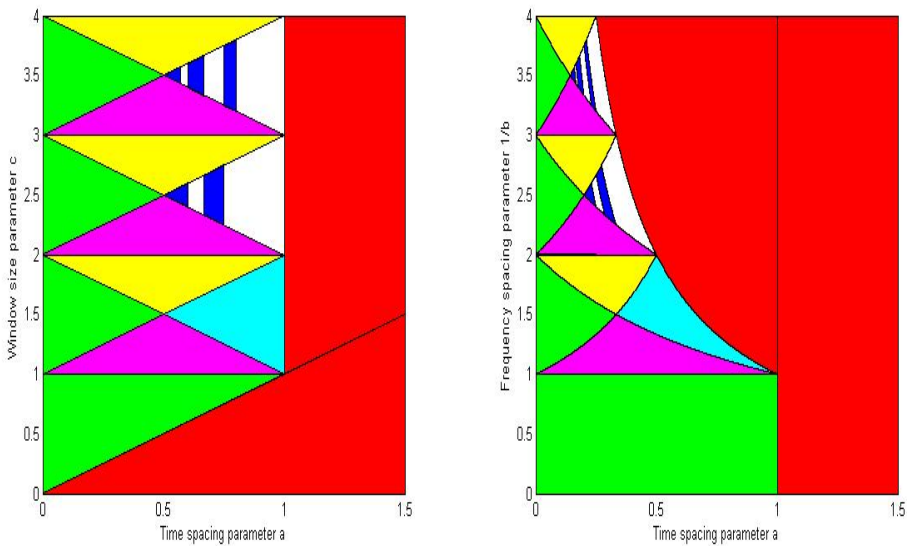

\centering
\begin{tabular}{cc}
  \includegraphics [width=62mm, height=78mm]{./classification_newclean.jpg} & \includegraphics [width=62mm, height=78mm]{./classification_janssentie_clean_new.jpg} 
   \end{tabular}
   \caption{\small Left: Classification of   pairs  $(a,  c)$  such that ${\mathcal G}(\chi_{[0,c)}, a\Z\times \Z)$ are Gabor frames.
Right: Classification of pairs $(a, 1/b)$ such that ${\mathcal G}(\chi_{[0,1)}, a\Z\times \Z/b)$ are Gabor frames. 
}
\label{classification.fig}
\end{figure}
For pairs $(a, c)$ in the  red region of the left subfigure of Figure \ref{classification.fig},
${\mathcal G}(\chi_{[0,c)}, a\Z\times \Z)$
are not Gabor frames by 
Theorem \ref{newmaintheorem1},
while for pairs $(a, c)$   in  the  green, blue and dark blue regions ${\mathcal G}(\chi_{[0,c)}, a\Z\times \Z)$
are Gabor frames by  
 Theorems \ref{newmaintheorem1}--\ref{newmaintheorem3}.
In the  yellow  region, 
it follows from Conclusion (VI) of Theorem \ref{newmaintheorem2} that
the set of pairs $(a, c)$ such that ${\mathcal G}(\chi_{[0,c)}, a\Z\times \Z)$
 are not  Gabor frames contains needles (line segments) of  lengths  ${\rm gcd}( \lfloor c\rfloor+1, p)/q-\{0, 1/q\}$ hanging vertically from the ceiling $\lfloor c\rfloor+1$
at every rational time shift location $a=p/q$.
In the  purple region, 
  we obtain from  Conclusion (VII) of Theorem \ref{newmaintheorem2} that
the set of pairs $(a, c)$ such that ${\mathcal G}(\chi_{[0,c)}, a\Z\times \Z)$
are not  Gabor frames contains   floors $\lfloor c\rfloor\ge 2$ and also needles (line segments) of  lengths  ${\rm gcd}( \lfloor c\rfloor, p)/q-\{0, 1/q\}$
growing vertically from  floors $\lfloor c\rfloor$
at every rational time shift location $a=p/q$.
It has rather complicated geometry for the set of pairs $(a, c)$ in the white region such that ${\mathcal G}(\chi_{[0,c)}, a\Z\times \Z)$
are not  Gabor frames. That set contains some needles (line segments) on the vertical line  growing from
 rational time shift locations  and  few needle holes (points) on the vertical line located at
 irrational time shifts  by  
 Theorems \ref{newmaintheorem4} and \ref{newmaintheorem5}.
From the above observations, we see that the range of density parameters $a, b$ such that
 the Gabor system ${\mathcal G}(\chi_{[0, c)}, a\Z\times \Z/b)$
 are Gabor frames is neither open nor connected, and it
 has very puzzling   structure, c.f. the openness of the range of density parameters
 for   a window function in Feichtinger's algebra
\cite{feichtinger04}.

\smallskip

The paper is organized as follows.
In Section \ref{maintheorem.section}, we state our main theorems of this paper.
 The first two main theorems (Theorems \ref{newmaintheorem2} and \ref{newmaintheorem3}) are proved in
 Sections \ref{gabornullspace1.section} and \ref{gabornullspace2.section}
 respectively.
After studying various  properties of
the dynamic system associated with   
 some non-contractive and non-measure-preserving transformations
in  Sections \ref{gaborirrational.section},
we parameterize those triples $(a, b, c)$ of positive numbers such that
${\mathcal G}(\chi_{[0,c)}, a\Z\times \Z/b)$ are Gabor frames,
and finally establish our most challenging results (Theorems \ref{newmaintheorem4} and \ref{newmaintheorem5})
in  Sections \ref{gaborrational1.section} and \ref{rational.section}.
%
In Appendix \ref{algorithm.appendix},  we provide a finite-step algorithm  to
 verify whether the  Gabor system  ${\mathcal G}(\chi_{[0,c)}, a\Z\times \Z/b)$
is a  Gabor frame  for any given triple of $(a, b, c)$ of positive numbers.
 In Appendix \ref{sampling.appendix}, we apply our results on Gabor systems to
 identify all intervals $I$ and time-sampling spacing lattices $b\Z\times a\Z$ such that
 signals $f$ in the shift-invariant  space
$$V_2(\chi_I, b\Z)=\Big\{\sum_{\lambda\in b\Z}  d(\lambda) \chi_I(\cdot-\lambda): \ \sum_{\lambda\in b\Z} |d(\lambda)|^2<\infty\Big\}$$
can be stably recovered from their equally-spaced samples $f(t_0+\mu), \mu\in a\Z$, for any initial sampling position $t_0\in \R$.
In Appendix \ref{ergodic.appendix}, we discuss non-ergodicity of
 a new non-contractive and non-measure-preserving transformation. 


\smallskip

In this paper, we will use the following notation.
We denote  the set of rational numbers by $\Q$;
  the integral part of a
real number $t$ by $\lfloor t\rfloor$; the sign of a real number $t$ by ${\rm sgn}(t)$;
the greatest common divisor between
 $p$ and $q$ in a lattice $r\Z$ with $r>0$ by ${\rm gcd}(p,q)$;
the Lebesgue measure of a measurable set $E$  by $|E|$;
the transpose of a matrix (vector) ${\bf A}$ by ${\bf A}^T$;
 the null space of a matrix ${\bf A}$ by $N({\bf A})$;
the column  vector whose entries take value $r\in \R$ by
${\bf r}:=(\cdots, r, r, r, \cdots)^T$;
and
 the space of all
square-summable vectors ${\bf z}:=({\bf z}(\lambda))_{\lambda\in \Lambda}$ on a given index set $\Lambda$ by $\ell^2:=\ell^2(\Lambda)$,
 with its standard norm and inner product denoted by
$\|\cdot\|_{2}:=\|\cdot\|_{\ell^2(\Lambda)}$ and $\langle\cdot, \cdot\rangle:=\langle\cdot,\cdot\rangle_{\ell^2(\Lambda)}$
respectively.
Also for $b>0$, we let ${\mathcal B}_b:=\{ ({\bf x}(\lambda))_{\lambda\in b\Z}  \ |\ {\bf x}(\lambda)\in \{0,1\}\
{\rm  for\ all} \ \lambda\in b\Z\}$
consist of all {\em binary column vectors} whose components taking values  either zero or one, and
${\mathcal B}_b^0:=\{ ({\bf x}(\lambda))_{\lambda\in b\Z}\in {\mathcal B}_b\ |  \ {\bf x}(0)=1\}$
contain all binary vectors taking value one at the origin.

\section{Main Theorems}
\label{maintheorem.section}

In this section,  we present our answer to the  $abc$-problem for Gabor systems,
and  a confirmation to the conjecture
in \cite[Section 3.3.5]{janssen03}.



\smallskip

Let us start from recalling  some known classification  of triples $(a, b , c)$ of positive numbers, see for instance \cite{dgm86, hangu, janssen03}.

\begin{thm}\label{newmaintheorem1} 
Let $(a,b,c)$  be a triple of positive numbers, and let ${\mathcal G}(\chi_{[0,c)}, a\Z\times \Z/b)$ be the Gabor system in \eqref{gabor.def} generated by
the characteristic function on the interval $[0, c)$.  Then the following
statements hold.

\begin{itemize}
\item [{(I)}]
If $ a>c$, then ${\mathcal G}(\chi_{[0,c)}, a\Z\times \Z/b)$ is not a  Gabor frame.
\item[{(II)}] If $a=c$, then  ${\mathcal G}(\chi_{[0,c)}, a\Z\times \Z/b)$ is   a Gabor frame  if and only if  $a\le b$.

 \item [{(III)}] If $a<c$ and $b\le  a$, then
 ${\mathcal G}(\chi_{[0,c)}, a\Z\times \Z/b)$ is not a  Gabor  frame.

\item[{(IV)}] If $a<c$ and $b\ge c$, then ${\mathcal G}(\chi_{[0,c)}, a\Z\times \Z/b)$ is  a  Gabor  frame.


 \end{itemize}
 \end{thm}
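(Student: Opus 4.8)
The plan is to handle the four statements separately, all reducing to one elementary fact together with a count of how the supports $[ma,ma+c)$ of the translates $\chi_{[0,c)}(\cdot-ma)$ overlap. The workhorse I would isolate first is: the exponential system $\{e^{-2\pi i n t/b}\}_{n\in\Z}$ is a tight frame for $L^2[0,\ell)$ with bound $b$ whenever $\ell\le b$, and is incomplete when $\ell>b$. For the tight-frame half, extend $f\in L^2[0,\ell)$ by zero to $L^2[0,b)$, where $\{b^{-1/2}e^{-2\pi i n t/b}\}_n$ is an orthonormal basis, and apply Parseval to get $\sum_n|\langle f,e^{-2\pi i n t/b}\rangle|^2=b\|f\|_2^2$; the incompleteness for $\ell>b$ is the necessary density bound $a/b\le 1$ recalled in the introduction (here with $a$ replaced by $\ell$).

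For (I), when $a>c$ the supports $[ma,ma+c)$ are pairwise disjoint and leave gaps $[ma+c,(m+1)a)$ of length $a-c>0$; any nonzero $f$ supported in a gap is orthogonal to every generator, so the lower frame bound vanishes. For (II), when $a=c$ the supports $\{[ma,(m+1)a)\}_{m\in\Z}$ tile $\R$, so the frame sum splits as $\sum_m\sum_n|\langle f\chi_{[ma,(m+1)a)},e^{-2\pi i n t/b}\rangle|^2$, and the lemma makes each block a tight frame for $L^2[ma,(m+1)a)$ exactly when $a\le b$, giving the global tight frame with bound $b$; if $a>b$ a single block is already incomplete, so the lower frame bound fails.

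For (III) I would split on $b$. When $b<a$ one has $a/b>1$, and the necessary density condition from the introduction already forbids a frame. The genuinely delicate case, and the main obstacle, is the boundary $b=a$, the critical density (covolume $a\cdot(1/b)=1$). There a Gabor frame must in fact be a Riesz basis, equivalently the Zak transform of the window must be bounded above and below almost everywhere. For $\phi=\chi_{[0,c)}$ with $c>a$, rescaling to $a=1$ the Zak transform $Z\phi(t,\omega)=\sum_k\chi_{[0,c)}(t-k)e^{2\pi i k\omega}$ is, for each $t$, a sum of $N\ge 2$ consecutive exponentials in $\omega$; such a Dirichlet-type sum vanishes for suitable $\omega$, so $Z\phi$ has zeros and (by continuity in $\omega$ on a positive-measure set of $t$) its essential infimum is $0$, killing the lower frame bound. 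Verifying that this sum genuinely vanishes, that the overlap count forces $N\ge 2$ precisely because $c>a$, and that $c=a$ is correctly excluded here, is the subtle point: no support or density argument reaches this boundary.

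For (IV), when $a<c\le b$ I would compute directly. Each translate support has length $c\le b$, so the lemma gives $\sum_n|\langle f,e^{-2\pi i n t/b}\chi_{[0,c)}(\cdot-ma)\rangle|^2=b\int_{ma}^{ma+c}|f(t)|^2\,dt$; summing over $m$ produces $b\int_\R|f(t)|^2 G(t)\,dt$ with overlap count $G(t)=\sum_m\chi_{[ma,ma+c)}(t)$, the number of integers in the half-open interval $((t-c)/a,\,t/a]$ of length $c/a>1$. Hence $1\le\lfloor c/a\rfloor\le G(t)\le\lceil c/a\rceil<\infty$, so the frame sum is squeezed between $b\lfloor c/a\rfloor\|f\|_2^2$ and $b\lceil c/a\rceil\|f\|_2^2$, both positive and finite, and the system is a frame. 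Thus (I), (II) and (IV) are routine consequences of the interval lemma and the overlap count, while the critical line $b=a$ in (III) is the one place requiring the Zak-transform machinery.
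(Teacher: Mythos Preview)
Your argument is correct in substance. The paper does not prove this theorem at all; it is stated as a known classification with references to \cite{dgm86, hangu, janssen03}, so there is no paper proof to compare against. Your route via the exponential-system lemma on intervals plus the overlap count $G(t)=\sum_m\chi_{[ma,ma+c)}(t)$ is exactly the standard elementary treatment, and the Zak-transform analysis at the critical line $b=a$ in (III) is the correct tool there.

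Two minor tightenings are worth making. First, your justification for incompleteness of $\{e^{-2\pi i n t/b}\}_{n\in\Z}$ on $L^2[0,\ell)$ when $\ell>b$ via the Gabor density bound is a bit circuitous; the direct one-liner is that $\chi_{[0,\epsilon)}-\chi_{[b,b+\epsilon)}$ (with $0<\epsilon<\ell-b$) is a nonzero function orthogonal to every $b$-periodic exponential. Second, in (III) with $b=a$, after rescaling to $a=1$, the exponent count $N(t)$ equals $2$ only on $[0,c-1)$ when $1<c<2$ and equals $1$ on $[c-1,1)$; you already say ``on a positive-measure set of $t$,'' and that is exactly what is needed, since for those $t$ the two-term Dirichlet sum $1+e^{-2\pi i\omega}$ vanishes at $\omega=1/2$, and continuity in $\omega$ forces $\operatorname{ess\,inf}|Z\chi_{[0,c)}|=0$.
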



The conclusions in Theorem \ref{newmaintheorem1} are illustrated in the  red and low right-triangle green regions of
Figure \ref{classification.fig}.

\smallskip

By Theorem \ref{newmaintheorem1}, it remains to classify triples $(a, b, c)$ of positive numbers such that $a<b<c$.
To do so,  for any given triple $(a, b, c)$ of positive numbers, we let
\begin{equation}\label{infinitematrix.def}
{\bf M}_{a, b, c}(t):=\big(\chi_{[0,c)}(t-\mu+\lambda)\big)_{\mu\in a\Z, \lambda\in b\Z}, \ t\in \R,
\end{equation}
 and we introduce a periodic set
 \begin{equation}\label{dabc2.def}
{\mathcal D}_{a, b, c}:=\big\{ t\in \R\big|\  {\bf M}_{a, b, c}(t) {\bf x}={\bf 2}\ {\rm for \ some\ binary\ vectors} \
{\bf x}\in {\mathcal B}_b^0\big\}\end{equation}
 that contains all
$t$ on the real line such that there exists a binary vector solution ${\bf x}\in {\mathcal B}_b^0$
 to the infinite-dimensional linear system
\begin{equation}\label{infinitelinearsystem.eq1}
 {\bf M}_{a, b, c}(t) {\bf x}={\bf 2}.\end{equation}
The uniform stability of  infinite matrices ${\bf M}_{a, b, c}(t), t\in \R$,
 in \eqref{infinitematrix.def},
\begin{equation}\label{framestabilitymatrix.tm.neweq1} 0<\inf_{t\in \R} \inf_{\|z\|_2=1} \|{\bf M}_{a, b, c}(t) {\bf z}\|_2\le
\sup_{t\in \R} \sup_{\|z\|_2=1} \|{\bf M}_{a, b, c}(t) {\bf z}\|_2<\infty,
\end{equation}
was used by Ron and Shen  \cite {RS97} to characterize
the  frame property  for the Gabor system ${\mathcal G}(\chi_{[0, c)}, a\Z\times \Z/b)$, see
 Lemma \ref{framestabilitymatrix.lem} in Section  \ref{gabornullspace1.section}.

\smallskip

We observe that infinite matrices ${\bf M}_{a, b, c}(t), t\in \R$,  in \eqref{infinitematrix.def}
have their rows containing
 $\lfloor c/b\rfloor+\{0, 1\}$  consecutive ones, and
their rows are  obtained by shifting one (or zero) unit of the previous row with possible reduction or expansion by one unit.
In the case that  $a/b$ is rational,  infinite matrices in \eqref{infinitematrix.def} have certain shift-invariance in the sense that
 their $(\mu+qa)$-th row  can be obtained by shifting $p$-units of the $\mu$-th row where $p$ and $q$
 are coprime integers satisfying $a/b=p/q$, c.f. \cite[Eq. 3.3.68]{janssen03}.
 The above observations could be illustrated from  examples below:
\begin{equation}\label{irrationalexample1.infinitematrix.eq1}
{\small {\bf M}_{a, b, c}(0)=\left(\begin{array}{ccccccccccccccccccc}
 \ddots& \vdots& \vdots& \vdots & \vdots & \vdots & \vdots & \vdots  \\
 &  0 & 1 & 1& 1 & 1  & 1& 1 & 0 \\
 & & 0 & 1 & 1& 1&  1& 1 & 1& 0 \\
 & & & 0 & 1 & 1& 1&  1 & 1 & 1& 0 \\
 &  & & & 0 & 1 & 1& 1& 1& 1& 0\\
 & & &&   0& 1 & 1& 1& 1& 1 & 1& 0 & \\
 & & & &  & 0 & 1 & 1  & 1& 1 & 1& 1& 0 & \\
 & & & && & 0 & 1 & 1& 1& 1& 1 & 1  & 0 &   \\
 & & & && & &0 & 1 & 1& 1& 1& 1  & 1&0 &  \\
 & & & && & & & 0 & 1 & 1& 1& 1  & 1& 0 &  \\
 & & & && & & & 0 & 1 & 1& 1& 1 & 1 & 1& 0 &  \\
 & & & && &  & &  &  \vdots& \vdots& \vdots& \vdots& \vdots & \vdots & \vdots&\ddots
\end{array}\right) } \end{equation}
for the triple $(a, b, c)= (\pi/4, 1, 23-11\pi/2)$ with irrational ratio between $a$ and $b$; and
\begin{equation}
\label{rationalexample1.infinitematrix.eq1} {\tiny  {\bf M}_{a, b, c}(0)=\left(\begin{array}{ccccccccccccccccccc}
 \ddots& \vdots& \vdots& \vdots & \vdots & \vdots & \vdots & \vdots \\
 & 0& 1 & 1& 1   & 1& 1&  0  &  \\
& &  0 & 1 & 1& 1  & 1 & 1& 0 \\
& & & 0 & 1 & 1& 1& 1& 1& 0 \\
& & & & 0 & 1 & 1& 1 &1 & 0 \\
& &  & & & 0 & 1 & 1&  1& 1& 0\\
& & & & &0&  1 & 1& 1& 1&  1& 0 & \\
& & & &&  &  0& 1 & 1& 1 & 1& 1& 0 & \\
& & & & &  & & 0 & 1 & 1  &1& 1& 0 & \\
& & & & && & &  0 & 1 & 1& 1& 1  & 0 &   \\
& & & & && & &0 & 1 & 1& 1&  1  & 1 & 0 &  \\
& & & & && & & & 0 & 1 & 1&  1 & 1& 1 &  0 &  \\
& & & & && & & & & 0 & 1 & 1&  1 & 1 &  0 &  \\
& & & & && & &  & & & 0 & 1 & 1&  1 & 1 &  0 &  \\
& & & & && & &  & & & 0 & 1 & 1&  1 & 1 & 1& 0 &  \\
& & & & &&  & & &  & &  &  \vdots& \vdots& \vdots& \vdots& \vdots & \vdots & \ddots
\end{array}\right) } \end{equation}
for the triple $(a, b, c)=(13/17, 1,  77/17)$ with
rational ratio between $a$ and $b$.
%
Due to the above special structure of the binary  infinite matrices
  in \eqref{infinitematrix.def}, 
 we may reduce their uniform stability \eqref{framestabilitymatrix.tm.neweq1}
 to the non-existence of trinary vectors in their null spaces, 
 and even further to the non-existence of
  binary vector solutions   of 
  the infinite-dimensional  linear systems \eqref{infinitelinearsystem.eq1}:
%
%
\begin{equation}\label{gabordabcequivalence}
{\mathcal G}(\chi_{[0,c)}, a\Z\times \Z/b) \ { is \ a \ Gabor\ frame\ if \ and \ only \ if} \ {\mathcal D}_{a, b, c}
 =\emptyset,
 \end{equation}
 see
  Theorem \ref{framenullspace1.tm} in Section \ref{gabornullspace1.section}.
 The necessity of the above equivalence 
 has been  applied implicitly in \cite{hangu, janssen03} for their  classifications.
   We apply the above equivalence  \eqref{gabordabcequivalence} to take one step forward in the  way  to solve the $abc$-problem for Gabor systems.


\begin{thm}\label{newmaintheorem2}
Let $(a,b,c)$  be a triple of positive numbers with $a<b<c$,
 and let ${\mathcal G}(\chi_{[0,c)}, a\Z\times \Z/b)$ be the Gabor system in \eqref{gabor.def} generated by
the characteristic function on the interval $[0, c)$. Set
$$c_0=c-\lfloor c/b\rfloor b.$$
 Then the following
statements hold.

\begin{itemize}

\item[{(V)}] {\rm (\cite{janssen03})} \ If $c_0 \ge a$ and
 $ c_0\le b-a$, then ${\mathcal G}(\chi_{[0,c)}, a\Z\times \Z/b)$ is  a Gabor frame.

 \item[{(VI)}] If
 $c_0 \ge a$ and  $c_0> b-a$,  then ${\mathcal G}(\chi_{[0,c)}, a\Z\times \Z/b)$ is not a  Gabor frame
 if and only if $a/b=p/q$ for some coprime integers, and either
\begin{itemize}
 \item [{1)}] $c_0>b-{\rm gcd}( \lfloor c/b\rfloor+1, p)b/q$ and ${\rm gcd}
 ( \lfloor c/b\rfloor+1, p)\ne \lfloor c/b\rfloor+1$, or
\item [{2)}]  $c_0>b-{\rm gcd}( \lfloor c/b\rfloor+1, p)b/q+b/q$ and
 ${\rm gcd}( \lfloor c/b\rfloor+1, p)= \lfloor c/b\rfloor+1$.
\end{itemize}

\item[{(VII)}] If
 $c_0< a$ and  $c_0\le  b-a$,  then ${\mathcal G}(\chi_{[0,c)}, a\Z\times \Z/b)$ is not a  Gabor frame
 if and only if either
\begin{itemize}
\item[{3)}]  $c_0=0$; or
 \item [{4)}] $a/b=p/q$ for some coprime integers $p$ and $q$,
   $0<c_0< {\rm gcd}(\lfloor c/b\rfloor, p)b/q$  and ${\rm gcd}(\lfloor c/b\rfloor, p)\ne \lfloor c/b\rfloor$; or

 \item [{5)}] $a/b=p/q$ for some coprime integers $p$ and $q$,
 $0<c_0< {\rm gcd}(\lfloor c/b\rfloor, p) b/q-b/q$  and ${\rm gcd}(\lfloor c/b\rfloor, p)= \lfloor c/b\rfloor$.

\end{itemize}

\end{itemize}

\end{thm}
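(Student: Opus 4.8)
The plan is to route everything through the equivalence \eqref{gabordabcequivalence}, so that proving the three conclusions amounts to deciding, for each regime, whether the periodic set $\mathcal{D}_{a,b,c}$ in \eqref{dabc2.def} is empty. By dilation-invariance I normalize $b=1$, write $K=\lfloor c\rfloor$ and $c=K+c_0$ with $0\le c_0<1$, and restate the system \eqref{infinitelinearsystem.eq1} combinatorially: one seeks a binary sequence $\mathbf{x}\colon\Z\to\{0,1\}$ with $\mathbf{x}(0)=1$ such that every window $[ma-t,\,ma-t+c)$, $m\in\Z$, contains exactly two indices carrying a $1$. First I would track the fractional part $\theta_m:=(ma-t)\bmod 1$, which undergoes the circle rotation $\theta_{m+1}=(\theta_m+a)\bmod 1$, and record at each step whether an integer departs on the left ($\delta_m=1$, precisely when $\theta_m\in[1-a,1)$) and whether an integer arrives on the right ($\epsilon_m=1$, precisely when $\{\theta_m+c_0\}\in[1-a,1)$). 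Writing $r_m=\lceil ma-t\rceil$ and $\mathrm{count}_m=K+\lfloor\theta_m+c_0\rfloor$, the number of $1$'s in consecutive windows changes by $\epsilon_m\,\mathbf{x}(r_m+\mathrm{count}_m)-\delta_m\,\mathbf{x}(r_m)$, so the requirement that every count equal $2$ becomes a family of forcing rules: a pure-departure step forces the leaving entry to be $0$, a pure-arrival step forces the entering entry to be $0$, and a simultaneous step transports the value unchanged from position $r_m$ to position $r_m+\mathrm{count}_m$.

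The entire case split is then governed by the relative position on the circle of the departure interval $[1-a,1)$ and the arrival interval $\{\theta+c_0\}\in[1-a,1)$, which is exactly the comparison of $c_0$ with $a$ and with $b-a=1-a$. In regime (V), where $a\le c_0\le 1-a$, the two intervals are disjoint, the two events never coincide, and \emph{every} left-departure is a pure-departure step forcing the leaving entry to vanish; since the forced entry $\mathbf{x}(0)=1$ must itself leave some window, this yields a contradiction for every $t$, so $\mathcal{D}_{a,b,c}=\emptyset$ and the Gabor system is a frame, reproving Janssen's conclusion (V) for all $a$. In regimes (VI) and (VII) the two intervals overlap, so transport steps exist; a short computation shows that the transporting steps carry $\mathrm{count}_m=K+1$ in regime (VI) (because the overlap lies inside $\{\theta\ge 1-c_0\}$) and $\mathrm{count}_m=K$ in regime (VII). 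This is precisely the source of the appearance of ${\rm gcd}(\lfloor c/b\rfloor+1,p)$ versus ${\rm gcd}(\lfloor c/b\rfloor,p)$ in the two statements.

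To finish I would split on the arithmetic of $a$. When $a=p/q$ with ${\rm gcd}(p,q)=1$, the orbit $\{\theta_m\}$ is the finite cyclic set $\{j/q:0\le j<q\}$ (up to the offset $-t$), and the matrix enjoys the shift-invariance noted after \eqref{infinitematrix.def}: the row indexed by $\mu+qa$ is the row indexed by $\mu$ shifted by $p$ units. This collapses the infinite forcing system to a finite one on a single period, in which the transport rule repeatedly advances a $1$ by the fixed step ($K+1$ or $K$) while the orbit advances by $p$ units per period. Whether these displacements can be reconciled into a closed, consistent periodic pattern containing a $1$ at the origin is controlled by ${\rm gcd}(K+1,p)$, respectively ${\rm gcd}(K,p)$, and the threshold in $c_0$ at which a consistent pattern first appears is $c_0>1-{\rm gcd}/q$; carrying out this finite bookkeeping—together with the degenerate analysis of $c_0=0$, where every step is a transport step and a period-$K$ solution always exists, yielding conclusion 3)—should reproduce conditions 1)--2) and 4)--5) exactly, the extra $b/q$ in cases 2) and 5) coming from the boundary case ${\rm gcd}=K+1$ (resp. $=K$).

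I expect the main obstacle to be this precise rational analysis, specifically nailing the $c_0$-thresholds and the dichotomy ${\rm gcd}=K\!+\!1$ versus ${\rm gcd}\ne K\!+\!1$ while carefully handling the degenerate alignments in which window endpoints land exactly on the lattice (so that $\delta_m,\epsilon_m$ or $\mathrm{count}_m$ jump). The remaining necessity of rationality I expect to be more routine: for irrational $a$ the rotation is minimal, the forcing rules chain along a dense orbit, and a unit carrying a $1$ is eventually forced to leave through the pure-departure region, contradicting $\mathbf{x}(0)=1$; hence no binary solution exists, $\mathcal{D}_{a,b,c}=\emptyset$, and the system is a frame.
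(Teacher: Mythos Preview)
Your plan through \eqref{gabordabcequivalence} and the departure/arrival/transport bookkeeping is the paper's own approach in different clothing. The paper decomposes any $\mathbf{2}$-solution as $\mathbf{x}=\mathbf{x}_0+\mathbf{x}_1$ with $\mathbf{M}_{a,b,c}(t)\mathbf{x}_i=\mathbf{1}$ (see \eqref{frammatrix.tm.pf.eq4}) and then invokes Proposition~\ref{dabcbasic1.prop} to pin the gap between consecutive $1$'s in each $\mathbf{x}_i$ to $\lambda_{a,b,c}(t)\in\{Kb,(K{+}1)b\}$; this is exactly your ``transport step''. Your argument for (V) and your identification of the step as $K{+}1$ in (VI) and $K$ in (VII) match the paper line for line, and the paper's irrational argument is likewise a density statement: the sequence $t_n:=(t_0+n(K{+}1)b)\bmod a$ must remain in $[0,c_0+a-b)$, which is impossible for irrational $a/b$ because $\{t_n\}$ is dense in $[0,a)$. (Your phrase ``the rotation is minimal'' refers to the $\theta_m$-rotation by $a$, whereas the relevant orbit is that of $t$ under rotation by $(K{+}1)b$ on $\R/a\Z$; the two are related but you should make the translation explicit.)

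There is, however, one genuine gap. Your forcing rules follow a \emph{single} $1$ forward, but a $\mathbf{2}$-solution carries \emph{two} interleaved chains, one through $0$ and one through some $\lambda_1\in[b,Kb]\cap b\Z$. In the rational case $a/b=p/q$, each chain occupies a coset of $(m\,b/q)\Z$ in $\R/a\Z$, where $m=\gcd(K{+}1,p)$, and both cosets must sit inside the transport window $[0,c_0+a-b)$. When $m\ne K{+}1$ one may take $\lambda_1=mb$ and put both chains in the \emph{same} coset, so a single-chain analysis already yields $c_0>b-mb/q$ (case~1)). When $m=K{+}1$, every admissible $\lambda_1\in[b,Kb]$ lies outside $(K{+}1)b\Z$, forcing the two cosets to differ by a nonzero multiple of $b/q$; packing both into $[0,c_0+a-b)$ then costs the extra $b/q$ of case~2) (and symmetrically case~5) with $K$ in place of $K{+}1$). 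Your sketch names the extra $b/q$ but does not supply this two-chain mechanism, and without it the bookkeeping cannot distinguish 1) from 2). You also leave sufficiency untouched; the paper dispatches it with explicit periodic vectors supported on $\{0,\lambda_1\}+(K{+}1)b\Z$ (resp.\ $+Kb\Z$) and a direct verification that $\mathbf{M}_{a,b,c}(0)\mathbf{x}=\mathbf{2}$.
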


The conclusions in Theorem \ref{newmaintheorem2} are illustrated in the green, yellow and purple regions of
Figure \ref{classification.fig}.

\smallskip

 By Theorems \ref{newmaintheorem1} and \ref{newmaintheorem2},  we now classify those triples $(a, b, c)$ satisfying
 $a<b<c$ and $b-a<c_0:=c-\lfloor c/b\rfloor b<a$, which turns out to be very complicated.
Our approach is not to solve the  linear system \eqref{infinitelinearsystem.eq1} directly, and instead
to find  binary vector solutions ${\bf x}\in {\mathcal B}_b^0$
 of  a ``similar" infinite-dimensional
 linear system
 \begin{equation}\label{infinitelinearsystem.eq2}
{\bf M}_{a, b, c}(t) {\bf x}={\bf 1}. 
\end{equation}
Denote by ${\mathcal S}_{a, b, c}$ the periodic set of all $t\in \R$ such that
there exists a binary vector solution to the 
linear system
\eqref{infinitelinearsystem.eq2},
\begin{equation}\label{dabc1.def}
{\mathcal S}_{a, b, c}:=\big\{ t\in \R\big|\ {\bf M}_{a, b, c}(t) {\bf x}={\bf 1}\ {\rm for \ some\ vector} \
{\bf x}\in {\mathcal B}_b^0\big\}.
\end{equation}
The set ${\mathcal S}_{a, b, c}$  just introduced  is a supset of   ${\mathcal D}_{a, b, c}$
in \eqref{dabc2.def}, and the set ${\mathcal D}_{a, b, c}$ can be obtained from ${\mathcal S}_{a, b, c}$ by some set operations,
\begin{eqnarray*} {\mathcal D}_{a, b, c}&= &  \big({\mathcal S}_{a, b, c}\cap ([0, c_0+a-b)+a\Z)\cap ( {\mathcal S}_{a, b, c}-\lfloor c/b\rfloor b)\big)\nonumber\\
& & \cup \big({\mathcal S}_{a, b, c}\cap (\cup_{\lambda\in [b, (\lfloor c/b\rfloor -1) b]\cap b\Z} ({\mathcal S}_{a, b, c}-\lambda))\big),
\end{eqnarray*}
see  \eqref{dabc2subsetdabc1} and Theorem \ref{dabcsabc.thm}.
Hence  the classification of
triples $(a, b, c)$ of positive numbers such that
the Gabor system ${\mathcal G}(\chi_{[0,c)}, a\Z\times \Z/b)$ is a Gabor frame
reduces further to
characterizing
\begin{equation}\label{sabcemptynonempty1} {\rm  i)}\quad  {\mathcal S}_{a, b, c}=\emptyset,\end{equation}
and
\begin{equation}\label{sabcemptynonempty2}  \ {\rm   ii)} \quad
{\mathcal S}_{a, b, c}\ne\emptyset \ {\rm  but} \ {\mathcal D}_{a, b, c}=\emptyset.
\end{equation}

\smallskip

An  advantage of the above approach
 is that
  there is a {\em unique} solution ${\bf x}_t\in {\mathcal B}_b^0$ to the infinite-dimensional linear system
${\bf M}_{a,b,c}(t)
 {\bf x}_t={\bf 1}$ for $t\in {\mathcal S}_{a, b, c}$, while
 multiple binary vector solutions could exist for the  linear system
\eqref{infinitelinearsystem.eq1}
 for $t\in {\mathcal D}_{a, b, c}$, see Proposition \ref{uniqueness.cor}.
Denote by  $\lambda_{a,b,c}(t)$ the smallest positive index  in
$a\Z$ such that ${\bf x}_t(\lambda_{a,b,c}(t))=1$. This yields the
following one-to-one map on the set ${\mathcal S}_{a, b, c}$:
\begin{equation}\label{dabc1onetoonemap}
 {\mathcal S}_{a, b, c} \ni t\longleftrightarrow t+\lambda_{a,b,c}(t)\in {\mathcal S}_{a, b, c},
\end{equation}
because $\tau_{\lambda_{a,b,c}(t)} {\bf x}_t\in {\mathcal B}_b^0$ and
$${\bf M}_{a, b,c}
(t+\lambda_{a,b,c}(t))\tau_{\lambda_{a,b,c}(t)} {\bf x}_t
={\bf M}_{a, b,c} (t) {\bf x}_t={\bf 1},$$
where $\tau_{\lambda'}{\bf z}:= ({\bf z}(\lambda+\lambda'))_{\lambda\in b\Z}$
for $  {\bf z}:=({\bf z}(\lambda))_{\lambda\in b\Z}$.
Further inspection shows that the maps $t\rightarrow t+\lambda_{a,b,c}(t)$ and
 $ t+\lambda_{a,b,c}(t)\rightarrow t$ on ${\mathcal S}_{a, b, c}$
can be extended to  piecewise linear transformations  $R_{a, b,c}$ and $\tilde R_{a, b, c}$ on the real line $\R$.
Here for a given triple $(a,b,c)$ of positive numbers satisfying $a<b<c$ and
$b-a<c_0:=c-\lfloor c/b\rfloor b<a$, we define  piecewise linear transformations
$R_{a,b,c}$ and $\tilde R_{a,b,c}$ on the real line $\R$  by
\begin{equation}\label{rabcnewplus.def}
R_{a,b,c}(t):=\left\{\begin{array} {ll}  t+ \lfloor c/b\rfloor b +b &
{\rm if} \ t\in [0, c_0+a-b)+a\Z\\
t &
{\rm if} \ t\in [c_0+a-b, c_0)+a\Z\\
t+\lfloor c/b\rfloor b &
{\rm if} \ t\in [c_0, a)+a\Z,\end{array}\right.
\end{equation}
and
\begin{equation}\label{tilderabcnewplus.def}
\tilde R_{a,b,c}(t):=\left\{\begin{array} {ll}t-\lfloor c/b\rfloor b &
{\rm if} \ t\in [c-a, c-c_0)+a\Z\\
t &
{\rm if} \ t\in [c- c_0, c+b-c_0-a)+a\Z\\
  t-\lfloor c/b\rfloor b -b &
{\rm if} \ t\in [c+b-c_0-a, c)+a\Z.
\end{array}\right.
\end{equation}
 Our extremely important observation is that  ${\mathcal S}_{a, b, c}$  is  the {\em   maximal invariant set}
  under the piecewise linear transformations $R_{a, b, c}$
   and $\tilde R_{a, b, c}$,
   $$ 
 R_{a, b, c} {\mathcal S}_{a, b, c}\subset {\mathcal S}_{a, b, c} \quad {\rm and}\quad
 \tilde R_{a, b, c}  {\mathcal S}_{a, b, c}\subset {\mathcal S}_{a, b, c},$$
 that  has empty intersection with their black holes
  $[c_0+a-b, c_0)+a\Z$ and $ [c- c_0, c-c_0+b-a)+a\Z$, see 
 Theorem \ref{dabc1maximal.cor.thm}.

\smallskip

  Applying the above  maximal invariant set property for  ${\mathcal S}_{a, b, c}$, we  take another step forward in the direction to solve the $abc$-problem for Gabor systems.

\begin{thm}\label{newmaintheorem3}
Let $(a,b,c)$  be a triple of positive numbers with $a<b<c$ and $b-a<c_0:=c-\lfloor c/b\rfloor b<a$,
 and let ${\mathcal G}(\chi_{[0,c)}, a\Z\times \Z/b)$ be the Gabor system in \eqref{gabor.def} generated by
the characteristic function on the interval $[0, c)$. Set
$$ c_1=\lfloor c/b\rfloor b
-\lfloor (\lfloor c/b\rfloor b/a)\rfloor a.$$ Then the following
statements hold.

\begin{itemize}

\item[{(VIII)}] {\em (\cite{hangu, janssen03})} \ If $\lfloor c/b\rfloor=1$,  then
${\mathcal G}(\chi_{[0,c)}, a\Z\times \Z/b)$ is   a Gabor frame.

\item[{(IX)}] If   $\lfloor c/b\rfloor\ge 2$ and
 $c_1>2a-b$, then ${\mathcal G}(\chi_{[0,c)}, a\Z\times \Z/b)$ is a Gabor frame.

\item[{(X)}] If  $\lfloor c/b\rfloor\ge 2$ and
$c_1=2a-b$, then ${\mathcal G}(\chi_{[0,c)}, a\Z\times \Z/b)$ is  a Gabor frame if and only if
$a/b=p/q$ for some coprime integers $p$ and $q$,
$c_0\le b-a+b/q$ 
and $\lfloor c/b\rfloor+1=p$.

\item[{(XI)}] If $\lfloor c/b\rfloor\ge 2$ and
$c_1=0$, then ${\mathcal G}(\chi_{[0,c)}, a\Z\times \Z/b)$ is a Gabor frame if and only if $a/b=p/q$ for some coprime integers $p$ and $q$,
$c_0\ge a-b/q$ 
and
$\lfloor c/b\rfloor=p$.

\end{itemize}

\end{thm}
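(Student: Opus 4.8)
The plan is to route everything through the equivalence \eqref{gabordabcequivalence}, so that each of the four conclusions becomes a question of whether $\mathcal{D}_{a,b,c}=\emptyset$, and then to decide this using the maximal invariant set $\mathcal{S}_{a,b,c}$ of the piecewise linear transformations $R_{a,b,c}$ and $\tilde R_{a,b,c}$ of \eqref{rabcnewplus.def}--\eqref{tilderabcnewplus.def}, recovering $\mathcal{D}_{a,b,c}$ from $\mathcal{S}_{a,b,c}$ through the intersection formula of Theorem \ref{dabcsabc.thm}. Since $\mathcal{S}_{a,b,c}$ and the two black holes are $a$-periodic, I would first push the dynamics down to the circle $\R/a\Z$, on which $R_{a,b,c}$ descends to a two-branch interval translation map $\bar R$: the lower active interval $[c_0,a)$ is rotated by $c_1=\lfloor c/b\rfloor b\bmod a$, the upper active interval $[0,c_0+a-b)$ is rotated by $(c_1+b-a)\bmod a$ (using $a<b<2a$, so $b\equiv b-a$), and the two holes of common length $b-a$ are fixed. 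By Theorem \ref{dabc1maximal.cor.thm}, a point lies in $\mathcal{S}_{a,b,c}$ exactly when its entire bi-infinite $\bar R$-orbit avoids both holes, so the conclusions reduce to survivor-set questions for this open dynamical system, falling into the alternative \eqref{sabcemptynonempty1}--\eqref{sabcemptynonempty2}.

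The organizing observation is that the hypotheses mark precisely the degeneracies of $\bar R$. Since $c_1+b-a\in(0,2a)$, the upper branch wraps the circle exactly when $c_1\ge 2a-b$, and its rotation vanishes iff $c_1=2a-b$; dually $c_1=0$ is when the lower branch is the identity. For Conclusion (IX), $c_1>2a-b$, neither branch is fixed, and I would prove $\mathcal{S}_{a,b,c}=\emptyset$ by a covering argument: the active set $[0,a)$ minus the two holes is exhausted by finitely many backward iterates of the holes, because $c_1>2a-b$ forces a definite overlap of $\bar R(A)$ with a hole at each stage, draining the uncovered remainder at a uniform rate in at most $O(\lfloor c/b\rfloor)$ steps, \emph{uniformly} in whether $a/b$ is rational; hence $\mathcal{D}_{a,b,c}=\emptyset$. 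Conclusion (VIII), $\lfloor c/b\rfloor=1$, is the degenerate floor count in which the union term of the $\mathcal{D}$-formula ranges over $[b,0]\cap b\Z=\emptyset$, collapsing $\mathcal{D}_{a,b,c}$ to the single intersection $\mathcal{S}_{a,b,c}\cap\big([0,c_0+a-b)+a\Z\big)\cap(\mathcal{S}_{a,b,c}-b)$, which I would show is empty using the one-to-one map \eqref{dabc1onetoonemap}; this recovers the known result of \cite{hangu,janssen03}.

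For the boundary Conclusions (X) and (XI) one active interval is fixed by $\bar R$ and is readily checked to miss both holes, so $\mathcal{S}_{a,b,c}\neq\emptyset$ and the frame property is the delicate alternative \eqref{sabcemptynonempty2}. I would first record that $c_1=2a-b$ forces $(\lfloor c/b\rfloor+1)b$ to be an integer multiple of $a$, and $c_1=0$ forces $\lfloor c/b\rfloor b$ to be one, so in both cases $a/b=p/q$ with $p\mid\lfloor c/b\rfloor+1$ (resp.\ $p\mid\lfloor c/b\rfloor$) automatically; there is no irrational subcase. Writing $\lfloor c/b\rfloor+1=mp$ (resp.\ $\lfloor c/b\rfloor=mp$), the fixed interval recurs $m$ times among the floors $\lambda\in[b,(\lfloor c/b\rfloor-1)b]\cap b\Z$, and the overlaps $\mathcal{S}_{a,b,c}\cap(\mathcal{S}_{a,b,c}-\lambda)$ are empty for all such $\lambda$ precisely when $m=1$, i.e.\ $\lfloor c/b\rfloor+1=p$ (resp.\ $\lfloor c/b\rfloor=p$). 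The remaining endpoint intersection then vanishes exactly at the threshold $c_0\le b-a+b/q$ (resp.\ $c_0\ge a-b/q$), which measures, at the granularity $b/q$ of the rational rotation, whether the translated fixed interval just clears its companion copy of $\mathcal{S}_{a,b,c}$.

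The step I expect to be the main obstacle is making the last two reductions \emph{exact}. In (X) and (XI) this is the precise bookkeeping that converts the vanishing of the second-floor overlaps into the single divisibility equality $m=1$ and pins the endpoint thresholds $c_0\le b-a+b/q$ and $c_0\ge a-b/q$ down to the correct side, rather than only up to an ambiguity of size $b/q$; this is where the holes-removal surgery on $\mathcal{S}_{a,b,c}$ must be carried out most carefully. In (IX) the analogous difficulty is proving emptiness of $\mathcal{S}_{a,b,c}$ uniformly across rational $a/b$, where periodic orbits of $\bar R$ exist a priori and must each be shown to enter a hole; the covering argument must therefore be quantitative and independent of the arithmetic of $a/b$.
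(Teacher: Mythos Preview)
Your overall strategy for (IX), (X), and (XI)---reduce to $\mathcal D_{a,b,c}=\emptyset$, compute or bound $\mathcal S_{a,b,c}$ via the dynamics of $R_{a,b,c}$, and feed the result into the intersection formula of Theorem~\ref{dabcsabc.thm}---is exactly the paper's. For (IX) the paper also proves $\mathcal S_{a,b,c}=\emptyset$ by a finite covering, but note that your ``$O(\lfloor c/b\rfloor)$ steps'' is wrong: the paper's bound is $L=\max\big(\lfloor(c_0+a-b)/(c_1+b-2a)\rfloor,\ \lfloor(a-c_0)/(a-c_1)\rfloor\big)$, which can be arbitrarily large as $c_1\downarrow 2a-b$ and has nothing to do with $\lfloor c/b\rfloor$. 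The covering itself, however, goes through unchanged and needs no separate treatment of the rational case. For (VIII) the paper does \emph{not} go through Theorem~\ref{dabcsabc.thm}; it argues directly that any ${\bf x}\in\mathcal B_b^0$ with ${\bf M}_{a,b,c}(t){\bf x}={\bf 2}$ must be the all-ones vector (since at most two columns are active per row when $c<2b$) and then derives $t-\mu\notin[c,2b)+b\Z$ for all $\mu\in a\Z$, a contradiction in both the rational and irrational cases. Your route via Theorem~\ref{dabcsabc.thm} would have to show $\mathcal S_{a,b,c}\cap([0,c_0+a-b)+a\Z)\cap(\mathcal S_{a,b,c}-b)=\emptyset$, which is true a posteriori but is not obviously delivered by the one-to-one map \eqref{dabc1onetoonemap} alone.

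There are two concrete gaps in your treatment of (X) and (XI). First, you only argue $\mathcal S_{a,b,c}\supset[0,c_0+a-b)+a\Z$ (resp.\ $\supset[c_0,a)+a\Z$); the paper needs and proves \emph{equality}. It gets this by the same covering mechanism as in (IX): when $c_1=2a-b$ the lower branch rotates by $c_1-a=-(b-a)$, and iterating $R_{a,b,c}$ on the $\tilde R$-hole sweeps out all of $[c_0+a-b,a)+a\Z$, forcing $\mathcal S_{a,b,c}=[0,c_0+a-b)+a\Z$ exactly (and dually for $c_1=0$). Without this you cannot conclude the sufficiency direction. Second, your decomposition ``union term empty $\Leftrightarrow m=1$'' and ``endpoint term empty $\Leftrightarrow$ threshold'' is not how the logic actually splits. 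In (X), once $\mathcal S_{a,b,c}=[0,c_0+a-b)+a\Z$, the paper tests $([0,c_0+a-b)+a\Z)\cap([0,c_0+a-b)+\lambda+a\Z)=\emptyset$ over the \emph{full} range $\lambda\in[b,\lfloor c/b\rfloor b]\cap b\Z$; the obstruction $\lambda=pb$ (giving $m=1$) and the obstruction from the $k$ with $qk\equiv 1\pmod p$ (giving $c_0+a-b\le b/q$) both live in this single range and are not allocated to separate pieces of the formula. Indeed, even when $m=1$ the union term over $\lambda\in[b,(\lfloor c/b\rfloor-1)b]$ can be nonempty if $c_0+a-b>b/q$, so your ``precisely when $m=1$'' is false as stated. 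In (XI) the endpoint term is automatically empty because $\mathcal S_{a,b,c}\cap([0,c_0+a-b)+a\Z)=\emptyset$, so the threshold $c_0\ge a-b/q$ must come entirely from the union term, again contradicting your proposed split. You have correctly identified this bookkeeping as the crux, but the specific decomposition you propose does not survive contact with the actual overlap analysis.
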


The conclusions in Theorem \ref{newmaintheorem3} are illustrated in the blue and dark blue regions of
Figure \ref{classification.fig}.

By Theorems \ref{newmaintheorem1}, \ref{newmaintheorem2} and \ref{newmaintheorem3}, it remains to classify
all triples $(a, b, c)$ of positive numbers satisfying
$a<b<c, \lfloor c/b\rfloor\ge 2, b-a<c_0:=c-\lfloor c/b\rfloor b<a$ and $0<c_1:=\lfloor c/b\rfloor b
-\lfloor (\lfloor c/b\rfloor b/a)\rfloor a<2b-a$. 
For that purpose, we  need explicit construction of the maximal invariant set ${\mathcal S}_{a, b, c}$
  if it is nonempty.
  We observe that  Hutchinson's remarkable construction
\cite{hutchinson81} does not apply as  
piecewise linear transformations $R_{a, b, c}$ and $\tilde R_{a,b,c}$ are {\em non-contractive}.
On the other hand,  from its maximal invariance, we obtain that
for the triple $(a, b, c)=(\pi/4, 1, 23-11\pi/2)$ with irrational time-frequency-spacing ratio $b/a$,
the maximal invariant set
$${\mathcal S}_{a, b, c}= \big[ 18-\frac{23\pi}{4},11-\frac{7\pi}{2}\big) \cup \big[ 12-\frac{15\pi}{4} ,5-
\frac{3\pi}{2} \big) \cup \big[ 6-\frac{7\pi}{4},  17-\frac{21\pi}{4}\big)  +\frac{\pi}{4} \Bbb{Z}
$$
 has its complement consisting of three holes of size $1-\pi/4$ on one period,
and that  for the triple $(a, b, c)=(13/17, 1, 77/17)$ with rational time-frequency-spacing ratio $b/a$,
$$ \mathcal{S}_{a,b,c}  =  \big[\frac{2}{17}, \frac{3}{17}\big)\cup\big [\frac{9}{17}, \frac{10}{17}\big)\cup\big [\frac{12}{17}, \frac{13}{17}\big)+\frac{13}{17}\Z$$
is composed of three intervals of same length $1/17$ on the period $[0, 13/17)$, see Examples \ref{irrationalexample1} and \ref{rationalexample1}.
A breakthrough of this paper is to show that if ${\mathcal S}_{a, b, c}\ne \emptyset$ then the black hole $[c_0+a-b, c_0)+a\Z$  of the transformation $R_{a, b, c}$
attracts the black hole $[c-c_0, c-c_0+b-a)+a\Z$ of the other transformation $\tilde R_{a,b,c}$
when applying $R_{a, b, c}$ {\em finitely many times}, i.e.,
$$(R_{a, b, c})^L ([c-c_0, c-c_0+b-a)+a\Z)=[c_0+a-b, c_0)+a\Z
$$
for some nonnegative integer $L$, and hence
\begin{equation}\label{sabc.complement}
{\mathcal S}_{a, b, c}=\R\backslash \big(\cup_{n=0}^L (R_{a, b, c})^n([c-c_0, c-c_0+b-a)+a\Z\big),
\end{equation}
see Theorems \ref{dabc1holes.tm} and  \ref{dabc1discreteholes.tm}, and Examples \ref{irrationalexample1}, \ref{rationalexample1} and \ref{rationalexample2}.
The above construction of  the maximal invariant set ${\mathcal S}_{a, b, c}$
 leads to 
 the following characterization of \eqref{sabcemptynonempty2}:
{\em ${\mathcal S}_{a, b, c}\ne\emptyset$  but  ${\mathcal D}_{a, b, c}=\emptyset$ if and only if
$ (\lfloor b/c\rfloor+1)|{\mathcal S}_{a, b, c}\cap [0, c_0+a-b)|+
\lfloor c/b\rfloor |{\mathcal S}_{a, b, c}\cap [c_0, a)|=a$,} see Theorem \ref{sabcstar.tm}.

\smallskip

Now it remains to discuss most challenging characterization of \eqref{sabcemptynonempty1}: classifying all triples $(a, b, c)$ such that ${\mathcal S}_{a, b, c}\ne \emptyset$.
By \eqref{sabc.complement},  the maximal invariant set ${\mathcal S}_{a, b, c}$
has its complement composed by finitely many holes on a period.
So we may squeeze out those holes on the line  and then reconnect their  endpoints.
The above  holes-removal
 surgery  could be described by the map
  \begin{equation}\label{xabc.def}
Y_{a,b,c}(t):= {\rm sgn}(t) |[\min(0,t), \max(0, t))\cap {\mathcal S}_{a, b, c}|
\end{equation}
on the line  in the sense that
%
it is an isomorphism from the set ${\mathcal S}_{a, b, c}$ to the {\em line
with marks} (image of the holes).
In Figure \ref{holesremoval.fig} below, we illustrate
the performance of
the holes-removal surgery
via
$$a\T\ni  a \exp(2\pi i t/a)\longmapsto Y_{a, b, c}(a)
\exp\big(-2\pi i Y_{a,b,c}(t)/Y_{a,b,c}(a)\big)\in Y_{a, b, c}(a)\T,$$
where 
 $\big(\frac{\pi}{4}, 1, 23-\frac{11\pi}{2}\big)$, 
 $\big(\frac{6}{7}, 1, \frac{23}{7}\big)$,
  $\big(\frac{13}{17}, 1, \frac{77}{17}\big)$ and $\big(\frac{13}{17}, 1, \frac{75}{17}\big)$
 are used as  triples $(a, b, c)$ in the four subfigures respectively, c.f. Examples \ref{irrationalexample1}, \ref{rationalexample1} and \ref{rationalexample2}.
In  that figure, 
the  blue arcs  in the big circle  are contained in $a\exp(2\pi i {\mathcal S}_{a, b, c}/a)$,
 the red dashed arcs in the big circle belong to $a\exp(2\pi i (\R\backslash {\mathcal S}_{a, b, c})/a)$,
 and the
circled marks in the small circle are   $Y_{a, b, c}(a)
\exp\big(2\pi i K_{a, b, c}/Y_{a,b,c}(a)\big)$, where $K_{a, b, c}$ is the set of all marks on the line.
\begin{figure}[hbt]
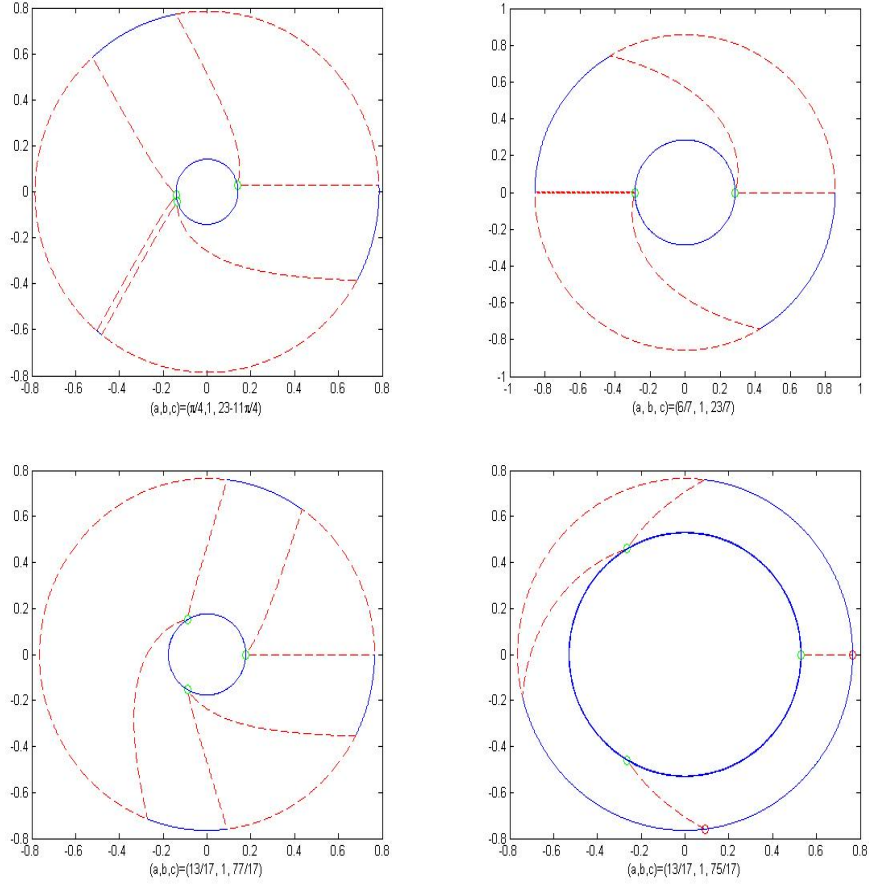

\centering
\begin{tabular}{cc}
  \includegraphics [width=60mm, height=60mm]{./irrationalholes.jpg} &
    \includegraphics [width=60mm, height=60mm]{./rationalholesspecial.jpg}\\
  \includegraphics [width=60mm, height=60mm]{./rationalholes.jpg} &
    \includegraphics [width=60mm, height=60mm]{./rationalholeszero.jpg}    \\
   \end{tabular}
\caption{\small  Holes-removal surgery for the maximal invariant set ${\mathcal S}_{a, b, c}$, see Examples \ref{irrationalexample1},
\ref{rationalexample1} and \ref{rationalexample2}. 
}
\label{holesremoval.fig}
\end{figure}

\smallskip

Having  the maximal invariant set ${\mathcal S}_{a, b, c}$  constructed in \eqref{sabc.complement} and holes-removal surgery described by
the map in \eqref{xabc.def}, we next consider
dynamic system of piecewise linear transformations $R_{a, b, c}$ and $\tilde R_{a,b,c}$.
We observe that those piecewise linear transforms are not {\em measure-preserving} on the
real line,
but their restrictions on  the maximal invariant set ${\mathcal S}_{a, b, c}$ are measure-preserving and one is the inverse of the other.
More importantly, we show that there is a rotation $S(\theta_{a, b, c})$ of the circle  $\R/(Y_{a, b, c}(a)\Z)$,
\begin{equation}\label{sthetaabc.def}
S(\theta_{a, b,c}) (z+Y_{a, b, c}(a)\Z)=\theta_{a, b,c}+
z+Y_{a, b, c}(a)\Z,\quad  z\in \R/(Y_{a, b, c}(a)\Z),\end{equation}
 such that 
 the following diagram commutes,
\begin{equation}\label{diagramcommutes} \begin{CD}
{\mathcal S}_{a, b, c} @>R_{a, b, c} >> {\mathcal S}_{a, b, c} \\
@V{Y_{a, b,c}}VV @VV{ Y_{a,b,c}}V\\
\R/(Y_{a, b, c}(a)\Z) @>>S(\theta_{a, b,c})>  \R/(Y_{a, b, c}(a)\Z)
\end{CD}
\end{equation}
see Theorem \ref{dabc1algebra.tm}.  In other words, the restriction of
piecewise linear transformations $R_{a, b, c}$ and $\tilde R_{a,b,c}$ on the maximal invariant set ${\mathcal S}_{a, b, c}$ can be thought as
{\em rotations on the circle},
see Appendix \ref{ergodic.appendix} for
the non-ergodicity  of  the piecewise linear transformation $R_{a, b, c}$
on the real line.

\smallskip
Having ``rotation" property of the piecewise linear transformation $R_{a, b, c}$,
we are almost ready to find  all triples $(a, b, c)$ such that ${\mathcal S}_{a, b, c}\ne\emptyset$.
We observe that the hole-removal surgery is {\em reversible}, that is,
the maximal invariant set ${\mathcal S}_{a, b, c}$ can be obtained from the real line by putting marks at appropriate positions
 and then inserting holes of appropriate sizes at
marked positions. But that augmentation operation  is much more delicate and complicated than the hole-removal surgery.
%

\smallskip

For the  irrational time-frequency lattice case (i.e., $a/b\not\in \Q$),
 holes to be inserted  should have the same size $b-a$ (c.f. the upper-left subfigure of Figure \ref{holesremoval.fig}) and
the location of marks could be parameterized by 
 the numbers of holes contained in the intervals $[0, c_0+a-b)$ and $[c_0, a)$ respectively,
  see  Theorem \ref{dabc1characterization.tm}.
 This  leads to a parametric characterization of the statement  ${\mathcal S}_{a, b, c}\ne \emptyset$, and
 the following classification of triples $(a, b, c)$  the  irrational time-frequency lattice case. 



\begin{thm}\label{newmaintheorem4}
Let $(a,b,c)$  be a triple of positive numbers
such that $a<b<c,   b-a<c_0:=c-\lfloor c/b\rfloor b <a, \lfloor c/b\rfloor\ge 2$,
$0<c_1:=\lfloor c/b\rfloor b
-\lfloor (\lfloor c/b\rfloor b/a)\rfloor a<2a-b$. Let ${\mathcal G}(\chi_{[0,c)}, a\Z\times \Z/b)$ be the Gabor system in \eqref{gabor.def} generated by
the characteristic function on the interval $[0, c)$.
\begin{itemize}

\item[{(XII)}] If   $a/b\not\in \Q$, then ${\mathcal G}(\chi_{[0,c)}, a\Z\times \Z/b)$ is not a Gabor frame if and only if there exist
nonnegative integers $d_1$ and $d_2$ such that (a)\ $ a\ne
c-(d_1+1)(\lfloor c/b\rfloor+1) (b-a)-(d_2+1) \lfloor c/b\rfloor
(b-a)\in a\Z$; (b)\ $\lfloor c/b\rfloor b+(d_1+1)(b-a)<c<\lfloor
c/b\rfloor b+b- (d_2+1) (b-a)$; and (c)\ $\# E_{a, b, c}=d_1$,
where
 $ m= ((d_1+d_2+1) c_1- c_0+(d_1+1)(b-a))/a $ and
\begin{eqnarray} \label{eabc.def} E_{a, b, c} & = & \big\{ n\in [1, d_1+d_2+1]\big|
 \ n (c_1 -m(b-a))\nonumber\\
&& \quad \in [0, c_0-(d_1+1)(b-a)) +(a-(d_1+d_2+1)(b-a))\Z\big\}.\end{eqnarray}

\end{itemize}

\end{thm}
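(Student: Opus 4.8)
The plan is to combine the equivalence \eqref{gabordabcequivalence} with the measure criterion of Theorem~\ref{sabcstar.tm} and the parametric description of the maximal invariant set obtained from the augmentation operation. By \eqref{gabordabcequivalence}, the system $\mathcal{G}(\chi_{[0,c)},a\Z\times\Z/b)$ fails to be a Gabor frame exactly when $\mathcal{D}_{a,b,c}\ne\emptyset$. Since $\mathcal{D}_{a,b,c}\subset\mathcal{S}_{a,b,c}$, this forces $\mathcal{S}_{a,b,c}\ne\emptyset$; and once $\mathcal{S}_{a,b,c}\ne\emptyset$, Theorem~\ref{sabcstar.tm} identifies $\mathcal{D}_{a,b,c}=\emptyset$ with the single packing identity $|\mathcal{S}_{a,b,c}\cap[0,c_0+a-b)|+\lfloor c/b\rfloor\,|\mathcal{S}_{a,b,c}\cap[c_0,a)|=a$ (here $\lfloor b/c\rfloor=0$ because $b<c$). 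The first step is therefore to record the dichotomy: \emph{$\mathcal{G}(\chi_{[0,c)},a\Z\times\Z/b)$ is not a frame if and only if $\mathcal{S}_{a,b,c}\ne\emptyset$ and this packing identity fails.} The whole of conclusion (XII) is the translation of these two conditions into the integer data $(d_1,d_2)$.

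Second, I would use the shape of $\mathcal{S}_{a,b,c}$ in the irrational regime. By \eqref{sabc.complement} its complement consists, on one period $[0,a)$, of finitely many holes, and for $a/b\notin\Q$ every hole has the common length $b-a$ while the black hole $[c_0+a-b,c_0)$ of $R_{a,b,c}$ is always one of them (Theorem~\ref{dabc1maximal.cor.thm}). Letting $d_1$ and $d_2$ count the holes lying in $[0,c_0+a-b)$ and in $[c_0,a)$ respectively, the total number of holes per period is $d_1+d_2+1$, so that $\mathcal{S}_{a,b,c}$ has measure $a-(d_1+d_2+1)(b-a)$ on a period, with $|\mathcal{S}_{a,b,c}\cap[0,c_0+a-b)|=(c_0+a-b)-d_1(b-a)$ and $|\mathcal{S}_{a,b,c}\cap[c_0,a)|=(a-c_0)-d_2(b-a)$. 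The demand that the two intervals genuinely hold $d_1$ and $d_2$ holes of size $b-a$ with positive leftover is the pair of inequalities $(d_1+1)(b-a)<c_0<a-d_2(b-a)$, which, after substituting $c=\lfloor c/b\rfloor b+c_0$, is precisely the double inequality of condition~(b).

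Third, and this is where I expect the main difficulty, I would characterize when such an $\mathcal{S}_{a,b,c}$ with prescribed counts $(d_1,d_2)$ actually exists. Here I would invoke the conjugacy \eqref{diagramcommutes} of Theorem~\ref{dabc1algebra.tm}: the holes-removal map $Y_{a,b,c}$ of \eqref{xabc.def} transports $R_{a,b,c}$ on $\mathcal{S}_{a,b,c}$ to the circle rotation $S(\theta_{a,b,c})$ of \eqref{sthetaabc.def} on $\R/(Y_{a,b,c}(a)\Z)$, whose circumference $Y_{a,b,c}(a)=a-(d_1+d_2+1)(b-a)$ is exactly the modulus appearing in \eqref{eabc.def}. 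Reversing the surgery, the augmentation operation, amounts to inserting $d_1+d_2+1$ holes of length $b-a$ at the points of a single finite orbit of $S(\theta_{a,b,c})$ and requiring that the black hole of $\tilde R_{a,b,c}$ be carried onto that of $R_{a,b,c}$ after finitely many steps, i.e. that the orbit close up consistently; this closure is a genuine Diophantine condition even for irrational $a/b$, and I would record it as the membership part $[\cdots]\in a\Z$ of condition~(a). The number of inserted holes falling into $[0,c_0+a-b)$ is counted by tracking how many iterates $n(c_1-m(b-a))$, $1\le n\le d_1+d_2+1$, land in $[0,c_0-(d_1+1)(b-a))$ modulo $a-(d_1+d_2+1)(b-a)$, with the auxiliary integer $m$ forced by the return condition; requiring this count to equal the prescribed $d_1$ is exactly $\#E_{a,b,c}=d_1$, condition~(c). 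The delicate points are that $m$ is determined by closure, that the rotation orbit terminates after exactly $d_1+d_2+1$ steps, and that admissible $(d_1,d_2)$ are unique, all of which I would extract from the explicit form \eqref{rabcnewplus.def} of $R_{a,b,c}$ together with the finite-return construction \eqref{sabc.complement}.

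Finally, I would assemble the three conditions. Given $(d_1,d_2)$ satisfying (a)--(c), the augmentation construction yields a nonempty $\mathcal{S}_{a,b,c}$ with the stated hole distribution, while the remaining part of (a), namely $[\cdots]\ne a$, is designed to be equivalent, after substituting the measures above and using the orbit relations forced by (c), to the failure of the packing identity of Theorem~\ref{sabcstar.tm}; hence $\mathcal{D}_{a,b,c}\ne\emptyset$ and the system is not a frame. Conversely, if the system is not a frame, then $\mathcal{S}_{a,b,c}\ne\emptyset$, its hole counts $d_1,d_2$ satisfy (b) and (c) and the closure part of (a) by the parametric description, and the failure of the packing identity yields the inequality part of (a). This closes both directions and establishes (XII).
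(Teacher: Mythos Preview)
Your approach is essentially the paper's: reduce via Theorem~\ref{framenullspace1.tm} to $\mathcal{D}_{a,b,c}\ne\emptyset$, split this via \eqref{dabc2subsetdabc1} and Theorem~\ref{sabcstar.tm} into $\mathcal{S}_{a,b,c}\ne\emptyset$ together with failure of the measure identity \eqref{sabcstar.tm.eq1}, and translate the first of these via Theorem~\ref{dabc1characterization.tm} (whose content you sketch in your third paragraph) into conditions \eqref{cnecessary.eq1}--\eqref{cnecessary.eq3}, i.e.\ the $a\Z$-membership in (a) together with (b) and (c).

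One slip to flag: the coefficient on $|\mathcal{S}_{a,b,c}\cap[0,c_0+a-b)|$ in Theorem~\ref{sabcstar.tm} is $\lfloor c/b\rfloor+1$, not $1$ --- the ``$\lfloor b/c\rfloor+1$'' you read off the Section~\ref{maintheorem.section} exposition is a typo for $\lfloor c/b\rfloor+1$. With the correct coefficient, substituting $|\mathcal{S}_{a,b,c}\cap[0,c_0+a-b)|=c_0+a-b-d_1(b-a)$ and $|\mathcal{S}_{a,b,c}\cap[c_0,a)|=a-c_0-d_2(b-a)$ into \eqref{sabcstar.tm.eq1} gives precisely $c-(\lfloor c/b\rfloor+1)(d_1+1)(b-a)-\lfloor c/b\rfloor(d_2+1)(b-a)=a$, whose negation is the ``$\ne a$'' clause of (a). Your coefficient $1$ would not produce this, so the final assembly step would fail as written; otherwise the plan is sound.
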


The conclusion of  Theorem \ref{newmaintheorem4} is illustrated in the white region of Figure \ref{classification.fig}.
In the above theorem, we insert  $d_1$ and $d_2$ holes contained
in the  intervals $[0, c_0+a-b)$ and $[c_0, a)$ respectively,
and put marks at
$\cup_{n=1}^{d_1+d_2+1} (n (c_1-m(b-a))+ (a-(d_1+d_2+1)(b-a))\Z)$.

\smallskip

For the  rational time-frequency lattice case (i.e., $a/b\in \Q$),
 we write
$a/b=p/q$ for some coprime integers $p$ and $q$. Recall that
for $c\not\in b\Z/q$, ${\mathcal G}(\chi_{[0,c)}, a\Z\times \Z/b)$ is a  Gabor frame
 if and only if both ${\mathcal G}(\chi_{[0,\lfloor qc/b\rfloor b/q)}, a\Z\times \Z/b)$ and
${\mathcal G}(\chi_{[0,\lfloor qc/b+1\rfloor b/q)}, a\Z\times \Z/b) $ are  Gabor frames
\cite[Section 3.3.6.1]{janssen03}. Then it suffices to consider
$a/b=p/q$ and $c/b\in \Z/q$ for some coprime integers $p$ and $q$.
In that case, we show that  marks on the line are equally spaced and gaps to be inserted  have two different sizes,
which could be parameterized by
 the numbers of gaps of large and small sizes  contained in intervals $[0, c_0+a-b)$ and $[c_0, a)$, see Theorem \ref{dabc1discretecharacterization.tm}
 and c.f. Figure \ref{holesremoval.fig}.
   Applying the characterization  for ${\mathcal S}_{a, b, c}\ne \emptyset$
   in Theorem \ref{dabc1discretecharacterization.tm}, we
reach the last step to solve the $abc$-problem for Gabor systems.

\begin{thm}\label{newmaintheorem5}
Let $(a,b,c)$  be a triple of positive numbers
such that $a<b<c,   b-a<c_0 <a, \lfloor c/b\rfloor\ge 2$ and
$0<c_1<2a-b$, where  $c_0=c-\lfloor c/b\rfloor b \ \ {\rm and} \ \  c_1=\lfloor c/b\rfloor b
-\lfloor (\lfloor c/b\rfloor b/a)\rfloor a$. Let ${\mathcal G}(\chi_{[0,c)}, a\Z\times \Z/b)$ be the Gabor system in \eqref{gabor.def} generated by
the characteristic function on the interval $[0, c)$.

\begin{itemize}

\item[{(XIII)}] If    $a/b=p/q$ for some coprime
integers $p$ and $q$, and $c\in b\Z/q$,
 then ${\mathcal G}(\chi_{[0,c)}, a\Z\times \Z/b)$ is not a  Gabor frame if and only if
the triple $(a, b ,c)$ satisfies one of the following three conditions:
\begin{itemize}
\item[{6)}] $c_0<{\rm gcd}(a, c_1)$ and $\lfloor c/b\rfloor ({\rm gcd}(a, c_1)-c_0)\ne  {\rm gcd}(a, c_1)$.

\item [{7)}] $b-c_0< {\rm gcd}(a, c_1+b)$ and
$ (\lfloor c/b\rfloor +1) ({\rm gcd}(a, c_1+b)+c_0-b) \ne {\rm gcd}(a, c_1+b)$.

\item[{8)}]
 There exist nonnegative integers $d_1, d_2, d_3, d_4$ such that
  (a) $0< a-(d_1+d_2+1)(b-a)\in N b\Z/q$; (b)
$Nc_1+(d_1+d_3+1)(b-a)\in a \Z$;  (c)
$ (d_1+d_2+1)(N c_1+(d_1+d_3+1)(b-a))-(d_1+d_3+1)a\in N a\Z$; (d)
${\rm gcd}(Nc_1+(d_1+d_3+1)(b-a), Na)=a$; (e)
$\#E_{a,b,c}^d= d_1$; (f)
$c_0=(d_1+1)(b-a)+(d_1+d_3+1)(a-(d_1+d_2+1)(b-a))/N+\delta$
for some $-\min(a-c_0, (a-(d_1+d_2+1)(b-a))/N)<\delta<\min(c_0+a-b, (a-(d_1+d_2+1)(b-a))/N)$;
and  (g) $|\delta|+ a/(N\lfloor c/b\rfloor+(d_1+d_3+1))\ne (a-(d_1+d_2+1)(a-b))/N$,
where $N:=d_1+d_2+d_3+d_4+2$
 and $E_{a, b, c}^d$ is defined by
\begin{eqnarray} \label{eabcd.def} E_{a, b, c}^d & = & \big\{ n\in [1, d_1+d_2+1]\big|
 \ n (N c_1+(d_1+d_3+1)(b-a))\nonumber\\
&& \quad \in (0, (d_1+d_3+1)a) +Na \Z\big\}.\end{eqnarray}
\end{itemize}

\item[{(XIV)}] {\rm (\cite{janssen03})}\ If $a<b<c,   b-a<c_0 <a, \lfloor c/b\rfloor\ge 2$,
$0< c_1< 2a-b$, $a/b=p/q$  for some coprime
integers $p$ and $q$, and $c\not\in b\Z/q$,
then  ${\mathcal G}(\chi_{[0,c)}, a\Z\times \Z/b)$ is a  Gabor frame
 if and only if both ${\mathcal G}(\chi_{[0,\lfloor qc/b\rfloor b/q)}, a\Z\times \Z/b)$ and
${\mathcal G}(\chi_{[0,\lfloor qc/b+1\rfloor b/q)}, a\Z\times \Z/b) $ are  Gabor frames.

\end{itemize}

\end{thm}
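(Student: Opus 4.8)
The plan is to deduce both conclusions from the reduction machinery assembled above, treating (XIV) as a preliminary reduction and devoting the main effort to (XIII). For (XIV), when $a/b=p/q$ and $c\notin b\Z/q$, I would use the fact that the rows of ${\bf M}_{a,b,c}(t)$ carry $\lfloor c/b\rfloor+\{0,1\}$ consecutive ones, so that replacing $c$ by the two nearest points $\lfloor qc/b\rfloor b/q$ and $\lfloor qc/b+1\rfloor b/q$ of the lattice $b\Z/q$ only trims or adds the fractional overhang of each row. A monotonicity argument on the binary systems \eqref{infinitelinearsystem.eq1}, already implicit in \cite[Section 3.3.6.1]{janssen03}, then shows that a binary solution for $c$ exists precisely when one exists for at least one of the two lattice endpoints, which is the stated equivalence.

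For (XIII) I would run the full chain of equivalences established earlier. By \eqref{gabordabcequivalence} (Theorem \ref{framenullspace1.tm}) the system is a frame if and only if $\mathcal{D}_{a,b,c}=\emptyset$, and by Theorem \ref{dabcsabc.thm} this emptiness splits into the two disjoint cases \eqref{sabcemptynonempty1} and \eqref{sabcemptynonempty2}. Hence the system \emph{fails} to be a frame exactly when $\mathcal{S}_{a,b,c}\ne\emptyset$ and the measure identity of Theorem \ref{sabcstar.tm},
\[
(\lfloor c/b\rfloor+1)\,|\mathcal{S}_{a,b,c}\cap[0,c_0+a-b)|+\lfloor c/b\rfloor\,|\mathcal{S}_{a,b,c}\cap[c_0,a)|=a,
\]
is violated. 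The whole task therefore reduces to (i) deciding when $\mathcal{S}_{a,b,c}\ne\emptyset$ and (ii) computing the two partial measures explicitly so that the identity can be tested.

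For step (i) I would invoke the rational nonemptiness criterion of Theorem \ref{dabc1discretecharacterization.tm}: since $a/b=p/q$ and $c\in b\Z/q$, the conjugacy of $R_{a,b,c}$ to the rotation $S(\theta_{a,b,c})$ in diagram \eqref{diagramcommutes} (Theorem \ref{dabc1algebra.tm}) is a rotation by a rational angle, so the marks produced by the holes-removal surgery $Y_{a,b,c}$ are equally spaced and the inserted gaps take only two sizes. Consequently $\mathcal{S}_{a,b,c}$ is nonempty precisely for the configurations parameterized by nonnegative integers $d_1,d_2,d_3,d_4$ recording the large and small gaps placed into the two subintervals $[0,c_0+a-b)$ and $[c_0,a)$, with $N=d_1+d_2+d_3+d_4+2$ playing the role of the orbit period. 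The three alternatives 6), 7), 8) then correspond to three regimes of this parameterization: 6) is the degenerate case in which the entire fundamental domain $\gcd(a,c_1)$ of the rotation forces all gaps into one subinterval, 7) is the mirror-image degenerate case governed by $\gcd(a,c_1+b)$, and 8) is the generic case where both subintervals receive gaps. In each regime I would use the explicit hole description of Theorem \ref{dabc1discreteholes.tm} to read off the two partial measures as integer multiples of $\gcd(\cdot,\cdot)/q$ and translate the failure of the displayed identity into the stated arithmetic inequalities; the count $\#E_{a,b,c}^d$ in \eqref{eabcd.def} records exactly how many of the $N$ orbit points land in the short arc, thereby distinguishing the two gap sizes.

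The hard part will be the generic case 8). Because the two gap sizes interleave according to the rotation on $\R/(Y_{a,b,c}(a)\Z)$, the partial measures over $[0,c_0+a-b)$ and $[c_0,a)$ are not proportional to the subinterval lengths but depend on the precise combinatorics encoded by $\#E_{a,b,c}^d$ together with the gcd and divisibility constraints (a)--(d). I expect to proceed by rewriting the measure identity after augmentation as a single linear Diophantine relation in $d_1,d_2,d_3,d_4,N$ and then matching its solvability with the orbit-counting condition $\#E_{a,b,c}^d=d_1$. The delicate points are the auxiliary shift $\delta$ in (f), which parameterizes the exact placement of $c_0$ between two consecutive marks, and the non-degeneracy (g), which must be shown to exclude precisely the boundary configurations already captured by 6) and 7); verifying that (a)--(g) are jointly necessary and sufficient is the bookkeeping made intricate by the non-measure-preserving, non-contractive behaviour of $R_{a,b,c}$ off the invariant set.
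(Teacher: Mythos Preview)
Your overall strategy is exactly the paper's: reduce via Theorem~\ref{framenullspace1.tm} to $\mathcal{D}_{a,b,c}\ne\emptyset$, split this via \eqref{dabc2subsetdabc1} and Theorem~\ref{sabcstar.tm} into ``$\mathcal{S}_{a,b,c}\ne\emptyset$ and the measure identity fails,'' and then invoke Theorem~\ref{dabc1discretecharacterization.tm} to parameterize nonemptiness by the three types, computing the partial measures explicitly in each type to translate the failure of the identity into the stated inequalities. The paper's proof of (XIII) does precisely this (in about a paragraph), and (XIV) is simply cited from \cite{janssen03}.

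One correction: you have misread the role of condition (g). It is \emph{not} a non-degeneracy separating case 8) from the degenerate cases 6) and 7). Each of the three alternatives 6), 7), 8) is already of the form ``[nonemptiness condition from Theorem~\ref{dabc1discretecharacterization.tm}] \emph{and} [measure identity fails].'' In 6) the second clause is $\lfloor c/b\rfloor(\gcd(a,c_1)-c_0)\ne\gcd(a,c_1)$; in 7) it is the analogous inequality; and in 8) conditions (a)--(f) are exactly the third type of Theorem~\ref{dabc1discretecharacterization.tm}, while (g) is the negation of the measure identity. Concretely, in type 3) the set $\mathcal{S}_{a,b,c}\cap[0,a)$ consists of $N$ intervals of common length $h=B_d/N-|\delta|$, with $d_1+d_3+1$ of them in $[0,c_0+a-b)$ and $d_2+d_4+1$ in $[c_0,a)$; plugging these counts into the measure identity gives $h=a/(N\lfloor c/b\rfloor+d_1+d_3+1)$, and (g) is precisely the negation of this. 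So the ``delicate bookkeeping'' you anticipate in case 8) is actually a direct substitution, not a Diophantine analysis, once the hole structure from Theorem~\ref{dabc1discreteholes.tm} is in hand.
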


The conclusions in the above theorem is illustrated in the white region of Figure \ref{classification.fig}.
In Case 6) of Conclusion (XIII) in Theorem \ref{newmaintheorem5},
the set $K_{a, b, c}$ of marks is $({\rm gcd}(a, c_1)-c_0)\Z$  
and   gaps inserted at  marked positions have same length $c_0$. In Case 7)
of Conclusion (XIII) in Theorem \ref{newmaintheorem5},
 $K_{a, b, c}=({\rm gcd}(a, c_1+b)+c_0-b)\Z$
   and
  gaps  inserted at  marks in  $K_{a, b, c}$ are of size $b-c_0$.
  In Case 8) of Conclusion (XIII) in Theorem \ref{newmaintheorem5},
  $K_{a, b, c}= h\Z, Y_{a, b, c}(a)=Nh$
  and   gaps inserted at marked positions $lmh+Nh\Z, 1\le l\le N$, have size  $|b-a|+|\delta|$
 for $1\le l\le d_1+d_2+1$ and $|\delta|$ for $d_1+d_2+2\le l\le N$,
    where $N=d_1+d_2+d_3+d_4+2, h=(a-(d_1+d_2+1)(b-a))/N-|\delta|, m=(Nc_1+(d_1+d_3+1)(b-a))/a$ and
    $\delta=c_0-(d_1+1)(b-a)-(d_1+d_3+1)(a-(d_1+d_2+1)(b-a))/N$.

\smallskip
Combining Theorems \ref{newmaintheorem1}--\ref{newmaintheorem5} gives a complete answer to the   $abc$-problem for Gabor systems.
The classification diagram of triples $(a, b, c)$  in Theorems \ref{newmaintheorem1}--\ref{newmaintheorem5} is presented below:
$$ {\tiny
\begin{picture}(280, 180) 
\put(0, 150){\line(0, -1){50}}
\put(0, 150){\line(1,0){5}}
\put(0, 100){\line(1,0){5}}
\put(0, 125){\line(1,0){5}}
\put(5, 150){\boxed{{\rm (I)}\ a>c}}
\put(5, 125){\boxed{a<c}}
\put(5, 100){\boxed{{\rm (II)}\ a=c}}
\put(32, 125){\line(1,0){23}}
\put(55, 100){\line(0,1){50}}
\put(55, 150){\line(1,0){5}}
\put(55, 100){\line(1,0){5}}
\put(55, 125){\line(1,0){5}}
\put(60, 150){\boxed{{\rm (III)}\ b\le a}}
\put(60, 100){\boxed{{\rm (IV)}\ b\ge c}}
\put(60, 125){\boxed{a<b<c}}
\put(101, 125){\line(1,0){9}}
\put(110, 90){\line(0,1){75}}
\put(110, 145){\line(1,0){5}}
\put(110, 115){\line(1,0){5}}
\put(110, 90){\line(1,0){5}}
\put(110, 165){\line(1,0){5}}
\put(115, 165){\boxed{{\rm (V)}\  c_0\ge a, c_0\le b-a }}
\put(115, 140){\boxed{{\rm (VI)}\ c_0\ge a, c_0>b-a}}
\put(115, 115){\boxed{c_0<a, c_0>b-a}}
\put(115, 90){\boxed{{\rm (VII)}\  c_0<a, c_0\le b-a}}
\put(187, 115){\line(1,0){23}}
\put(210, 100){\line(0,1){35}}
\put(210, 100){\line(1,0){5}}
\put(210, 135){\line(1,0){5}}
\put(215, 135){\boxed{ {\rm (VIII)}\ \lfloor c/b\rfloor=1}}
\put(215, 100){\boxed{  \lfloor c/b\rfloor\ge 2}}
\put(3, 55){\boxed{{\rm (XIII)}\ a/b\in\Z/q, c/b\in \Z/q}}
\put(3, 5){\boxed{{\rm (XIV)} \ a/b\in\Z/q, c/b\not\in \Z/q}}
\put(111, 5){\line(1,0){4}}
\put(110, 55){\line(1,0){5}}
\put(115, 5){\line(0,1){50}}
\put(115, 5){\line(0,1){50}}
\put(115, 25){\line(1,0){23}}
\put(138, 25){\boxed{a/b\in \Q}}
\put(120, 60){\boxed{ {\rm (XII)} \ a/b\not\in \Q}}
\put(174, 25){\line(1,0){6}}
\put(177, 60){\line(1,0){3}}
\put(180, 25){\line(0,1){35}}
\put(180, 45){\line(1,0){5}}
\put(185, 45){\boxed{ 0<c_1<2a-b}}
\put(203, 65){\boxed{ {\rm (XI)}\ c_1=0}}
\put(186, 25){\boxed{ {\rm (X)}\ c_1=2a-b}}
\put(184, 5){\boxed{ {\rm (IX)}\ c_1>2a-b}}
\put(249, 45){\line(1,0){16}}
\put(250, 65){\line(1,0){15}}
\put(249, 25){\line(1,0){16}}
\put(250, 5){\line(1,0){15}}
\put(265, 5){\line(0,1){95}}
\put(256, 100){\line(1,0){9}}
\end{picture}
} $$

\smallskip

From Classifications  (V)--(IX) and (XII) in  Theorems \ref{newmaintheorem2}--\ref{newmaintheorem4}, it confirms a conjecture in \cite[Section 3.3.5]{janssen03}:
{\em
If $a<b<c, a/b\not\in \Q$ and $c\not\in a\Q+b\Q$, then  ${\mathcal G}(\chi_{[0,c)}, a\Z\times \Z/b)$
is a Gabor frame for $L^2$}.
This, together with Classification (IV) in Theorem \ref{newmaintheorem1}  and the shift-invariance, implies that
 the range of density parameters $a, b$
 such that  
${\mathcal G}(\chi_I, a\Z\times \Z/b)$ is a Gabor  frame  is 
 a dense subset of
the open 
region ${\mathcal U}_c:=\{(a, b): 0<a< \max(b,c)\}$, where $c$ is the length of the interval $I$.

\section{Gabor Frames and Trivial
Null Spaces of Infinite Matrices}
\label{gabornullspace1.section}

In this section, we prove Theorem \ref{newmaintheorem2}.

\smallskip

To prove Theorem \ref{newmaintheorem2}, we  start from  recalling some algebraic properties for infinite matrices 
 in
\eqref{infinitematrix.def}:
\begin{equation}\label{sabcinvariant.eq1}
{\bf M}_{a, b, c} (t-\lambda') {\bf z}={\bf M}_{a, b, c} (t) \tau_{\lambda'}
{\bf z} \ \  {\rm for \ all} \ \lambda'\in b\Z \end{equation}
 and
\begin{equation}\label{sabcinvariant.eq2}
  {\bf M}_{a, b, c}(t-\mu') {\bf z} =\big(({\bf M}_{a, b, c} (t) {\bf z})
  (\mu+\mu')\big)_{\mu\in a\Z}  \  \  {\rm for \ all} \ \mu'\in a\Z,
  \end{equation}
  where $t\in \R,  {\bf z}:=({\bf z}(\lambda))_{\lambda\in b\Z}$,
  and $\tau_{\lambda'}{\bf z}:= ({\bf z}(\lambda+\lambda'))_{\lambda\in b\Z}$.
By the shift property \eqref{sabcinvariant.eq2} for infinite matrices ${\bf M}_{a, b, c}(t), t\in \R$,
the  sets ${\mathcal D}_{a, b, c}$ and  ${\mathcal S}_{a, b,c}$  in \eqref{dabc2.def} and \eqref{dabc1.def}  respectively are periodic sets with period $a$,
\begin{equation}\label{dabcperiodic.section2} {\mathcal D}_{a, b, c}= {\mathcal D}_{a, b, c}+a\Z 
\quad {\rm and} \quad
 {\mathcal S}_{a, b, c}={\mathcal S}_{a, b, c}+a\Z.\end{equation}

\smallskip

Next we recall the equivalence between  frame property  for the Gabor system ${\mathcal G}(\chi_{[0, c)}, a\Z\times \Z/b)$
 and uniform stability of  infinite matrices ${\bf M}_{a, b, c}(t), t\in \R$,
 in \eqref{infinitematrix.def}, which was established by Ron and Shen.


\begin{lem}\label{framestabilitymatrix.lem} {\rm (\cite{RS97})}\
 Let $(a, b, c)$ be a triple of positive numbers satisfying $\max(a, b)<c$,
 ${\mathcal G}(\chi_{[0,c)}, a\Z\times \Z/b)$  be the Gabor system in \eqref{gabor.def}, and let
  ${\bf M}_{a, b, c}(t), t\in \R$, be the infinite matrices in
  \eqref{infinitematrix.def}.
Then  ${\mathcal G}(\chi_{[0,c)}, a\Z\times \Z/b)$
 is a Gabor frame if and only if
 there exist positive constants $A$ and $B$ such that
\begin{equation}\label{framestabilitymatrix.tm.eq1}
A \|{\bf z}\|_2 \le  \|{\bf M}_{a, b, c}(t) {\bf z}\|_2 \le B \|{\bf z}\|_2\quad {\rm for \ all} \ {\bf z}\in \ell^2\ {\rm and}\ t\in \R.
\end{equation}
 \end{lem}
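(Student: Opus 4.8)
The plan is to prove this equivalence by \emph{fiberization}: periodizing the analysis coefficients in frequency so that the Gabor frame operator block-diagonalizes into the family $\{{\bf M}_{a,b,c}(t)\}_{t\in\R}$. First I would unravel the analysis coefficients of the system \eqref{gabor.def}. For $f\in L^2$ and $(m,n)\in\Z\times\Z$, splitting $\R=\cup_{k}[kb,(k+1)b)$ and interchanging sum and integral gives
\[
\big\langle f,\, e^{-2\pi i n\cdot/b}\chi_{[0,c)}(\cdot-ma)\big\rangle=\int_0^b\Big(\sum_{k\in\Z} f(s+kb)\,\chi_{[0,c)}(s+kb-ma)\Big)e^{2\pi i ns/b}\,ds,
\]
so that for each fixed $m$ these are the Fourier coefficients on $[0,b)$ of the $b$-periodization $F_m(s):=\sum_{k} f(s+kb)\chi_{[0,c)}(s+kb-ma)$.

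Second, I would apply Parseval on $[0,b)$ and sum over $m$. Setting the fiber ${\bf z}_s:=(f(s+kb))_{k\in\Z}\in\ell^2(b\Z)$ and reading off from the definition \eqref{infinitematrix.def} that $F_m(s)=({\bf M}_{a,b,c}(s){\bf z}_s)_{\mu}$ with $\mu=ma$, the total frame sum collapses to
\[
\sum_{m,n}\big|\big\langle f,\, e^{-2\pi i n\cdot/b}\chi_{[0,c)}(\cdot-ma)\big\rangle\big|^2=b\int_0^b\big\|{\bf M}_{a,b,c}(s){\bf z}_s\big\|_2^2\,ds,
\]
while Tonelli gives $\|f\|_2^2=\int_0^b\|{\bf z}_s\|_2^2\,ds$. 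Since the map $\mathcal U:f\mapsto({\bf z}_s)_{s\in[0,b)}$ is a unitary isomorphism of $L^2$ onto $L^2\big([0,b);\ell^2(b\Z)\big)$, the frame inequalities \eqref{frame.def} become \emph{exactly} the integrated bounds $A\int_0^b\|{\bf z}_s\|_2^2\,ds\le b\int_0^b\|{\bf M}_{a,b,c}(s){\bf z}_s\|_2^2\,ds\le B\int_0^b\|{\bf z}_s\|_2^2\,ds$ for every admissible field $({\bf z}_s)$. The upper bound is in fact automatic here: each row and column of ${\bf M}_{a,b,c}(t)$ carries at most $\lfloor c/b\rfloor+\lfloor c/a\rfloor+2$ ones uniformly in $t$, so Schur's test yields a $t$-independent operator-norm bound, i.e.\ $\chi_{[0,c)}$ always generates a Bessel system. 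Thus the genuine content is the lower bound, which the fiberization delivers alongside the upper bound.

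Third, I would pass from the integrated inequalities to the pointwise, uniform bounds \eqref{framestabilitymatrix.tm.eq1}. Because $\mathcal U$ is onto, the field $s\mapsto{\bf z}_s$ may be prescribed freely; concentrating a test field near a point and using measurability of $s\mapsto{\bf M}_{a,b,c}(s)$, a standard localization-and-contradiction argument forces $A\|{\bf z}\|_2\le\|{\bf M}_{a,b,c}(s){\bf z}\|_2\le B\|{\bf z}\|_2$ for a.e.\ $s\in[0,b)$ and all ${\bf z}\in\ell^2$, and conversely such pointwise bounds integrate back to the frame bounds. The shift identity \eqref{sabcinvariant.eq2} shows the fiber stability constant is $a$-periodic in $t$, reducing the question to one period.

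The main obstacle, and the step I expect to be most delicate, is upgrading these \emph{a.e.} bounds to bounds holding for \emph{every} $t\in\R$ as demanded by \eqref{framestabilitymatrix.tm.eq1}. When $a/b$ is rational the jump set of the step-function entries $\chi_{[0,c)}(t-\mu+\lambda)$ meets each period finitely often, so ${\bf M}_{a,b,c}(t)$ takes finitely many values per period and the essential infimum coincides with the genuine infimum. When $a/b\notin\Q$ the jumps are dense, so this counting fails; here I would instead exploit that each fiber matrix is uniformly banded (finitely many ones per row and column) and that its entries are right-continuous in $t$, to argue that $t\mapsto\inf_{\|{\bf z}\|_2=1}\|{\bf M}_{a,b,c}(t){\bf z}\|_2$ is suitably semicontinuous through the dense jump set, forcing its essential infimum over a period to equal its infimum. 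Reconciling the dense discontinuities with the requirement of a uniform lower bound for all $t$ is precisely where the argument is most technical, and is the heart of the Ron--Shen fiberization for this non-smooth window.
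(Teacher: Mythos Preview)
The paper does not supply its own proof of this lemma: it is stated with the attribution \cite{RS97} and used as a black box. Your fiberization argument---rewriting the analysis coefficients as Fourier coefficients of a $b$-periodization, applying Parseval, and identifying the resulting fiber operators with ${\bf M}_{a,b,c}(t)$---is exactly the Ron--Shen method that the citation points to, so your first two steps match the intended source rather than anything the paper itself carries out.

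Your third step, upgrading the a.e.\ bounds to bounds for \emph{every} $t$, is correctly flagged as the delicate point, but the semicontinuity argument you sketch is vaguer than it needs to be and the operator-norm right-continuity you invoke is actually false in the irrational case (infinitely many entries change in any right-neighborhood). A cleaner way to close the gap: fix $t_0$ and a \emph{finitely supported} ${\bf z}\in\ell^2(b\Z)$. Bandedness of ${\bf M}_{a,b,c}(t)$ means ${\bf M}_{a,b,c}(t){\bf z}$ involves only finitely many entries, each of which is right-continuous in $t$; hence there is $\epsilon>0$ (depending on ${\bf z}$) with ${\bf M}_{a,b,c}(t){\bf z}={\bf M}_{a,b,c}(t_0){\bf z}$ for all $t\in[t_0,t_0+\epsilon)$. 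Since the a.e.\ lower bound holds somewhere in that interval, it transfers to $t_0$ for this ${\bf z}$, and density of finitely supported vectors finishes the job. This replaces the semicontinuity heuristic with a two-line argument.
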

Infinite matrices ${\bf M}_{a, b, c}(t), t\in \R$,  in \eqref{infinitematrix.def}
have their rows containing
 $\lfloor c/b\rfloor+\{0, 1\}$  consecutive ones, and
their rows are  obtained by shifting one (or zero) unit of the previous row with possible reduction or expansion by one unit, c.f.
\eqref{irrationalexample1.infinitematrix.eq1} and \eqref{rationalexample1.infinitematrix.eq1}.
Due to the above special structures of  infinite matrices 
in \eqref{infinitematrix.def},
 we can reduce their uniform 
 stability property
 \eqref{framestabilitymatrix.tm.neweq1}
 to the trivial intersection property between
 their null spaces $N({\bf M}_{a, b, c} (t))$ and the set ${\mathcal B}_b-{\mathcal B}_b$ containing
 trinary vectors whose components take values in $\{-1, 0, 1\}$, and even further to  the  empty-set property for
 ${\mathcal D}_{a, b, c}$.

  \begin{thm}\label{framenullspace1.tm}
 Let $(a, b, c)$ be a triple of positive numbers satisfying $a<b<c$, ${\mathcal G}(\chi_{[0,c)}, a\Z\times \Z/b)$
 be the Gabor system  in \eqref{gabor.def}, and let
  ${\bf M}_{a, b, c} (t), t\in \R$,  be   infinite matrices in \eqref{infinitematrix.def}.
     Then the following statements are equivalent to each other.
 \begin{itemize}
 \item[{(i)}] The Gabor system ${\mathcal G}(\chi_{[0,c)}, a\Z\times \Z/b)$
 is  a Gabor frame.


\item[{(ii)}] For every $t\in \R$,
only zero vector ${\bf 0}$ is contained in the intersection between
  ${\mathcal B}_b-{\mathcal B}_b$ and the null space of the infinite matrix ${\bf M}_{a, b, c}(t)$; i.e.,
  \begin{equation*} N({\bf M}_{a, b, c}(t))\cap ({\mathcal B}_b-{\mathcal B}_b)
  =\{{\bf 0}\}\quad\ {\rm for \ every} \  t\in \R.
  \end{equation*}


 \item [{(iii)}] ${\mathcal D}_{a, b, c}=\emptyset$.

\end{itemize}

\end{thm}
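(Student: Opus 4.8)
The plan is to combine the Ron--Shen stability criterion (Lemma \ref{framestabilitymatrix.lem}) with the combinatorics of the consecutive-ones matrices $\mathbf{M}_{a,b,c}(t)$. First I would dispose of the upper bound in \eqref{framestabilitymatrix.tm.eq1}: since each row of $\mathbf{M}_{a,b,c}(t)$ contains at most $\lfloor c/b\rfloor+1$ ones and each column at most $\lfloor c/a\rfloor+1$ ones, a Schur test yields $\|\mathbf{M}_{a,b,c}(t)\mathbf{z}\|_2\le B\|\mathbf{z}\|_2$ with $B$ independent of $t$. Thus by Lemma \ref{framestabilitymatrix.lem}, statement (i) is equivalent to the \emph{uniform lower bound} $\inf_{t}\inf_{\|\mathbf{z}\|_2=1}\|\mathbf{M}_{a,b,c}(t)\mathbf{z}\|_2>0$, and the theorem becomes a statement about when this lower bound can fail.

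For (i)$\Rightarrow$(ii) I argue by contradiction. Assume the uniform lower bound holds and suppose some $t$ admits a nonzero $\mathbf{w}\in N(\mathbf{M}_{a,b,c}(t))\cap(\mathcal{B}_b-\mathcal{B}_b)$. If $\mathbf{w}$ has finite support it lies in $\ell^2$ and the lower bound forces $\mathbf{w}=\mathbf{0}$, a contradiction. If $\mathbf{w}$ has infinite support, truncate it to the columns in $[-Nb,Nb]$ to obtain $\mathbf{w}_N\in\ell^2$. Because the rows are intervals of at most $\lfloor c/b\rfloor+1$ consecutive ones that advance by at most one unit per row, $\mathbf{M}_{a,b,c}(t)\mathbf{w}_N$ coincides with $\mathbf{M}_{a,b,c}(t)\mathbf{w}=\mathbf{0}$ except on a number of rows bounded independently of $N$; hence $\|\mathbf{M}_{a,b,c}(t)\mathbf{w}_N\|_2=O(1)$ while $\|\mathbf{w}_N\|_2\to\infty$, again contradicting the lower bound. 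This proves (ii).

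The converse (ii)$\Rightarrow$(i) is the analytic heart. Arguing contrapositively, if $\mathcal{G}(\chi_{[0,c)},a\Z\times\Z/b)$ is not a frame then the lower bound fails, so there are $t_n$ and unit vectors $\mathbf{z}_n$ with $\|\mathbf{M}_{a,b,c}(t_n)\mathbf{z}_n\|_2\to0$. Using the $a$-periodicity from \eqref{sabcinvariant.eq2} I may assume $t_n\in[0,a)$ and, passing to a subsequence, $t_n\to t_*$; using the column-shift covariance \eqref{sabcinvariant.eq1} I recenter each $\mathbf{z}_n$ so that a fixed fraction of its $\ell^2$-mass sits near the origin, and extract a weak limit $\mathbf{z}_n\rightharpoonup\mathbf{z}_*$. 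The decisive point is to show that no mass escapes to infinity, so that $\mathbf{z}_*\neq\mathbf{0}$, and that $\mathbf{M}_{a,b,c}(t_*)\mathbf{z}_*=\mathbf{0}$. One then has to \emph{quantize} the real null vector $\mathbf{z}_*$ to a trinary one: writing the null conditions as $Z(\beta_\mu)=Z(\alpha_\mu-b)$ for the partial sums $Z$ of $\mathbf{z}_*$ over the consecutive-ones intervals $[\alpha_\mu,\beta_\mu]$, these equalities partition $b\Z$ into blocks on which any admissible potential must be constant, and $\mathbf{z}_*\neq\mathbf{0}$ forces at least two such blocks. Choosing the potential $W$ to be the indicator of a suitable nonempty union of blocks then yields increments $\mathbf{w}=W(\cdot)-W(\cdot-b)$ taking values in $\{-1,0,1\}$ and lying in $N(\mathbf{M}_{a,b,c}(t_*))$, i.e. a nonzero element of $N(\mathbf{M}_{a,b,c}(t_*))\cap(\mathcal{B}_b-\mathcal{B}_b)$, contradicting (ii).

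Finally I would settle the equivalence (ii)$\Leftrightarrow$(iii), which is purely combinatorial. For (ii)$\Rightarrow$(iii), suppose $\mathcal{D}_{a,b,c}\neq\emptyset$, so some $t$ carries a binary $\mathbf{x}\in\mathcal{B}_b^0$ with $\mathbf{M}_{a,b,c}(t)\mathbf{x}=\mathbf{2}$; each consecutive-ones row then meets $\supp\mathbf{x}$ in exactly two points, which, the rows being intervals, are consecutive in the natural ordering of $\supp\mathbf{x}$. Two-colouring $\supp\mathbf{x}$ alternately splits $\mathbf{x}=\mathbf{x}_++\mathbf{x}_-$ so that every row meets each part exactly once, whence $\mathbf{w}=\mathbf{x}_+-\mathbf{x}_-$ is a nonzero trinary null vector, contradicting (ii). For the reverse (iii)$\Rightarrow$(ii), a nonzero trinary null vector $\mathbf{w}=\mathbf{x}_+-\mathbf{x}_-$ meets each row equally in its two parts; exploiting the unit-shift staircase I would reduce, by a minimal-support argument, to the case in which every row is met exactly once by each part and, using that the rows cover $b\Z$ together with the $a$-periodicity \eqref{dabcperiodic.section2}, to a vector whose support meets every row. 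Then $\mathbf{x}_\pm$ solve $\mathbf{M}_{a,b,c}(t)\mathbf{x}=\mathbf{1}$, so $\mathbf{x}_++\mathbf{x}_-$ solves $\mathbf{M}_{a,b,c}(t)\mathbf{x}=\mathbf{2}$, and anchoring this solution into $\mathcal{B}_b^0$ through \eqref{sabcinvariant.eq1} exhibits a point of $\mathcal{D}_{a,b,c}$, contradicting (iii). This last reduction, completing a merely existent trinary kernel element to a globally consistent, everywhere-supported pattern, is, together with the non-escape of mass in the previous step, the part I expect to be most delicate.
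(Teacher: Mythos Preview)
Your (i)$\Rightarrow$(ii) and (ii)$\Rightarrow$(iii) are essentially the paper's arguments: truncate a trinary null vector to defeat $\ell^2$-stability, and two-colour the support of a solution of $\mathbf{M}_{a,b,c}(t)\mathbf{x}=\mathbf{2}$ by parity.

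The difference is in how the cycle is closed. The paper does \emph{not} prove (ii)$\Rightarrow$(i) or (iii)$\Rightarrow$(ii); it goes (iii)$\Rightarrow$(i) directly, and this is where the real work lies. The bridge is the integer
\[
Q_{a,b,c}=\sup_{t\in\R}\,\sup_{\mathbf{x}\in\mathcal{B}_b}\,\big(\text{longest run of consecutive }2\text{'s in }\mathbf{M}_{a,b,c}(t)\mathbf{x}\big).
\]
Lemma~\ref{dabc2toqabc.lem} shows $\mathcal{D}_{a,b,c}=\emptyset\iff Q_{a,b,c}<\infty$: the nontrivial direction manufactures a solution of $\mathbf{M}_{a,b,c}(t_\infty)\mathbf{x}_\infty=\mathbf{2}$ from arbitrarily long runs by diagonalising over \emph{binary} vectors (a compact product space, so no mass can escape) and over monotone $t_n\to t_\infty$ (which handles the discontinuity of $\chi_{[0,c)}$ by passing to $\chi_{(0,c]}$ on one side). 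Lemma~\ref{infinitematrixstability.lem} then converts $Q_{a,b,c}<\infty$ into the finite-window estimate \eqref{framematrix.lem.eq1},
\[
\sum_{0\le \mu\le aQ_{a,b,c}+2a+b+c}\big|(\mathbf{M}_{a,b,c}(t)\mathbf{z})(\mu)\big|\ \ge\ \frac{b}{2c}\,|\mathbf{z}(0)|,
\]
and the lower frame bound follows by summing over translates.

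Both steps you flag as ``most delicate'' are, as written, genuine gaps, and the paper's route sidesteps rather than fills them. In your (ii)$\Rightarrow$(i), recentering an $\ell^2$ Weyl sequence does not by itself prevent $\mathbf{z}_*=0$: banded operators can have $0$ in their spectrum with Weyl sequences that spread and converge weakly to zero, and nothing in your sketch rules this out here; moreover, even granting $\mathbf{z}_*\ne 0$, the entries of $\mathbf{M}_{a,b,c}(t)$ jump whenever $t-\mu+\lambda\in\{0,c\}$, so $\mathbf{M}_{a,b,c}(t_n)\mathbf{z}_n\to 0$ does not pass to $\mathbf{M}_{a,b,c}(t_*)\mathbf{z}_*=0$ without the kind of one-sided analysis the paper performs on binary vectors. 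In your (iii)$\Rightarrow$(ii), splitting a trinary null vector as $\mathbf{w}=\mathbf{x}_+-\mathbf{x}_-$ only yields $\mathbf{M}_{a,b,c}(t)\mathbf{x}_+=\mathbf{M}_{a,b,c}(t)\mathbf{x}_-$, a common value that may be $0$ on some rows and $\ge 2$ on others; the ``minimal-support'' reduction to the constant value $\mathbf{1}$, and the subsequent extension so that every row is met, are not justified, and I do not see a short argument that does this without something equivalent to the $Q_{a,b,c}$ machinery. The practical fix is to abandon both and follow the paper's (iii)$\Rightarrow$(i).
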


In the next theorem, it is shown that it could further reduce the  empty-set property for
${\mathcal D}_{a, b, c}$
 in  Theorem \ref{framenullspace1.tm}
to verifying whether  it contains some particular points.

\begin{thm}\label{framenullspace2.tm}
Let $(a,b,c)$ be a triple of positive numbers satisfying $a<b<c$,  the set
 ${\mathcal D}_{a, b, c}$
 be as in \eqref{dabc2.def}, and define
\begin{equation}
\label{zab.def}
{\mathcal Z}_{a,b}:=\left\{\begin{array} {ll} \{0\} & {\rm if} \ a/b\not\in \Q,\\
\{0, b/q, \ldots, b(p-1)/q\}& {\rm if} \ a/b=p/q \ {\rm  for\ some}\\
&  {\rm coprime\ integers}\  p \ {\rm  and} \  q.
\end{array}\right.\end{equation}
 Then  ${\mathcal D}_{a, b, c}=\emptyset$ if and only if
${\mathcal D}_{a, b, c}\cap ({\mathcal Z}_{a, b}\cup
(c-{\mathcal Z}_{a, b}))=\emptyset$.
\end{thm}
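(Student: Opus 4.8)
The plan is to prove the nontrivial implication in contrapositive form: if ${\mathcal D}_{a,b,c}\neq\emptyset$ then it already meets the two-point set $\{0,c\}$, which suffices because $0\in{\mathcal Z}_{a,b}$ and $c\in c-{\mathcal Z}_{a,b}$ by \eqref{zab.def}. (The reverse implication is immediate, since the intersection of the empty set with anything is empty.) The three ingredients are: the right-continuity of $\chi_{[0,c)}$, which makes $t\mapsto{\bf M}_{a,b,c}(t)$ piecewise constant and right-continuous so that the level sets of the row-sums are left-closed; the covariance identities \eqref{sabcinvariant.eq1} and \eqref{sabcinvariant.eq2}; and the $a$-periodicity \eqref{dabcperiodic.section2}. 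From \eqref{sabcinvariant.eq2} I record that $t\in{\mathcal D}_{a,b,c}$ forces $t-\mu'\in{\mathcal D}_{a,b,c}$ with the \emph{same} solution vector for every $\mu'\in a\Z$, and from \eqref{sabcinvariant.eq1} that if ${\bf M}_{a,b,c}(t){\bf x}={\bf 2}$ with ${\bf x}(\lambda_0)=1$ then $\tau_{\lambda_0}{\bf x}\in{\mathcal B}_b^0$ solves ${\bf M}_{a,b,c}(t+\lambda_0)(\tau_{\lambda_0}{\bf x})={\bf 2}$, so $t+\lambda_0\in{\mathcal D}_{a,b,c}$.

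Fix $t_0\in{\mathcal D}_{a,b,c}$ and a solution ${\bf x}\in{\mathcal B}_b^0$ of ${\bf M}_{a,b,c}(t_0){\bf x}={\bf 2}$; let $S=\{\lambda\in b\Z:{\bf x}(\lambda)=1\}$, so $0\in S$, and set $T_{\bf x}=\{t:{\bf M}_{a,b,c}(t){\bf x}={\bf 2}\}$. For each row $\mu\in a\Z$ the requirement that the window $[\mu-t,\mu-t+c)$ contain exactly two points of $S$ confines $t$ to a union of left-closed intervals (left-closed because $\chi_{[0,c)}$ is right-continuous); writing $[\ell_\mu,r_\mu)$ for the one containing $t_0$, the component of $T_{\bf x}$ through $t_0$ is the left-closed interval $[\alpha,\beta)$ with $\alpha=\sup_{\mu}\ell_\mu$ and $\beta=\inf_\mu r_\mu$. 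In particular $\alpha\le t_0<\beta$, whence $\alpha\in T_{\bf x}\subseteq{\mathcal D}_{a,b,c}$. Each $\ell_\mu$ is a value of $t$ at which, for row $\mu$, either a support point leaves the window through its left end (type $0$: $\ell_\mu=\mu-\sigma$, $\sigma\in S$) or a support point is about to enter through its right end (type $c$: $\ell_\mu=\mu+c-\sigma$, $\sigma\in S$). Choosing rows $\mu_k$ with $\ell_{\mu_k}\to\alpha$ and passing to a subsequence, I may assume all $\ell_{\mu_k}$ share one type, with associated support points $\sigma_k\in S$.

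Now I slide these near-binding constraints down to the origin. Using $a$-periodicity, $\alpha-\mu_k\in{\mathcal D}_{a,b,c}$ with solution ${\bf x}$; since $\sigma_k\in S$, the support-shift identity gives $\tau_k:=\alpha-\mu_k+\sigma_k\in{\mathcal D}_{a,b,c}$ with solution ${\bf y}_k:=\tau_{\sigma_k}{\bf x}\in{\mathcal B}_b^0$. In the type-$0$ case $\tau_k=\alpha-\ell_{\mu_k}\downarrow 0$, and in the type-$c$ case $\tau_k=\alpha-\ell_{\mu_k}+c\downarrow c$. Because $\{0,1\}^{b\Z}$ is compact in the topology of pointwise convergence, a subsequence of $({\bf y}_k)$ converges pointwise to some ${\bf y}\in{\mathcal B}_b^0$ (the normalization ${\bf y}_k(0)=1$ passes to the limit). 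Every row-sum in ${\bf M}_{a,b,c}(\tau_k){\bf y}_k={\bf 2}$ involves only the finitely many lattice points in a window of length $c$, and these stabilize as $\tau_k\to 0^+$ (resp. $c^+$); combined with the right-continuity of $\chi_{[0,c)}$ at the breakpoints $0$ and $c$, I may pass to the limit row by row to obtain ${\bf M}_{a,b,c}(0){\bf y}={\bf 2}$ (resp. ${\bf M}_{a,b,c}(c){\bf y}={\bf 2}$). Hence $0\in{\mathcal D}_{a,b,c}$ or $c\in{\mathcal D}_{a,b,c}$, completing the contrapositive.

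The delicate point, and the reason a direct ``read off the endpoint'' argument is insufficient, is that when $a/b\notin\Q$ the breakpoints of ${\bf M}_{a,b,c}(\cdot)$ are dense, so $\alpha=\sup_\mu\ell_\mu$ need not be attained by any single row: no individual transition certifies the failure of ${\bf x}$ just below $\alpha$. The accumulation-plus-compactness step is exactly what absorbs this, the attained (in particular rational) case being the special instance where $({\bf y}_k)$ is eventually constant. I should also dispatch the harmless degeneracies (a support point met simultaneously on both window ends when $c\in b\Z$, and the pigeonhole choice of a single type), but these do not affect the limit. Finally, since the argument lands exactly on $0$ or $c$, it proves the theorem in the sharper form ${\mathcal D}_{a,b,c}\cap\{0,c\}=\emptyset\Rightarrow{\mathcal D}_{a,b,c}=\emptyset$; in the rational case one may alternatively bypass the limit by noting that the endpoint $\alpha\in{\mathcal D}_{a,b,c}$ is then a genuine breakpoint in $(a\Z+b\Z)\cup(c+a\Z+b\Z)=(b/q)\Z\cup(c+(b/q)\Z)$, which reduces modulo $a$ to ${\mathcal Z}_{a,b}\cup(c-{\mathcal Z}_{a,b})$.
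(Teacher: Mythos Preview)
Your argument is correct and takes a genuinely different route from the paper's. The paper first disposes of the regimes $c_0\ge a$ and $c_0\le b-a$ by citing the explicit constructions in the proof of Theorem~\ref{newmaintheorem2} (which already exhibit $0$ or $c$ in ${\mathcal D}_{a,b,c}$), and then handles $b-a<c_0<a$ in two sub-cases: for irrational $a/b$ it invokes a density result for ${\mathcal D}_{a,b,c}$ near $0$ (Lemma~\ref{rabcbasic2.lem}, proved via the dynamics of the piecewise linear map $R_{a,b,c}$) and passes to the limit as in Lemma~\ref{dabc2toqabc.lem}; for rational $a/b=p/q$ it uses piecewise-constancy of ${\bf M}_{a,b,c}(\cdot)$ on $(b/q)$-intervals to land only in ${\mathcal Z}_{a,b}\cup(c-{\mathcal Z}_{a,b})$. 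You bypass both the $R_{a,b,c}$ machinery and the case split on $c_0$: you locate the left endpoint $\alpha$ of $T_{\bf x}$, identify approximating row-constraints $\ell_{\mu_k}=\mu_k-\sigma_k$ or $\mu_k-\sigma_k+c$ with $\sigma_k\in S$, and translate by $(\mu_k,\sigma_k)$ to push down to $0$ or $c$; the compactness-plus-right-continuity limit is the same device the paper deploys inside Lemma~\ref{dabc2toqabc.lem}, used here directly rather than through the dynamical detour. This is more elementary and self-contained, and as you observe it yields the sharper conclusion ${\mathcal D}_{a,b,c}\cap\{0,c\}\neq\emptyset$; the paper's route, while longer here, develops the $R_{a,b,c}$ dynamics that become essential later. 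One small point worth making explicit: if $\alpha=-\infty$ then every $f_\mu\equiv 2$, hence $f_0\equiv 2$ and $T_{\bf x}=\R$, so $0\in{\mathcal D}_{a,b,c}$ trivially and your endpoint argument is only needed when $\alpha$ is finite.
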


We remark that the implication (i)$\Longrightarrow$(iii) in  Theorem \ref{framenullspace1.tm}
 has been  applied implicitly in \cite{hangu, janssen03} for their  classification.

%
%
%

\smallskip

In the  next three subsections, we  prove Theorem \ref{framenullspace1.tm},
 \ref{newmaintheorem2}  
 and  \ref{framenullspace2.tm} respectively.


\subsection{Trivial null spaces of infinite matrices} 
\label{trivialnull.subsection}
%
In this subsection, we prove Theorem \ref{framenullspace1.tm}.
To do so, particularly the implication
(iii)$\Longrightarrow$(i), we need an equivalence between empty set property for the set
${\mathcal D}_{a, b, c}$ and uniform boundedness of
 maximal lengths of consecutive twos in
the vector ${\bf M}_{a, b, c}(t) {\bf x}$
for any $t\in \R$ and  ${\bf x}\in {\mathcal B}_b$.
Precisely, we define
\begin{equation}\label{qabc.def}
Q_{a,b,c}:= \sup_{t\in \R} Q_{a, b, c} (t):=\sup_{t\in
\R}\big(\sup_{{\bf x}\in {\mathcal B}_b} Q_{a, b, c} (t,{\bf x})\big),
\end{equation}
where for any $t\in \R$ and ${\bf x}\in {\mathcal B}_b$,
$$K(t,{\bf x}):=\big\{\mu\in a\Z\big|\ {\bf M}_{a, b, c}(t) {\bf x}(\mu)=2\big\},$$
and
$$Q_{a, b, c}(t,{\bf x}):=\left\{\begin{array}{ll} 0 & {\rm if}\ K(t,{\bf x})=\emptyset\\
 \sup\big\{n\in \N\big| \  [\mu, \mu+na)\cap a\Z& \\
  \qquad \quad \subset K(t,{\bf x}) \ {\rm for \ some}\ \mu\in a\Z\big\} & {\rm otherwise}\end{array}\right.$$
is the maximal length of consecutive twos in the vector ${\bf M}_{a, b, c}(t) {\bf x}$.
We show in Lemma
\ref{dabc2toqabc.lem} below 
that
$
{\mathcal D}_{a, b, c}=\emptyset$ if and only if the
index $Q_{a,b,c}$  associated with infinite matrices ${\bf M}_{a, b,c}(t), t\in \R$,
is finite.

  \begin{lem}\label{dabc2toqabc.lem}  Let $(a, b, c)$ be a triple of positive numbers satisfying $a<b<c$, and let
  ${\bf M}_{a, b, c}(t), t\in \R$,  be  the infinite matrices in \eqref{infinitematrix.def}.
 Then  ${\mathcal D}_{a, b, c}=\emptyset$ if and only if
   \begin{equation}\label{dabc2toqabc.lem.eq5}
 Q_{a,b,c} <+\infty.
\end{equation}
  \end{lem}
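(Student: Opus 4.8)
The plan is to prove Lemma \ref{dabc2toqabc.lem} by establishing the two implications separately, exploiting the structure of ${\bf M}_{a,b,c}(t)$ whose rows are shifted blocks of $\lfloor c/b\rfloor+\{0,1\}$ consecutive ones. First I would prove the contrapositive of the easy direction: if ${\mathcal D}_{a,b,c}\ne\emptyset$, then $Q_{a,b,c}=+\infty$. Pick $t\in {\mathcal D}_{a,b,c}$ with a binary solution ${\bf x}\in {\mathcal B}_b^0$ to ${\bf M}_{a,b,c}(t){\bf x}={\bf 2}$. Then \emph{every} entry of ${\bf M}_{a,b,c}(t){\bf x}$ equals $2$, so $K(t,{\bf x})=a\Z$ and $Q_{a,b,c}(t,{\bf x})=+\infty$, whence $Q_{a,b,c}=+\infty$. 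This direction is immediate once the definitions are unwound.

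The substantial direction is the converse: assuming ${\mathcal D}_{a,b,c}=\emptyset$, I must show $Q_{a,b,c}<+\infty$. I would argue again by contraposition, assuming $Q_{a,b,c}=+\infty$ and producing a point of ${\mathcal D}_{a,b,c}$. By definition of the supremum there exist, for each $n$, a point $t_n\in\R$ and a binary vector ${\bf x}_n\in{\mathcal B}_b$ together with an index $\mu_n\in a\Z$ such that the $n$ consecutive entries indexed by $[\mu_n,\mu_n+na)\cap a\Z$ of ${\bf M}_{a,b,c}(t_n){\bf x}_n$ all equal $2$. Using the shift relations \eqref{sabcinvariant.eq1} and \eqref{sabcinvariant.eq2}, I would translate in both the $t$-variable and the index so that the run of twos begins at the origin: replacing $t_n$ by $t_n-\mu_n$ and ${\bf x}_n$ by a $b\Z$-shift, I may assume $\mu_n=0$, i.e. the entries indexed by $[0,na)\cap a\Z$ all equal $2$. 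The periodicity \eqref{dabcperiodic.section2} lets me further normalize $t_n$ into a fixed period, say $t_n\in[0,a)$.

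The heart of the argument is a compactness-plus-diagonalization passage to the limit. Since $t_n$ ranges over the bounded set $[0,a)$, a subsequence converges to some $t_\ast$. The binary vectors ${\bf x}_n$ live in the compact product space $\{0,1\}^{b\Z}$ under the product topology, so by a diagonal argument a further subsequence converges coordinatewise to some ${\bf x}_\ast\in{\mathcal B}_b$. The hard part will be controlling the interaction between the limit of the $t_n$ and the limit of the ${\bf x}_n$: the matrix entries $\chi_{[0,c)}(t-\mu+\lambda)$ are discontinuous in $t$ at the lattice-determined breakpoints, so I must ensure the run of twos survives the limit. For each fixed index $\mu\in a\Z$, the constraint $({\bf M}_{a,b,c}(t_n){\bf x}_n)(\mu)=2$ involves only finitely many coordinates of ${\bf x}_n$ (those $\lambda\in b\Z$ with $\chi_{[0,c)}(t_n-\mu+\lambda)=1$); along the convergent subsequence these finitely many coordinates stabilize, and the finitely many relevant characteristic-function values become locally constant for $t$ near $t_\ast$ away from breakpoints. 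I would handle the breakpoint values by choosing the half-open convention for $\chi_{[0,c)}$ consistently (the left-closed, right-open interval is exactly what makes the limiting entry lower-semicontinuous in the right way) and, if a breakpoint is hit, by passing to a one-sided limit and adjusting ${\bf x}_\ast$ on the affected coordinate. The conclusion is that $({\bf M}_{a,b,c}(t_\ast){\bf x}_\ast)(\mu)=2$ for every $\mu\in a\Z$, so ${\bf M}_{a,b,c}(t_\ast){\bf x}_\ast={\bf 2}$; after a final $b\Z$-translation to arrange ${\bf x}_\ast(0)=1$ (possible because the run of twos forces the support of ${\bf x}_\ast$ to meet every relevant window), we get $t_\ast\in{\mathcal D}_{a,b,c}$, contradicting emptiness. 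The delicate point throughout is precisely this compactness argument reconciling the continuous $t$-limit with the discrete coordinatewise limit across the discontinuities of $\chi_{[0,c)}$.
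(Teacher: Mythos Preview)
Your overall architecture matches the paper's proof: the easy direction is exactly as you say, and for the hard direction the paper also argues by contraposition, normalizes via the shift relations \eqref{sabcinvariant.eq1}--\eqref{sabcinvariant.eq2}, and then extracts a convergent subsequence of $(t_n,{\bf x}_n)$ by compactness/diagonalization.

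The one place your proposal is not yet a proof is precisely the discontinuity step, and your suggested fix (``adjusting ${\bf x}_\ast$ on the affected coordinate'') does not work as stated. Changing ${\bf x}_\ast(\lambda_0)$ at a single coordinate affects \emph{all} rows $\mu$ whose window $-t_\ast+\mu+[0,c)$ contains $\lambda_0$, not just the row where the breakpoint occurs, so local adjustment does not produce a consistent solution of ${\bf M}_{a,b,c}(t_\ast){\bf x}={\bf 2}$. The paper handles this with a clean global device: after diagonalizing, pass to a further subsequence along which $\{t_n\}$ is \emph{monotone}. If $\{t_n\}$ is eventually constant or strictly decreasing, right-continuity of $\chi_{[0,c)}$ gives $\chi_{[0,c)}(t_\infty-\mu+\lambda)=\chi_{[0,c)}(t_n-\mu+\lambda)$ for large $n$ and each fixed $\mu,\lambda$, so ${\bf M}_{a,b,c}(t_\infty){\bf x}_\infty={\bf 2}$ directly. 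If $\{t_n\}$ is strictly increasing, the limit produces the left-continuous variant $\chi_{(0,c]}$ instead, i.e.\ $\sum_{\lambda}\chi_{(0,c]}(t_\infty-\mu+\lambda){\bf x}_\infty(\lambda)=2$ for all $\mu$. The key identity $\chi_{(0,c]}(s)=\chi_{[0,c)}(c-s)$ then converts this into ${\bf M}_{a,b,c}(c-t_\infty)\tilde{\bf x}_\infty={\bf 2}$ with $\tilde{\bf x}_\infty(\lambda)={\bf x}_\infty(-\lambda)$, so one lands at $c-t_\infty\in{\mathcal D}_{a,b,c}$ rather than $t_\infty$. This reflection trick is what replaces your ad hoc coordinate adjustment, and it is the only missing ingredient in your plan.
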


The   crucial and most technical step in the proof of  Theorem \ref{framenullspace1.tm}
is to establish the following stability inequality:
\begin{equation} \label{framematrix.lem.eq1}
\sum_{0\le \mu\le a Q_{a,b,c}+2a+b+c} |({\bf M}_{a, b, c}(t) {\bf
z})(\mu)|\ge \frac{b}{2c}  |{\bf z}(0)|
\end{equation}
for all $t\in [0,b)$ and vectors
 ${\bf z}=({\bf z}(\lambda))_{\lambda\in b\Z}$. 

\begin{lem}\label{infinitematrixstability.lem}
Let $(a,b,c)$ be a triple of positive numbers satisfying $a<b<c$, and let $Q_{a,b,c}$ be as in \eqref{qabc.def}.  If
$ Q_{a,b,c}<+\infty$, then \eqref{framematrix.lem.eq1} holds
for any $t\in [0,b)$ and
 ${\bf z}=({\bf z}(\lambda))_{\lambda\in b\Z}$.
\end{lem}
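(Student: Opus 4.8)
The plan is to establish \eqref{framematrix.lem.eq1} by a reconstruction (telescoping) argument that expresses ${\bf z}(0)$ as a short signed sum of the output values $y(\mu):=({\bf M}_{a,b,c}(t){\bf z})(\mu)$, with the number of terms and the range of row indices $\mu$ controlled by the finiteness of $Q_{a,b,c}$. First I would fix $t\in[0,b)$ and record the basic structural fact coming from $a<b$: each $y(\mu)=\sum_{\lambda\in[\mu-t,\mu-t+c)\cap b\Z}{\bf z}(\lambda)$ is the sum of ${\bf z}$ over the lattice points in a width-$c$ window, and passing from row $\mu$ to row $\mu+a$ slides the window by $a<b$, so at most one lattice point enters on the right and at most one leaves on the left. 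Hence $y(\mu+a)-y(\mu)={\bf z}(\lambda_{\mathrm{in}})-{\bf z}(\lambda_{\mathrm{out}})$ with at most one term of each kind, a relation determined purely by the geometry of $(a,b,c)$ and not by ${\bf z}$.

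Next I would track the column $0$. It occupies the left edge and leaves the window at the unique row $\mu^\ast=\lfloor t/a\rfloor a\in[0,b)$, whence ${\bf z}(0)={\bf z}(\lambda_1)+y(\mu^\ast)-y(\mu^\ast+a)$, where $\lambda_1$ (if it exists) is the geometrically determined column entering on the right at that step, a multiple of $b$ near $c$. Iterating produces a purely geometric chain $0=\lambda_0\to\lambda_1\to\cdots$, with associated leaving rows $\mu_0=\mu^\ast<\mu_1<\cdots$, each advancing by roughly $c$, and the telescoped identity
\[
{\bf z}(0)={\bf z}(\lambda_K)+\sum_{k=0}^{K-1}\bigl(y(\mu_k)-y(\mu_k+a)\bigr).
\]
The chain terminates at the first step where a column leaves with no matching entering column (a window ``shrink''); at that step ${\bf z}(\lambda_K)=y(\mu_K)-y(\mu_K+a)$ is read off directly, exhibiting ${\bf z}(0)$ as a signed sum of $O(K)$ output values.

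The crux, and the step I expect to be the main obstacle, is bounding the chain length $K$ in terms of $Q_{a,b,c}$ so that every row $\mu_k+a$ stays within $[0,aQ_{a,b,c}+2a+b+c]$. Since the chain is geometric and independent of ${\bf z}$, its length is a property of the triple $(a,b,c)$ and $t$ alone; the plan is to show that a chain running too long forces a configuration witnessed by a binary vector ${\bf x}\in\mathcal{B}_b^0$ for which ${\bf M}_{a,b,c}(\cdot){\bf x}$ has a run of twos longer than $Q_{a,b,c}$, contradicting $Q_{a,b,c}<+\infty$ (equivalently $\mathcal{D}_{a,b,c}=\emptyset$, cf. Lemma \ref{dabc2toqabc.lem}). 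This is exactly where the index of consecutive twos in \eqref{qabc.def} enters, and pinning down the precise embedding of a long chain into a long run of twos is the delicate combinatorial part; it is also where one must handle the alternation between the two possible window sizes $\lfloor c/b\rfloor$ and $\lfloor c/b\rfloor+1$ and between the two possible jump sizes of the chain.

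Finally, once $K=O(Q_{a,b,c})$ and the indices lie in range, the triangle inequality gives $|{\bf z}(0)|\le\sum_k\bigl(|y(\mu_k)|+|y(\mu_k+a)|\bigr)\le 2\sum_{0\le\mu\le aQ_{a,b,c}+2a+b+c}|y(\mu)|$. The stated constant $b/(2c)$ is weaker than the $1/2$ produced by clean telescoping, the slack absorbing the bookkeeping needed when the direct telescoping must be replaced by a summation over the $\approx c/a$ rows that cover column $0$, where ${\bf z}(0)$ appears with multiplicity while neighbouring columns appear with partial multiplicities; I would not optimize this constant. Uniformity in $t\in[0,b)$ is automatic because $\mu^\ast$ and the chain depend on $t$ only through bounded geometric data, and the shift and periodicity relations \eqref{sabcinvariant.eq1}--\eqref{dabcperiodic.section2} ensure the constants do not degrade across translates, which is what later allows \eqref{framematrix.lem.eq1} at the origin to be upgraded to the full lower frame bound.
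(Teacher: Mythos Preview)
Your telescoping chain is correct and coincides with what the paper calls the even-indexed subchain $\lambda_0,\lambda_2,\lambda_4,\ldots$; the identity ${\bf z}(0)={\bf z}(\lambda_K)+\sum_{k<K}\bigl(y(\mu_k)-y(\mu_k+a)\bigr)$ holds as long as the chain runs. The genuine gap is termination. Your plan to bound the chain length by producing a long run of twos cannot work from a \emph{single} chain: the binary vector supported on your chain gives $({\bf M}_{a,b,c}(t){\bf x})(\mu)=1$ on the relevant rows, a run of \emph{ones}, and $Q_{a,b,c}$ controls only runs of twos. Concretely, the hypothesis $Q_{a,b,c}<\infty$ is equivalent (Lemma~\ref{dabc2toqabc.lem}) to ${\mathcal D}_{a,b,c}=\emptyset$, while your chain fails to terminate exactly when $t\in{\mathcal S}_{a,b,c}$: in that case it is the positive part of the support of the unique ${\bf x}_t\in{\mathcal B}_b^0$ with ${\bf M}_{a,b,c}(t){\bf x}_t={\bf 1}$. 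Since there are triples with ${\mathcal D}_{a,b,c}=\emptyset$ but ${\mathcal S}_{a,b,c}\ne\emptyset$ (e.g.\ Example~\ref{irrationalexample1}), the single chain from $0$ can be infinite under the stated hypothesis, and your proposed fallback of summing over rows covering column $0$ does not isolate ${\bf z}(0)$ either, since neighbouring columns appear with comparable multiplicities.

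The paper's remedy is to interleave, for each $l=1,\ldots,\delta_0/b$ (where $\delta_0$ is the rightmost column in the row-$\mu_0$ window), a second chain starting at $\lambda_1^l=lb$. The combined support now yields a run of \emph{twos}, so $Q_{a,b,c}$ bounds the length $M_l$ of the interleaved chain and hence the row index $\mu_{M_l}^l$. If $M_l$ is even for some $l$, the $0$-chain terminates first and your telescoping goes through with constant $\tfrac12$. If every $M_l$ is odd, each $lb$-chain terminates first; telescoping along it recovers $|{\bf z}(lb)|$, and adding the single row value $y(\mu_0)=\sum_{0\le\lambda\le\delta_0}{\bf z}(\lambda)$ isolates $|{\bf z}(0)|$ by the triangle inequality. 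The factor $\delta_0/b\le c/b$ counting these parallel chains is exactly the source of the constant $b/(2c)$: it is a sum over auxiliary chains starting at the \emph{other columns} in the initial window, not a sum over rows covering column $0$.
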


Now we start the proof of Theorem \ref{framenullspace1.tm} by assuming that Lemmas \ref{dabc2toqabc.lem} and \ref{infinitematrixstability.lem}
hold.

\begin{proof}[Proof of Theorem \ref{framenullspace1.tm}]
 (i)$\Longrightarrow$(ii): \quad
  Suppose, on the contrary, that
 there  exist  $t_0\in \R$,
 and a nonzero vector ${\bf z}^*=({\bf z}^*(\lambda))_{\lambda\in b\Z}$
 such that
 \begin{equation} \label{framenullspace1.tm.pf.eq1}  {\bf M}_{a, b, c}(t_0){\bf z}^*=0, \ {\rm and} \
  {\bf z}^*(\lambda)\in \{-1, 0, 1\}\ {\rm for \ all} \ \lambda\in b\Z.\end{equation}
 By Lemma \ref{framestabilitymatrix.lem} and the assumption (i),
  ${\bf M}_{a, b, c}(t_0)$ has the $\ell^2$-stability; i.e.,
  \begin{equation}\label{framenullspace1.tm.pf.eq1+}
0<  \inf_{\|{\bf z}\|_2=1} \|{\bf M}_{a, b, c}(t_0){\bf z}\|_2\le \sup_{\|{\bf z}\|_2=1} \|{\bf M}_{a, b, c}(t_0){\bf z}\|_2<\infty.\end{equation}
This together with \eqref{framenullspace1.tm.pf.eq1} implies that
  \begin{equation} \label{framenullspace1.tm.pf.eq2} {\bf z}^*\not\in \ell^2(b\Z).\end{equation}
Set ${\bf z}_N^*:= ({\bf z}^*(\lambda)\chi_{[-N, N]}(\lambda) )_{\lambda\in b\Z}$,
 $N\ge 2$.
Then
we obtain from \eqref{infinitematrix.def},  \eqref{framenullspace1.tm.pf.eq1} and \eqref{framenullspace1.tm.pf.eq2} that
$$ \left\{\begin{array}{l} \lim_{N\to \infty} \|{\bf z}_N^*\|_{2}=\infty,\\
\|{\bf M}_{a, b, c}(t_0){\bf z}^*_N\|_{\infty}\le \|{\bf M}_{a, b, c}(t_0){\bf 1}\|_{\infty}\le
 c/b+1, {\rm and}\ \\
 ({\bf M}_{a, b, c}(t_0) {\bf z}_N^*)(\mu)=0\quad {\rm  for\ all}\quad \mu-t_0\not\in [N-c, N]\cup [-N-c, -N].\end{array}\right. $$
 Therefore
 $\lim_{N\to \infty} {\|{\bf M}_{a, b, c}(t_0) {\bf z}_N^*\|_2} /{\|{\bf z}_N^*\|_2}=0$,
 which contradicts to the $\ell^2$-stability \eqref{framenullspace1.tm.pf.eq1+}.

\smallskip

 (ii)$\Longrightarrow$(iii):\quad  
 Suppose, on the contrary, that there exist  $t_0\in \R$ and
 a vector ${\bf x}\in {\mathcal B}_b^0$ such that $ {\bf M}_{a, b, c}(t_0) {\bf x}={\bf 2}$.  Denote by
$K$  the support of the vector ${\bf x}$; i.e., the set of all $\lambda\in b\Z$ with ${\bf x}(\lambda)=1$.
Then from $ {\bf M}_{a, b, c}(t_0) {\bf x}={\bf 2}$ it follows that
\begin{equation}\label{frammatrix.tm.pf.eq3}
 \#(K\cap (-t_0+\mu+[0,c))=2\quad {\rm  for\ all} \  \mu\in a\Z.\end{equation}
  Thus $\#(K)=+\infty$
and $K=\{\lambda_j: j\in \Z\}$ for some strictly increasing sequence $\{\lambda_j\}_{j=-\infty}^\infty$ in $b\Z$.
For any $\mu\in a\Z$, we obtain from \eqref{frammatrix.tm.pf.eq3} that
  $ K\cap (-t_0+\mu+[0,c))$ is either
$\{\lambda_{2j}, \lambda_{2j+1}\}$ or $\{\lambda_{2j-1}, \lambda_{2j}\}$ for some  $j\in \Z$.
Thus
\begin{equation*}
 \#(K_i\cap (-t_0+\mu+[0,c))=1
 \quad {\rm  for\ all} \  \mu\in a\Z \ {\rm and} \ i=0,1,\end{equation*}
where $K_i=\{\lambda_{i+2j}: \ j\in \Z\}, i=0,1$.
Define ${\bf x}_i:=({\bf x}_i(\lambda))_{\lambda\in a\Z}, i=0,1$, by
${\bf x}_i(\lambda)=1$ if $\lambda\in K_i$ and ${\bf x}_i(\lambda)=0$ otherwise. Then   ${\bf x}_i, i=0,1$,
have one and only one of them in ${\mathcal B}_b^0$ while the other one in  ${\mathcal B}_b\backslash {\mathcal B}_b^0$,   and they  satisfy
\begin{equation}\label{frammatrix.tm.pf.eq4}
{\bf x}={\bf x}_0+{\bf x}_1, \ {\bf x}_0\ne {\bf x}_1\ \ {\rm and} \ \ {\bf M}_{a, b, c}(t_0) {\bf x}_0={\bf M}_{a, b, c}(t_0) {\bf x}_1={\bf 1}.\end{equation}
Thus ${\bf y}={\bf x}_0-{\bf x}_1$ is a nonzero vector in ${\mathcal B}_b-{\mathcal B}_b$  contained in the null space of the infinite matrix
${\mathbf M}_{a, b, c}(t_0)$, which contradicts to the assumption (ii).

%



\smallskip
(iii)$\Longrightarrow$(i):\quad
Let $Q_{a,b,c}$ be as in \eqref{qabc.def}. Then $Q_{a, b,c}<\infty$ by Lemma \ref{dabc2toqabc.lem}.
For any $f\in L^2$, 
\begin{eqnarray*}
 & & \Big(Q_{a,b,c}+\frac{3a+b+c}{a}\Big)^2\sum_{\phi\in {\mathcal G}(\chi_{[0,c)}, a\Z\times \Z/b)} |\langle f, \phi\rangle|^2\nonumber \\
& \ge & \Big(Q_{a,b,c}+\frac{3a+b+c}{a}\Big) \sum_{\mu\in a\Z}\  \sum_{0\le \mu'\le a Q_{a,b,c}+2a+b+c} \
\sum_{n\in \Z} \nonumber\\
& & \qquad \Big|\int_{0}^b \Big(\sum_{\lambda\in b\Z}  \chi_{[0,c)}(t-\mu'+\lambda)f(t+\mu+\lambda)\Big) e^{-2\pi in t/b}dt\Big|^2\nonumber\\
& = & b \sum_{\mu\in a\Z}
\int_{0}^b
\Big( \big(Q_{a,b,c}+\frac{3a+b+c}{a}\big)  \sum_{0\le \mu'\le a Q_{a,b,c}+2a+b+c}\nonumber\\
& & \qquad
\Big |\sum_{\lambda\in b\Z} \chi_{[0,c)}(t-\mu'+\lambda)f(t+\mu+\lambda) \Big|^2\Big) dt \nonumber \\
& \ge  & b \sum_{\mu\in a\Z}
\int_{0}^b
\Big(  \sum_{0\le \mu'\le a Q_{a,b,c}+2a+b+c}
\Big |\sum_{\lambda\in b\Z}  \chi_{[0,c)}(t-\mu'+\lambda)f(t+\mu+\lambda)\Big|\Big)^2 dt \nonumber \\
& \ge & \frac{b^3}{4c^2} \sum_{\mu\in a\Z}
\int_{0}^b
 |f(t+\mu)|^2 dt \ge 
 \frac{b^3\lfloor b/a\rfloor}{4c^2}  \|f\|_2^2,
\end{eqnarray*}
where 
the third inequality follows from Lemma
\ref{infinitematrixstability.lem}, and the last inequality is true as
$\sum_{\mu\in a\Z} \chi_{[0, b)}(t-\mu)\ge \lfloor b/a\rfloor$ for all $t\in \R$.
Therefore ${\mathcal G}(\chi_{[0,c)}, a\Z\times \Z/b)$ is a Gabor frame and the implication
(iii)$\Longrightarrow$(i) is established.
\end{proof}

%

We finish this subsection by the proofs of Lemmas \ref{dabc2toqabc.lem} and \ref{infinitematrixstability.lem}.

\begin{proof}[Proof of Lemma \ref{dabc2toqabc.lem}] ($\Longleftarrow$)\quad
Suppose on the contrary that ${\mathcal D}_{a, b, c}\ne \emptyset$. Take $t_0\in {\mathcal D}_{a, b, c}$
and a vector ${\bf x}_0\in {\mathcal B}_b^0$ satisfying ${\bf M}_{a, b, c}(t_0) {\bf x}_0={\bf 2}$.
Then $Q_{a, b, c}(t,{\bf x}_0)=+\infty$, which implies that
$Q_{a, b, c}(t) = +\infty$, a contradiction.

($\Longrightarrow$)\quad
  Suppose, on the contrary, that
  $Q_{a, b,c}=+\infty$. Then  for all $n\ge 1$  there exist $t_n\in \R, \mu_n\in a\Z$ and ${\bf x}_n\in {\mathcal B}_b$ such that
  ${\bf M}_{a, b, c}(t_n) {\bf x}_n(\mu)=2$ for all $\mu_n\le \mu\le \mu_n+2na$.
 By \eqref{sabcinvariant.eq1} and
 \eqref{sabcinvariant.eq2}, without loss of generality, we may assume that $t_n\in [0,b)$ and
  \begin{equation} \label{infinitematrixfinitedegree.lem.pf.eq3}
  ({\bf M}_{a, b, c}(t_n) {\bf x}_n)(\mu)=2\quad {\rm for \ all} \ \mu\in [-na, na]\cap a\Z,
  \end{equation}
 otherwise
replacing $t_n$ by the unique number $t_n^\prime\in [0,b)$ satisfying $t_n-\mu_n-na-t_n^\prime\in b\Z$ and ${\bf x}_n$
by $\tau_{t_n^\prime-t_n+\mu_n+na}{\bf x}_n$. Moreover, we can additionally assume that
 ${\bf x}_n:=({\bf x}_n(\mu))_{\mu\in b\Z}\in {\mathcal B}_b^0, n\ge 1$, satisfy
\begin{equation}\label{infinitematrixfinitedegree.lem.pf.eq4}
{\bf x}_{n'}(\lambda)= {\bf x}_n(\lambda) \quad {\rm for\ all} \ \lambda\in [-n b, n b]\cap b\Z\ {\rm and}\ n'\ge n,
\end{equation}
and
\begin{equation}\label{infinitematrixfinitedegree.lem.pf.eq5}
\{t_n\}_{n=1}^\infty\ \  {\rm is \ a \ monotone\ sequence},
\end{equation}
otherwise, replace  $\{{\bf x}_n\}_{n=1}^\infty$ and $\{t_n\}_{n=1}^\infty$ by their subsequences
 satisfying \eqref{infinitematrixfinitedegree.lem.pf.eq4} and \eqref{infinitematrixfinitedegree.lem.pf.eq5}.
 Denote by $t_\infty$ and ${\bf x}_\infty$ the limit of the sequence $\{t_n\}_{n=1}^\infty$ of numbers in $[0, b)$ and
 the sequence  $\{{\bf x}_n\}_{n=1}^\infty$ of vectors in ${\mathcal B}_b^0$ respectively. Clearly ${\bf x}_\infty\in {\mathcal B}_b^0$.

If there exists $n_0$ such that $t_n=t_\infty$ for all $n\ge n_0$, then
  for any given $\mu\in a\Z$,
 \begin{equation*}\label{infinitematrixfinitedegree.lem.pf.eq6}
\big( {\bf M}_{a, b, c}(t_\infty) {\bf x}_\infty\big)(\mu)=\big({\bf M}_{a, b, c} (t_n) {\bf x}_n\big) (\mu)=2\end{equation*}
  for sufficiently large $n$ by \eqref{infinitematrixfinitedegree.lem.pf.eq4}.
Thus $ {\bf M}_{a, b,c}(t_\infty) {\bf x}_\infty={\bf 2}$ and $t_\infty\in {\mathcal D}_{a, b, c}$,
 which contradicts to the assumption that ${\mathcal D}_{a, b, c}=\emptyset$.

 If $\{t_n\}_{n=1}^\infty$ is a strictly decreasing sequence,
 then for any given $\lambda\in b\Z$ and $\mu\in a\Z$,
   $\chi_{[0,c)}(t_\infty-\mu+\lambda)=
 \chi_{[0,c)}(t_n-\mu+\lambda)$ for sufficiently large $n$.
 This together with  \eqref{infinitematrixfinitedegree.lem.pf.eq3} and
 \eqref{infinitematrixfinitedegree.lem.pf.eq4}
 implies that
 \begin{equation*} 
\big( {\bf M}_{a, b, c}(t_\infty) {\bf x}_\infty\big)(\mu)=\lim_{n\to \infty} \big({\bf M}_{a, b, c} (t_n) {\bf x}_n\big) (\mu)=2,\end{equation*}
which contradicts  to
the assumption that ${\mathcal D}_{a, b, c}=\emptyset$.

 If $\{t_n\}_{n=1}^\infty$ is a strictly increasing sequence, then
 for any given $\lambda\in a\Z$ and $\mu\in a\Z$,
    $\chi_{(0,c]}(t_\infty-\mu+\lambda)=
 \chi_{[0,c)}(t_n-\mu+\lambda)$ for sufficiently large $n$.
This together with  \eqref{infinitematrixfinitedegree.lem.pf.eq3} and
 \eqref{infinitematrixfinitedegree.lem.pf.eq4}
 yields  that
$$ \sum_{\lambda\in  b\Z}  \chi_{(0, c]}(t_\infty-\mu+\lambda) {\bf x}_\infty(\lambda)=2\quad {\rm for \ all} \ \mu\in a\Z,$$
or equivalently
$$ \sum_{\lambda\in  b\Z}  \chi_{[0, c)}(c-t_\infty-\mu+\lambda) {\bf x}_\infty(-\lambda)=2\quad {\rm for \ all} \ \mu\in a\Z.$$
Thus $c-t_\infty\in {\mathcal D}_{a, b, c}$, which is a contradiction.
 \end{proof}


\begin{proof} [Proof of Lemma \ref{infinitematrixstability.lem}]
For $t\in [0,b)$, let $\lambda_0=0, \mu_0=\lfloor t/a\rfloor a$ and
let $\delta_0\ge 0$ be the unique element in $[c+\mu_0-t-b, c+\mu_0-t)\cap b\Z$.
If $\delta_0=0$, then \eqref{framematrix.lem.eq1} holds as
$|({\bf M}_{a, b, c}(t){\bf z})(\mu_0)|=|{\bf z}(0)|$ and $\mu_0\le t\le a Q_{a,b,c}+2a+b+c$.

Now we prove \eqref{framematrix.lem.eq1} in the case that $\delta_0\ge b$. To do so,
 we introduce families of triples
$(\lambda_k^l, \mu_k^l, \delta_k^l)\in b\Z\times a\Z\times b\Z, 0\le k\le M_l, 1\le l\le \delta_0/b$, iteratively.
For $i=0,1$, we let $\lambda_i^l =ilb, \mu_i^l=\lfloor (t+\lambda_i^l)/a\rfloor a$
and  $\delta_i^l$ be the unique element in $[c+\mu_i^l-t-b, c+\mu_i^l-t)\cap b\Z$.
Suppose that we have defined the triple $(\lambda_k^l, \mu_k^l, \delta_k^l)$,
we set $M_l=k$ if $\delta_k^l\ge c+\mu_k^l-t+a-b$, and otherwise we
define $(\lambda_{k+2}^l, \mu_{k+2}^l, \delta_{k+2}^l)$ by $\lambda_{k+2}^l=\delta_k^l+b, \mu_{k+2}^l=\lfloor (t+\lambda_{k+2}^l)/a\rfloor a$ and
$\delta_{k+2}^l\in [c+\mu_{k+2}^l-t-b, c+\mu_{k+2}^l-t)\cap b\Z$.
We remark that $\delta_k^l\ge c+\mu_k^l-t+a-b$ if  the $(\mu_k^l+1)$-th row of the matrix ${\mathbf M}_{a, b, c}(t)$ is obtained by shift one unit to the left of the $\mu_k$-th row with reduction by one unit, c.f. the fourth and ninth rows in \eqref{irrationalexample1.infinitematrix.eq1}.
Here is the visual interpretation of
indices $\lambda_k^l$ and $\mu_k^l, 1\le k\le M_l$,   in the matrix ${\bf M}_{a, b, c}(t)$:

{\tiny
$$
\begin{array} {c}
 \\    \\ \\
 \mu_0  \rightarrow  \\
 \\
\\
 \\
 \mu_1  \rightarrow\\
 \\
 \\
 \mu_2  \rightarrow\\
 \\
\vdots\\   \\
\mu_{M_l}\rightarrow\\
 \\
 \end{array}
  \left(
 \begin{array}{cccccccccccccccccccccccc}  \lambda_0^l & \!\! &\! &\! \lambda_1^l &\! &\! &\!  &\! \lambda_2^l &\! &\! &\! &\! \lambda_3^l
  &\!\! &\! &\! \lambda_4^l &\! \cdots &\! \lambda_{M_l}^l\\
  \downarrow &\! &\! &\! \downarrow &\! &\! &\! &\! \downarrow &\! &\! &\! &\! \downarrow&\! &\! &\! \downarrow &\!  &\!  \downarrow\\
  &\!  &\! &\!  &\! &\! &\!\\
   1 & \!\! 1 &\! \cdots &\! 1 &\! 1 &\! \cdots &\! 1 &\!   &\!  &\!\\
    & \!\! 1 &\! \cdots &\! 1 &\! 1 &\! \cdots &\! 1 &\! 1  &\!  &\!\\
&\!  &\! &\!  &\!  &\! \ddots &\!  &\! &\! &\!\\
 &\!  &\!  &\! 1  &\! 1 &\! \cdots   &\! 1 &\! 1  &\! \cdots &\!\cdots &\! 1 \\
 &\!  &\!  &\!  &\! 1  &\! \cdots &\! 1   &\! 1 &\! \cdots  &\!\cdots &\! 1 &\! 1 \\
  &\! &\!  &\!  &\!  &\!  &\! &\! &\!  \ddots\\
 &\! &\! &\! &\!&\!  &\!  &\! 1 &\! 1 &\! \cdots &\! 1 &\! 1 &\! \cdots &\! 1 &\!   &\!  &\!\\
 &\! &\!&\!&\!&\!&\!  &\!  &\!  1 &\! \cdots &\! 1 &\! 1 &\! \cdots &\! 1 &\! 1  &\!  &\!\\
  &\! &\!  &\!&\!  &\!  &\!  &\! &\! &\! &\! &\! &\!   &\! &\! &\! \ddots\\
 &\! &\! &\!  &\! &\! &\! &\!  &\!&\!  &\!  &\! &\! &\! &\!  &\! &\! 1 &\! 1 &\! \cdots &\! 1   \\
 &\! &\! &\! &\!&\!&\!  &\! &\! &\! &\! &\!  &\!  &\! &\!  &\! &\!  &\! 1 &\! \cdots    &\! 1
   \end{array} \right)   .
$$
}

From the above construction, we see  triples $(\lambda_k^l,
\mu_k^l, \delta^l_k), 0\le k\le M_l$, have the following
properties:
\begin{equation}\label{framematrixstability.lem.pf.eq1}
\left\{\begin{array}{ll}
\lambda_k^l\in [\mu_k^l-t, \mu_k^l-t+a) & {\rm if} \  0\le k\le M_l \\
\lambda_{k+2}^l\in [c+\mu_k^l-t, c+\mu_k^l-t+a) & {\rm if}  \  0\le k\le M_l-2,
\end{array}\right.
\end{equation}
 \begin{equation}\label{framematrixstability.lem.pf.eq2} [\mu_{M_l}^l-t+c, \mu_{M_l}^l-t+c+a)\cap b{\Z}=\emptyset\quad {\rm if}\ M_l< \infty,
\end{equation}
 and
 \begin{equation} \label{framematrixstability.lem.pf.eq3}
 \{\lambda_k^l\}_{k=0}^{M_l} \ {\rm and}\  \{\mu_k^l\}_{k=0}^{M_l}\ {\rm  are\ strictly\ increasing\ sequences}.
 \end{equation}
Define ${\bf x}_l:=({\bf x}_l(\lambda))_{\lambda\in b\Z}$ by
${\bf x}_l(\lambda)=1$ if $\lambda=\lambda_k^l$ for some $0\le k\le M_l$, and ${\bf x}_l(\lambda)=0$ otherwise.
Then  ${\bf x}_l\in {\mathcal B}_b^0$  by \eqref{framematrixstability.lem.pf.eq3},
and
for $\mu_0^l\le \mu\le \mu_{M_{l-1}}^l$,
\begin{eqnarray*}
({\bf M}_{a, b, c}(t) {\bf x}_l) (\mu) & = & \Big(\sum_{0\le k\le M_l, k \ {\rm even} }+
\sum_{0\le k\le M_l, k \ {\rm odd} }\Big)
\chi_{[0,c)}(t-\mu+\lambda_{k}^l)
\nonumber \\
& = & \chi_{[0,\mu_0^l]}(\mu)+\sum_{2\le k\le M_l, k \ {\rm even}} \chi_{(\mu_{k-2}^l, \mu_k^l]}(\mu)\nonumber\\
& & +\chi_{(t+\lambda_1^l-c,\mu_1^l]}(\mu)
+\sum_{3\le k\le M_l, k \ {\rm odd }} \chi_{(\mu_{k-2}^l, \mu_k^l]}(\mu)
=2,
\end{eqnarray*}
where the second equation follows from
$$[\mu_{k-2}^l+a, \mu_k^l]\subset (t+\lambda_k^l-c, t+\lambda_k^l]\subset (\mu_{k-2}^l, \mu_k^l+a), 2\le k\le M_l$$
which  holds by \eqref{framematrixstability.lem.pf.eq1}. This leads to the following crucial estimate:
 \begin{equation} \label{framematrixstability.lem.pf.eq4}
 \mu_{M_{l-1}}^l-\mu_0^l\le a Q_{a,b,c}.
 \end{equation}
 By \eqref{framematrixstability.lem.pf.eq1} and \eqref{framematrixstability.lem.pf.eq3}, we have that
 \begin{equation}\label{framematrixstability.lem.pf.eq5}
 \mu_{M_l}-\mu_{M_{l-1}}\le \mu_{M_l}-\mu_{M_{l-2}}\le
 \lambda_{M_l}^l+t-(\lambda_{M_l}^l-c+t-a)\le a+c.
 \end{equation}
 Combining \eqref{framematrixstability.lem.pf.eq4} and \eqref{framematrixstability.lem.pf.eq5}
 and recalling $\mu_0\le t<b$, we obtain
 \begin{equation}\label{framematrixstability.lem.pf.eq6}
\mu_{M_l}\le a Q_{a,b,c}+a+b+c. \end{equation}

By \eqref{framematrixstability.lem.pf.eq6}, $M_l<\infty$ for all $1\le l\le \delta_0/b$.
Now we establish \eqref{framematrix.lem.eq1} if $M_{l_0}$ is  an even integer
for some $1\le l_0\le \delta_0/b$. In this case, applying
\eqref{framematrixstability.lem.pf.eq1} and
\eqref{framematrixstability.lem.pf.eq2}  with $l=l_0$, we obtain
that
\begin{eqnarray}\label{framematrixstability.lem.pf.eq7}
 & &  |({\bf M}_{a, b, c}(t) {\bf z})(\mu_{2k}^{l_0})|+|({\bf M}_{a, b, c}(t){\bf z})(\mu_{2k}^{l_0}+a)|\nonumber \\
  & \ge  &\!  |({\bf M}_{a, b, c}(t) {\bf z})(\mu_{2k}^{l_0})- ({\bf M}_{a, b, c}(t){\bf z})(\mu_{2k}^{l_0}+a)|\nonumber\\
&= &|{\bf z}(\lambda_{2k+2}^{l_0})-{\bf z}(\lambda_{2k}^{l_0})|
\end{eqnarray}
for all integers $0\le k\le  M_{l_0}/2-1$, and
\begin{eqnarray}\label{framematrixstability.lem.pf.eq8}
|({\bf M}_{a, b, c}(t) {\bf z})(\mu_{2k}^{l_0})|+|({\bf M}_{a, b, c}(t){\bf z})(\mu_{2k}^{l_0}+a)|
\ge |{\bf z}(\lambda_{M_{l_0}}^{l_0})|
\end{eqnarray}
for $k=M_{l_0}/2$.
Combining
\eqref{framematrixstability.lem.pf.eq6}, \eqref{framematrixstability.lem.pf.eq7} and \eqref{framematrixstability.lem.pf.eq8}, we get that
\begin{eqnarray*}
 & &  2 \sum_{0\le \mu\le  a Q_{a,b,c}+2a+b+c} |({\bf M}_{a, b, c}(t){\bf z})(\mu)|\nonumber\\
& \ge &  \sum_{k=0}^{M_{l_0}/2} |({\bf M}_{a, b, c}(t) {\bf z})(\mu_{2k}^{l_0})|
+|({\bf M}_{a, b, c}(t){\bf z})(\mu_{2k}^{l_0}+a)|\nonumber\\
& \ge &
\sum_{k=0}^{M_{l_0}/2-1}
|{\bf z}(\lambda_{2k+2}^{l_0})-{\bf z}(\lambda_{2k}^{l_0})| + |{\bf z}(\lambda_{M_{l_0}}^{l_0})|
\ge |{\bf z}(\lambda_0^{l_0})|=|{\bf z}(0)|.
\end{eqnarray*}
Hence
\eqref{framematrix.lem.eq1}  follows  in the case that $M_{l_0}$ is even for some $1\le l_0\le \delta_0/b$.

Finally we prove \eqref{framematrix.lem.eq1}  in the case that
$M_{l},1\le l\le \delta_0/b$, are all odd integers. In this case,
mimicking the argument used  to establish
\eqref{framematrixstability.lem.pf.eq7} and
\eqref{framematrixstability.lem.pf.eq8}, we obtain that
\begin{equation}\label{framematrixstability.lem.pf.eq7+}
  |({\bf M}_{a, b, c}(t) {\bf z})(\mu_{2k+1}^{l})|+|({\bf M}_{a, b, c}(t){\bf z})(\mu_{2k+1}^{l}+a)|
\ge |{\bf z}(\lambda_{2k+3}^{l})-{\bf z}(\lambda_{2k+1}^{l})|
\end{equation}
for all integers $0\le k\le (M_{l}-3)/2$, and
\begin{equation}\label{framematrixstability.lem.pf.eq8+}
|({\bf M}_{a, b, c}(t) {\bf z})(\mu_{2k+1}^{l})|+|({\bf M}_{a, b, c}(t){\bf z})(\mu_{2k+1}^{l}+a)|
\ge |{\bf z}(\lambda_{M_{l}}^{l})|
\end{equation}
for $k=(M_{l}-1)/2$, where $1\le l\le \delta_0/b$.
Therefore
\begin{eqnarray*}
& & \frac{2 \delta_0}{b} \sum_{0\le \mu\le a Q_{a,b,c}+2a+b+c}
|({\bf M}_{a, b, c}(t) {\bf z})(\mu)|\nonumber \\
 &  \ge &   |({\bf M}_{a, b, c}(t){\bf z})(\mu_0)| +\sum_{1\le l\le \delta_0/b}\ \sum_{k=0}^{(M_l-1)/2}
 \Big( |({\bf M}_{a, b, c}(t) {\bf z})(\mu_{2k+1}^{l})|\nonumber\\
 & & \quad +|({\bf M}_{a, b, c}(t){\bf z})(\mu_{2k+1}^{l}+a)|\Big)\nonumber \\
 & \ge & \Big|\sum_{0\le \lambda\le \delta_0} {\bf z}(\lambda)\Big|+
 \sum_{1\le l\le \delta_0/b} |{\bf z}(\lambda_1^l)|\nonumber\\
 & =& \Big|\sum_{0\le \lambda\le \delta_0} {\bf z}(\lambda)\Big|+
 \sum_{1\le \lambda\le \delta_0} |{\bf z}(\lambda)|
 \ge |{\bf z}(0)| \end{eqnarray*}
by   \eqref{framematrixstability.lem.pf.eq7+} and \eqref{framematrixstability.lem.pf.eq8+}.
This together with $\delta_0\le c+\mu_0-t\le c$ proves
\eqref{framematrix.lem.eq1} in the case that $M_{l}, 1\le l\le \delta_0/b$, are all odd integers.
This completes the proof of Lemma \ref{infinitematrixstability.lem}.
\end{proof}

\subsection{Proof of  Theorem \ref{newmaintheorem2}}\label{gaborstabilitymatrix.section.subsection2}
In this subsection, we apply Theorem \ref{framenullspace1.tm} to
prove Theorem \ref{newmaintheorem2} by verifying whether
${\mathcal D}_{a, b, c}$ is an empty set or not.
To do so,  we  notice
 that  ${\mathcal S}_{a, b, c}$  is a supset of ${\mathcal D}_{a, b, c}$,
   \begin{equation} \label{dabc2subsetdabc1}
{\mathcal D}_{a, b, c}\subset {\mathcal S}_{a, b, c},
\end{equation}
  because any vector ${\bf x}\in {\mathcal B}_b^0$ satisfying $ {\bf M}_{a, b, c}(t) {\bf x}={\bf 2}$
can be written as the sum of two vectors ${\bf x}_1, {\bf x}_2\in
{\mathcal B}_b$ such that ${\bf M}_{a, b, c}(t) {\bf x}_1= {\bf M}_{a,b,c}(t) {\bf x}_2={\bf 1}$, c.f. \eqref{frammatrix.tm.pf.eq4}
in the proof of Theorem \ref{framenullspace1.tm}.
We also  we 
need some elementary properties for
the supset ${\mathcal S}_{a, b, c}$,
including 
empty intersection property \eqref{dabcbasic1.lem.eq1} with the  black holes.

\begin{prop}\label{dabcbasic1.prop}
Let $(a,b,c)$ be a triple of positive numbers with $a<b<c$. Set $c_0=c-\lfloor c/b\rfloor b$,
 $(c_0+a-b)_+=\max(c_0+a-b,0)$ and  $c_0\wedge a=\min(c_0, a)$.
Define
\begin{equation}
\label{lambdaabc.def}
\lambda_{a,b,c}(t)=\left\{\begin{array}{ll}
 \lfloor c/b\rfloor b +b & {\rm if} \ t\in [0, (c_0+a-b)_+)+a\Z\\
 0 & {\rm if} \ t\in [(c_0+a-b)_+, c_0\wedge a)+a\Z\\
  \lfloor c/b\rfloor b   & {\rm if} \ t\in [c_0\wedge a, a)+a\Z,
  \end{array}\right. \end{equation}
and
\begin{equation}
\label{tildelambdaabc.def}
\tilde \lambda_{a,b,c}(t)=\left\{\begin{array}{ll}
 -\lfloor c/b\rfloor b -b & {\rm if} \ t\in [c-(c_0+a-b)_+, c)+a\Z\\
 0 & {\rm if} \ t\in [c-c_0\wedge a, c-(c_0+a-b)_+)+a\Z\\
 - \lfloor c/b\rfloor b   & {\rm if} \ t\in [c-a, c-c_0\wedge a)+a\Z.
  \end{array}\right. \end{equation}
  Then
\begin{equation}\label{dabcbasic1.lem.eq1}
{\mathcal S}_{a, b, c}\cap \big([(c_0+a-b)_+, c_0\wedge a)\cup [c-c_0\wedge a, c-(c_0+a-b)_+)+a\Z\big)=
\emptyset,\end{equation}
and
 \begin{equation}
 \label{dabcbasic1.lem.eq3}
 {\bf x}(\lambda)=\left\{\begin{array}{ll} 0 & {\rm  if} \  \tilde \lambda_{a,b,c}(t)<\lambda<\lambda_{a,b,c}(t)\ {\rm and }\ \lambda\ne 0\\
1 & {\rm  if} \ \lambda=\lambda_{a,b,c}(t), 0,  \tilde \lambda_{a,b,c}(t)\end{array}\right.
\end{equation}
for any $t\in {\mathcal S}_{a, b, c}$ and  ${\bf x}\in {\mathcal B}_b^0$ satisfying ${\bf M}_{a, b, c}(t){\bf x}={\bf 1}$.
 \end{prop}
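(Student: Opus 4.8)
The plan is to reduce everything to the base period $[0,a)$ and then read off the structure of the support of a binary solution directly from the ``exactly one entry per row'' constraint. First I would use the $a$-periodicity \eqref{dabcperiodic.section2} of $\mathcal{S}_{a,b,c}$ (a consequence of the shift relation \eqref{sabcinvariant.eq2}) to restrict to $t\in[0,a)$, and translate ${\bf M}_{a,b,c}(t){\bf x}={\bf 1}$ into the combinatorial statement that, writing $K:=\{\lambda\in b\Z:\ {\bf x}(\lambda)=1\}$ for the support of ${\bf x}\in\mathcal{B}_b^0$, \emph{each} window $W_\mu:=[\mu-t,\mu-t+c)$ with $\mu\in a\Z$ contains exactly one element of $K$. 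Since $0\in K$ and $0\in W_0=[-t,c-t)$, the window $W_0$ forces $K\cap[-t,c-t)=\{0\}$, whence $K\cap[-t,0)=\emptyset$ and $K\cap(0,c-t)=\emptyset$; the latter is the positive half of the vanishing statement in \eqref{dabcbasic1.lem.eq3}.

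The positive side is then controlled by the next window $W_a=[a-t,a-t+c)$. Because $a-t>0$ and $K\cap(0,c-t)=\emptyset$, the unique element of $K$ in $W_a$ must lie in the sub-interval $[c-t,c-t+a)$, which has length $a<b$ and therefore contains \emph{at most one} multiple of $b$. Writing $c=\lfloor c/b\rfloor b+c_0$ and reducing $c-t$ modulo $b$, I would show that $[c-t,c-t+a)$ contains no multiple of $b$ exactly when $t\in[(c_0+a-b)_+,c_0\wedge a)$; in that range $W_a$ cannot contain any point of $K$, contradicting the ``exactly one'' condition, so no solution exists and $t\notin\mathcal{S}_{a,b,c}$. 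This gives the first black hole in \eqref{dabcbasic1.lem.eq1}. Off the black hole the unique multiple of $b$ in $[c-t,c-t+a)$ equals $\lfloor c/b\rfloor b+b$ when $t\in[0,(c_0+a-b)_+)$ and $\lfloor c/b\rfloor b$ when $t\in[c_0\wedge a,a)$; a short check that $K$ has no point in $(0,\lambda_{a,b,c}(t))$ identifies this multiple as the smallest positive element of $K$, i.e. as $\lambda_{a,b,c}(t)$ in \eqref{lambdaabc.def}, proving the positive half of \eqref{dabcbasic1.lem.eq3}.

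For the negative side and the second black hole in \eqref{dabcbasic1.lem.eq1}, I would invoke the reflection symmetry $t\mapsto c-t$, $\lambda\mapsto-\lambda$ already used in the proof of Lemma \ref{dabc2toqabc.lem}, which rests on the identity $\chi_{(0,c]}(s)=\chi_{[0,c)}(c-s)$: it carries a solution with support $K$ at $t$ to a solution with support $-K$ at $c-t$, thereby turning the largest negative element of $K$ into the smallest positive element of $-K$. Under this correspondence the domain interval $[(c_0+a-b)_+,c_0\wedge a)$ governing $\lambda_{a,b,c}$ maps (up to the flipped endpoint) to the interval $[c-c_0\wedge a,c-(c_0+a-b)_+)$ governing $\tilde\lambda_{a,b,c}$, so the negative-side statements follow verbatim from the positive-side ones and $\tilde\lambda_{a,b,c}(t)$ in \eqref{tildelambdaabc.def} is the largest negative element of $K$. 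Alternatively, one can argue directly with the left-neighbour window by locating the leftmost window still containing $0$ and repeating the modular count.

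I expect the main obstacle to be twofold. The substantive point is the modular case analysis showing that a length-$a$ interval may contain \emph{no} multiple of $b$ --- this is exactly what creates the black holes and is the heart of why $\mathcal{S}_{a,b,c}$ avoids them; one must track carefully the regimes $t<c_0+a-b$, $c_0+a-b\le t<c_0$ and $t\ge c_0\wedge a$, together with the degenerate cases $c_0+a-b\le 0$ and $c_0\ge a$ that collapse some of the intervals in \eqref{lambdaabc.def}. The technical nuisance is the half-open/closed endpoint bookkeeping: the reflection interchanges $\chi_{[0,c)}$ and $\chi_{(0,c]}$, and the precise half-open form of the intervals in \eqref{lambdaabc.def}--\eqref{tildelambdaabc.def} and in \eqref{dabcbasic1.lem.eq1} is dictated by exactly this discrepancy, so the boundary values must be handled explicitly rather than by a naive symmetry appeal.
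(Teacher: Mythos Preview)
Your proposal is correct and follows essentially the same approach as the paper: reduce to $t\in[0,a)$ by periodicity, use the row $\mu=0$ to clear $K\cap(0,c-t)$, then use the row $\mu=a$ to locate the smallest positive support point as the unique multiple of $b$ in $[c-t,c-t+a)$ (with the black hole arising exactly when this interval contains none), and handle the negative side via the reflection $t\mapsto c-t$, $\lambda\mapsto-\lambda$ that exchanges $\chi_{[0,c)}$ and $\chi_{(0,c]}$. The paper's write-up is terser but the logical skeleton, including the endpoint care you flag, is the same.
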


Now assuming that Proposition \ref{dabcbasic1.prop} holds, we start to prove Theorem \ref{newmaintheorem2}.
The main ideas behind our proof of   Theorem  \ref{newmaintheorem2} are as  follows.
To prove Conclusion (V) of Theorem \ref{newmaintheorem2}, we use Proposition \ref{dabcbasic1.prop} to
 verify that ${\mathcal S}_{a, b, c}=\emptyset$, which together with \eqref{dabc2subsetdabc1} and Theorem \ref{framenullspace1.tm}
  leads to the desired frame property for the Gabor system ${\mathcal G}(\chi_{[0,c)} a\Z\times\Z/b)$.
The crucial step in the proof of Conclusion (VI) is that
for any $t_0\in {\mathcal D}_{a, b, c}$, the binary vector ${\bf x}\in {\mathcal B}_b^0$
satisfying ${\mathcal M}_{a, b, c}(t_0) {\bf x}={\bf 2}$ is supported on
$(\lfloor c/b\rfloor+1) b\Z\cup (\lambda_1+(\lfloor c/b\rfloor+1) b\Z)$
for some $\lambda_1\in [b, \lfloor c/b\rfloor b]\cap b\Z$, which implies that
$$\{t_0, t_0+\lambda\}+ (\lfloor c/b\rfloor+1) b\Z +a\Z\subset {\mathcal D}_{a, b, c},$$
see \eqref{maintheorem.pf.nullspace1.eq3} and \eqref{maintheorem.pf.nullspace1.eq4}.
Define
\begin{equation}\label{tildeinfinitematrix.def}
\tilde {\bf M}_{a, b, c}(t)= (\chi_{(0, c]}(t-\mu+\lambda))_{\mu\in a\Z, \lambda\in b\Z},\ t\in \R.\end{equation}
We use similar argument to prove Conclusion (VII), except that
for any $t_0\in \tilde {\mathcal D}_{a, b, c}:=c-{\mathcal D}_{a, b, c}$, the binary vector ${\bf x}\in {\mathcal B}_b^0$
satisfying $\tilde {\mathcal M}_{a, b, c}(t_0) {\bf x}={\bf 2}$ is supported on
$\lfloor c/b\rfloor  b\Z\cup (\lambda_1+ \lfloor c/b\rfloor b\Z)$
for some $\lambda_1\in [b, \lfloor c/b\rfloor b-1]\cap b\Z$,
see \eqref{maintheorem.pf.eq4+} and \eqref{maintheorem.pf.eq5+}.

\begin{proof}[Proof of Theorem \ref{newmaintheorem2}]

(V): \quad
By 
\eqref{dabc2subsetdabc1} and Theorem \ref{framenullspace1.tm},  it suffices to prove
${\mathcal S}_{a, b, c}=\emptyset$, which follows from Proposition \ref{dabcbasic1.prop}
as $[(c_0+a-b)_+, c_0\wedge a)+a\Z=[0, a)+a\Z=\R$  in this case.
This  conclusion was also established  in
\cite[Section 3.3.3.2]{janssen03}, we include the proof as the conclusion ${\mathcal S}_{a, b, c}=\emptyset$
will be used later to prove Theorem \ref{framenullspace2.tm}.

\smallskip

(VI):\quad   ($\Longrightarrow$)\
By 
Theorem \ref{framenullspace1.tm},
 ${\mathcal D}_{a, b, c}\ne \emptyset$.
Then by \eqref{dabcperiodic.section2} there exist $t_0\in [0,a)$ and ${\bf x}\in {\mathcal B}_b^0$   such that
 ${\bf M}_{a, b, c}(t_0) {\bf x}={\bf 2}$.
 By \eqref{frammatrix.tm.pf.eq4}, we can write ${\bf x}={\bf x}_0+{\bf x}_1$ such that ${\bf x}_0\in {\mathcal B}_b^0, {\bf x}_1\in {\mathcal B}_b$
 and ${\bf M}_{a, b, c}(t_0) {\bf x}_0={\bf M}_{a, b, c}(t_0) {\bf x}_1={\bf 1}$.
 Let $\lambda_i, i=1,2, 3$, be the first three positive indices in
 $b\Z$ with ${\bf x}(\lambda_i)=1, i=1, 2,3$.
 Then $\lambda_2$ is the first positive index in $b\Z$ with ${\bf x}_0(\lambda_2)=1$ by \eqref{frammatrix.tm.pf.eq4}.
  By Proposition \ref{dabcbasic1.prop}, we obtain that
 $\lambda_2=\lambda_3-\lambda_1=\lfloor c/b\rfloor b+b$. Further inspection  also shows that  $t_0\in [0, c_0+a-b),
 t_0+\lambda_1\in [0, c_0+a-b)+a\Z$,
 and   the unique number $t_1\in [0,a)\cap (t_0+\lambda_2+a\Z)$
 belongs to ${\mathcal D}_{a, b, c}$, where $0\le c_0:=c-\lfloor c/b\rfloor b<b$. Inductively for any $n\ge 1$,
the unique number $t_n$ in $[0,a)\cap (t_{n-1}+(\lfloor c/b\rfloor +1)b+a\Z)$
belongs to ${\mathcal D}_{a, b, c}\cap [0, c_0+a-b)$ and $t_n+\lambda_1\in {\mathcal D}_{a, b, c}\cap ([0, c_0+a-b)+a\Z)$.
  From the above construction, we see that $t_n-t_0-n(\lfloor c/b\rfloor +1)b\in a\Z$.
  Hence $a/b$ is rational, as otherwise $\{t_n: \ n\ge 0\}$ is dense in $[0,a)$, which is a contradiction as $c_0+a-b<a$ and $t_n+\lambda_1\in [0, c_0+a-b)+a\Z$
  for all $n\ge 0$.
  Now we  write $a/b=p/q$ for some coprime integers $p$ and $q$, set
  $m= {\rm gcd}(\lfloor c/b\rfloor+1, p)$,
   and let $t_0^\prime$ be the unique number in $[0,  m b/q)$
  such that $t_0^\prime-t_0\in m b\Z/q$.
 Then
 \begin{eqnarray}\label{maintheorem.pf.nullspace1.eq3}
\{t_n: n\ge 0\}+a\Z  & = &  \{ t_0+ n(\lfloor c/b\rfloor+1)b: n\ge 0\}+a\Z\nonumber\\
 &=&
 t_0+(\lfloor c/b\rfloor +1)b\Z+a\Z = t_0+  m b\Z/q\nonumber\\
 & = & \{ t_0^\prime+n m b/q: 0\le n\le p/m-1 \}+a\Z,
 \end{eqnarray}
  where the first equality follows from the definition of $t_n$, the second one holds as $p(\lfloor c/b\rfloor+1)b\in a\Z$,
 and the last one is true by $m={\rm gcd}((\lfloor c/b\rfloor+1)q, p)$.
 Similarly, we have that
  \begin{equation}\label{maintheorem.pf.nullspace1.eq4}
\{t_n+\lambda_1: n\ge 0\}+a\Z   =  \{ t_0^{\prime\prime}+n m b/q: 0\le n\le p/m-1 \}+a\Z,
 \end{equation}
 where $t^{\prime\prime}_0$ is the unique number  in $[0,  m b/q)$
  such that $t_0^{\prime\prime}-t_0-\lambda_1\in m b\Z/q$.
    Hence from \eqref{maintheorem.pf.nullspace1.eq3}, \eqref{maintheorem.pf.nullspace1.eq4} and
  the property that $t_n, t_n+\lambda_1\in   [0, c_0+a-b)+a\Z, n\ge 1$, we obtain that
  $\{ t_0^\prime+n m b/q: 0\le n\le p/m-1 \}\subset
 [0, c_0+a-b)$ and   $\{ t_0^{\prime\prime}+n m b/q: 0\le n\le p/m-1 \}\subset
 [0, c_0+a-b)$. Observe that $t^{\prime\prime}_0-t_0^\prime\in \lambda_1+mb\Z/q\subset b \Z/q$, and
 that  $t^{\prime\prime}_0-t_0^\prime\ne 0$ if $m=\lfloor c/b\rfloor+1$
as otherwise $(\lambda_1/b) q= (\lfloor c/b\rfloor+1)r$ for some $r\in \Z$. This together with $1={\rm gcd}(q, p)\ge {\rm gcd}(q, m)$ implies that
$\lambda_1\in (\lfloor c/b\rfloor+1) b\Z$, which is a contradiction as $0<\lambda_1<(\lfloor c/b\rfloor+1) b$.
 Therefore
 $c_0+a-b>(p/m-1)mb/q$ if $m\ne \lfloor c/b\rfloor+1$
and
 $c_0+a-b>(p/m-1)mb/q+b/q$ if $m= \lfloor c/b\rfloor+1$.
%
This completes the proof of the necessity.

 ($\Longleftarrow$)\  In the case that ${\rm gcd}(\lfloor c/b\rfloor+1, p)\ne \lfloor c/b\rfloor+1$, we define
${\bf x}(\lambda)=1$ if $\lambda\in \{0, {\rm gcd}(\lfloor c/b\rfloor+1, p)b\}+(\lfloor c/b\rfloor+1)b\Z$
and ${\bf x}(\lambda)=0$ otherwise.
Then ${\bf x}\in {\mathcal B}_b^0$ and
\begin{eqnarray}\label{maintheorem.pf.nullspace1.eq5}
{\bf M}_{a,b, c} (0) {\bf  x}(\mu) & = &  \sum_{k\in \Z} \chi_{[0,c)}\big(k(\lfloor c/b\rfloor+1)b-\mu\big)\nonumber\\
& & +
\sum_{k\in \Z} \chi_{[0,c)}\big(k(\lfloor c/b\rfloor+1)b+{\rm gcd}(\lfloor c/b\rfloor+1, p)b-\mu\big)\nonumber\\
& = &
\sum_{i=1}^2 \sum_{k\in \Z} \chi_{[0,c)}\big(k(\lfloor c/b\rfloor+1)b+s_i {\rm gcd}(\lfloor c/b\rfloor+1, p) b/q\big)\nonumber \\
& = & \sum_{i=1}^2 \chi_{[0,c)}\big(s_i {\rm gcd}(\lfloor c/b\rfloor+1, p) b/q\big)=2, \ \mu\in a\Z,
\end{eqnarray}
where $0\le s_1, s_2\le  q (\lfloor c/b\rfloor+1)/{\rm gcd}(\lfloor c/b\rfloor+1, p) -1$ are so chosen that
 $-\mu-s_1  {\rm gcd}(\lfloor c/b\rfloor+1, p)b/q\in (\lfloor c/b\rfloor+1)b\Z$
and $-\mu+{\rm gcd}(\lfloor c/b\rfloor+1, p)b -s_2  {\rm gcd}(\lfloor c/b\rfloor+1, p)b/q\in (\lfloor c/b\rfloor+1)b\Z$.
Therefore ${\mathcal D}_{a, b, c}\ne \emptyset$ and hence  the corresponding  Gabor system
${\mathcal G}(\chi_{[0,c)}, a\Z\times \Z/b)$  is not a Gabor frame by Theorem \ref{framenullspace1.tm}.

In the case that ${\rm gcd}(\lfloor c/b\rfloor+1, p)=\lfloor c/b\rfloor+1$, we have that $p\ge 2$ and we define
${\bf x}(\lambda)=1$ if $\lambda\in \{0, \lambda_1\}+(\lfloor c/b\rfloor+1)b\Z$
and ${\bf x}(\lambda)=0$ otherwise, where $\lambda_1\in b\Z$ is chosen so that $b\le \lambda_1\le (p-1) b$ and $q\lambda_1/b-1\in p\Z$.
Then
\begin{eqnarray}\label{maintheorem.pf.nullspace1.eq6}
{\bf M}_{a,b, c}(0) {\bf x}(\mu) & = &  \sum_{k\in \Z} \chi_{[0,c)}\big(k(\lfloor c/b\rfloor+1)b-\mu\big)\nonumber\\
& & +
\sum_{k\in \Z} \chi_{[0,c)}\big(k(\lfloor c/b\rfloor+1)b+\lambda_1 -\mu\big)\nonumber\\
& = &
\sum_{k\in \Z} \chi_{[0,c)}\big(k(\lfloor c/b\rfloor+1)b+s_3 (\lfloor c/b\rfloor+1) b/q\big)\nonumber\\
& &  +
\sum_{k\in \Z} \chi_{[0,c)}\big(k(\lfloor c/b\rfloor+1)b+s_4 (\lfloor c/b\rfloor+1) b/q+ b/q\big) \nonumber \\
& = & \chi_{[0,c)}(s_3 (\lfloor c/b\rfloor+1) b/q)\nonumber\\
& & +
\chi_{[0,c)}(s_4 (\lfloor c/b\rfloor+1)b/q+ b/q)=2
\end{eqnarray}
where
 $0\le s_3, s_4\le  q  -1$ are so chosen that
 $-\mu-s_3  (\lfloor c/b\rfloor+1)b/q\in (\lfloor c/b\rfloor+1)b\Z$
 and $-\mu+\lambda_1-b/q-s_4  (\lfloor c/b\rfloor+1)b/q\in (\lfloor c/b\rfloor+1)b\Z$,
 and the last equality follows from the assumption on $c_0$.
 Therefore ${\mathcal D}_{a, b, c}\ne \emptyset $  in
 the case that ${\rm gcd}(\lfloor c/b\rfloor+1, p)=\lfloor c/b\rfloor+1$
 and hence the Gabor system ${\mathcal G}(\chi_{[0,c)}, a\Z\times \Z/b)$  is not a Gabor frame by Theorem \ref{framenullspace1.tm}.

  \smallskip

  (VII):\quad  Let $\tilde {\bf M}_{a, b, c}(t), t\in \R$, be as in \eqref{tildeinfinitematrix.def} and define
\begin{equation}\label{tildedabc2.def} \tilde {\mathcal D}_{a, b, c}=c-{\mathcal D}_{a, b, c}.\end{equation}
Then
\begin{equation}\label{tildedabc2.def2} \tilde {\mathcal D}_{a, b, c}=\big\{t\in \R:\  \tilde {\bf M}_{a, b, c}(t) {\bf x}={\bf 2}
\ {\rm for \ some} \ {\bf x}\in {\mathcal B}_b^0\big\}.
\end{equation}
 We will apply an argument  similar to the one used in the proof of the conclusion (VI)  essentially replacing
 ${\mathcal M}_{a, b, c}(t)$, ${\mathcal D}_{a, b, c}$  and $\lfloor c/b\rfloor+1$ by $\tilde {\mathcal M}_{a, b, c}(t)$,
  $\tilde {\mathcal D}_{a, b, c}$ and $\lfloor c/b\rfloor$
respectively.

  ($\Longrightarrow$)\ Let $t_0\in \tilde {\mathcal D}_{a, b, c}$ for some $t_0\in (0, a]$. The existence of such a  number $t_0$ follows from
 \eqref{sabcinvariant.eq1}, \eqref{tildedabc2.def} and Theorem \ref{framenullspace1.tm}.
Then by \eqref{tildeinfinitematrix.def} and \eqref{tildedabc2.def2},
 $\tilde {\bf M}_{a, b, c}(t_0) {\bf x}={\bf 2}$ for some  ${\bf x}\in {\mathcal B}_b^0$.
Following the argument to prove \eqref{frammatrix.tm.pf.eq4}, we can write ${\bf x}={\bf x}_0+{\bf x}_1$ for some
${\bf x}_0\in {\mathcal B}_b^0$ and ${\bf x}_1\in {\mathcal B}_b$
satisfying $\tilde {\bf M}_{a, b, c}(t_0) {\bf x}_0=\tilde {\bf M}_{a, b, c}(t_0) {\bf x}_1={\bf 1}$, and let
 $\lambda_i, i=1,2, 3$, be the first three positive indices in
 $b\Z$ with ${\bf x}(\lambda_i)=1, i=1, 2,3$.
 Then $\lambda_2$ is the first positive index in $b\Z$ with ${\bf x}_0(\lambda_2)=1$,
     and  $\lambda_2=\lambda_3-\lambda_1=\lfloor c/b\rfloor b$
by following the argument used in the proof of Proposition \ref{dabcbasic1.prop}.
Furthermore  $t_0\in (c_0, a],
 t_0+\lambda_1\in (c_0, a]+a\Z$,
 and   the unique number $t_1\in (0,a]$ with $t_1-t_0-\lambda_2\in a\Z$
 belongs to $\tilde {\mathcal D}_{a, b, c}$. Inductively for any $n\ge 1$, the unique number $t_n\in (0,a]$ satisfying
  $t_n-t_{n-1}- \lfloor c/b\rfloor b
 \in a\Z$  belongs to $\tilde {\mathcal D}_{a, b, c}\cap (c_0, a]$ and $t_n+\lambda_1\in \tilde {\mathcal D}_{a, b, c}\cap ((c_0, a]+a\Z)$.
  From the above construction,  $t_n-t_0-n\lfloor c/b\rfloor b\in a\Z$.
  Hence $a/b\in \Q$ if $c_0\ne 0$, as otherwise $\{t_n: \ n\ge 0\}$ is dense in $(0,a]$, which is a contradiction.
  Now consider $c_0\ne 0$ and write $a/b=p/q$ for some coprime integers $p$ and $q$, set
  $m= {\rm gcd}(\lfloor c/b\rfloor, p)$.
Let $t_0^\prime$ and $t^{\prime\prime}_0$ be the unique number in $(0,  m b/q]$
  such that $t_0^\prime-t_0\in m b\Z/q$  and  $t_0^{\prime\prime}-t_0-\lambda_1\in m b\Z/q$.
Then
 \begin{equation}\label{maintheorem.pf.eq4+}
\{t_n: n\ge 0\}+a\Z 
  =  \{ t_0^\prime+n m b/q: 0\le n\le p/m-1 \}+a\Z,
 \end{equation}
and
  \begin{equation}\label{maintheorem.pf.eq5+}
\{t_n+\lambda_1: n\ge 0\}+a\Z   =  \{ t_0^{\prime\prime}+n m b/q: 0\le n\le p/m-1 \}+a\Z.
 \end{equation}
    Hence from \eqref{maintheorem.pf.eq4+}, \eqref{maintheorem.pf.eq5+} and
  the property that $t_n, t_n+\lambda_1\in   (c_0, a]+a\Z, n\ge 1$, it follows that
  $\{ t_0^\prime+n m b/q: 0\le n\le p/m-1 \}\subset
 (c_0, a]$ and   $\{ t_0^{\prime\prime}+n m b/q: 0\le n\le p/m-1 \}\subset
 (c_0,a]$. Observe that $t^{\prime\prime}_0-t_0^\prime\in \lambda_1+mb\Z/q\subset b\Z/q$ and
 that  $t^{\prime\prime}_0-t_0^\prime\ne 0$ if $m=\lfloor c/b\rfloor$ as otherwise $\lambda_1\in mb\Z$, which is a contradiction.
 Therefore
 $0<c_0<mb/q$ if $m\ne \lfloor c/b\rfloor$
and
 $0<c_0<mb/q-b/q$ if $m= \lfloor c/b\rfloor$.
This completes the proof of the necessity.

 ($\Longleftarrow$)\  In the case that $c_0=0$,  we define
 ${\bf x}(\lambda)=1$ if $\lambda\in \{0, b\}+\lfloor c/b\rfloor b\Z$ and ${\bf x}(\lambda)=0$  otherwise.
 Then ${\bf x}\in {\mathcal B}_b^0$ and
\begin{equation*}\label{maintheorem.pf.nullspace1.eq9}
 {\bf M}_{a, b,c}(0) {\bf x}(\mu) =  \sum_{k\in \Z} \chi_{[0,c)}(-\mu+kc)+\chi_{[0,c)}(-\mu+b+kc)=2\end{equation*}
for all $\mu\in a\Z$. Then ${\mathcal D}_{a, b, c}\ne \emptyset$ and ${\mathcal G}(\chi_{[0,c)}, a\Z\times \Z/b)$ is not a Gabor frame by Theorem
\ref{framenullspace1.tm}.

In the case that $a/b=p/q$ for some coprime integers $p$ and $q$, $0<c_0< {\rm gcd}(\lfloor c/b\rfloor, p)b/q$
and ${\rm gcd}(\lfloor c/b\rfloor, p)\ne \lfloor c/b\rfloor$, we define
${\bf x}(\lambda)=1$ if $\lambda\in \{0 , {\rm gcd}(\lfloor c/b\rfloor, p) b \}+\lfloor c/b\rfloor b\Z $ and ${\bf x}(\lambda)=0$  otherwise.
Then ${\bf x}\in {\mathcal B}_b^0$ and
\begin{eqnarray} \label{maintheorem.pf.nullspace1.eq10}
\tilde {\bf M}_{a, b, c}(0) {\bf x}(\mu)
 &= & \sum_{k\in\Z} \chi_{(0,c]}(s_1 {\rm gcd}(\lfloor c/b\rfloor, p) b/q+ k \lfloor c/b\rfloor b)\nonumber\\
 & &  + \sum_{k\in\Z} \chi_{(0,c]}(s_2 {\rm gcd}(\lfloor c/b\rfloor, p) b/q + k \lfloor c/b\rfloor b)=2, 
\end{eqnarray}
where $1\le s_1, s_2\le q \lfloor c/b\rfloor/{\rm gcd}(\lfloor c/b\rfloor, p)$ are so chosen that
$-\mu-s_1 {\rm gcd}(\lfloor c/b\rfloor, p) b/q\in \lfloor c/b\rfloor b \Z$
and $-\mu+{\rm gcd}(\lfloor c/b\rfloor, p)b  -s_2 {\rm gcd}(\lfloor c/b\rfloor, p) b/q\in \lfloor c/b\rfloor b \Z$.
Hence $0\in \tilde {\mathcal D}_{a, b, c}$ (or equivalently $c\in {\mathcal D}_{a, b, c}$ by \eqref{tildedabc2.def2}) and
${\mathcal G}(\chi_{[0,c)}, a\Z\times \Z/b)$ is not a Gabor frame by Theorem
\ref{framenullspace1.tm}.

In the case that $a/b=p/q$ for some coprime integers $p$ and $q$, $0<c_0< {\rm gcd}(\lfloor c/b\rfloor, p)b/q-b/q$
and ${\rm gcd}(\lfloor c/b\rfloor, p)=\lfloor c/b\rfloor$, we have that $\lfloor c/b\rfloor\ge 2$
and define
${\bf x}(\lambda)=1$ if $\lambda\in \{0, \lambda_1\}+\lfloor c/b\rfloor b\Z$ and ${\bf x}(\lambda)=0$ otherwise, where
$b\le \lambda_1\le (\lfloor c/b\rfloor -1)b$ is so chosen that $q\lambda_1/b +1\in \lfloor c/b\rfloor \Z$.
Then ${\bf x}\in {\mathcal B}_b^0$ and
\begin{eqnarray}  \label{maintheorem.pf.nullspace1.eq11}
\tilde {\bf M}_{a, b, c}(0) {\bf x}(\mu)
 &= & \sum_{k\in\Z} \chi_{(0,c]}(s_3 {\rm gcd}(\lfloor c/b\rfloor, p) b/q+ k \lfloor c/b\rfloor b)\nonumber\\
 & &  + \sum_{k\in\Z} \chi_{(0,c]}(s_4 {\rm gcd}(\lfloor c/b\rfloor, p) b/q-b/q + k \lfloor c/b\rfloor b)=2,
\end{eqnarray}
where $1\le s_3, s_4\le q \lfloor c/b\rfloor/{\rm gcd}(\lfloor c/b\rfloor, p)$ are so chosen that
$-\mu-s_3 {\rm gcd}(\lfloor c/b\rfloor, p) b/q\in \lfloor c/b\rfloor b \Z$
and $-\mu+\lambda_1-b/q  -s_4 {\rm gcd}(\lfloor c/b\rfloor, p) b/q\in \lfloor c/b\rfloor b \Z$.
This proves that $0\in \tilde {\mathcal D}_{a, b, c}$ and hence
$c\in {\mathcal D}_{a, b, c}$ by \eqref{tildedabc2.def2}. Thus
 ${\mathcal G}(\chi_{[0,c)}, a\Z\times \Z/b)$ is not a Gabor frame by Theorem
\ref{framenullspace1.tm}.
\end{proof}

We finish this subsection with the proof of Proposition \ref{dabcbasic1.prop}.

\begin{proof} [Proof of Proposition \ref{dabcbasic1.prop}]
   Let $t\in {\mathcal S}_{a, b, c}$  and ${\bf M}_{a, b, c}(t) {\bf x}={\bf 1}$ for some ${\bf x}\in {\mathcal B}_b^0$.
 By \eqref{sabcinvariant.eq2}, we may assume that
$t\in [0,a)$. Let $\lambda_1\in b\Z$ be the smallest positive index such that ${\bf x}(\lambda_1)=1$.
 Then
$ \lambda_1\ge \lfloor c/b\rfloor b$
 because
 \begin{equation}\label{dabcbasic1.lem.pf.eq1}
 1=\chi_{[0,c)}(t)\le \chi_{[0,c)}(t)+ \chi_{[0,c)}(t+\lambda_1)\le \sum_{\lambda\in b\Z} \chi_{[0,c)}(t+\lambda) {\bf x}(\lambda)=1,\end{equation}
 and
$ \lambda_1\le \lfloor c/b\rfloor b+b$
 since otherwise
  $\sum_{\lambda\in b\Z} \chi_{[0,c)}(t-a+\lambda) {\bf x}(\lambda)=0$.
Thus  $\lambda_1$ is either $\lfloor c/b\rfloor b$ or $\lfloor c/b\rfloor b+b$. If $\lambda_1=\lfloor c/b\rfloor b$,
 then $t\ge c_0$ (and hence $ \min(c_0, a)\le t<a$)
  by \eqref{dabcbasic1.lem.pf.eq1}.
 If $\lambda_1=\lfloor c/b\rfloor b+b$, then  $ t<c_0+a-b$ as $1=\sum_{\lambda\in b\Z} \chi_{[0,c)}(t-a+\lambda){\bf x}(\lambda)
=\chi_{[0,c)}(t-a+\lfloor c/b\rfloor b+b)$.
 This proves that  $t\not\in [(c_0+a-b)_+, \min(c_0, a))$
 and $\lambda_1=\lambda_{a,b,c}(t)$.
 Hence
\begin{equation}\label{dabcbasic1.lem.pf.eq2}
 {\mathcal S}_{a, b, c}\cap \big([(c_0+a-b)_+, c_0\wedge a)+a\Z\big)=\emptyset
 \end{equation}
 and
  \begin{equation} \label{dabcbasic1.lem.pf.eq3}
 {\bf x}(\lambda)=\left\{\begin{array}{ll} 0 & {\rm  if} \  0<\lambda<\lambda_{a,b,c}(t),\\
1 & {\rm  if} \ \lambda=\lambda_{a,b,c}(t).\end{array}\right.
\end{equation}

For the above vector ${\bf x}$, we have that
$\tilde {\bf M}_{a, b, c} (c-t) \tilde {\bf x}={\bf 1}$,
where  $\tilde {\bf M}_{a, b, c}(t)$ is given in \eqref{tildeinfinitematrix.def}
and
$\tilde {\bf x}=({\bf x}(-\lambda))_{\lambda\in b\Z}\in {\mathcal B}_b^0$. Then mimicking the above argument
to establish \eqref{dabcbasic1.lem.pf.eq2} and \eqref{dabcbasic1.lem.pf.eq3}, we obtain that
\begin{equation}\label{dabcbasic1.lem.pf.eq4}
 {\mathcal S}_{a, b, c}\cap \big([c-c_0\wedge a, c-(c_0+a-b)_+)+a\Z\big)=\emptyset,
 \end{equation}
 and
  \begin{equation} \label{dabcbasic1.lem.pf.eq5}
 \tilde {\bf x}(\lambda)=\left\{\begin{array}{ll} 0 & {\rm  if} \  0<\lambda<-\tilde\lambda_{a,b,c}(t)\\
1 & {\rm  if} \ \lambda=-\tilde \lambda_{a,b,c}(t).\end{array}\right.
\end{equation}
Combining \eqref{dabcbasic1.lem.pf.eq2}--\eqref{dabcbasic1.lem.pf.eq5}, we obtain \eqref{dabcbasic1.lem.eq1} and \eqref{dabcbasic1.lem.eq3}.
\end{proof}

\subsection{Binary solutions of infinite-dimensional linear systems}
In this subsection, we prove Theorem \ref{framenullspace2.tm}.
To do so, we first recall some basic properties concerning
black holes, ranges, invertibility of   piecewise linear transformations
$R_{a,b,c}$ and $\tilde R_{a,b,c}$ on the real line. 

\begin{prop}\label{rabcinvertibility.prop}
 Let $(a,b,c)$ be a triple of positive numbers satisfying $a<b<c$ and $b-a<c_0:=c-\lfloor c/b\rfloor b< a$,
 and let $R_{a,b,c}$ and $\tilde R_{a, b,c}$ be infinite matrices in  \eqref{rabcnewplus.def} and \eqref{tilderabcnewplus.def}.
 Then
 \begin{equation}\label{blackholesoftransformations}
 \left\{\begin{array}{l}
 R_{a,b,c}(t)= t  \quad  {\rm if} \ t\in [c_0+a-b, c_0)+a\Z\\
  \tilde R_{a,b,c}(t)= t  \quad  {\rm if} \ t\in [c-c_0, c+b-c_0-a)+a\Z,
 \end{array}
 \right.
 \end{equation}
  \begin{equation}\label{rangeoftransformations}
   \left\{\begin{array}{l}
 R_{a,b,c}\big(\R\backslash ([c_0+a-b, c_0)+a\Z)\big) 
 =\R\backslash([c-c_0, c+b-c_0-a)+a\Z)  \\
  \tilde R_{a,b,c}\big(\R\backslash([c-c_0, c+b-c_0-a)+a\Z)\big) 
 =\R\backslash([c_0+a-b, c_0)+a\Z), 
 \end{array}
 \right.
 \end{equation}
and
   \begin{equation}\label{invertibilityoftransformations}
   \left\{\begin{array}{l} \tilde R_{a,b,c}(R_{a,b,c}(t))=t \quad {\rm for \ all} \ \ t\not\in [c_0+a-b, c_0)+a\Z\\
    R_{a,b,c}(\tilde R_{a,b,c}(t))=t \quad  {\rm for \ all } \ \ t\not\in [c- c_0, c+b-c_0-a)+a\Z. 
   \end{array}\right.
   \end{equation}
%
\end{prop}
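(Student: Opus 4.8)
The plan is to treat the three assertions \eqref{blackholesoftransformations}, \eqref{rangeoftransformations}, and \eqref{invertibilityoftransformations} as purely elementary facts about the explicit piecewise-linear formulas \eqref{rabcnewplus.def} and \eqref{tilderabcnewplus.def}, since the hypothesis $b-a<c_0<a$ guarantees that all the interval endpoints appearing in those definitions are in the correct order. First I would establish \eqref{blackholesoftransformations}: the middle branch of $R_{a,b,c}$ is the identity on $[c_0+a-b,c_0)+a\Z$ by inspection of \eqref{rabcnewplus.def}, and the middle branch of $\tilde R_{a,b,c}$ is the identity on $[c-c_0,c+b-c_0-a)+a\Z$ by inspection of \eqref{tilderabcnewplus.def}. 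One must verify that these ``black-hole'' intervals are nonempty and correctly placed inside the period $[0,a)+a\Z$ (resp.\ a shifted period), which follows from $b-a<c_0<a$: this forces $0\le c_0+a-b<c_0<a$, so the three defining intervals of $R_{a,b,c}$ tile $[0,a)+a\Z$, and symmetrically for $\tilde R_{a,b,c}$.

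Next I would prove the range identities \eqref{rangeoftransformations}. The idea is to track where each of the two non-identity branches sends its domain interval, reducing everything modulo $a$. For $R_{a,b,c}$, the branch on $[0,c_0+a-b)+a\Z$ adds $\lfloor c/b\rfloor b+b$ and the branch on $[c_0,a)+a\Z$ adds $\lfloor c/b\rfloor b$; computing the images modulo $a\Z$ (using $c=\lfloor c/b\rfloor b+c_0$) one checks that the two images are exactly the two complementary pieces $[c-c_0,c)+a\Z$ and $[c-a,c-c_0+b-a)+a\Z$ of the complement of the target black hole $[c-c_0,c+b-c_0-a)+a\Z$. The map $\tilde R_{a,b,c}$ is handled the same way, and its image pieces reassemble the complement of $[c_0+a-b,c_0)+a\Z$. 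Here the only real work is bookkeeping: confirming that the two image intervals are disjoint and abut correctly so that their union is the full complement, which again rests on the ordering forced by $b-a<c_0<a$.

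Finally, for the inversion identities \eqref{invertibilityoftransformations}, I would argue branch-by-branch: take $t$ outside the black hole $[c_0+a-b,c_0)+a\Z$, determine which branch of $R_{a,b,c}$ applies, compute $R_{a,b,c}(t)$, then verify directly from \eqref{tilderabcnewplus.def} that $R_{a,b,c}(t)$ lands in the matching branch of $\tilde R_{a,b,c}$ and that the two shifts cancel. Concretely, if $t\in[0,c_0+a-b)+a\Z$ then $R_{a,b,c}(t)=t+\lfloor c/b\rfloor b+b$, which (mod $a\Z$) lies in $[c+b-c_0-a,c)+a\Z$, the domain of the ``$-\lfloor c/b\rfloor b-b$'' branch of $\tilde R_{a,b,c}$, so $\tilde R_{a,b,c}(R_{a,b,c}(t))=t$; the $[c_0,a)$ branch is checked analogously, and the reverse composition $R_{a,b,c}\circ\tilde R_{a,b,c}$ by symmetry. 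The main obstacle is not conceptual but combinatorial: one must keep careful track of the half-open interval endpoints and the ``$+a\Z$'' periodicity so that every boundary point is assigned to exactly one branch and no point is double-counted or dropped. I expect the cleanest route is to fix a representative $t\in[0,a)$ (using periodicity with period $a$, which both $R_{a,b,c}$ and $\tilde R_{a,b,c}$ respect modulo $a\Z$) and then simply tabulate the three cases, so that all three claims reduce to a finite, transparent case check rather than any limiting or measure-theoretic argument.
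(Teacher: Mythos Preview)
Your proposal is correct and matches the paper's approach exactly: the paper simply states that all three properties follow directly from the explicit definitions \eqref{rabcnewplus.def} and \eqref{tilderabcnewplus.def} and omits the details, while you have written out precisely the branch-by-branch verification the authors left to the reader.
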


 The black hole property \eqref{blackholesoftransformations}, the range property \eqref{rangeoftransformations},
and the  invertibility property \eqref{invertibilityoftransformations}
outside  black holes follow  directly from   \eqref{rabcnewplus.def} and \eqref{tilderabcnewplus.def}. We omit the detailed arguments here.
%

\smallskip

To prove Theorem \ref{framenullspace2.tm},
we  then show that
the  piecewise linear transformations $R_{a, b,c}$ and
$\tilde R_{a, b,c}$ have their restrictions on  the sets ${\mathcal D}_{a, b, c}$ and  ${\mathcal S}_{a, b, c}$
being well-defined, invariant,  invertible and
measure-preserving.

\begin{prop}
\label{rabcbasic1.thm}
 Let $(a,b,c)$ be a triple of positive numbers satisfying $a<b<c$ and $b-a<c_0:=c-\lfloor c/b\rfloor b< a$, the sets
 ${\mathcal D}_{a, b, c}$ and ${\mathcal S}_{a, b, c}$ be as in
 \eqref{dabc2.def} and \eqref{dabc1.def}, and  let  transformations $R_{a,b,c}$ and
 $\tilde R_{a, b,c}$ be as in  \eqref{rabcnewplus.def}
and \eqref{tilderabcnewplus.def}.
 Then the following statements hold.
\begin{itemize}
\item[{(i)}] The sets ${\mathcal D}_{a,b,c}$ and ${\mathcal S}_{a, b, c}$ are invariant under the
 transformations $R_{a,b,c}$ and $\tilde R_{a, b,c}$;
i.e., \begin{equation}\label{rabcbasic1.thm.eq1}
R_{a,b,c}  {\mathcal D}_{a,b,c}=\tilde  R_{a,b,c}  {\mathcal D}_{a,b,c}= {\mathcal D}_{a,b,c}\ {\rm and} \
R_{a,b,c} {\mathcal S}_{a, b, c}=\tilde R_{a,b,c}  {\mathcal S}_{a,b,c}= {\mathcal S}_{a,b,c}.\end{equation}

\item[{(ii)}] The restriction of  $\tilde {R}_{a,b,c}$ onto  ${\mathcal S}_{a,b,c}$ (resp. ${\mathcal D}_{a,b,c}$)
is the inverse of the restriction of  $R_{a, b,c}$  onto  ${\mathcal S}_{a,b,c}$ (resp. ${\mathcal D}_{a,b,c}$), and vice versa; i.e.,
\begin{equation}\label{rabcbasic1.thm.eq1b}
{R}_{a, b, c} (\tilde {R}_{a, b, c}  (t))=
\tilde {R}_{a, b, c} ( {R}_{a, b, c} (t))=t \ {\rm for\ all} \  t\in {\mathcal S}_{a,b,c}\  ({\rm resp}. \ {\mathcal D}_{a,b,c}).
\end{equation}

\item [{(iii)}] The restriction of  the piecewise linear transformations $R_{a, b,c}$ and $\tilde {R}_{a,b,c}$
 onto  ${\mathcal S}_{a,b,c}$ (resp.  ${\mathcal D}_{a,b,c}$) are measure-preserving transformations; i.e.,
\begin{equation}\label{rabcbasic1.thm.eq1c} |R_{a, b, c}(E)|= |\tilde R_{a, b, c}(E)|=|E|
\end{equation}
for any measurable set $E\subset {\mathcal S}_{a, b, c}$ (resp. ${\mathcal D}_{a,b,c}$).

\end{itemize}
\end{prop}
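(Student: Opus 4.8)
The plan is to establish the three assertions in the logical order (ii), (i), (iii), since each rests on the previous. The single fact that unlocks everything is that both ${\mathcal S}_{a,b,c}$ and ${\mathcal D}_{a,b,c}$ are disjoint from the two black holes $[c_0+a-b,c_0)+a\Z$ of $R_{a,b,c}$ and $[c-c_0,c+b-c_0-a)+a\Z$ of $\tilde R_{a,b,c}$. Indeed, under the standing hypothesis $b-a<c_0<a$ one has $(c_0+a-b)_+=c_0+a-b$ and $c_0\wedge a=c_0$, so the empty-intersection identity \eqref{dabcbasic1.lem.eq1} of Proposition \ref{dabcbasic1.prop} says exactly that ${\mathcal S}_{a,b,c}$ misses both black holes, and by the inclusion \eqref{dabc2subsetdabc1} so does ${\mathcal D}_{a,b,c}$. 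All objects are $a$-periodic by \eqref{dabcperiodic.section2}, so it suffices to argue for representatives $t\in[0,a)$. Granting this, Assertion (ii) is immediate: for $t\in{\mathcal S}_{a,b,c}$ (hence $t\in{\mathcal D}_{a,b,c}$) the point $t$ lies outside both black holes, so the pointwise invertibility relations \eqref{invertibilityoftransformations} of Proposition \ref{rabcinvertibility.prop} give $\tilde R_{a,b,c}(R_{a,b,c}(t))=t$ and $R_{a,b,c}(\tilde R_{a,b,c}(t))=t$ with no appeal to invariance.

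For Assertion (i) I would first prove the forward inclusions $R_{a,b,c}{\mathcal S}_{a,b,c}\subset{\mathcal S}_{a,b,c}$ and $\tilde R_{a,b,c}{\mathcal S}_{a,b,c}\subset{\mathcal S}_{a,b,c}$, together with their ${\mathcal D}$-analogues, by shifting solution vectors. The enabling remark is that on ${\mathcal S}_{a,b,c}$ the active branches of the transformations coincide with the prescribed shifts: comparing \eqref{rabcnewplus.def} with \eqref{lambdaabc.def} and using that $t$ avoids the black hole gives $R_{a,b,c}(t)=t+\lambda_{a,b,c}(t)$, and likewise $\tilde R_{a,b,c}(t)=t+\tilde\lambda_{a,b,c}(t)$ from \eqref{tilderabcnewplus.def} and \eqref{tildelambdaabc.def}. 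Now take $t\in{\mathcal S}_{a,b,c}$ with ${\bf x}\in{\mathcal B}_b^0$ solving ${\bf M}_{a,b,c}(t){\bf x}={\bf 1}$. Since $\lambda_{a,b,c}(t)$ is the smallest positive index at which ${\bf x}$ equals one (Proposition \ref{dabcbasic1.prop}), the shifted vector $\tau_{\lambda_{a,b,c}(t)}{\bf x}$ again lies in ${\mathcal B}_b^0$, and the shift rule \eqref{sabcinvariant.eq1} yields ${\bf M}_{a,b,c}(R_{a,b,c}(t))\,\tau_{\lambda_{a,b,c}(t)}{\bf x}={\bf M}_{a,b,c}(t){\bf x}={\bf 1}$, so $R_{a,b,c}(t)\in{\mathcal S}_{a,b,c}$; the same with $\tilde\lambda_{a,b,c}(t)$ and \eqref{dabcbasic1.lem.eq3} handles $\tilde R_{a,b,c}$. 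For ${\mathcal D}_{a,b,c}$ one repeats the argument with a solution ${\bf x}$ of ${\bf M}_{a,b,c}(t){\bf x}={\bf 2}$, decomposed as ${\bf x}={\bf x}_0+{\bf x}_1$ as in \eqref{frammatrix.tm.pf.eq4} so that ${\bf x}_0\in{\mathcal B}_b^0$ determines $\lambda_{a,b,c}(t)$; because ${\bf x}_0$ and ${\bf x}_1$ have disjoint supports, ${\bf x}(\lambda_{a,b,c}(t))=1$, hence $\tau_{\lambda_{a,b,c}(t)}{\bf x}\in{\mathcal B}_b^0$ still solves the ${\bf 2}$-system at $R_{a,b,c}(t)$.

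With the four forward inclusions in hand, the equalities in \eqref{rabcbasic1.thm.eq1} follow from (ii): for $t\in{\mathcal S}_{a,b,c}$ the point $\tilde R_{a,b,c}(t)$ lies in ${\mathcal S}_{a,b,c}$ and satisfies $R_{a,b,c}(\tilde R_{a,b,c}(t))=t$, so $t\in R_{a,b,c}{\mathcal S}_{a,b,c}$, which combined with the forward inclusion gives $R_{a,b,c}{\mathcal S}_{a,b,c}={\mathcal S}_{a,b,c}$; the same reasoning applies to $\tilde R_{a,b,c}$ and verbatim to ${\mathcal D}_{a,b,c}$. Assertion (iii) is then a short measure computation. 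On each of its defining branches $R_{a,b,c}$ is a pure translation (by $\lfloor c/b\rfloor b+b$ or by $\lfloor c/b\rfloor b$), hence Lebesgue-measure-preserving and measurable-set-preserving. Given measurable $E\subset{\mathcal S}_{a,b,c}$, split $E=E_1\sqcup E_2$ with $E_1\subset[0,c_0+a-b)+a\Z$ and $E_2\subset[c_0,a)+a\Z$ (the middle branch is void since ${\mathcal S}_{a,b,c}$ avoids the black hole), so $|R_{a,b,c}(E_i)|=|E_i|$; since $R_{a,b,c}$ is injective on ${\mathcal S}_{a,b,c}$ by (ii), the images are disjoint and $|R_{a,b,c}(E)|=|E_1|+|E_2|=|E|$. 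The identical argument covers $\tilde R_{a,b,c}$ and ${\mathcal D}_{a,b,c}$.

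The main obstacle is the forward-invariance step inside Assertion (i), and within it the ${\mathcal D}_{a,b,c}$-case: one must check that translating a ${\bf 2}$-solution by the shift dictated by its ${\bf 1}$-subsolution ${\bf x}_0$ preserves the value one at the origin. This is precisely where the disjoint even/odd decomposition \eqref{frammatrix.tm.pf.eq4} and the solution structure \eqref{dabcbasic1.lem.eq3} of Proposition \ref{dabcbasic1.prop} are indispensable; once these are secured, the remaining matching of branches with shifts and the measure computation are formal bookkeeping.
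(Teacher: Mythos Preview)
Your proposal is correct and follows essentially the same approach as the paper's proof: the paper also derives the forward inclusions $R_{a,b,c}{\mathcal S}_{a,b,c}\subset{\mathcal S}_{a,b,c}$ and $\tilde R_{a,b,c}{\mathcal S}_{a,b,c}\subset{\mathcal S}_{a,b,c}$ from \eqref{sabcinvariant.eq1} and Proposition~\ref{dabcbasic1.prop}, upgrades them to equalities using the invertibility \eqref{invertibilityoftransformations} outside the black holes, handles ${\mathcal D}_{a,b,c}$ via the decomposition \eqref{frammatrix.tm.pf.eq4}, and then reads off (ii) and (iii) from \eqref{dabcbasic1.lem.eq1} and the translation structure of $R_{a,b,c}$. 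The only cosmetic differences are that the paper presents (i) before (ii) (embedding the invertibility observation as an intermediate step), and in (iii) it phrases measure preservation as ``$|R_{a,b,c}(E)|\le|E|$ with equality off the black hole'' rather than your explicit branch-split plus injectivity; these are equivalent formulations of the same argument.
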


We will use Proposition \ref{rabcbasic1.thm}  to
establish  the uniqueness of solution ${\bf x}\in {\mathcal B}_b^0$ for
the infinite-dimensional linear system ${\bf M}_{a, b, c} (t) {\bf
x}={\bf 1}$, which
is a pivotal observation and our starting point  to  explore various properties of the sets ${\mathcal D}_{a, b, c}$ and ${\mathcal S}_{a, b, c}$ deeply.

 \begin{prop}\label{uniqueness.cor}
 Let $(a,b,c)$ be a triple of positive numbers satisfying $a<b<c$ and $b-a<c_0:=c-\lfloor c/b\rfloor b< a$, and let the set
 ${\mathcal S}_{a, b, c}$ be as in \eqref{dabc1.def}.  Then for any $t\in {\mathcal S}_{a, b, c}$ there exists
 a unique vector ${\bf x}\in {\mathcal B}_b^0$ such that
 ${\bf M}_{a, b, c} (t) {\bf x}={\bf 1}$.
\end{prop}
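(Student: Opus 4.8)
The plan is to note that existence is built into the definition \eqref{dabc1.def} of $\mathcal{S}_{a,b,c}$, so the entire content is the \emph{uniqueness} assertion, which I would prove by showing that the support of an arbitrary solution $\mathbf{x}\in\mathcal{B}_b^0$ is forced, index by index, by $t$ alone. First I would record that on $\mathcal{S}_{a,b,c}$ the piecewise linear maps act as pure shifts: comparing \eqref{rabcnewplus.def} with \eqref{lambdaabc.def} and \eqref{tilderabcnewplus.def} with \eqref{tildelambdaabc.def}, and using $b-a<c_0<a$ together with the black-hole avoidance \eqref{dabcbasic1.lem.eq1}, one gets $R_{a,b,c}(t)=t+\lambda_{a,b,c}(t)$ and $\tilde R_{a,b,c}(t)=t+\tilde\lambda_{a,b,c}(t)$ for every $t\in\mathcal{S}_{a,b,c}$.

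Fix $t\in\mathcal{S}_{a,b,c}$ and let $\mathbf{x}\in\mathcal{B}_b^0$ satisfy $\mathbf{M}_{a,b,c}(t)\mathbf{x}=\mathbf{1}$. Since each row $\mu\in a\Z$ forces exactly one support index inside the $\lambda$-window $[\mu-t,\mu-t+c)\cap b\Z$, and these windows exhaust $b\Z$ as $\mu\to\pm\infty$, the support of $\mathbf{x}$ is a bi-infinite strictly increasing sequence $\{\lambda_n\}_{n\in\Z}$, normalized so that $\lambda_0=0$. Proposition \ref{dabcbasic1.prop}, specifically \eqref{dabcbasic1.lem.eq3}, then pins down the two support indices adjacent to the origin: $\lambda_1=\lambda_{a,b,c}(t)$ and $\lambda_{-1}=\tilde\lambda_{a,b,c}(t)$, both depending only on $t$.

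The heart of the argument is a two-sided induction establishing, for $n\ge 0$, that $R_{a,b,c}^{\,n}(t)=t+\lambda_n$ and that $\lambda_n$ depends only on $t$. For the inductive step set $\mathbf{x}^{(n)}:=\tau_{\lambda_n}\mathbf{x}$; the shift identity \eqref{sabcinvariant.eq1} gives $\mathbf{M}_{a,b,c}(t+\lambda_n)\mathbf{x}^{(n)}=\mathbf{M}_{a,b,c}(t)\mathbf{x}=\mathbf{1}$, while $\mathbf{x}^{(n)}(0)=\mathbf{x}(\lambda_n)=1$ shows $\mathbf{x}^{(n)}\in\mathcal{B}_b^0$. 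Thus $t+\lambda_n=R_{a,b,c}^{\,n}(t)$ lies in $\mathcal{S}_{a,b,c}$ (also by the invariance in Proposition \ref{rabcbasic1.thm}(i)) and carries the solution $\mathbf{x}^{(n)}$. Since the smallest positive support index of $\mathbf{x}^{(n)}$ is $\lambda_{n+1}-\lambda_n$, applying \eqref{dabcbasic1.lem.eq3} at $R_{a,b,c}^{\,n}(t)$ yields $\lambda_{n+1}-\lambda_n=\lambda_{a,b,c}(R_{a,b,c}^{\,n}(t))$, whence $R_{a,b,c}^{\,n+1}(t)=t+\lambda_{n+1}$ and $\lambda_{n+1}$ depends only on $t$. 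Running the identical computation with $\tilde R_{a,b,c}$ and $\tilde\lambda_{a,b,c}$ determines every gap $\lambda_{n-1}-\lambda_n$ for $n\le 0$. Combined with $\lambda_0=0$, the whole sequence $\{\lambda_n\}$, and therefore the vector $\mathbf{x}$, is completely determined by $t$, so any two solutions in $\mathcal{B}_b^0$ coincide.

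I expect the only genuine obstacle to be the bookkeeping that keeps the induction honest: verifying at each step that the shifted vector $\tau_{\lambda_n}\mathbf{x}$ is again a bona fide $\mathcal{B}_b^0$-solution based at the point $R_{a,b,c}^{\,n}(t)$, and that this base point is precisely the $n$-th iterate of the transformation, so that Proposition \ref{dabcbasic1.prop} is legitimately applicable there. Once the identification $R_{a,b,c}(t)=t+\lambda_{a,b,c}(t)$ on $\mathcal{S}_{a,b,c}$ is in place, everything else is routine reindexing of supports.
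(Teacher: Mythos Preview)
Your argument is correct and takes a different route from the paper. The paper's proof is by contradiction in a single stroke: assuming two distinct solutions $\mathbf{x}_0,\mathbf{x}_1\in\mathcal{B}_b^0$, it picks the nearest-to-zero index $\lambda_0$ where they disagree, then the nearest common support index $\lambda_1$ to $\lambda_0$ on the same side with $|\lambda_1|<|\lambda_0|$; shifting both solutions to base point $t+\lambda_1$ and applying \eqref{dabcbasic1.lem.eq3} \emph{once} at that point forces both shifted vectors to take the value $1$ at the index $\lambda_0-\lambda_1$, contradicting $\mathbf{x}_1(\lambda_0)=0$. Your approach is instead constructive: you build the support $\{\lambda_n\}$ index by index and identify it with the two-sided orbit $t+\lambda_n=(R_{a,b,c})^n(t)$ and $t+\lambda_{-n}=(\tilde R_{a,b,c})^n(t)$ for $n\ge 0$. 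The paper's version is shorter; yours yields as a byproduct the explicit description of the unique solution, which the paper obtains separately (essentially inside the proof of Proposition~\ref{dabc1pointcharacterization.prop}).
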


We will also apply  Proposition \ref{rabcbasic1.thm} to show that the black
hole of the piecewise linear transformation $\tilde R_{a, b,c}$ has empty intersection with the invariant set ${\mathcal S}_{a, b, c}$,
which plays important roles in the explicit construction of the maximal invariant set ${\mathcal S}_{a, b, c}$ in Theorems \ref{dabc1holes.tm} and \ref{dabc1discreteholes.tm}.

\begin{prop}\label{blackholestwo.prop}
 Let $(a,b,c)$ be a triple of positive numbers satisfying $a<b<c$ and $b-a<c_0:=c-\lfloor c/b\rfloor b< a$,  the set
 ${\mathcal S}_{a, b, c}$ be as in \eqref{dabc1.def}, and
 let the transformation $R_{a,b,c}$  be as in  \eqref{rabcnewplus.def}.
  Then
  \begin{equation} \label{blackholestwo.eq1}
(R_{a,b,c})^n([c-c_0, c-c_0+b-a)+a\Z)\cap {\mathcal S}_{a, b, c}= \emptyset
\quad
{\rm  for \ all} \ n\ge 0.
\end{equation}
\end{prop}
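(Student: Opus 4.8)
The plan is to argue by induction on $n$, with the key mechanism being that $R_{a,b,c}$ is invertible (with inverse $\tilde R_{a,b,c}$) off its own black hole, combined with the fact that ${\mathcal S}_{a,b,c}$ avoids both black holes.

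First I would settle the base case $n=0$. Under the standing hypothesis $b-a<c_0<a$ one has $(c_0+a-b)_+=c_0+a-b$ and $c_0\wedge a=c_0$, so the two intervals occurring in \eqref{dabcbasic1.lem.eq1} of Proposition \ref{dabcbasic1.prop} specialize to exactly the black hole $[c_0+a-b,c_0)+a\Z$ of $R_{a,b,c}$ and the black hole $H:=[c-c_0,c-c_0+b-a)+a\Z$ of $\tilde R_{a,b,c}$. Hence Proposition \ref{dabcbasic1.prop} delivers simultaneously ${\mathcal S}_{a,b,c}\cap([c_0+a-b,c_0)+a\Z)=\emptyset$ and ${\mathcal S}_{a,b,c}\cap H=\emptyset$. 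The second equality is the case $n=0$; the first I would retain as a standing fact.

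For the inductive step I would assume $(R_{a,b,c})^n(H)\cap{\mathcal S}_{a,b,c}=\emptyset$ and suppose, for contradiction, that some $s$ lies in $(R_{a,b,c})^{n+1}(H)\cap{\mathcal S}_{a,b,c}$; write $s=R_{a,b,c}(u)$ with $u\in(R_{a,b,c})^n(H)$. Since $s\in{\mathcal S}_{a,b,c}$ and ${\mathcal S}_{a,b,c}$ is disjoint from the black hole $[c_0+a-b,c_0)+a\Z$, we have $s\notin[c_0+a-b,c_0)+a\Z$. If $u$ lay in this black hole, then \eqref{blackholesoftransformations} would give $R_{a,b,c}(u)=u$, forcing $s=u\in[c_0+a-b,c_0)+a\Z$, a contradiction; hence $u\notin[c_0+a-b,c_0)+a\Z$. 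Now the invertibility relation \eqref{invertibilityoftransformations} applies to $u$ and yields $u=\tilde R_{a,b,c}(R_{a,b,c}(u))=\tilde R_{a,b,c}(s)$. Invariance of ${\mathcal S}_{a,b,c}$ under $\tilde R_{a,b,c}$ (Proposition \ref{rabcbasic1.thm}(i)) then gives $u=\tilde R_{a,b,c}(s)\in{\mathcal S}_{a,b,c}$, so that $u\in(R_{a,b,c})^n(H)\cap{\mathcal S}_{a,b,c}$, contradicting the inductive hypothesis. This closes the induction.

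Conceptually, the engine is a uniqueness-of-preimage phenomenon: although $R_{a,b,c}$ is two-to-one over its black hole, no point of ${\mathcal S}_{a,b,c}$ lies there, so each point of ${\mathcal S}_{a,b,c}$ has a single $R_{a,b,c}$-preimage, recovered by $\tilde R_{a,b,c}$, which again lands in ${\mathcal S}_{a,b,c}$; thus preimages of the forbidden iterates stay forbidden. I expect the only real care to be the bookkeeping of the two black holes — in particular verifying that the two intervals of \eqref{dabcbasic1.lem.eq1} collapse to precisely these sets when $b-a<c_0<a$ — after which everything reduces to direct appeals to Propositions \ref{rabcinvertibility.prop} and \ref{rabcbasic1.thm}.
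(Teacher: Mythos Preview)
Your proof is correct and matches the paper's argument essentially line for line; the only cosmetic difference is that the paper phrases it as a minimal-counterexample argument (take the least $m$ with nonempty intersection) rather than an explicit induction, but the case split on whether the preimage lies in the black hole and the appeals to Propositions \ref{dabcbasic1.prop}, \ref{rabcinvertibility.prop}, and \ref{rabcbasic1.thm} are identical.
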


\smallskip

To prove Theorem \ref{framenullspace2.tm},
we  finally need the following density property of the sets ${\mathcal S}_{a, b, c}$
and ${\mathcal D}_{a, b, c}$
 around the origin, 
c.f. Lemma \ref{dabc1rationalmax.prop} for similar density property of the set ${\mathcal S}_{a, b, c}$ when the ratio between $a$ and $b$ is rational.

\begin{lem}\label{rabcbasic2.lem}
 Let $(a,b,c)$ be a triple of positive numbers satisfying $a<b<c, b-a<c_0:=c-\lfloor c/b\rfloor b< a$ and $a/b\not\in \Q$,
  the piecewise linear transformation $R_{a,b,c}$  be as in  \eqref{rabcnewplus.def}, and let
  ${\mathcal D}_{a, b, c}$ and
 ${\mathcal S}_{a, b, c}$ be as in \eqref{dabc2.def} and
\eqref{dabc1.def} respectively.
Then the following conclusions hold.
\begin{itemize}
\item[{(i)}] If ${\mathcal S}_{a, b, c}\ne
\emptyset$, then
\begin{equation}\label{rabcbasic2.lem.eq1}
(0,\epsilon)\cap {\mathcal S}_{a,b,c}\ne \emptyset \ {\rm and} \ (-\epsilon, 0)\cap {\mathcal S}_{a, b, c}\ne \emptyset
\end{equation}
for any $\epsilon>0$.

\item [{(ii)}] If ${\mathcal D}_{a, b, c}\ne
\emptyset$, then \begin{equation}\label{rabcbasic2.lem.eq1d}
(0,\epsilon)\cap {\mathcal D}_{a,b,c}\ne \emptyset \ {\rm and} \ (-\epsilon, 0)\cap {\mathcal D}_{a, b, c}\ne \emptyset
\end{equation}
for any $\epsilon>0$.
\end{itemize}

   \end{lem}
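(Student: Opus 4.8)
The plan is to transport the problem to the circle $\R/a\Z$ and to recognize the action of $R_{a,b,c}$ on $\mathcal{S}_{a,b,c}$ as a subsequence of a single irrational rotation. Since $\mathcal{S}_{a,b,c}=\mathcal{S}_{a,b,c}+a\Z$ by \eqref{dabcperiodic.section2}, it suffices to prove that $0+a\Z$ is a two-sided accumulation point of the image of $\mathcal{S}_{a,b,c}$ in $\R/a\Z$. Fix $t_*\in \mathcal{S}_{a,b,c}$ (possible since $\mathcal{S}_{a,b,c}\neq\emptyset$), let ${\bf x}_*\in\mathcal{B}_b^0$ be its unique solution (Proposition \ref{uniqueness.cor}), and let $S_*=\{\sigma\in b\Z:{\bf x}_*(\sigma)=1\}$ be its support. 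From the shift identity \eqref{sabcinvariant.eq1} applied to ${\bf z}=\tau_{-\sigma}{\bf x}_*$ one gets ${\bf M}_{a,b,c}(t_*+\sigma)\tau_{-\sigma}{\bf x}_*={\bf 1}$ with $\tau_{-\sigma}{\bf x}_*\in\mathcal{B}_b^0$ whenever $\sigma\in S_*$; hence $t_*+S_*\subseteq\mathcal{S}_{a,b,c}$. Since $R_{a,b,c}$ and $\tilde R_{a,b,c}$ are mutually inverse bijections of $\mathcal{S}_{a,b,c}$ (Proposition \ref{rabcbasic1.thm}), the two-sided orbit $\{(R_{a,b,c})^n t_*:n\in\Z\}$ coincides with $t_*+S_*$ and lies in $\mathcal{S}_{a,b,c}$.

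Next I would read off the rotation structure. By \eqref{rabcnewplus.def}, on $\mathcal{S}_{a,b,c}$, which avoids the black hole $[c_0+a-b,c_0)+a\Z$ by \eqref{dabcbasic1.lem.eq1}, one has $R_{a,b,c}(t)-t\in\{\lfloor c/b\rfloor b,(\lfloor c/b\rfloor+1)b\}$, so modulo $a$ every step is a rotation by an integer multiple of $b$; iterating, $(R_{a,b,c})^n t_*\equiv t_*+m_n b\pmod a$ for a strictly monotone integer sequence $m_n$ with bounded increments $m_{n+1}-m_n\in\{\lfloor c/b\rfloor,\lfloor c/b\rfloor+1\}$ and $\lfloor c/b\rfloor\geq 1$. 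Were this orbit finite it would be periodic, forcing $(\text{positive integer})\cdot b\in a\Z$, which is impossible since $a/b\notin\Q$; thus the orbit is infinite and $m_n\to\pm\infty$. Writing $\Phi(\theta)=\theta+b+a\Z$ for the rotation of $\R/a\Z$, the orbit of $t_*$ under $R_{a,b,c}$ is exactly the subsequence $\{\Phi^{m_n}(t_*)\}$ of the orbit of $t_*$ under the irrational rotation $\Phi$, and the full $\Phi$-orbit is dense in $\R/a\Z$.

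The crux, and the step I expect to be the main obstacle, is to upgrade this to accumulation at the \emph{specific} point $0$ from \emph{both} sides; a bounded-gap subsequence of a dense rotation orbit need not itself be dense, so the coupling between $m_n$ and the dynamics must be used. The mechanism I would exploit is that $R_{a,b,c}$ is the first-return map of $\Phi$ to $\mathcal{S}_{a,b,c}\bmod a$: the intermediate iterates $\Phi^{m_n+1}(t_*),\dots,\Phi^{m_{n+1}-1}(t_*)$ skipped between two consecutive returns are exactly those falling into the excluded sets $(R_{a,b,c})^k([c-c_0,c-c_0+b-a)+a\Z)$ of Proposition \ref{blackholestwo.prop}, so by minimality of $\Phi$ the returns $\{(R_{a,b,c})^n t_*\}$ are dense in $\mathcal{S}_{a,b,c}\bmod a$. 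It then remains to see that $0$ is not screened off from $\mathcal{S}_{a,b,c}$ on either side, i.e.\ that $0$ is neither interior to nor a one-sided endpoint of any excluded interval. Here irrationality re-enters decisively: the endpoints of the excluded intervals in Proposition \ref{blackholestwo.prop} all lie in $a\Z+b\Z$ and, being forward $R_{a,b,c}$-images of $c-c_0=\lfloor c/b\rfloor b$, are congruent mod $a$ to $mb$ with $m$ a positive integer; since $mb\notin a\Z$ for $m\neq 0$, none of them equals $0\bmod a$. Combining the half-open conventions in \eqref{rabcnewplus.def} and \eqref{tilderabcnewplus.def} with the forward orbit (via $R_{a,b,c}$) and the backward orbit (via $\tilde R_{a,b,c}$) then yields points of $\mathcal{S}_{a,b,c}$ in $(0,\epsilon)$ and in $(-\epsilon,0)$ for every $\epsilon>0$, proving (i).

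Finally, for (ii) I would run the identical argument with $\mathcal{D}_{a,b,c}$ in place of $\mathcal{S}_{a,b,c}$: by Proposition \ref{rabcbasic1.thm} the set $\mathcal{D}_{a,b,c}$ is likewise invariant under the mutually inverse bijections $R_{a,b,c},\tilde R_{a,b,c}$, and the self-translation step goes through verbatim for a solution of ${\bf M}_{a,b,c}(t){\bf x}={\bf 2}$ via \eqref{sabcinvariant.eq1}. Thus the two-sided orbit of any $t_*\in\mathcal{D}_{a,b,c}$ is again an infinite subsequence of an irrational rotation contained in $\mathcal{D}_{a,b,c}$, and the same first-return/minimality and endpoint-exclusion reasoning gives accumulation of $\mathcal{D}_{a,b,c}$ at $0$ from both sides. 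The delicate point throughout is the passage from density of the rotation orbit to accumulation at the single marked point $0$; everything else is bookkeeping with \eqref{sabcinvariant.eq1}, Propositions \ref{rabcbasic1.thm} and \ref{blackholestwo.prop}, and the irrationality of $a/b$.
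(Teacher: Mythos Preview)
Your proposal has a genuine gap precisely at the step you flagged. The claim that $R_{a,b,c}$ is the first-return map of the rotation $\Phi:\theta\mapsto\theta+b\pmod a$ to $\mathcal{S}_{a,b,c}$ is false whenever $\mathcal{D}_{a,b,c}\neq\emptyset$: by Theorem~\ref{dabcsabc.thm}, $\mathcal{D}_{a,b,c}$ is exactly the set of $t\in\mathcal{S}_{a,b,c}$ for which some intermediate $t+kb$ (with $1\le k\le\lfloor c/b\rfloor-1$, or $k=\lfloor c/b\rfloor$ when $t\in[0,c_0+a-b)+a\Z$) also lies in $\mathcal{S}_{a,b,c}$, so the first return to $\mathcal{S}_{a,b,c}$ under $\Phi$ then occurs strictly before $R_{a,b,c}(t)$. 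The hypothesis of part~(i) is only $\mathcal{S}_{a,b,c}\neq\emptyset$, which permits $\mathcal{D}_{a,b,c}\neq\emptyset$ (Theorem~\ref{newmaintheorem4} supplies many such irrational triples). In that situation the intermediate $\Phi$-iterates are not confined to the excluded sets of Proposition~\ref{blackholestwo.prop}---some land back in $\mathcal{S}_{a,b,c}$---and your minimality-implies-density step collapses; without density you have nothing forcing the $R_{a,b,c}$-orbit near the specific point $0$, and the endpoint observation, though correct, does no work. Note also that any attempt to recover density from the structure of $\R\setminus\mathcal{S}_{a,b,c}$ (Theorem~\ref{dabc1holes.tm}) or from the rotation model (Theorem~\ref{dabc1algebra.tm}) would be circular, since both are proved downstream of the present lemma via Lemma~\ref{dabc1holes.lem}.

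The paper's argument is elementary and sidesteps density altogether. From $t_0\in\mathcal{S}_{a,b,c}$ set $\tilde t_n:=(R_{a,b,c})^n(t_0)\bmod a$; irrationality of $a/b$ makes the $\tilde t_n$ pairwise distinct, and pigeonhole yields $n_1<n_2$ with $|\tilde t_{n_1}-\tilde t_{n_2}|<\epsilon$. One then asks whether the forward orbits from $t_{n_1}$ and $t_{n_2}$ forever select the same branch of $R_{a,b,c}$. If so, $t_{n_1+k}-t_{n_2+k}$ is constant and the subsequence $\tilde t_{n_1+l(n_2-n_1)}$ is an arithmetic progression of step $<\epsilon$ inside $\mathcal{S}_{a,b,c}\bmod a$, which must enter $(0,\epsilon)$ (or $(-\epsilon,0)$). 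If not, at the first step where the branches disagree the two points are still $\epsilon$-close mod $a$ yet lie in the two different components $[0,c_0+a-b)$ and $[c_0,a)$; for small $\epsilon$ this forces one of them into $(0,\epsilon)$ and the other into $(a-\epsilon,a)$. Part~(ii) runs identically with $\mathcal{D}_{a,b,c}$ in place of $\mathcal{S}_{a,b,c}$.
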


We prove  Theorem \ref{framenullspace2.tm} by assuming that Proposition \ref{rabcbasic1.thm} and Lemma \ref{rabcbasic2.lem} hold.

\begin{proof}[Proof of Theorem \ref{framenullspace2.tm}]  The necessity is obvious.

Now the sufficiency. By \eqref{maintheorem.pf.nullspace1.eq9}, \eqref{maintheorem.pf.nullspace1.eq5}, \eqref{maintheorem.pf.nullspace1.eq6},
 \eqref{maintheorem.pf.nullspace1.eq10},
\eqref{maintheorem.pf.nullspace1.eq11}, and   Theorems \ref{newmaintheorem1} and \ref{newmaintheorem2},
it  remains to prove the sufficiency for  the cases that  $b-a<c_0<a$ and $a/b\not\in \Q$ and that  $b-a<c_0<a$ and $a/b\in\Q$.

{\bf Case 1}: $b-a<c_0<a$ and $a/b\not\in \Q$.

 Suppose on the contrary that
  ${\mathcal D}_{a, b, c}\ne \emptyset$. Then  by
  \eqref{sabcinvariant.eq2}, Lemma \ref{rabcbasic2.lem}, and the assumption that ${\mathcal D}_{a, b, c}\cap {\mathcal Z}_{a, b}
=\emptyset$, there exist $t_n\in {\mathcal D}_{a, b, c}\cap (0, a), n\ge 1$ such that $\{t_n\}_{n=1}^\infty$ is a decreasing sequence that converges to zero.
Following the argument used in the proof of Lemma \ref{dabc2toqabc.lem},
we obtain that
$0\in {\mathcal D}_{a, b, c}$. This leads to the contradiction.


 {\bf Case 2}: $b-a<c_0<a$ and $a/b\in \Q$.

 Write $a/b=p/q$ for some coprime integers $p$ and $q$.
Suppose on the contrary that ${\mathcal D}_{a, b, c}\ne \emptyset$; i.e.,
 $ {\bf M}_{a,b, c}(t) {\bf x}={\bf 2}$ for some $t\in \R$ and
 ${\bf x}\in {\mathcal B}_b^0$.
 By \eqref{sabcinvariant.eq2} and the assumption that ${\mathcal D}_{a, b, c}\cap {\mathcal Z}_{a,b}=\emptyset$,
we may assume that $t\in [0, pb/q)\backslash \{0, b/q, \ldots, b(p-1)/q\}$ without loss of generality.
If $0\ne t-\lfloor qt/b\rfloor b/q<
 c-\lfloor qc/b\rfloor b/q$, then
 $\chi_{[0,c)}(\lfloor qt/b\rfloor b/q-\mu+\lambda)=\chi_{[0,c)}(t-\mu+\lambda)$ for all $\mu\in a\Z$ and $\lambda\in b\Z$. This  implies that
$$
 {\bf M}_{a,b, c}(\lfloor qt/b\rfloor b/q) {\bf x}
= {\bf M}_{a,b, c}(t) {\bf x}={\bf 2},
$$
and hence $\lfloor qt/b\rfloor b/q\in {\mathcal D}_{a, b,c}\cap {\mathcal Z}_{a, b}$, which is a contradiction.

 If $0\ne t-\lfloor qt/b\rfloor b/q\ge
 c-\lfloor qc/b\rfloor b/q$, then
   $\chi_{(0,c]}((\lfloor qt/b\rfloor+1) b/q-\mu+\lambda)=\chi_{[0, c)}(t-\mu+\lambda)$
for  all $\mu\in a\Z$ and $\lambda\in b\Z$. Hence
 ${\bf M}_{a,b, c}\big(c-(\lfloor qt/b\rfloor+1) b/q\big) \tilde {\bf x}={\bf 2}$, where
 $\tilde {\bf x}=({\bf x}(-\lambda))_{\lambda\in b\Z}\in {\mathcal B}_b^0$.
This contradicts to the assumption that ${\mathcal D}_{a, b, c}\cap (c-{\mathcal Z}_{a, b})=\emptyset$.
%
%
%
%
%
%
%
%
%
 \end{proof}

We finish this subsection with proofs of Proposition \ref{rabcbasic1.thm}, Lemma \ref{rabcbasic2.lem}, and Propositions \ref{uniqueness.cor}
and \ref{blackholestwo.prop}.
%
%
%

\begin{proof} [Proof of Proposition \ref{rabcbasic1.thm}]
(i)\quad By \eqref{sabcinvariant.eq1} and Proposition
\ref{dabcbasic1.prop}, we have that
\begin{equation}\label{rabcbasic1.lem.pf.eq1} R_{a,b,c} {\mathcal S}_{a, b, c}\subset {\mathcal S}_{a, b, c}\end{equation}
and
\begin{equation} \label{rabcbasic1.lem.pf.eq2} \tilde R_{a,b,c}  {\mathcal S}_{a, b, c}\subset {\mathcal S}_{a, b, c}.\end{equation}
Observe that
 \begin{equation}\label{rabcbasic1.lem.pf.eq3}
 \tilde R_{a,b,c} (R_{a,b,c}(t))= R_{a,b,c} (\tilde R_{a,b,c}(t))= t\quad {\rm for \ all} \ t\in  {\mathcal S}_{a, b, c} \end{equation}
 by Propositions \ref{dabcbasic1.prop} and \ref{rabcinvertibility.prop}.
Hence \begin{equation}\label{rabcbasic1.thm.eq1part1}
R_{a,b,c} {\mathcal S}_{a, b, c}=\tilde R_{a,b,c} {\mathcal S}_{a, b, c}={\mathcal S}_{a, b, c}\end{equation}
by
\eqref{rabcbasic1.lem.pf.eq1}, \eqref{rabcbasic1.lem.pf.eq2} and \eqref{rabcbasic1.lem.pf.eq3}.

Let $t\in {\mathcal D}_{a, b, c}$ and
$ {\bf M}_{a, b, c} (t) {\bf x}={\bf 2}$
for some ${\bf x}\in {\mathcal B}_b^0$. By \eqref{frammatrix.tm.pf.eq4}, there is a decomposition
${\bf x}={\bf x}_0+{\bf x}_1$ such that ${\bf x}_0\in {\mathcal B}_b^0,  {\bf x}_1\in {\mathcal B}_b$
and ${\bf M}_{a, b, c}(t){\bf x}_0=  {\bf M}_{a, b, c}(t) {\bf x}_1={\bf 1}$. Therefore
$\tau_{\lambda_{a,b,c}(t)} {\bf x}\in {\mathcal B}_b^0$ by Proposition \ref{dabcbasic1.prop}
and ${\bf M}_{a, b, c} (R_{a,b,c}(t)) \tau_{\lambda_{a,b,c}(t)}{\bf x}=
 {\bf M}_{a, b, c} (t) {\bf x}=
{\bf 2}$. This proves that
\begin{equation}\label{rabcbasic1.lem.pf.eq4}
R_{a,b,c} {\mathcal D}_{a, b, c}\subset {\mathcal D}_{a, b, c}.
\end{equation}
Similarly we have that
\begin{equation}\label{rabcbasic1.lem.pf.eq5}
\tilde R_{a,b,c}  {\mathcal D}_{a, b, c}\subset {\mathcal D}_{a, b, c}.
\end{equation}
Recalling that ${\mathcal D}_{a, b, c}\subset {\mathcal S}_{a, b, c}$ and combining \eqref{rabcbasic1.lem.pf.eq3}, \eqref{rabcbasic1.lem.pf.eq4}
and \eqref{rabcbasic1.lem.pf.eq5},  we
obtain
\begin{equation}\label{rabcbasic1.thm.eq1part2}
R_{a,b,c} {\mathcal D}_{a, b, c}=\tilde R_{a,b,c} {\mathcal D}_{a, b, c}={\mathcal D}_{a, b, c}.\end{equation}
Therefore
\eqref{rabcbasic1.thm.eq1} follows from  \eqref{rabcbasic1.thm.eq1part1} and \eqref{rabcbasic1.thm.eq1part2}.

 (ii)\quad  The invertibility of the transformations
$R_{a, b,c}$ and $\tilde R_{a, b, c}$ in the second conclusion follows from
\eqref{dabcbasic1.lem.eq1} and \eqref{invertibilityoftransformations}.

(iii)\quad By \eqref{rabcnewplus.def},
\begin{equation}\label{rabcbasic1.thm.pf.eq-1}|R_{a,b,c}(E)|\le |E|
\end{equation}
for any measurable set $E$, and
the above inequality becomes an equality,
\begin{equation}\label{rabcbasic1.thm.pf.eq0}| R_{a,b,c}(E)|=|E|,
\end{equation}
whenever $E$ has empty intersection with the black hole $[c_0+a-b, c_0)+a\Z$ of the transformation $R_{a,b,c}$.
This, together with \eqref{dabc2subsetdabc1}, \eqref{dabcbasic1.lem.eq1} and the first conclusion, proves that
$R_{a, b,c}$ is a measure-preserving transformation on the sets ${\mathcal D}_{a, b, c}$ and ${\mathcal S}_{a, b, c}$.

The measure-preserving property for the transformation $\tilde R_{a,b,c}$ can be established by applying similar argument.
\end{proof}

   \begin{proof} [Proof of Lemma \ref{rabcbasic2.lem}]\ (i):\quad Let  $t_n:=(R_{a, b, c})^n(t_0), n\ge 0$, be the orbit of
the transformation $R_{a,b,c}$ with  initial $t_0\in {\mathcal S}_{a, b,c}$,
 and set
   $\tilde t_n:=t_n-\lfloor t_n/a\rfloor a, n\ge 0$.
   Without loss of generality, we assume that $\tilde t_n\ne 0$ for all $n\ge 0$, because
   at most one in the sequence  $\tilde t_n, n\ge 0$, could be zero by the assumption $a/b\not\in \Q$
   and in that case we replace the initial $t_0$ by $t_{n_0}$ for some sufficiently large $n_0$.
Clearly  the proof of  the first non-empty-set conclusion in  \eqref{rabcbasic2.lem.eq1} reduces to proving that
   for sufficiently small $\epsilon>0$ there exists an index $n$ with the property that
   \begin{equation}\label{rabcbasic2.lem.pf.eq1}
   \tilde t_n\in (0,\epsilon).\end{equation}
  From \eqref{dabcperiodic.section2},
   Propositions \ref{dabcbasic1.prop} and  \ref{rabcbasic1.thm}, it follows that
   $\tilde t_n\in
   {\mathcal S}_{a, b, c}\cap ([0,c_0+a-b)\cap [c_0, a)), n\in \Z_+$.
   Recall that for $n\ne m$, $ t_n-t_{m}=kb$ for some $0\ne k\in \Z$ by \eqref{rabcnewplus.def}, which together with $a/b\not\in \Q$ implies that
   \begin{equation}\label{rabcbasic2.lem.pf.eq2}
   \tilde t_n-\tilde t_m\ne 0\ {\rm whenever} \ n\ne m.\end{equation}
As $\tilde t_n, n\ge 0$, lie in the bounded set $(0,a)$, there exist integers $n_1<n_2$ such that $|\tilde t_{n_1}-\tilde t_{n_2}|<\epsilon$.
   Without loss of generality, we assume that $\tilde t_{n_1}\not \in (0,\epsilon)$, otherwise the conclusion
   \eqref{rabcbasic2.lem.pf.eq1} holds by letting $n=n_1$.
   Recall that $\tilde t_n\in [0,c_0+a-b)\cap [c_0, a)$ for all $n\ge 0$,  we have that
 either $\tilde t_{n_1}, \tilde t_{n_2}\in [0, c_0+a-b)$ or
$\tilde t_{n_1}, \tilde t_{n_2}\in [c_0, a)$. This  implies that $\lambda_{a,b,c}(t_{n_1})=\lambda_{a,b,c}(t_{n_2})$,
where $\lambda_{a, b, c}(t)$ is defined in \eqref{lambdaabc.def}.
 If  $\lambda_{a,b,c}(t_{n_1+k})=\lambda_{a,b,c}(t_{n_2+k})$ for all positive integers $k$,
 then $t_{n_1+k}-t_{n_2+k}=t_{n_1}-t_{n_2}$ for all positive integers $k$. Applying the above equality with $k=(n_2-n_1)l, l\in \N$, we obtain that
 $t_{n_1+l(n_2-n_1)}=t_{n_1}+ l(t_{n_2}-t_{n_1})$ for all $l\in \N$.
 Therefore
$\tilde t_{n_1+l_0(n_2-n_1)}= \tilde t_{n_1}+l_0(\tilde t_{n_2}-\tilde t_{n_1})\in (0,\epsilon)$
for $l_0= \lfloor \tilde t_{n_1}/ (\tilde t_{n_1}-\tilde t_{n_2})\rfloor $ if
 $
 \tilde t_{n_2}-\tilde t_{n_1}<0$,  and
 $\tilde t_{n_1+(l_1+1)(n_2-n_1)}= \tilde t_{n_1}+(l_1+1)(\tilde t_{n_2}-\tilde t_{n_1})-a\in (0,\epsilon)$
for $l_1= \lfloor (a-\tilde t_{n_1})/ (\tilde t_{n_2}-\tilde t_{n_1})\rfloor $ if
 $\tilde t_{n_2}-\tilde t_{n_1}>0$. This together with \eqref{rabcbasic2.lem.pf.eq2} proves that  \eqref{rabcbasic2.lem.pf.eq1}
  holds when $\lambda_{a,b,c}(t_{n_1+k})=\lambda_{a,b,c}(t_{n_2+k})$ for all positive integers $k$.
 Now we consider the case that there exists a positive integer $m$ such that
 $\lambda_{a,b,c}(t_{n_1+k})=\lambda_{a,b,c}(t_{n_2+k})$ for all nonnegative  integer $k\in [0,m-1]$ and
 $\lambda_{a,b,c}(t_{n_1+m})\ne \lambda_{a,b,c}(t_{n_2+m})$.
 Then we can prove by induction that $t_{n_1+k}-t_{n_2+k}=t_{n_1}-t_{n_2}$ for all $0\le k\le m-1$. This implies that
  either $|\tilde t_{n_1+m-1}-\tilde t_{n_2+m-1}|= |\tilde t_{n_1}-\tilde t_{n_2}|$
 or  $|\tilde t_{n_1+m-1}-\tilde t_{n_2+m-1}|+|\tilde t_{n_1}-\tilde t_{n_2}|=a$.
 On the other hand,  it follows from  $\lambda_{a,b,c}(t_{n_1+m})\ne \lambda_{a,b,c}(t_{n_2+m})$ that
  $\tilde t_{n_1+m-1}, \tilde t_{n_2+m-1}$ should  lie in the different intervals $[0, c_0+a-b)$ and $[c_0, a)$.
 Therefore  $|\tilde t_{n_1+m-1}-\tilde t_{n_2+m-1}|+|\tilde t_{n_1}-\tilde t_{n_2}|=a$, which implies
 that one and only one of $\tilde t_{n_1+(m-1)}, \tilde t_{n_2+(m-1)}$ belongs to $(0,\epsilon)$, while the other one of
 $\tilde t_{n_1+(m-1)}, \tilde t_{n_2+(m-1)}$ belongs to $[a-\epsilon, a)$ . Hence
  \eqref{rabcbasic2.lem.pf.eq1}
  holds when $\lambda_{a,b,c}(t_{n_1+m})=\lambda_{a,b,c}(t_{n_2+m})$ for some positive integers $m$
  and we complete the proof of the conclusion  \eqref{rabcbasic2.lem.pf.eq1}.

The second conclusion in \eqref{rabcbasic2.lem.eq1} can proved by
using similar argument with  $l_0$ replaced by $l_0+1$, $l_1+1$
by $l_1$.

\smallskip

(ii):\quad  We can follow the argument of the first conclusion line by line except with
the set ${\mathcal S}_{a,b,c}$ replaced by its subset ${\mathcal D}_{a, b, c}$. We omit
the detailed arguments here.
%
%
%
   \end{proof}

\begin{proof} [Proof of Proposition \ref{uniqueness.cor}] \ Suppose on the contrary that
 ${\bf M}_{a, b, c} (t) {\bf x}_0={\bf M}_{a, b, c}(t) {\bf x}_1={\bf 1}$ for two
 distinct vectors ${\bf x}_0, {\bf x}_1\in {\mathcal B}_b^0$.
 Then there exists  $0\ne \lambda_0\in  b\Z$ such that
 ${\bf x}_0 (\lambda_0)\ne {\bf x}_1(\lambda_0)$ while ${\bf x}_0(\lambda)= {\bf x}_1(\lambda)$
 for all $|\lambda|<|\lambda_0|$. Without loss of generality, we assume that ${\bf x}_0(\lambda_0)=1$ and ${\bf x}_1(\lambda_0)=0$.
  Let $\lambda_1\in b\Z$ be so chosen that  $\lambda_1\lambda_0\ge 0$, $|\lambda_1|<|\lambda_0|$ and
$|\lambda_1-\lambda_0|= \min_{\lambda\in (-|\lambda_0|, |\lambda_0|)\cap b\Z, {\bf x}_0(\lambda)={\bf x}_1(\lambda)=1 }
 |\lambda-\lambda_0|$.
Thus  ${\bf M}_{a, b, c} (t-\lambda_1) \tau_{\lambda_1 } {\bf x}_0={\bf M}_{a, b, c} (t-\lambda_1) \tau_{\lambda_1} {\bf x}_1={\bf 1}$ by
  \eqref{sabcinvariant.eq2}, and   both $\tau_{\lambda_1}{\bf x}_0$ and $\tau_{\lambda_1}{\bf x}_1$
 belong to ${\mathcal B}_b^0$ by the selection of the index $\lambda_1$.
 As $\lambda_0-\lambda_1$ is the closest positive (or negative) index  to zero such that  $\tau_{\lambda_1}{\bf x}_0(\lambda_0-\lambda_1)=1$.
 Then it follows from Proposition \ref{dabcbasic1.prop} that
 $\tau_{\lambda_1}{\bf x}_1(\lambda_0-\lambda_1)=1$, which contradicts to ${\bf x}_1 (\lambda_0)=0$.
\end{proof}

\begin{proof}[Proof of Proposition \ref{blackholestwo.prop}]\
Suppose, on the contrary, that \eqref{blackholestwo.eq1} does not
hold.  Then there exists a nonnegative integer $m$ such that
$(R_{a,b,c})^m([c-c_0, c-c_0+b-a)+a\Z)\cap {\mathcal S}_{a,b,c}\ne \emptyset$ and $(R_{a,b,c})^n([c-c_0,
c-c_0+b-a)+a\Z)\cap {\mathcal S}_{a, b, c}= \emptyset$
 for all $0\le n<m$.
We observe that $m\ne 0$ (or equivalently $m$ is a positive integer)  by Proposition \ref{dabcbasic1.prop}.
Take $t\in (R_{a,b,c})^m([c-c_0, c-c_0+b-a)+a\Z)\cap {\mathcal S}_{a, b, c}$.
Then $t=R_{a,b,c}(s)$ for some $s\in (R_{a,b,c})^{m-1}([c-c_0, c-c_0+b-a)+a\Z)$. If $s\in [c_0+a-b, c_0)+a\Z$, then $t=s$
by \eqref{rabcnewplus.def}, which is a contradiction as  $([c_0+a-b, c_0)+a\Z)\cap {\mathcal S}_{a, b, c}=\emptyset$ by
Proposition \ref{dabcbasic1.prop}. If $s\not \in [c_0+a-b, c_0)+a\Z$, then $s=\tilde R_{a,b,c}(t)\in {\mathcal S}_{a, b, c}$
by \eqref{invertibilityoftransformations} 
and Proposition \ref{rabcbasic1.thm}. Hence $s\in (R_{a,b,c})^{m-1}([c-c_0, c-c_0+b-a)+a\Z) \cap {\mathcal S}_{a, b, c}$,
which is a contradiction.
\end{proof}

\section{
Maximal Invariant Sets of Piecewise Linear Transformations}
\label{gabornullspace2.section}

In this section, we prove Theorem \ref{newmaintheorem3}.
To do so, we need the most crucial observation of this paper  that
 ${\mathcal S}_{a, b, c}$  is
 the maximal set such that
\begin{equation} \label{rabc1invariant.tm.newweq2}
R_{a, b, c} {\mathcal S}_{a, b, c}\subset {\mathcal S}_{a, b, c} \quad {\rm and} \quad  \tilde R_{a, b, c} {\mathcal S}_{a, b, c}\subset {\mathcal S}_{a, b, c},
\end{equation}
and
\begin{equation}\label{rabc1invariant.tm.eq2}
{\mathcal S}_{a, b, c}\cap ([c- c_0, c-c_0+b-a)+a\Z)={\mathcal S}_{a, b, c}\cap ([c_0+a-b, c_0)+a\Z)=\emptyset,
\end{equation}
c.f. Lemma \ref{dabc0rationalmax.prop} 
for the case that $a/b\in \Q$.

\begin{thm}  \label{dabc1maximal.cor.thm}
 Let $(a,b,c)$ be a triple of positive numbers satisfying $a<b<c$ and $b-a<c_0:=c-\lfloor c/b\rfloor b< a$, the set
 ${\mathcal S}_{a, b, c}$ be as in
 \eqref{dabc1.def}, and  let  transformations $R_{a,b,c}$ and
 $\tilde R_{a, b,c}$ be as in  \eqref{rabcnewplus.def}
and \eqref{tilderabcnewplus.def}.
Then
\begin{itemize}
\item [{(i)}]  ${\mathcal S}_{a, b, c}$  is the  maximal  set
that is invariant under the piecewise linear transformations $R_{a, b, c}$
   and $\tilde R_{a, b, c}$, and that  has empty intersection with their black holes.

   \item [{(ii)}] The complement of  ${\mathcal S}_{a, b, c}$  is the minimal  set
that is invariant under the piecewise linear transformations $R_{a, b, c}$
   and $\tilde R_{a, b, c}$, and that  contains their black holes.

   \end{itemize}
\end{thm}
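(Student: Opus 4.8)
The invariance \eqref{rabc1invariant.tm.newweq2} and the empty-intersection property \eqref{rabc1invariant.tm.eq2} of ${\mathcal S}_{a,b,c}$ have already been established: \eqref{rabc1invariant.tm.eq2} is exactly \eqref{dabcbasic1.lem.eq1} of Proposition \ref{dabcbasic1.prop} (recall that $b-a<c_0<a$ forces $(c_0+a-b)_+=c_0+a-b$ and $c_0\wedge a=c_0$, so the two holes there coincide with the black holes in \eqref{blackholesoftransformations}), and \eqref{rabc1invariant.tm.newweq2} is part of \eqref{rabcbasic1.thm.eq1} in Proposition \ref{rabcbasic1.thm}. Hence the only new content is the \emph{maximality} in (i) and the \emph{minimality} in (ii). The plan is to first reduce (ii) to (i) by complementation, and then to prove the maximality directly by an orbit construction. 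Call a set $T$ \emph{admissible} if $R_{a,b,c}T\subset T$, $\tilde R_{a,b,c}T\subset T$, and $T$ is disjoint from both black holes $[c_0+a-b,c_0)+a\Z$ and $[c-c_0,c-c_0+b-a)+a\Z$; so (i) asserts ${\mathcal S}_{a,b,c}$ is the largest admissible set.

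For the reduction I would show that $T$ is admissible if and only if $\R\backslash T$ is invariant under $R_{a,b,c}$ and $\tilde R_{a,b,c}$ and contains both black holes. The nontrivial direction uses the invertibility property \eqref{invertibilityoftransformations} of Proposition \ref{rabcinvertibility.prop}: if $u\notin T$ and $R_{a,b,c}(u)\in T$, then either $u$ lies in the black hole of $R_{a,b,c}$, forcing $u=R_{a,b,c}(u)\in T$, or $u\notin$ that black hole, whence $\tilde R_{a,b,c}(R_{a,b,c}(u))=u$ and $\tilde R_{a,b,c}T\subset T$ give $u\in T$ — a contradiction either way; the symmetric statement for $\tilde R_{a,b,c}$ is identical. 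Granting (i), any set $U$ that is invariant and contains the black holes has admissible complement $\R\backslash U\subset{\mathcal S}_{a,b,c}$, so $U\supset \R\backslash{\mathcal S}_{a,b,c}$; since $\R\backslash{\mathcal S}_{a,b,c}$ is itself such a set, it is the minimal one, proving (ii).

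The heart is the maximality. Given an admissible $T$ and $t_0\in T$, I would form the bi-infinite orbit $t_{n+1}=R_{a,b,c}(t_n)$ for $n\ge0$ and $t_{-n}=(\tilde R_{a,b,c})^n(t_0)$ for $n\ge1$. Forward invariance keeps the orbit in $T$; disjointness from the black holes together with \eqref{invertibilityoftransformations} makes this a genuine bi-infinite orbit with $t_{n+1}=R_{a,b,c}(t_n)$ and $\tilde R_{a,b,c}(t_{n+1})=t_n$ for all $n\in\Z$. By \eqref{rabcnewplus.def} and \eqref{lambdaabc.def} the increments $t_{n+1}-t_n=\lambda_{a,b,c}(t_n)\in\{\lfloor c/b\rfloor b,\ \lfloor c/b\rfloor b+b\}$ are positive multiples of $b$ (nonzero because $t_n$ avoids the black hole of $R_{a,b,c}$), so $\{t_n\}_{n\in\Z}$ is strictly increasing and unbounded in both directions. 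Setting $K=\{t_n-t_0:\ n\in\Z\}\subset b\Z$ and letting ${\bf x}$ be the indicator vector of $K$, we get ${\bf x}\in{\mathcal B}_b^0$, and I would show ${\bf M}_{a,b,c}(t_0){\bf x}={\bf 1}$, which by \eqref{dabc1.def} yields $t_0\in{\mathcal S}_{a,b,c}$ and hence $T\subset{\mathcal S}_{a,b,c}$.

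The decisive computation is the verification of ${\bf M}_{a,b,c}(t_0){\bf x}={\bf 1}$, and this is where I expect the main obstacle. Since $({\bf M}_{a,b,c}(t_0){\bf x})(\mu)=\#\{n:\ t_n\in[\mu,\mu+c)\}$, the claim is that each $\mu\in a\Z$ lies in exactly one coverage set $(t_n-c,t_n]$; equivalently, these coverage sets must partition $a\Z$ into consecutive blocks, i.e. the largest $a\Z$-point $\le t_n$, namely $\lfloor t_n/a\rfloor a$, must be followed immediately by the smallest $a\Z$-point covered by $t_{n+1}$. A short manipulation reduces this to the single relation $\lfloor (t_{n+1}-c)/a\rfloor a=\lfloor t_n/a\rfloor a$ for every $n$. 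The point is that the two non-degenerate branches of \eqref{rabcnewplus.def} are calibrated to produce exactly this: writing $\tau=t_n-\lfloor t_n/a\rfloor a$, the jump $\lfloor c/b\rfloor b+b$ on $[0,c_0+a-b)$ gives $t_{n+1}-c-\lfloor t_n/a\rfloor a=\tau+b-c_0\in[b-c_0,a)$, while the jump $\lfloor c/b\rfloor b$ on $[c_0,a)$ gives $\tau-c_0\in[0,a-c_0)$; in both cases the residue lands in $[0,a)$, which is precisely the no-gap/no-overlap condition. Getting this case analysis clean — and checking that the smallest point covered by $t_{n+1}$ is indeed $\lfloor t_n/a\rfloor a+a$ (using $\lfloor c/b\rfloor\ge1$ so that each increment exceeds $a$) — is the technical crux; once it is in place, the exactly-one covering holds for all $\mu$ and the proof concludes.
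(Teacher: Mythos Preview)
Your proposal is correct and follows essentially the same approach as the paper. The paper packages the orbit construction as a separate Proposition \ref{dabc1pointcharacterization.prop} (characterizing $t\in\mathcal S_{a,b,c}$ by the condition that neither forward nor backward orbit meets the black holes) and then invokes it for both (i) and (ii); you inline that proposition directly into the maximality argument and reduce (ii) to (i) by complementation, which amounts to the same use of \eqref{invertibilityoftransformations}. Your verification of ${\bf M}_{a,b,c}(t_0){\bf x}={\bf 1}$ via $\lfloor(t_{n+1}-c)/a\rfloor=\lfloor t_n/a\rfloor$ is equivalent to the paper's computation \eqref{dabc1pointcharacterization.cor.pf.eq4}--\eqref{dabc1pointcharacterization.cor.pf.eq5} using the markers $\mu_n=\lfloor t_n/a\rfloor a$; the case split on the two branches of $R_{a,b,c}$ and the resulting residues in $[0,a)$ match the paper's bounds (where the branch choice supplies exactly the needed lower bound $\tilde t_n\ge c_0$ or the extra $b$).
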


To prove Theorem \ref{newmaintheorem3} (and  the characterization of \eqref{sabcemptynonempty2} in Theorem  \ref{sabcstar.tm}),
 we need
 the following deep connection between  the invariant sets ${\mathcal D}_{a,b,c}$ and ${\mathcal S}_{a, b, c}$.

\begin{thm}\label{dabcsabc.thm} Let $(a, b, c)$ be a triple of positive numbers with $a<b<c$ and  $b-a<c_0:=c-\lfloor c/b\rfloor b <a$,
and let the sets
 ${\mathcal D}_{a, b, c}$ and ${\mathcal S}_{a, b, c}$ be as in
 \eqref{dabc2.def} and \eqref{dabc1.def}.
Then
\begin{eqnarray}\label{dabcsabc.thm.eq1}
{\mathcal D}_{a, b, c} & = &
\big({\mathcal S}_{a, b, c}\cap ([0, c_0+a-b)+a\Z)\cap ( {\mathcal S}_{a, b, c}-\lfloor c/b\rfloor b)\big)\nonumber\\
& & \cup \big({\mathcal S}_{a, b, c}\cap (\cup_{\lambda\in [b, (\lfloor c/b\rfloor -1) b]\cap b\Z} ({\mathcal S}_{a, b, c}-\lambda))\big).
\end{eqnarray}
\end{thm}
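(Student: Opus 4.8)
The plan is to reformulate membership in ${\mathcal D}_{a,b,c}$ in terms of the unique ${\bf 1}$-solutions furnished by Proposition \ref{uniqueness.cor}, and then to translate that condition into the set-theoretic description on the right-hand side of \eqref{dabcsabc.thm.eq1}. Throughout I would work modulo the common period $a$, using \eqref{dabcperiodic.section2} to keep the $+a\Z$ bookkeeping consistent. Fix $t\in{\mathcal S}_{a,b,c}$ and let ${\bf x}_t\in{\mathcal B}_b^0$ be the unique solution of ${\bf M}_{a,b,c}(t){\bf x}_t={\bf 1}$. The key claim is:
\emph{$t\in{\mathcal D}_{a,b,c}$ if and only if there is some $\lambda\in b\Z$ with ${\bf x}_t(\lambda)=0$ and $t+\lambda\in{\mathcal S}_{a,b,c}$.}

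For the forward direction, a solution ${\bf x}\in{\mathcal B}_b^0$ of ${\bf M}_{a,b,c}(t){\bf x}={\bf 2}$ splits as ${\bf x}={\bf x}_t+{\bf y}$ with ${\bf y}\in{\mathcal B}_b\setminus{\mathcal B}_b^0$ and ${\bf M}_{a,b,c}(t){\bf y}={\bf 1}$, exactly as in \eqref{frammatrix.tm.pf.eq4}; any $\lambda$ in the support of ${\bf y}$ satisfies ${\bf x}_t(\lambda)=0$, and by the shift identity \eqref{sabcinvariant.eq1} the vector $\tau_\lambda{\bf y}\in{\mathcal B}_b^0$ solves ${\bf M}_{a,b,c}(t+\lambda)(\tau_\lambda{\bf y})={\bf 1}$, so $t+\lambda\in{\mathcal S}_{a,b,c}$. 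Conversely, if $t+\lambda\in{\mathcal S}_{a,b,c}$ with ${\bf x}_t(\lambda)=0$, then ${\bf y}:=\tau_{-\lambda}{\bf x}_{t+\lambda}$ is a ${\bf 1}$-solution at $t$ with ${\bf y}(\lambda)=1\neq{\bf x}_t(\lambda)$, hence ${\bf y}\neq{\bf x}_t$. The decisive point is that two distinct ${\bf 1}$-solutions at the same $t$ automatically have disjoint supports: if ${\bf x}_t$ and ${\bf y}$ shared an index $\lambda_*$, then $\tau_{\lambda_*}{\bf x}_t$ and $\tau_{\lambda_*}{\bf y}$ would both lie in ${\mathcal B}_b^0$ and both solve ${\bf M}_{a,b,c}(t+\lambda_*)(\cdot)={\bf 1}$ by \eqref{sabcinvariant.eq1}, forcing ${\bf x}_t={\bf y}$ through Proposition \ref{uniqueness.cor}. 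Consequently ${\bf x}_t+{\bf y}\in{\mathcal B}_b^0$ is a genuine ${\bf 2}$-solution and $t\in{\mathcal D}_{a,b,c}$, which proves the claim.

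Next I would pin down the admissible values of $\lambda$. By Proposition \ref{dabcbasic1.prop}, under the hypothesis $b-a<c_0<a$ the smallest positive index in the support of ${\bf x}_t$ equals $\lambda_{a,b,c}(t)$, which is $\lfloor c/b\rfloor b+b$ when $t\in[0,c_0+a-b)+a\Z$ and $\lfloor c/b\rfloor b$ when $t\in[c_0,a)+a\Z$. Combining this with the interleaving of supports coming from the splitting \eqref{frammatrix.tm.pf.eq4}, the smallest positive index of the second solution ${\bf y}$ lies strictly between $0$ and $\lambda_{a,b,c}(t)$; hence it belongs to $\{b,2b,\dots,(\lfloor c/b\rfloor-1)b\}$ in both cases, and may in addition equal $\lfloor c/b\rfloor b$ precisely when $t\in[0,c_0+a-b)+a\Z$. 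Feeding these two ranges into the equivalence above gives $t\in{\mathcal D}_{a,b,c}$; the branch $\lambda\in[b,(\lfloor c/b\rfloor-1)b]\cap b\Z$ yields the term $\cup_\lambda({\mathcal S}_{a,b,c}-\lambda)$, while the branch $\lambda=\lfloor c/b\rfloor b$, valid only in the first case, produces the term ${\mathcal S}_{a,b,c}\cap([0,c_0+a-b)+a\Z)\cap({\mathcal S}_{a,b,c}-\lfloor c/b\rfloor b)$. For the reverse inclusion I would check directly that every $\lambda$ in these ranges satisfies ${\bf x}_t(\lambda)=0$, since $0<\lambda<\lambda_{a,b,c}(t)$ in each case, so the equivalence returns $t\in{\mathcal D}_{a,b,c}$.

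The main obstacle is not a single hard estimate but the careful segregation of the endpoint $\lambda=\lfloor c/b\rfloor b$. It must be placed in the first term \emph{together with} the constraint $t\in[0,c_0+a-b)+a\Z$, because in the complementary case $t\in[c_0,a)+a\Z$ one has $\lfloor c/b\rfloor b=\lambda_{a,b,c}(t)\in\operatorname{supp}({\bf x}_t)$, so there $t+\lfloor c/b\rfloor b\in{\mathcal S}_{a,b,c}$ holds trivially (with ${\bf x}_{t+\lfloor c/b\rfloor b}=\tau_{\lfloor c/b\rfloor b}{\bf x}_t$) yet yields only ${\bf y}={\bf x}_t$ and hence no ${\bf 2}$-solution. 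The interleaving/uniqueness mechanism also shows that choosing the \emph{smallest} positive support index of ${\bf y}$ always lands in the stated ranges, so no index $\lambda\geq\lambda_{a,b,c}(t)$ or $\lambda<0$ can contribute a point of ${\mathcal D}_{a,b,c}$ that the right-hand side misses; this, with the inclusion \eqref{dabc2subsetdabc1}, completes the identification.
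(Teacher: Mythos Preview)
Your proposal is correct and follows essentially the same approach as the paper's proof: both use the splitting \eqref{frammatrix.tm.pf.eq4} together with Proposition~\ref{dabcbasic1.prop} to locate the first positive support index of the second ${\bf 1}$-solution, and both rely on Proposition~\ref{uniqueness.cor} (uniqueness at a shared support index) to show that two distinct ${\bf 1}$-solutions have disjoint supports, so that their sum is a bona fide binary ${\bf 2}$-solution. The only difference is organizational: you first package the mechanism as an abstract equivalence (``$t\in{\mathcal D}_{a,b,c}$ iff some $\lambda$ with ${\bf x}_t(\lambda)=0$ and $t+\lambda\in{\mathcal S}_{a,b,c}$'') and then specialize to the admissible ranges of $\lambda$, whereas the paper runs the double inclusion directly, treating the endpoint $\lambda=\lfloor c/b\rfloor b$ and the range $[b,(\lfloor c/b\rfloor-1)b]$ as separate cases from the outset.
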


In next two subsections, we prove Theorems \ref{dabc1maximal.cor.thm} and \ref{dabcsabc.thm}, and
apply them to prove  Theorem \ref{newmaintheorem3}.

\subsection{Maximal invariant sets}
\label{subsection2.2}

%
%
%
%

To establish   maximality
in Theorem \ref{dabc1maximal.cor.thm}, we
 characterize whether a particular point
belongs to the set ${\mathcal S}_{a, b, c}$, which will also be used in the proofs of Theorems \ref{dabc1holes.tm} and \ref{dabc1discreteholes.tm}.

\begin{prop} \label{dabc1pointcharacterization.prop} Let $(a,b,c)$ be a triple of positive numbers satisfying $a<b<c$ and
 $b-a<c_0:=c-\lfloor c/b\rfloor b< a$,
 and let $R_{a,b,c}$ and $\tilde R_{a, b,c}$ be as in  \eqref{rabcnewplus.def} and \eqref{tilderabcnewplus.def}.
Then   $t\in {\mathcal S}_{a, b, c}$ if and only if  $(R_{a,b,c})^n (t)$ and $( \tilde R_{a, b,c})^n(t), n\ge 0$, do not belong to
 the black holes of the piecewise linear transformations $R_{a, b, c}$ and $\tilde R_{a, b, c}$.
 \end{prop}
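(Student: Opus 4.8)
The plan is to prove the two implications separately, with necessity an immediate consequence of the invariance already established, and sufficiency requiring an explicit construction of a binary solution from the orbit.

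\textbf{Necessity.} Suppose $t\in\mathcal{S}_{a,b,c}$. By Proposition~\ref{rabcbasic1.thm}(i) the set $\mathcal{S}_{a,b,c}$ is invariant under both $R_{a,b,c}$ and $\tilde R_{a,b,c}$, so $(R_{a,b,c})^n(t)\in\mathcal{S}_{a,b,c}$ and $(\tilde R_{a,b,c})^n(t)\in\mathcal{S}_{a,b,c}$ for every $n\ge 0$. Since Proposition~\ref{dabcbasic1.prop} (equation \eqref{dabcbasic1.lem.eq1}) shows that $\mathcal{S}_{a,b,c}$ is disjoint from the two black holes $[c_0+a-b,c_0)+a\mathbb{Z}$ and $[c-c_0,c-c_0+b-a)+a\mathbb{Z}$, none of these iterates can lie in a black hole. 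This disposes of one direction with no further work.

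\textbf{Sufficiency.} Assume the full orbit of $t$ avoids the black holes. First I would assemble the forward and backward orbits into a single bi-infinite sequence, setting $\Lambda_n=(R_{a,b,c})^n(t)-t$ for $n\ge 0$ and $\Lambda_{-n}=(\tilde R_{a,b,c})^n(t)-t$ for $n\ge 0$. Because $t$ and all its iterates avoid the black holes, the invertibility relations \eqref{invertibilityoftransformations} of Proposition~\ref{rabcinvertibility.prop} guarantee that $\tilde R_{a,b,c}$ acts as the genuine inverse of $R_{a,b,c}$ along this orbit, so $\{\Lambda_n\}_{n\in\mathbb{Z}}$ is a well-defined sequence in $b\mathbb{Z}$ with $\Lambda_0=0$; moreover each forward step equals $\lfloor c/b\rfloor b$ or $\lfloor c/b\rfloor b+b$ (never $0$, as the black hole is avoided), so the sequence is strictly increasing and unbounded in both directions. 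I then define $\mathbf{x}$ to be the indicator of $K:=\{\Lambda_n:n\in\mathbb{Z}\}$, which lies in $\mathcal{B}_b^0$ since $\Lambda_0=0$, and claim $\mathbf{M}_{a,b,c}(t)\mathbf{x}=\mathbf{1}$.

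\textbf{The main obstacle and its resolution.} The crux is to show that each window carries exactly one support point, i.e.\ $\#\big(K\cap[\mu-t,\mu-t+c)\big)=1$ for every $\mu\in a\mathbb{Z}$; this is where the precise thresholds $c_0+a-b$ and $c_0$ in the definition of $R_{a,b,c}$ are decisive, and it is the step I expect to be delicate. For \emph{uniqueness}, if two consecutive $\Lambda_n,\Lambda_{n+1}$ fell in one window their gap would be below $c$, forcing the small step $\lfloor c/b\rfloor b$ (since $\lfloor c/b\rfloor b+b=c+(b-c_0)>c$); but the small step occurs only when $(t+\Lambda_n)\bmod a\in[c_0,a)$, whereas $\Lambda_n\in[\mu-t,\mu-t+c_0)$ forces $(t+\Lambda_n)\bmod a\in[0,c_0)$, a contradiction. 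For \emph{existence}, if a window contained no support point then some gap would exceed $c$, forcing the large step $\lfloor c/b\rfloor b+b$, which occurs only when $(t+\Lambda_n)\bmod a\in[0,c_0+a-b)$; but the missing-window hypothesis forces $\Lambda_n\in[\mu-t+c_0-b,\mu-t)$, hence $(t+\Lambda_n)\bmod a\in[a-b+c_0,a)$ (using $b-c_0<a$, guaranteed by $b-a<c_0$), again a contradiction since $[0,c_0+a-b)$ and $[a-b+c_0,a)$ are complementary. Both contradictions rely on the orbit never entering $[c_0+a-b,c_0)+a\mathbb{Z}$, so every step is unambiguously large or small. Together these two counts yield $\mathbf{M}_{a,b,c}(t)\mathbf{x}=\mathbf{1}$, hence $t\in\mathcal{S}_{a,b,c}$, completing the sufficiency. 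The remaining care is only bookkeeping with the half-open endpoints and confirmation of the orbit-consistency claim, both routine.
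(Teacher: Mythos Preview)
Your proof is correct and follows essentially the same route as the paper. Both directions match: necessity via Propositions~\ref{rabcbasic1.thm} and~\ref{dabcbasic1.prop}, and sufficiency by building the bi-infinite orbit $\{\Lambda_n\}$, taking $\mathbf{x}$ to be its indicator, and verifying $\mathbf{M}_{a,b,c}(t)\mathbf{x}=\mathbf{1}$. The only cosmetic difference is organizational: the paper checks the row sum equals $1$ first at the distinguished indices $\mu=\mu_n$ (where $t_n-\mu_n\in[0,a)$) and then for $\mu_n<\mu<\mu_{n+1}$, whereas you package the same case analysis as ``uniqueness'' and ``existence'' of a support point in each window $[\mu-t,\mu-t+c)$; the arithmetic driving both is identical, namely matching the step size $\lfloor c/b\rfloor b$ or $\lfloor c/b\rfloor b+b$ against the residue $(t+\Lambda_n)\bmod a$ and using black-hole avoidance to force a contradiction.
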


Now assuming that  Proposition \ref{dabc1pointcharacterization.prop}
holds,   we
start to prove Theorem \ref{dabc1maximal.cor.thm}.

\begin{proof}[Proof of Theorem \ref{dabc1maximal.cor.thm}]
(i)\quad
By \eqref{dabcbasic1.lem.eq1} in Proposition  \ref{dabcbasic1.prop},
and
\eqref{rabcbasic1.thm.eq1} in Proposition \ref{rabcbasic1.thm}, 
the set ${\mathcal S}_{a, b, c}$ is an invariant set under
 piecewise linear transformations
$R_{a, b, c}$ and $\tilde R_{a, b, c}$ which has empty intersection with their black holes.
Then it suffices to  show the maximality. Let $E$ be an invariant set under
 piecewise linear transformations
$R_{a, b, c}$ and $\tilde R_{a, b, c}$ that has empty intersection with their black holes.
Take $t\in E$. Then it follows from the invariance of the set $E$ that
\begin{equation} (R_{a, b, c})^n (t)\in E \quad {\rm  and} \quad (\tilde R_{a, b, c})^n (t)\in E
\end{equation}
for all nonnegative integers $n$.
 This, together with Proposition \ref{dabc1pointcharacterization.prop} and  the empty intersection assumption between the set $E$ and
the black holes of the piecewise linear transformations
$R_{a, b, c}$ and $\tilde R_{a, b, c}$, implies that $t\in {\mathcal S}_{a, b, c}$. Thus $E\subset {\mathcal S}_{a, b, c}$ and the maximality
of the set ${\mathcal S}_{a, b, c}$ follows.

(ii)\quad
By \eqref{dabcbasic1.lem.eq1} in Proposition  \ref{dabcbasic1.prop}
and
\eqref{rabcbasic1.thm.eq1} in Proposition \ref{rabcbasic1.thm}, 
the complement of  ${\mathcal S}_{a, b, c}$ is an invariant set under
the piecewise linear transformations
$R_{a, b, c}$ and $\tilde R_{a, b, c}$, and it contains their black holes.
For any set $E$  that is  invariant under the transformations $R_{a, b, c}$ and $\tilde R_{a, b, c}$
 that contains their black holes, 
we obtain from  \eqref{invertibilityoftransformations} that $R_{a, b, c}(t), \tilde R_{a, b, c}(t)\not\in E$ for any $t\not \in E$.
This, together with  Proposition \ref{dabc1pointcharacterization.prop}, proves the desired minimality for  the complement of  ${\mathcal S}_{a, b, c}$.
\end{proof}


Next we prove Proposition \ref{dabc1pointcharacterization.prop}.

\begin{proof} [Proof of Proposition \ref{dabc1pointcharacterization.prop}]
($\Longrightarrow$)\ Take $t\in {\mathcal S}_{a, b, c}$. Then
$(R_{a,b,c})^n (t)\in {\mathcal S}_{a, b, c}$ and $( \tilde
R_{a, b,c})^n(t)\in {\mathcal S}_{a, b, c}$ for all $n\ge 0$ by
Proposition \ref{rabcbasic1.thm}. This together with Proposition
\ref{dabcbasic1.prop} proves the desired empty intersection
property with black holes.

\smallskip
($\Longleftarrow$)\  Take any real number $t$ such that $(R_{a,b,c})^n (t)$ and $( \tilde R_{a, b,c})^n(t), n\ge 0$, do not belong to
 the black holes of the transformations $R_{a, b, c}$ and $\tilde R_{a, b, c}$.
Define
 $$t_n=\left\{\begin{array}{ll} (R_{a, b,c})^n (t) & {\rm if} \ n\ge 1\\
t & {\rm if} \ n=0\\
(\tilde R_{a, b,c})^{-n} (t)  & {\rm if} \ n<0,
\end{array}\right.$$
and $\lambda_n=t_n-t, n\in \Z$.
Then
\begin{equation}\label{dabc1pointcharacterization.cor.pf.eq1}  t_{n+m}=(R_{a,b,c})^m (t_n)\quad {\rm for\ all} \ n\in \Z \ {\rm and} \ 0\le m\in \Z
\end{equation}
and
\begin{equation}\label{dabc1pointcharacterization.cor.pf.eq2}  \lambda_n\in b\Z\  {\rm and} \  \lambda_{n+1}-\lambda_n\in
 \{\lfloor c/b\rfloor b+b, \lfloor c/b\rfloor b\}\quad {\rm for \ all} \ n\in \Z,
\end{equation}
by \eqref{invertibilityoftransformations}, and the assumption that
$t_n\not\in [c_0+a-b, c_0)+a\Z$ for all $n\ge 0$ and $ t_n\not\in [c-c_0, c-c_0+b-a)+a\Z$ for all $n\le 0$.
Define $ {\bf x}_t(\lambda)=1$ if $\lambda=\lambda_n$ for some $n\in \Z$ and ${\bf x}_t(\lambda)=0$ otherwise, and set
${\bf x}_t=({\bf x}_t(\lambda))_{\lambda\in b\Z}$. Then
${\bf x}_t$ belong to ${\mathcal B}_b^0$.
Let $\mu_n\in a\Z$ be so chosen that $\tilde t_n:=t_n-\mu_n\in [0, a)$. Then
$ \{\mu_n\}_{n\in \Z}$ is a strictly increasing sequence with
\begin{equation} \label{dabc1pointcharacterization.cor.pf.eq3}
\lim_{n\to +\infty} \mu_n=+\infty\ {\rm  and} \ \lim_{n\to -\infty} \mu_n=-\infty\end{equation}
by \eqref{dabc1pointcharacterization.cor.pf.eq2},
and
\begin{eqnarray} \label{dabc1pointcharacterization.cor.pf.eq4}
 & & \sum_{\lambda\in b\Z} \chi_{[0,c)}(t-\mu_n+\lambda) {\bf x}_t(\lambda)  =
\sum_{m\in \Z} \chi_{[0,c)}(t-\mu_n+\lambda_m)\nonumber\\
& = & \sum_{m\in \Z} \chi_{[0,c)}(t_m-\mu_n)=\chi_{[0,c)}(t_n-\mu_n)=1\quad {\rm for \ all }\  n\in \Z,
\end{eqnarray}
where the first equation follows from the definition of the vector ${\bf x}_t$  and the third one holds as
  $t_m-\mu_n\le t_n-\mu_n-b<0$ for all $m<n$ and
$t_m-\mu_n\ge (t_{n+1}-t_n)+(t_n-\mu_n)=(\lambda_{n+1}-\lambda_n)+(t_n-\mu_n)\ge c$ for all $m>n$.
Similarly   for any $\mu\in a\Z$ with $\mu_n<\mu< \mu_{n+1}$,
\begin{eqnarray} \label{dabc1pointcharacterization.cor.pf.eq5} 
 & & \sum_{\lambda\in b\Z} \chi_{[0,c)}(t-\mu+\lambda) {\bf x}_t(\lambda)=
\sum_{m\in \Z} \chi_{[0,c)}(t_m-\mu)=1 
\end{eqnarray}
as $t_m-\mu\le t_n-\mu<\mu_n+a-\mu\le 0$ for $m\le n$,
$0\le t_{n+1}-\mu_{n+1}<t_m-\mu\le t_{n+1}-\mu_n-a<c$ for  $m=n+1$, and
$t_m-\mu\ge t_{n+2}-\mu_{n+1}+a\ge c$ for $m\ge n+2$.
Combining \eqref{dabc1pointcharacterization.cor.pf.eq4} and \eqref{dabc1pointcharacterization.cor.pf.eq5} proves
${\bf M}_{a, b, c}(t) {\bf x}_t={\bf 1}$ and hence $t\in {\mathcal S}_{a, b, c}$.
%
\end{proof}

We conclude this subsection by the proof of Theorem \ref{dabcsabc.thm}.

\begin{proof} [Proof of  Theorem  \ref{dabcsabc.thm}]  
We use the double inclusion method.
Take $t\in {\mathcal D}_{a, b, c}$. Let ${\bf x}\in {\mathcal B}_b^0$ be so chosen that ${\mathbf M}_{a, b, c}(t) {\bf x}={\bf 2}$
and denote by $K$ the support of the vector ${\bf x}$; i.e., the set of all $\lambda\in b\Z$ with ${\bf x}(\lambda)=1$.
Write $K=\{\lambda_j: j\in \Z\}$ for some strictly increasing sequence $\{\lambda_j\}_{j=-\infty}^\infty$ in $b\Z$ with $\lambda_0=0$,
and define ${\bf x}_i:=({\bf x}_i(\lambda))_{\lambda\in a\Z}, i=0,1$, by
${\bf x}_i(\lambda)=1$ if $\lambda\in K_i$ and ${\bf x}_i(\lambda)=0$ otherwise, where $K_i=\{\lambda_{i+2j}: \ j\in \Z\}, i=0,1$.
Then following the argument in the proof of the implication (ii)$\Longrightarrow$(iii)  of Theorem \ref{framenullspace1.tm},
we have that
\begin{equation}\label{dabcsabc.tm.pf.eq1}
 {\bf x}_0, \tau_{\lambda_1} {\bf x}_1\in {\mathcal B}_b^0 \ \ {\rm and} \ \
{\bf M}_{a, b, c}(t) {\bf x}_0={\bf M}_{a, b, c}(t+\lambda_1) \tau_{\lambda_1}{\bf x}_1={\bf 1}.\end{equation}

Notice that ${\mathcal D}_{a, b, c}\cap ([c_0+a-b, c_0)+a\Z)=\emptyset$
by the supset property ${\mathcal D}_{a, b, c}\subset {\mathcal S}_{a, b, c}$ in \eqref{dabc2subsetdabc1} and
the trivial intersection property ${\mathcal S}_{a, b, c}\cap ([c_0+a-b, c_0)+a\Z)=\emptyset$  in \eqref{dabcbasic1.lem.eq1}.
Then
either $t\in [0, c_0+a-b)+a\Z$ or $[c_0, a)+a\Z$.
For the first case that  $t\in [0, c_0+a-b)+a\Z$,  $\lambda_2= \lfloor c/b\rfloor b+b$ by \eqref{lambdaabc.def} and \eqref{dabcbasic1.lem.eq3} in
Proposition \ref{dabcbasic1.prop}. Hence $\lambda_1\in [b, \lfloor c/b\rfloor b]\cap b\Z$ and $t+\lambda_1\in {\mathcal S}_{a, b, c}$ by
\eqref{dabcsabc.tm.pf.eq1}. Thus
\begin{equation} \label{dabcsabc.tm.pf.eq2}
t\in \big({\mathcal S}_{a, b, c}\cap ([0, c_0+a-b)+a\Z)\big)\cap \big(\cup_{\lambda_1\in [b, \lfloor c/b\rfloor b]\cap b\Z}
({\mathcal S}_{a, b, c}-\lambda_1)\big)\end{equation}
for the first case. Similarly for the second case that  $t\in [c_0+a-b, a)+a\Z$,
 $\lambda_2= \lfloor c/b\rfloor b$ by \eqref{lambdaabc.def} and \eqref{dabcbasic1.lem.eq3} in
Proposition \ref{dabcbasic1.prop}, which together with  \eqref{dabcsabc.tm.pf.eq1} implies that
\begin{equation} \label{dabcsabc.tm.pf.eq3}
t\in \big({\mathcal S}_{a, b, c}\cap ([ c_0, a)+a\Z)\big)\cap \big(\cup_{\lambda_1\in [b, (\lfloor c/b\rfloor -1) b]\cap b\Z}
({\mathcal S}_{a, b, c}-\lambda_1)\big)\end{equation}
for the second case. Combining \eqref{dabcsabc.tm.pf.eq2} and \eqref{dabcsabc.tm.pf.eq3} proves the first inclusion
\begin{eqnarray} \label{dabcsabc.tm.pf.eq4}
{\mathcal D}_{a, b, c} & \subset &
\big({\mathcal S}_{a, b, c}\cap ([0, c_0+a-b)+a\Z)\cap ( {\mathcal S}_{a, b, c}-\lfloor c/b\rfloor b)\big)\nonumber\\
& & \cup \big({\mathcal S}_{a, b, c}\cap (\cup_{\lambda\in [b, (\lfloor c/b\rfloor -1) b]\cap b\Z} ({\mathcal S}_{a, b, c}-\lambda))\big).
\end{eqnarray}

Take $t\in {\mathcal S}_{a, b, c}\cap ([0, c_0+a-b)+a\Z)\cap ( {\mathcal S}_{a, b, c}-\lfloor c/b\rfloor b)$. Then
there exist ${\bf x}_0, {\bf x}_1\in {\mathcal B}_b^0$ such that
\begin{equation}\label{dabcsabc.tm.pf.eq5}
{\mathbf M}_{a, b, c}(t) {\bf x}_0= {\mathbf M}_{a, b, c}(t+\lfloor c/b\rfloor b ) {\bf x}_1={\bf 1}.\end{equation}
Define  ${\bf x}={\bf x}_0+\tau_{-\lfloor c/b\rfloor b}{\bf x}_1$.
By \eqref{sabcinvariant.eq1} and \eqref{dabcsabc.tm.pf.eq5}, we have that
\begin{equation}\label{dabcsabc.tm.pf.eq6}
 {\mathbf M}_{a, b, c}(t) {\bf x}= {\mathbf M}_{a, b, c}(t) {\bf x}_0+  {\mathbf M}_{a, b, c}(t+\lfloor c/b\rfloor b ) {\bf x}_1={\bf 2}.
\end{equation}
Now let us verify that ${\bf x}\in {\mathcal B}_b^0$.
Write ${\bf x}=({\bf x}(\lambda))_{\lambda\in b\Z}$. Observe that ${\bf x}(\lambda)\in \{0, 1, 2\}$ for all $\lambda\in b\Z$
and ${\bf x}(0)\ge {\bf x}_0(0)\ge 1$. Then it suffices to prove that ${\bf x}(\lambda)\ne 2$ for all $\lambda\in b\Z$. Suppose, to the contrary, that
${\bf x}(\lambda_0)=2$ for some $\lambda_0\in b\Z$. Then ${\bf x}_0(\lambda_0)=1$ and $\tau_{-\lfloor c/b\rfloor b}{\bf x}_1(\lambda_0)=1$. Hence
$\tau_{\lambda_0} {\bf x}_0, \tau_{\lambda_0-\lfloor c/b\rfloor b}{\bf x}_1\in {\mathcal B}_b^0$ and
$${\mathbf M}_{a, b, c}(t+\lambda_0) \tau_{\lambda_0} {\bf x}_0={\mathbf M}_{a, b, c}(t) {\bf x}_0 ={\bf 1}\ {\rm and}\
{\mathbf M}_{a, b, c}(t+\lambda_0) \tau_{\lambda_0-\lfloor c/b\rfloor b}{\bf x}_1={\bf 1}$$
by \eqref{sabcinvariant.eq1} and \eqref{dabcsabc.tm.pf.eq5}.  Thus
$\tau_{\lambda_0} {\bf x}_0=\tau_{\lambda_0-\lfloor c/b\rfloor b}{\bf x}_1$ by Proposition \ref{uniqueness.cor},
which is a contradiction as
$\tau_{-\lfloor c/b\rfloor b}{\bf x}_1(\lfloor c/b\rfloor b)={\bf x}_1(0)=1$  by the assumption that ${\bf x}_1\in {\mathcal B}_b^0$
and
${\bf x}_0(\lfloor c/b\rfloor b)=0$ by \eqref{dabcbasic1.lem.eq3} and the assumption that $t\in [0, c_0+b-a)\cap {\mathcal S}_{a, b, c}$.
Therefore ${\bf x}$ is a binary vector in ${\mathcal B}_b^0$. This together with
\eqref{dabcsabc.tm.pf.eq6} proves that
\begin{equation}\label{dabcsabc.tm.pf.eq7}
{\mathcal S}_{a, b, c}\cap ([0, c_0+a-b)+a\Z)\cap ( {\mathcal S}_{a, b, c}-\lfloor c/b\rfloor b)\subset {\mathcal D}_{a, b, c}.
\end{equation}
Applying similar argument, we can prove that
\begin{equation}\label{dabcsabc.tm.pf.eq8}
{\mathcal S}_{a, b, c}\cap ([0, c_0+a-b)+a\Z)\cap ( {\mathcal S}_{a, b, c}-\lambda)\subset {\mathcal D}_{a, b, c}
\end{equation}
for all $\lambda\in [b, (\lfloor c/b\rfloor-1) b]\cap b\Z$, and
\begin{equation}\label{dabcsabc.tm.pf.eq8*}
{\mathcal S}_{a, b, c}\cap ([c_0, a)+a\Z)\cap ( {\mathcal S}_{a, b, c}-\lambda)\subset {\mathcal D}_{a, b, c}
\end{equation}
for all $\lambda\in [b, (\lfloor c/b\rfloor-1) b]\cap b\Z$.
The desired equality \eqref{dabcsabc.thm.eq1} then follows from
\eqref{dabcsabc.tm.pf.eq4}, \eqref{dabcsabc.tm.pf.eq7}, \eqref{dabcsabc.tm.pf.eq8} and \eqref{dabcsabc.tm.pf.eq8*}.
\end{proof}

\subsection{Proof of  Theorem \ref{newmaintheorem3}}
%
 The main ideas behind the proof of Theorem \ref{newmaintheorem3}  are as follows.
  The conclusion (VIII) follows from the results in \cite[Section 3.3.3.5, 3.3.3.6 and 3.3.4.3]{janssen03}.
We include a different proof by showing that the only binary vector ${\bf x}$
satisfying ${\bf M}_{a, b, c}(t) {\bf x}={\bf 2}$ is the  vector ${\bf 1}$.
We prove  Conclusion (IX) by showing that any point not in the black hole $[c_0+a-b, c_0)+a\Z$ of the transformation $R_{a, b,c}$
is contained in the  hole $(R_{a,b,c})^n [c-c_0, c-c_0+b-a)+a\Z$ for some $n\ge 1$, see \eqref{conclusionix.pf.eq1}.
The crucial step in the proof of Conclusion (X)  (resp. Conclusion (XI))
is to show that  ${\mathcal S}_{a, b, c}=[0, c_0+a-b)+a\Z$
in \eqref{conclusionx.pf.eq1} (resp.  ${\mathcal S}_{a, b, c}=[c_0, a)+a\Z$ in \eqref{conclusionxi.pf.eq2}).

\begin{proof}[Proof of Theorem \ref{newmaintheorem3}]
(VIII):\quad   Suppose on the contrary that ${\mathcal G}(\chi_{[0,c)}, a\Z\times \Z/b)$ is not a Gabor frame.
 Then by Theorem \ref{framenullspace1.tm} there exist $t\in \R$ and  $({\bf x}(\lambda))_{\lambda\in b\Z}\in {\mathcal B}_b^0$
 such that
 \begin{equation}\label{maintheorem.pf.nullspace1.eqviii12} \sum_{\lambda\in b\Z} \chi_{[0,c)}(t-\mu+\lambda) {\bf x}(\lambda)=2 \ {\rm for  \ all} \ \mu\in a\Z.\end{equation}
By the assumption $\lfloor c/b\rfloor=1$ and $b<c$,
given any $t\in \R$ and $\mu\in a\Z$, the equality $\chi_{[0,c)}(t-\mu+\lambda)=1$ holds for  at most two distinct $\lambda\in b\Z$.
This together with \eqref{maintheorem.pf.nullspace1.eqviii12}  that ${\bf x}(\lambda)=1$ for all $\lambda\in b\Z$, and
also that
\begin{equation} \label{maintheorem.pf.nullspace1.eqviii13}
t-\mu\not\in [c, 2b)+b\Z\ {\rm  for\ all} \ \mu\in a\Z.
\end{equation}

If $a/b\not\in \Q$,   then there exists  $\mu_0\in a\Z$
by the density of the set  $a\Z+b\Z$  in $\R$ such  that $t-\mu_0\in [c, 2b)+b\Z$, which contradicts to
\eqref{maintheorem.pf.nullspace1.eqviii13}.

If $a/b\in \Q$, then  $a/b=p/q$ for some positive coprime integers $p$ and $q$. Hence
\begin{equation*} \label{maintheorem.pf.nullspace1.eqviii14}
t\not\in [c, 2b)+b\Z/q=\R,
\end{equation*}
where the first conclusion follows from \eqref{maintheorem.pf.nullspace1.eqviii13} and the equality
holds as $2b-c=b-c_0>b-a\ge b/q$ by the assumption $0<b-a<c_0<a$.
This is a contradiction.

(IX):\quad
 By the supset property \eqref{dabc2subsetdabc1} and Theorem
\ref{framenullspace1.tm}, it suffices to prove the following result.

\begin{prop}\label{conclusionix.prop}
 Let $(a, b, c)$  be a triple of positive numbers that satisfies $a<b<c,  b-a<c_0:=c-\lfloor c/b\rfloor b<a, \lfloor c/b\rfloor\ge 2$,
$c_1=c-c_0-\lfloor (c-c_0)/a\rfloor a>2a-b$, and let ${\mathcal S}_{a, b, c}$ be as in \eqref{dabc1.def}.
Then
${\mathcal S}_{a, b, c}=\emptyset$.
\end{prop}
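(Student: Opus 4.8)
The plan is to prove the set equality
\[
\R = \big([c_0+a-b,c_0)+a\Z\big) \cup \bigcup_{n\ge 1}(R_{a,b,c})^n\big([c-c_0,c-c_0+b-a)+a\Z\big),
\]
from which ${\mathcal S}_{a,b,c}=\emptyset$ follows immediately. Indeed, the black hole $[c_0+a-b,c_0)+a\Z$ of $R_{a,b,c}$ is disjoint from ${\mathcal S}_{a,b,c}$ by \eqref{dabcbasic1.lem.eq1}, while every forward image $(R_{a,b,c})^n([c-c_0,c-c_0+b-a)+a\Z)$ of the black hole of $\tilde R_{a,b,c}$ is disjoint from ${\mathcal S}_{a,b,c}$ by Proposition \ref{blackholestwo.prop}; hence a covering of $\R$ by these sets forces ${\mathcal S}_{a,b,c}$ to meet nothing, i.e. to be empty. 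Equivalently, in the language of Theorem \ref{dabc1maximal.cor.thm}, I am showing that the minimal invariant set containing the two black holes is all of $\R$.

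First I would pass to the circle $\R/a\Z$, which is legitimate since all the sets involved are $a$-periodic. From $b-a<c_0<a$ one gets $a<b<2a$, so on $[0,a)$ the transformation $R_{a,b,c}$ translates $[0,c_0+a-b)$ by $s_1:=(c_1+b)\bmod a$, fixes the black hole $[c_0+a-b,c_0)$, and translates $[c_0,a)$ by $s_3:=\lfloor c/b\rfloor b\bmod a=c_1$. The standing hypothesis $c_1>2a-b$ enters twice here: it yields $s_1=c_1+b-2a>0$, and it forces the black hole $[c-c_0,c-c_0+b-a)+a\Z$ of $\tilde R_{a,b,c}$, which reduces mod $a$ to $[c_1,c_1+b-a)$, to \emph{wrap around the origin} (because $c_1+b-a>a$), so that it meets both translation pieces of $R_{a,b,c}$, reducing mod $a$ to $[c_1,a)\cup[0,s_1)$.

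Next I would track the forward images $(R_{a,b,c})^n(B)$ of $B:=[c-c_0,c-c_0+b-a)+a\Z$ on the circle. Off the two black holes, $R_{a,b,c}$ is invertible with inverse $\tilde R_{a,b,c}$ and measure-preserving (Propositions \ref{rabcinvertibility.prop} and \ref{rabcbasic1.thm}), so each image is, up to the way it is split by the fixed black hole $[c_0+a-b,c_0)$, a union of translates of pieces of $B$. The point is that under iteration these images advance by a definite positive amount governed by $c_1-(2a-b)>0$, so that together with the fixed black hole they eventually leave no room on the circle; making this precise yields the displayed covering.

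The main obstacle will be the bookkeeping of this sweeping argument: a connected piece of $(R_{a,b,c})^n(B)$ may straddle the fixed black hole $[c_0+a-b,c_0)$, so that one application of $R_{a,b,c}$ splits it into a part that stays put and parts translated by the two distinct amounts $s_1$ and $s_3$. Keeping track of these pieces, showing that the growing union misses no point of the circle, and verifying that $c_1>2a-b$ together with $\lfloor c/b\rfloor\ge 2$ is exactly what closes the covering, is the technical heart of the proof. An alternative and perhaps cleaner route is to argue by contradiction through Proposition \ref{dabc1pointcharacterization.prop}: a putative point of ${\mathcal S}_{a,b,c}$ must have its entire orbit avoid both black holes, which confines it mod $a$ to $[c_1+b-2a,c_1)\setminus[c_0+a-b,c_0)$, and one then shows that iterating $R_{a,b,c}$ on this shrunken region must eject the orbit into a black hole, contradicting membership in ${\mathcal S}_{a,b,c}$.
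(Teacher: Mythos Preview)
Your overall strategy is exactly the paper's: show that the fixed black hole $[c_0+a-b,c_0)+a\Z$ together with $\bigcup_{n\ge 0}(R_{a,b,c})^n([c-c_0,c-c_0+b-a)+a\Z)$ covers $\R$ (you wrote $n\ge 1$, but you need $n=0$ as well), and conclude via Propositions~\ref{dabcbasic1.prop} and~\ref{blackholestwo.prop}. Your identification of the shifts $s_1=c_1+b-2a$ on $[0,c_0+a-b)$ and $c_1$ (equivalently $c_1-a<0$) on $[c_0,a)$, and the observation that the $\tilde R$-black hole wraps around the origin into $[c_1,a)\cup[0,s_1)$, are correct and are exactly what the paper uses.

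The gap is that the ``bookkeeping obstacle'' you anticipate does not exist; you are missing the one-line observation that closes the proof. The wrapping you already noticed splits the $\tilde R$-black hole into a piece $[0,s_1)$ sitting in the first translation region and a piece $[c_1,a)\cong[c_1-a,0)$ sitting in the third, and the paper's point is that these sweep out the two regions \emph{separately and monotonically}, never straddling anything. Concretely: any $t\in[0,c_0+a-b)$ can be written $t=l\,s_1+t'$ with $t'\in[0,\min(s_1,c_0+a-b))$; since $t'+j\,s_1\in[0,t]\subset[0,c_0+a-b)$ for every $0\le j\le l$, each iterate stays in the first region, so $(R_{a,b,c})^l(t')\equiv t'+l\,s_1=t\pmod a$ with $t'$ in the $[0,s_1)$-piece of the $\tilde R$-black hole. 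Symmetrically, any $s\in[c_0-a,0)$ is $s=l'(c_1-a)+s'$ with $s'\in[\max(c_1-a,c_0-a),0)$; the iterates $s'+j(c_1-a)$ stay in $[s,0)\subset[c_0-a,0)$, and $s'$ lies in the $[c_1-a,0)$-piece. This covers $\R\setminus([c_0+a-b,c_0)+a\Z)$ by forward images with the explicit bound $L=\max\big(\lfloor(c_0+a-b)/s_1\rfloor,\lfloor(a-c_0)/(a-c_1)\rfloor\big)$. The argument is pointwise and two lines long; there is no splitting or recombination to track. Your alternative route through Proposition~\ref{dabc1pointcharacterization.prop} is just the contrapositive of this and is not cleaner.
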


\begin{proof}
By Propositions \ref{dabcbasic1.prop} and
\ref{blackholestwo.prop},
it suffices to
prove
\begin{equation}\label{conclusionix.pf.eq1}  [c_0-a, c_0+a-b)+a\Z\subset
\cup_{n=0}^L (R_{a, b, c})^n([c-c_0, c-c_0+b-a)+a\Z),\end{equation}
  where $L=\max (\lfloor
(c_0+a-b)/(c_1+b-2a)\rfloor, \lfloor (a-c_0)/(a-c_1)\rfloor)$.

For any $t\in [0, c_0+a-b)$, write $t= l(c_1+b-2a)+t'$ for some
$t'\in [0, \min(c_1+b-2a, c_0+a-b))$ and $0\le l\le L$. Then
\begin{eqnarray} \label{conclusionix.pf.eq2}
 t & \in &  (R_{a, b, c})^l(t')+a\Z\subset (R_{a, b, c})^l([0,
c_1+b-2a)+a\Z)\nonumber\\
& \subset &  \cup_{n=0}^L (R_{a, b, c})^n([c-c_0,
c-c_0+b-a)+a\Z)\end{eqnarray}
  for all $t\in [0, c_0+a-b)$.

Similarly for any $s\in [c_0-a, 0)$, let $s= l'(c_1-a)+s'$ for
some $s'\in [\max(c_1-a, c_0-a), 0)$ and $0\le l'\le L$. Then
\begin{eqnarray} \label{conclusionix.pf.eq3}
 s & \in &  (R_{a, b, c})^{l'}(s')+a\Z\subset (R_{a, b, c})^{l'}([c_1-a, 0)+a\Z)\nonumber\\
 & \subset &  \cup_{n=0}^L (R_{a, b, c})^n([c-c_0,
c-c_0+b-a)+a\Z)\end{eqnarray}
  for all $s\in [c_1-a, 0)$.
Combining \eqref{conclusionix.pf.eq2} and
\eqref{conclusionix.pf.eq3} and applying the periodic property
\eqref{dabcperiodic.section2} proves \eqref{conclusionix.pf.eq1}.
\end{proof}

\smallskip

(X):\quad  Mimicking the argument used to prove \eqref{conclusionix.pf.eq3}, we can show that
$$\cup_{n=0}^\infty (R_{a, b, c})^n([c-c_0, c-c_0+b-a)+a\Z)=[c_0+a-b, a)+a\Z,$$ which implies that
the set ${\mathcal S}_{a, b, c}$  in \eqref{dabc1.def} is given by
\begin{equation} \label{conclusionx.pf.eq1}
{\mathcal S}_{a, b, c}=[0, c_0+a-b)+a\Z.\end{equation}
From the assumption on $c_1$, the ratio $a/b$ is rational. We write $a/b=p/q$ for some coprime integers $p$ and $q$.
Clearly $p\ge 2$ as $b-a<c_0<a$.
By the assumption that $c_1=2a-b$, we have that  $\lfloor c/b\rfloor+1\in p\Z$, which implies that
\begin{equation*}
R_{a, b, c}(t)-t\in a\Z \quad {\rm for \ all}\ t\in {\mathcal S}_{a, b, c}=[0, c_0+a-b)+a\Z.
\end{equation*}
This together with  Theorem \ref{framenullspace1.tm} and Proposition \ref{dabcsabc.thm}  implies that
the Gabor system ${\mathcal G}(\chi_{[0,c)}, a\Z\times \Z/b)$ is a frame of $L^2$ if and only if
${\mathcal D}_{a, b, c}$ in \eqref{dabc2.def} is an empty set if and only if
$([0, c_0+a-b)+a\Z)\cap ([0, c_0+a-b)+\lambda+a\Z)=\emptyset$ for all $\lambda\in [b, \lfloor c/b\rfloor b]\cap b\Z$.
Observe that  $([0, c_0+a-b)+a\Z)\cap ([0, c_0+a-b)+\lambda +a\Z)= [0, c_0+a-b)+a\Z\ne \emptyset$ for $\lambda=pb\in [b, \lfloor c/b\rfloor b]\cap b\Z$
provided that $\lfloor c/b\rfloor \ge p$, and also
that $([0, c_0+a-b)+a\Z)\cap ([0, c_0+a-b)+ \lambda +a\Z)= [b/q, c_0+a-b)+a\Z\ne \emptyset$
for $\lambda=kb \in [b, \lfloor c/b\rfloor b]\cap b\Z$ where $1\le k\le p-1$ is the unique integer such that $ qk-1\in p\Z$,
 provided that $\lfloor c/b\rfloor +1 = p$ and $c_0+a-b>b/q$.
Therefore the assumptions that $\lfloor c/b\rfloor +1=p$ and $c_0+a-b\le b/q$ are necessary  for
  the Gabor system ${\mathcal G}(\chi_{[0,c)}, a\Z\times \Z/b)$ being a frame of $L^2$.
On the other hand,  if $\lfloor c/b\rfloor +1=p$ and $c_0+a-b\le b/q$,  one may verify that
$([0, c_0+a-b)+a\Z)\cap ([0, c_0+a-b)+\lambda +a\Z)=([0, c_0+a-b)+a\Z)\cap ([0, c_0+a-b)+k(\lambda) b/q +a\Z)=\emptyset$
for all $\lambda\in [1, \lfloor c/b\rfloor ]b\cap b\Z$, where $k(\lambda)$ is the unique integer in
$[1, p-1]$ such that $k(\lambda)b/q-\lambda\in a\Z$.
Therefore the assumptions that $\lfloor c/b\rfloor +1=p$ and $c_0+a-b\le b/q$
is also sufficient  for the Gabor system ${\mathcal G}(\chi_{[0,c)}, a\Z\times \Z/b)$ being a frame of $L^2$.

(XI)\quad Mimicking the argument used to prove \eqref{conclusionix.pf.eq2}, we may show that
\begin{equation} \label{conclusionxi.pf.eq2}
{\mathcal S}_{a, b, c}=[c_0, a)+a\Z.\end{equation}
Now we can apply similar argument used in the proof of the conclusion (X) of this theorem.
From the assumption that $c_1=0$, it follows  $a/b=p/q$ for some coprime integers $p$ and $q$ with $p\ge 2$
and $\lfloor c/b\rfloor\in p\Z$.
By \eqref{conclusionxi.pf.eq2} and Theorem \ref{framenullspace1.tm},
we can show  that ${\mathcal G}(\chi_{[0,c)}, a\Z\times \Z/b)$ is a frame of $L^2$ if and only if
$([c_0, a)+a\Z)\cap ([c_0, a)+\lambda+a\Z)=\emptyset$ for all $\lambda\in [b, \lfloor c/b\rfloor b-b]\cap b\Z$.
Then the desired  necessary condition for
  the Gabor system ${\mathcal G}(\chi_{[0,c)}, a\Z\times \Z/b)$ being a frame of $L^2$ follows from the observation that
 $([c_0, a)+a\Z)\cap ([c_0, a)+p b +a\Z)= [c_0, a)+a\Z\ne \emptyset$ if $\lfloor c/b\rfloor \ge p+1$, and
that $([c_0, a)+a\Z)\cap ([c_0, a)+ kb +a\Z)= [c_0, a-b/q)+a\Z\ne \emptyset$ if $\lfloor c/b\rfloor = p$ and $a-c_0>b/q$
where $1\le k\le p-1$ is the unique integer such that $ qk+1\in p\Z$.
The sufficiency for the  conditions that $\lfloor c/b\rfloor =p$ and $a-c_0\le b/q$
holds as
$([c_0, a)+a\Z)\cap ([c_0, a)+\lambda +a\Z)=([c_0, a)+a\Z)\cap ([c_0, a)-k(\lambda) b/q +a\Z)=\emptyset$
for all $\lambda\in [1, \lfloor c/b\rfloor ]b\cap b\Z$, where $k(\lambda)$ is the unique integer in
$[1, p-1]$ such that $k(\lambda)b/q+\lambda\in a\Z$.
\end{proof}

%


\section{
Properties of Maximal Invariant Sets}
\label{gaborirrational.section}

To prove Theorems \ref{newmaintheorem4} and \ref{newmaintheorem5}, we need some deep properties
of the maximal invariant sets ${\mathcal S}_{a, b, c}$.
Let us start from some examples of  maximal invariant sets ${\mathcal S}_{a, b, c}$.

\begin{ex} \label{irrationalexample1}
{\rm Take $a=\pi/4, 
b=1$ and
 $c=23-11\pi/2$. 
The black holes of the corresponding transformations $R_{a, b, c}$ and  $\tilde R_{a, b,c}$ are
$[17-21\pi/4, 18-11\pi/2)+\pi\Z/4$  and $[5-3\pi/2,6-7\pi/4)+\pi\Z/4$  respectively,
which can be transformed
back and forth via the middle hole $[11-7\pi/2, 12-15\pi/4)+\pi\Z/4$; i.e.,
$$\left\{\begin{array}{l}
R_{a, b, c} ([5-3\pi/2,6-7\pi/4)+\pi\Z/4)=[11-7\pi/2, 12-15\pi/4)+\pi\Z/4\\
(R_{a, b, c})^2 ([5-3\pi/2,6-7\pi/4)+\pi\Z/4)=[17-21\pi/4, 18-11\pi/2)+\pi\Z/4.
\end{array}\right.
$$
So the maximal invariant  set
\begin{eqnarray*}
{\mathcal S}_{a,b,c}  &=&[ 18-23\pi/4,11-7\pi/2 ) \cup [ 12-15\pi/4 ,5-
3\pi/2 )\\
 & & \cup [ 6-7\pi/4,  17-21\pi/4)  +\pi \Bbb{Z}/4  \\
\  &\approx &[ -0.0642,0.0044) \cup [ 0.2190,0.2876) \cup
[ 0.5022,0.5066) +0.7864 \Bbb{Z}
\end{eqnarray*}
consists of intervals of  different lengths on one period and contains a small neighborhood of the lattice $\pi \Bbb{Z}/4$,
 c.f. Figure \ref{holesremoval.fig}.
}\end{ex}

\begin{ex} \label{rationalexample1}
{\rm For the triple $(a, b, c)=(13/17, 1, 77/17)$, 
 the black holes of the corresponding transformations $R_{a, b, c}$ and $\tilde R_{a, b, c}$ are
$[5/17, 9/17)+13\Z/17$ and $[3/17, 7/17)+13\Z/17$.  Applying the transformation $R_{a, b, c}$ to
 the black hole of the transformation $\tilde R_{a, b, c}$,
we obtain that
\begin{equation}\label{rationalexample1.eq1}\left\{\begin{array}{l}
R_{a, b, c}([3, 7)/17+13\Z/17)=([5, 7)\cup [10, 12))/17+13\Z/17,\\
(R_{a, b, c})^2([3, 7)/17+13\Z/17) =([5, 7)\cup [0, 2))/17+13\Z/17,\\
(R_{a, b, c})^3([3, 7)/17+13\Z/17) =[5, 9)/17+13\Z/17.
\end{array}\right.
\end{equation}
Thus  the maximal invariant set
\begin{eqnarray*} \mathcal{S}_{a,b,c} & = & ([2, 3)\cup [9, 10)\cup [12, 13))/17+13\Z/17\\
& \approx & [0.1176, 0.1764)\cup [0.5294, 0.5882)\cup [0.7059, 0.7647)+0.7647\Z
\end{eqnarray*}
consists of intervals of same length $1/17$ on the period $[0, 13/17)$ and contains  small left  neighborhood of  the lattice $13\Z/17$,
and its complement $\R\backslash \mathcal{S}_{a,b,c}=([0,2)\cup[3, 9)\cup [10, 12))/17+13\Z/17$
contains one  big gap of size $6/17$, two small gaps of size $2/17$ on the period $[0, 13/17)$,
and a small gap attached to the right-hand side of the lattice $13\Z/17$.

For the triple $(a, b, c)=(13/17, 1, 73/17)$,
   the maximal invariant set
$\mathcal{S}_{a,b,c}=([0, 1)\cup [7, 8)\cup [10, 11))/17+13\Z/17$
 contains a small right  neighborhood of  the lattice $13\Z/17$,
while its complement $\R\backslash \mathcal{S}_{a,b,c}=([1,7)\cup[8, 10)\cup [11, 13))/17+13\Z/17$
contains a small gap attached to the left-hand side of the lattice $13/17$.

For  the triple  $(a, b, c)=(13/17, 1, 75/17)$,
 the maximal invariant set
\begin{eqnarray*}
{\mathcal S}_{a, b, c}&=&([0, 3)\cup [7, 10)\cup[10, 13))/17+13\Z/17\\
& = & [0, 0.1765)\cup [0.4118, 0.5882)\cup [0.5882, 0.7647)+0.7647 \Z\end{eqnarray*}
consists of intervals of ``same" length $3/17$ and contains small left and right neighborhoods of  the lattice $13\Z/17$. On the other hand,
its  complement $\R\backslash \mathcal{S}_{a,b,c}=[3, 7)/17+13\Z/17$
contains one  big gap of size $4/17$ and two small gaps of size
``zero" at $\{0, 10/17\}$ on the period $[0, 13/17)$, c.f. Figure \ref{holesremoval.fig}.
 }\end{ex}

 \begin{ex}
\label{rationalexample2}
{\rm For the triple $(a, b, c)=(6/7, 1, 23/7)$,
  black holes of the corresponding  transformations $R_{a, b, c}$ and $\tilde R_{a, b,c}$
are $[1,2)/7+6\Z/7$ and $[3, 4)/7+6\Z/7$ respectively.
Observe that
$$\left\{\begin{array}{l}
R_{a, b, c}([3, 4)/7+6\Z/7)=[0, 1)/7+6\Z/7\\
(R_{a, b, c})^2([3, 4)/7+6\Z/7)=[4, 5)/7+6\Z/7\\
(R_{a, b, c})^3([3, 4)/7+6\Z/7)=[1, 2)/7+6\Z/7,
\end{array}\right.
$$
which also implies that
$
R_{a, b, c}([3, 5)/7+6\Z/7)=[0, 2)/7+6\Z/7$.
Therefore the maximal invariant set
\begin{eqnarray*} {\mathcal S}_{a, b, c} & = & [2,3)/7\cup [5, 6)/7+6\Z/7\\
&= &[0.2857, 0.4286)\cup [0.7143, 0.8571)+0.8571\Z\end{eqnarray*} consists of intervals of length $1/7$,
while its complement $\R\backslash {\mathcal S}_{a, b, c}=([0,2)\cup [3, 5))/7+6\Z/7$
 consists of gaps of length $2/7$, c.f. Figure \ref{holesremoval.fig}. 
}\end{ex}

In the above examples, we see that
the black hole $[c_0+a-b, c_0)+a\Z$  of the transformation $R_{a, b, c}$
attracts the black hole $[c-c_0, c-c_0+b-a)+a\Z$ of the other transformation $\tilde R_{a,b,c}$
when applying ${\mathcal R}_{a, b, c}$  {\em finitely many times}, i.e.,
$$(R_{a, b, c})^L ([c-c_0, c-c_0+b-a)+a\Z)=[c_0+a-b, c_0)+a\Z
$$
for some nonnegative integer $L$,
and that
 holes
$$(R_{a, b, c})^l ([c-c_0, c-c_0+b-a)+a\Z)=[c_0+a-b, c_0)+a\Z, 0\le l\le L-1,
$$
may or may not overlap with
the black hole $[c_0+a-b, c_0)+a\Z$  of the transformation $R_{a, b, c}$, which depends on
the
ratio $a/b$ 
being irrational or not.



\smallskip

For the first case that the ratio  between time-spacing parameter $a$ and  frequency-spacing parameter $b$
 is irrational (i.e. $a/b\not\in \Q$),  we  show that
if  ${\mathcal S}_{a, b, c}\ne \emptyset$ then
 the  black hole $[c_0+a-b, c_0)+a\Z$ of the transformation
 $R_{a, b,c}$ and  the   black hole $[c-c_0, c-c_0+b-a)+a\Z$ of the transformation $\tilde R_{a,b,c}$
  are inter-transformable  through
 mutually disjoint  periodic holes $(R_{a, b,c})^n ([c- c_0, c-c_0+b-a)+a\Z)=(\tilde R_{a, b,c})^{D-n}
     ([c_0+a-b, c_0)+a\Z), 0\le n\le D$,
  where $D$ is a nonnegative integer.
Furthermore the complement of the set ${\mathcal S}_{a, b, c}$ is
    the union of  mutually disjoint   holes of {\em same size}, but  the set ${\mathcal S}_{a, b, c}$ is
    the union of disjoint intervals of {\em ``different'' sizes}. 

 \begin{thm} \label{dabc1holes.tm}
 Let $(a,b,c)$ be a triple of positive numbers satisfying
$a<b<c,   b-a<c_0:=c-\lfloor c/b\rfloor b<a, \lfloor c/b\rfloor\ge 2$ and $a/b\not\in \Q$.
Assume that
 ${\mathcal S}_{a, b, c}\ne \emptyset$. Then there exists a nonnegative integer $D\le \lfloor a/(b-a)\rfloor-1$ such that
  \begin{equation}\label{dabc1holes.tm.eq-1}
 (\tilde R_{a, b,c})^{n} (c_0+a-b)+a\Z
    =
   (R_{a, b,c})^{D-n} (c- c_0)+a\Z
   \end{equation}
   for all $0\le n\le D$,
 \begin{equation}\label{dabc1holes.tm.eq0}
 \big((R_{a, b,c})^n (c- c_0)+[0, b-a]+a\Z \big)
 \cap \big((R_{a, b,c})^{n'} (c- c_0)+[0, b-a]+a\Z\big)=\emptyset
 \end{equation}
 for all $0\le n\ne n'\le D$. Moreover
 \begin{eqnarray}\label{dabc1holes.tm.eq1}
\R \backslash {\mathcal S}_{a, b, c}
& = &
\cup_{n=0}^{D} \big( (R_{a, b,c})^n (c- c_0)+[0, b-a)+a\Z\big)\nonumber\\
& = &
\cup_{n=0}^{D} (R_{a, b,c})^n ([c- c_0, c-c_0+b-a)+a\Z)\nonumber\\
& = &
\cup_{n=0}^{D} (\tilde R_{a, b,c})^n ( [c_0+a-b, c_0)+a\Z)\nonumber\\
& = &
\cup_{n=0}^{D} \big((\tilde R_{a, b,c})^n (c_0+a-b)+[0, b-a)+a\Z\big).
\end{eqnarray}
\end{thm}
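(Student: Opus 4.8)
The plan is to follow the forward orbit of the black hole of $\tilde R_{a,b,c}$ under $R_{a,b,c}$ and to show that, once ${\mathcal S}_{a,b,c}\ne\emptyset$, it lands \emph{exactly} on the black hole of $R_{a,b,c}$ after finitely many steps. Set $B:=[c_0+a-b,c_0)+a\Z$ (the black hole of $R_{a,b,c}$) and $H_n:=(R_{a,b,c})^n([c-c_0,c-c_0+b-a)+a\Z)$, so that $H_0$ is the black hole of $\tilde R_{a,b,c}$. Three facts organize everything: by \eqref{rabcnewplus.def} and \eqref{tilderabcnewplus.def}, $R_{a,b,c}$ fixes $B$ pointwise and $\tilde R_{a,b,c}$ fixes $H_0$ pointwise; by \eqref{rangeoftransformations} and \eqref{invertibilityoftransformations}, $R_{a,b,c}$ restricts to a bijection of $\R\setminus B$ onto $\R\setminus H_0$ with inverse $\tilde R_{a,b,c}$; and by Proposition \ref{blackholestwo.prop} every $H_n$ lies in $\R\setminus{\mathcal S}_{a,b,c}$, while by Proposition \ref{dabcbasic1.prop} both black holes are disjoint from ${\mathcal S}_{a,b,c}$. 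All sets are $a\Z$-periodic by \eqref{dabcperiodic.section2}, so I would argue on the circle $\R/a\Z$ of circumference $a$, on which $R_{a,b,c}$ is a piecewise rotation with a single fixed arc $B$ and discontinuities only at $0,\ c_0+a-b,\ c_0$.

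The technical core is an induction establishing \emph{coherence}. Let $D$ be the first index for which $H_D$ meets $B$ (note $H_0$ is already a single arc of length $b-a<a$, so the induction starts). For $n<D$ I claim $H_n$ is one half-open arc of length $b-a$ lying in a single linearity piece of $R_{a,b,c}$, and that $H_0,\dots,H_n$ have pairwise disjoint closures. If $H_n$ is such an arc disjoint from $B$, then it cannot straddle the discontinuity at $0$, because ${\mathcal S}_{a,b,c}$ accumulates at $0$ from both sides by Lemma \ref{rabcbasic2.lem} while $H_n\subset\R\setminus{\mathcal S}_{a,b,c}$, so $0$ is not interior to $H_n$; hence $H_n$ sits in one moving piece and $H_{n+1}=R_{a,b,c}(H_n)$ is again a single arc of length $b-a$, a rigid translate. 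Disjointness of closures follows from injectivity: an overlap $H_n\cap H_m\ne\emptyset$ with $n<m\le D$ transports back under the bijection $\tilde R_{a,b,c}=R_{a,b,c}^{-1}$ (legitimate since $H_0,\dots,H_{m-1}$ avoid $B$) to $H_0\cap H_{m-n}\ne\emptyset$, whereas $H_{m-n}=R_{a,b,c}(H_{m-n-1})$ with $H_{m-n-1}\cap B=\emptyset$ lies in $\R\setminus H_0$ by the range property \eqref{rangeoftransformations}, a contradiction.

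Finiteness of $D$ is then pure counting. If the orbit never met $B$, the induction would manufacture infinitely many pairwise disjoint arcs of fixed length $b-a$ inside a circle of circumference $a$, which is impossible; so $D<\infty$, and since $H_0,\dots,H_{D-1}$ together with $B$ give $D+1$ arcs of length $b-a$ with disjoint closures, one gets $(D+1)(b-a)\le a$, i.e. $D\le\lfloor a/(b-a)\rfloor-1$. The \emph{crux} of the whole theorem is that this first meeting is exact, $H_D=B$, rather than a partial overlap. A partial overlap would split $R_{a,b,c}(H_D)$ across an endpoint of $B$ and produce a hole of length strictly between $b-a$ and $2(b-a)$; this is exactly where ${\mathcal S}_{a,b,c}\ne\emptyset$ must be spent, since $R_{a,b,c}$ fixes $B$ pointwise and would permanently deposit the overlapped sub-arc inside $B$, a configuration I expect to be incompatible with the existence of a nonempty invariant set disjoint from $B$. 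Once $H_D=B$, the orbit stabilizes, and running $\tilde R_{a,b,c}$ backward from the left endpoint $c_0+a-b$ of $B$ retraces the left endpoints of $H_D,H_{D-1},\dots,H_0$, which is precisely \eqref{dabc1holes.tm.eq-1}, while the disjoint-closure statement is \eqref{dabc1holes.tm.eq0}.

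It remains to identify the complement \eqref{dabc1holes.tm.eq1} by double inclusion. On one hand $\bigcup_{n=0}^D H_n\subset\R\setminus{\mathcal S}_{a,b,c}$ by Proposition \ref{blackholestwo.prop}. On the other hand $\bigcup_{n=0}^D H_n$ is invariant under both $R_{a,b,c}$ and $\tilde R_{a,b,c}$ — the forward orbit closes up at $B$ (fixed by $R_{a,b,c}$) and the backward orbit closes up at $H_0$ (fixed by $\tilde R_{a,b,c}$) — and it contains both black holes, so the minimality in Theorem \ref{dabc1maximal.cor.thm}(ii) yields $\R\setminus{\mathcal S}_{a,b,c}\subset\bigcup_{n=0}^D H_n$; hence equality. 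The four displayed forms in \eqref{dabc1holes.tm.eq1} are then the same union read through $H_n=(R_{a,b,c})^n([c-c_0,c-c_0+b-a)+a\Z)$, through the single-arc form $(R_{a,b,c})^n(c-c_0)+[0,b-a)+a\Z$ supplied by coherence, and through \eqref{dabc1holes.tm.eq-1} to switch to the $\tilde R_{a,b,c}$-orbit of $c_0+a-b$. The hard part, as indicated, is the exact-landing step: non-contractivity of $R_{a,b,c}$ makes splitting an orbit arc across a discontinuity an a priori possibility, and eliminating it is what makes the whole construction go through.
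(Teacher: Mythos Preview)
Your overall architecture matches the paper's: define $D$ as the first time $H_n:=(R_{a,b,c})^n([c-c_0,c-c_0+b-a)+a\Z)$ meets $B=[c_0+a-b,c_0)+a\Z$, run an induction showing each $H_n$ is a single arc of length $b-a$, obtain finiteness of $D$ by measure counting, and then read off the complement via Theorem~\ref{dabc1maximal.cor.thm}(ii). The paper packages the dynamical part as Lemma~\ref{dabc1holes.lem} and deduces the theorem from it exactly as you outline.

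There is, however, a real gap at the step you yourself flag as the crux, the exact landing $H_D=B$. Your proposed mechanism (``$R_{a,b,c}$ would permanently deposit the overlapped sub-arc inside $B$, a configuration I expect to be incompatible\ldots'') is neither a proof nor the right intuition: depositing arcs in $B$ is harmless, since $B\subset\R\setminus{\mathcal S}_{a,b,c}$ already. What the paper actually tracks is the forward orbit of a one-sided neighborhood of $0$. Writing $H_n=[\tilde u_n,u_n)+a\Z$, one shows inductively that $(R_{a,b,c})^{n+1}([0,\epsilon)+a\Z)=[u_n,u_n+\epsilon)+a\Z$, i.e.\ the right endpoint of each hole is the image of $0^+$. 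If $H_D$ only partially overlapped $B$ (say $c_0+a-b<u_D<c_0$), then $(R_{a,b,c})^{D+1}([0,\epsilon))\subset B$, so by invertibility on ${\mathcal S}_{a,b,c}$ one gets $[0,\epsilon)\cap{\mathcal S}_{a,b,c}=\emptyset$, contradicting Lemma~\ref{rabcbasic2.lem}. The symmetric argument with $(-\epsilon,0)$ and the left endpoints $\tilde u_n$ rules out the other partial-overlap case. Your use of Lemma~\ref{rabcbasic2.lem} only to forbid $0$ from being interior to $H_n$ is too weak; the lemma has to be fed through the dynamics.

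The same endpoint-tracking argument is what upgrades mere disjointness of the $H_n$ to the \emph{closure} disjointness asserted in \eqref{dabc1holes.tm.eq0}. Your injectivity argument gives $H_n\cap H_m=\emptyset$, but not that their closures are disjoint; in particular it does not exclude $u_n=c_0+a-b$ or $\tilde u_n=c_0$ for some $n<D$, which would make $\overline{H_n}$ touch $\overline{H_D}=\overline{B}$. The paper rules these out by the same push-forward of $[0,\epsilon)$ (resp.\ $(-\epsilon,0)$): if $u_n=c_0+a-b$ then $(R_{a,b,c})^{n+2}([0,\epsilon))\subset B$, again forcing $[0,\epsilon)\cap{\mathcal S}_{a,b,c}=\emptyset$. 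Once you have both $u_n\ne c_0+a-b$, $\tilde u_n\ne c_0$ for $n<D$ and $H_D=B$, each hole has ${\mathcal S}_{a,b,c}$-points immediately on both sides and closure disjointness follows.
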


For the second case that
 $a/b$ is rational, we write
$a/b=p/q$ for some coprime integers $p$ and $q$. We restrict ourselves to consider $c\in b\Z/q$
because
for $c\not\in b\Z/q$, ${\mathcal G}(\chi_{[0,c)}, a\Z\times \Z/b)$ is a  Gabor frame
 if and only if both ${\mathcal G}(\chi_{[0,\lfloor qc/b\rfloor b/q)}, a\Z\times \Z/b)$ and
${\mathcal G}(\chi_{[0,\lfloor qc/b+1\rfloor b/q)}, a\Z\times \Z/b) $ are  Gabor frames, see
 \cite[Section 3.3.6.1]{janssen03} and the conclusion (XIV) of Theorem \ref{newmaintheorem5}.
%
Observe that for $a/b=p/q$ and $c\in b\Z/q$,
\begin{equation}\label{sabcrational.eq}
 {\bf M}_{a, b, c}(t)={\bf M}_{a, b, c}(\lfloor qt/b\rfloor b/q),\ t\in \R,\end{equation}
which implies that
\begin{equation}\label{dabcicontinuoustodiscrete.eq1}
{\mathcal D}_{a, b, c}= {\mathcal D}_{a, b, c}\cap
b\Z/q+[0, b/q)\ \ {\rm and} \  \ {\mathcal S}_{a, b, c}= {\mathcal S}_{a, b, c}\cap
b\Z/q+[0, b/q).
\end{equation}
Even further,
the sets ${\mathcal D}_{a, b, c}$ and ${\mathcal S}_{a, b, c}$ are essentially {\em finite} sets, as
they  are
completely determined by their restrictions  to the finite set $ \{0, b/q, \ldots, (p-1)b/q\}$,
\begin{equation}\label{dabcicontinuoustodiscrete.eq1b}
\left\{\begin{array}{l} {\mathcal D}_{a, b, c}= {\mathcal D}_{a, b, c}\cap \{0, b/q, \ldots, (p-1)b/q\}
+pb\Z/q+[0, b/q)\\
{\mathcal S}_{a, b, c}= {\mathcal S}_{a, b, c}\cap \{0, b/q, \ldots, (p-1)b/q\}+pb\Z/q+[0, b/q),
\end{array}\right.
\end{equation}
by \eqref{dabcicontinuoustodiscrete.eq1} and  the periodic property
and \eqref{dabcperiodic.section2}. 
In the next theorem,  we show that the  maximal invariant set ${\mathcal S}_{a, b, c}$
is the union of half-open intervals of {\em same size}
 and its
 complement   $\R\backslash {\mathcal S}_{a, b, c}$
  is the union of mutually disjoints gaps of {\em ``two different'' sizes}. 

\begin{thm} \label{dabc1discreteholes.tm}
 Let  $(a,b,c)$ be a triple of positive numbers satisfying
$a<b<c, b-a<c_0:=c-\lfloor c/b\rfloor b<a, \lfloor c/b\rfloor\ge
2, a/b=p/q$  for some coprime integers $p$ and $q$, and  $c/b\in
\Z/q$. Assume that ${\mathcal S}_{a, b, c}\ne \emptyset$. Then there
are $\delta\in [0, c_0+a-b]\cap b\Z/q,\delta^\prime\in [c_0-a,
0]\cap b\Z/q$,
 and  nonnegative integers $N_1$ and $N_2$ with the following properties:
  \begin{itemize}
  \item [{(i)}] At least one of $\delta$ and $\delta'$ is equal to zero; i.e.,
 \begin{equation}\label {dabc1discreteholes.tm.eq1}\delta\delta^\prime=0.\end{equation}
 \item [{(ii)}]  The periodic gaps $(R_{a, b, c})^n(c-c_0+[\delta', b-a+\delta))+a\Z, 0\le n\le
 N_1$,
  have length $b-a+\delta-\delta'$, and
 the periodic  gap $(R_{a, b, c})^{N_1}(c-c_0+[\delta', b-a+\delta))+a\Z$
 coincides
 with  $[c_0+a-b-\delta, c_0-\delta')+a\Z$ that  contains the black hole
 of the piecewise linear transformation $R_{a, b, c}$; i.e.,
  \begin{eqnarray}\label {dabc1discreteholes.tm.eq2}  & & (R_{a, b, c})^n\big(c-c_0+[\delta', b-a+\delta)\big)+a\Z\nonumber\\
 &   =  &
 (R_{a, b, c})^n ( c-c_0+b-a+\delta) +
 [a-b+\delta'-\delta, 0)+a\Z
 \end{eqnarray}
 for all $0\le n\le N_1$, and
 \begin{equation}\label {dabc1discreteholes.tm.eq3} (R_{a, b, c})^{N_1}( c-c_0+b-a+\delta)+a\Z=
 c_0-\delta'+a\Z.
 \end{equation}

 \item[{(iii)}] The periodic gaps $(R_{a, b, c})^m \big([c_0+a-b-\delta, c_0-\delta')\backslash [c_0+a-b, c_0)\big)+a\Z, 1\le m\le N_2$,
 have length $\delta-\delta'$ and the periodic gap $(R_{a, b, c})^m \big([c_0+a-b-\delta, c_0-\delta')\backslash [c_0+a-b, c_0)\big)+a\Z$ with
 $m=N_2$ is the same as $[\delta', \delta)+a\Z$, provided that $\delta-\delta'\ne 0$; i.e.,
   \begin{eqnarray}\label {dabc1discreteholes.tm.eq4} & &  (R_{a, b, c})^{m}\big([c_0+a-b-\delta, c_0-\delta')
   \backslash [c_0+a-b, c_0)\big)+a\Z\nonumber\\
  &= & (R_{a, b, c})^{m}(c_0-\delta')+ [\delta'-\delta, 0)+a\Z\  \end{eqnarray}
  for all $1\le m\le N_2$, and
\begin{equation}\label{dabc1discreteholes.tm.eq5} (R_{a, b, c})^{N_2}([c_0+a-b-\delta, c_0-\delta')
\backslash [c_0+a-b, c_0))+a\Z=[\delta', \delta)+a\Z.
\end{equation}

 \item[{(iii)${}^\prime$}] $(R_{a, b, c})^{N_2}(c_0)
\in a\Z$ provided that  $\delta=\delta'=0$.

 \item[{(iv)}]  The periodic gaps $(R_{a, b, c})^n(c-c_0+[\delta', b-a+\delta))+a\Z, 0\le n\le N_1$,
 of length $b-a+\delta-\delta'$, and
 $(R_{a, b, c})^m \big([c_0+a-b-\delta, c_0-\delta')\backslash [c_0+a-b, c_0)\big)+a\Z,
 1\le m\le N_2$, of length $\delta-\delta'$ together with neighboring intervals of length $b/(2q)$ at each side  are mutually
 disjoint, provided that $\delta-\delta'\ne 0$.

 \item[{(iv)${}^\prime$}]  The periodic gaps $(R_{a, b, c})^n([c-c_0, c-c_0+b-a))+a\Z, 0\le n\le N_1$,
 of length $b-a$ associated with neighboring intervals of length $b/(2q)$
  at each side, and
 $(R_{a, b, c})^m (c_0)+[-b/(2q), b/(2q))+a\Z,
 1\le m\le N_2$,  are mutually
 disjoint, provided that $\delta=\delta'=0$.

\item [{(v)}] The complement of the set ${\mathcal S}_{a, b, c}$ is the union of periodic gaps $(R_{a, b,
c})^n(c-c_0+[\delta', b-a+\delta))+a\Z, 0\le n\le N_1$, and
 $(R_{a, b, c})^m \big([c_0+a-b-\delta, c_0-\delta')\big)+a\Z, 1\le m\le N_2$; i.e.,
 \begin{eqnarray}\label {dabc1discreteholes.tm.eq7} \R\backslash {\mathcal S}_{a, b, c} & = &
\big(\cup_{n=0}^{N_1} (R_{a, b, c})^n(c-c_0+[\delta', b-a+\delta))+a\Z\big)\nonumber\\
& & \cup  \big(\cup_{m=1}^{N_2} (R_{a, b, c})^m [c_0+a-b-\delta,
c_0-\delta')+a\Z\big)\nonumber\\
& = & \cup_{n=0}^{N_1+N_2} (R_{a, b, c})^n\big(c-c_0+[\delta', b-a+\delta)+a\Z\big).
\end{eqnarray}

\item [{(vi)}] The set ${\mathcal S}_{a, b, c}$ is composed of
intervals of same length,
 \begin{equation}\label {dabc1discreteholes.tm.eq8}  {\mathcal S}_{a, b, c}  =
\cup_{n=0}^{N_1+N_2}  (R_{a, b, c})^n ( c-c_0+b-a+\delta)+[0, h)+a\Z 
\end{equation}
where
\begin{equation}\label{dabc1discreteholes.tm.eq9}
(N_1+N_2+1)(h+\delta-\delta')+(N_1+1)(b-a)=a.\end{equation}
\end{itemize}
\end{thm}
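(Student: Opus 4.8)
The plan is to work entirely in the discretized picture afforded by the rationality hypotheses. Because $a/b=p/q$ and $c/b\in\Z/q$, the identities \eqref{sabcrational.eq}--\eqref{dabcicontinuoustodiscrete.eq1b} show that ${\mathcal S}_{a,b,c}$ is a union of cells $[kb/q,(k+1)b/q)$, and the two shift amounts $\lfloor c/b\rfloor b$ and $(\lfloor c/b\rfloor+1)b$ appearing in \eqref{rabcnewplus.def}--\eqref{tilderabcnewplus.def} lie in $b\Z\subset b\Z/q$. Hence $R_{a,b,c}$ and $\tilde R_{a,b,c}$ carry these cells to cells and, by Proposition \ref{rabcinvertibility.prop}, are mutually inverse bijections off their black holes (note $a\Z=pb\Z/q$, so one period contains $p$ cells). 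Every orbit is therefore finite, which is what replaces Hutchinson-type contractivity in this setting.

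First I would define the two extension parameters geometrically. Let $[c_0+a-b-\delta,\,c_0-\delta')$ be the connected component of the complement $\R\setminus{\mathcal S}_{a,b,c}$ that contains the black hole $[c_0+a-b,c_0)$ of $R_{a,b,c}$, with $\delta\in[0,c_0+a-b]\cap b\Z/q$ and $\delta'\in[c_0-a,0]\cap b\Z/q$ measuring its left and right overhang. The dichotomy $\delta\delta'=0$ of \eqref{dabc1discreteholes.tm.eq1} I would derive from the injectivity of $R_{a,b,c}$ off its black hole together with the minimality of the complement (Theorem \ref{dabc1maximal.cor.thm}(ii)): an orbit can approach and merge into the black hole from only one side, so the overhang is forced to be one-sided.

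The core of the argument is to trace the forward $R_{a,b,c}$-orbit of the enlarged gap $c-c_0+[\delta',b-a+\delta)$. By Proposition \ref{blackholestwo.prop} this orbit never meets ${\mathcal S}_{a,b,c}$, and by finiteness it must eventually land on the black hole; the number of steps before it does so defines $N_1$, giving \eqref{dabc1discreteholes.tm.eq2}--\eqref{dabc1discreteholes.tm.eq3}. The point to verify at each of these $N_1$ steps is that the whole big gap lies in a single linear piece of $R_{a,b,c}$, so that it is rigidly translated and keeps the constant size $b-a+\delta-\delta'$, until the final step where its image exactly covers $[c_0+a-b-\delta,c_0-\delta')$. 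Thereafter $R_{a,b,c}$ fixes the true black hole but moves its one-sided overhang $[c_0+a-b-\delta,c_0-\delta')\setminus[c_0+a-b,c_0)$ of size $\delta-\delta'$; iterating this overhang produces the $N_2$ small gaps of \eqref{dabc1discreteholes.tm.eq4}--\eqref{dabc1discreteholes.tm.eq5}, the cycle closing when the image returns to $[\delta',\delta)$. The degenerate case $\delta=\delta'=0$ is recorded separately as (iii)$'$ and (iv)$'$, where the small gaps collapse to lattice points $(R_{a,b,c})^m(c_0)\in a\Z$.

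The delicate step — and the one I expect to be the main obstacle — is the mutual disjointness claims (iv) and (iv)$'$, that is, that the $N_1+1$ big gaps and the $N_2$ small gaps, each padded by a buffer of length $b/(2q)$ on either side, are pairwise disjoint. This is exactly where the rational spacing is essential: since all endpoints live on the $b/q$-lattice, two distinct images of a gap cannot come within $b/(2q)$ of one another unless they coincide, which finite-orbit injectivity forbids. Granting disjointness, the decomposition (v) of the complement follows by collecting the two families of gaps, and (vi) follows because what remains of each period is then a disjoint union of $N_1+N_2+1$ intervals that must share a common length $h$ by the measure-preserving property of $R_{a,b,c}$ on ${\mathcal S}_{a,b,c}$ (Proposition \ref{rabcbasic1.thm}(iii)); the relation \eqref{dabc1discreteholes.tm.eq9} is then merely the bookkeeping identity that the $N_1+N_2+1$ intervals, the $N_1+1$ big gaps, and the $N_2$ small gaps together exhaust one period of length $a$.
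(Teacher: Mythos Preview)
Your overall architecture matches the paper's: trace the forward $R_{a,b,c}$-orbit of an enlarged gap containing the $\tilde R_{a,b,c}$ black hole, split the orbit into a ``big-gap'' phase (steps $0,\dots,N_1$) and a ``small-gap'' phase (steps $1,\dots,N_2$), then read off (v) from minimality and (vi) from measure-preservation. That much is right.

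The genuine gap is in how you handle $\delta,\delta'$ and conclusion (i). You define $\delta,\delta'$ as the left/right overhangs of the connected component of the complement at the $R_{a,b,c}$ black hole, and argue $\delta\delta'=0$ because ``an orbit can approach and merge into the black hole from only one side.'' This heuristic is not a proof: in the rational case the iterates $(R_{a,b,c})^n\tilde B$ genuinely split (see e.g.\ \eqref{rationalexample1.eq1}), so pieces of different iterates could a priori accumulate on both sides of $B_R$, and neither injectivity off the black hole nor minimality alone rules this out. The paper instead defines $\delta,\delta'$ via the gap at the \emph{origin} and proves $\delta\delta'=0$ from Lemma~\ref{dabc1rationalmax.prop}(iii), which asserts that at least one of $[0,b/q)$, $[-b/q,0)$ lies in ${\mathcal S}_{a,b,c}$. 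That lemma (and its companions (i),(ii)) is the missing ingredient, and its proof is nontrivial: it exploits the rational-case maximality Lemma~\ref{dabc0rationalmax.prop} by showing that if both neighboring cells of $0$ were missing, then ${\mathcal S}_{a,b,c}\pm b/q$ would also be $R_{a,b,c}$-invariant with empty black-hole intersection, contradicting maximality.

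The same lemma also drives your ``single linear piece'' step. You assert that at each iterate the big gap lies entirely in one branch of $R_{a,b,c}$, hence is rigidly translated. But why can't it straddle $0$ or $c_0+a-b$ or $c_0$? The paper's induction works because Lemma~\ref{dabc1rationalmax.prop}(i),(ii) guarantees that cells $[c_0,c_0+b/q)$ and $[c_0+a-b-b/q,c_0+a-b)$ lie in ${\mathcal S}_{a,b,c}$; since each gap is bordered by cells of ${\mathcal S}_{a,b,c}$ on both sides, it cannot abut these special points, and so stays within a single branch. Without the boundary-cell property you have no mechanism to prevent splitting before step $N_1$, and the whole inductive structure collapses.
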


%
%

We remark that for triples $(13/17, 1, 77/17), (13/17, 1, 73/17)$ and $(13/17, 1, 75/77)$
in Example \ref{rationalexample1} and $(6/7, 1, 23/7)$ in Example \ref{rationalexample2},
the corresponding  pair $(\delta, \delta')$ in Theorem \ref{dabc1discreteholes.tm}
is given by $(2/17, 0), (0, -2/17)$, $(0, 0)$, $(1/7, 0)$ respectively. 

\smallskip

Listed below are some comparisons of  the maximal invariant set ${\mathcal S}_{a, b, c}$ for the first case that $a/b\not\in \Q$ and
for the second case that  $a/b=p/q$ for some coprime integers $p$ and $q$ and $c\in b\Z/q$, where $\delta, \delta'$ are as in Theorem
\ref{dabc1discreteholes.tm}, c.f. Examples \ref{irrationalexample1}, \ref{rationalexample1} and \ref{rationalexample2}, and Figure \ref{holesremoval.fig}:

\begin{itemize}
\item
For the second case, the restriction of
  the maximal invariant periodic set ${\mathcal S}_{a, b, c}$  on the period $[0, a)$
is the union of  finitely many half-open intervals of  same size if $\delta-\delta'>0$, while
for  the first case
it  contains the union of  finitely many half-open intervals of  different  sizes.

\item For the second case,
the restrictions of the  periodic complement $\R\backslash {\mathcal S}_{a, b, c}$ 
on the period $[0, a)$
consists of
 finitely many mutually disjoint periodic gaps of lengths $b-a+\delta-\delta'$ and $\delta-\delta'$
  if $\delta-\delta'>0$, while
for  the first case
it includes
 finitely many mutually disjoint periodic  holes of   same size $b-a$.

\item For the second case, either $[0, \delta)$ or $[\delta', 0)$ is a gap contained in
the  periodic complement $\R\backslash {\mathcal S}_{a, b, c}$
if $\delta-\delta'>0$, while for the first case  no  left or right holes at the origin
are contained in the  periodic complement $\R\backslash {\mathcal S}_{a, b, c}$, see Lemma \ref{rabcbasic2.lem}.

\item  The complement of  the  maximal invariant set ${\mathcal S}_{a, b, c}$  for the second
case with $\delta=\delta'=0$ looks ``similar" to the one for the first case, since their restrictions on one period
both consist of finitely many mutually disjoint half-open intervals of length $b-a$.
On the other hand, they are ``different" if we  adhere  intervals of  sufficiently small length $\epsilon$   to each side of
 those half-open intervals.
For the second case, those gaps with neighboring intervals
has the following ``loop" structure via the piecewise linear transformation $R_{a, b, c}$
 with shrinking and expanding at its black hole and the origin:
$$
\begin{array}{ccc}
\boxed{c-c_0+[-\epsilon, \epsilon+b-a)+a\Z} &  
 \longleftarrow & \boxed{ [-\epsilon, \epsilon)+a\Z}\\
\downarrow
 &  & \uparrow\\
\vdots & & \vdots\\
\downarrow &  & \uparrow\\
\boxed{[c_0+a-b, c_0)+[-\epsilon, \epsilon)+a\Z} & 
\longrightarrow &
\boxed{R_{a, b, c}(c_0)+ [-\epsilon, \epsilon)+a\Z}
\end{array}
$$
Hence the piecewise linear transformation $R_{a, b, c}$ has ``finite" order,  as
 there exists a positive integer $L$ such that $(R_{a, b, c})^L$ is an ``identity" operator $I$
in the sense that
 $(R_{a, b, c})^L(t)+a\Z=t+a\Z$ for all $t\in {\mathcal S}_{a, b,c}$, see Theorem \ref{dabc1algebra.tm}.
But for the first case, applying the transform $R_{a, b, c}$
finitely many time,  the black hole  of the piecewise linear transformation $\tilde R_{a, b,c}$ is ``totally" attracted by the black hole of
the piecewise linear transformation $R_{a, b, c}$, which indicates that the piecewise linear transformation $R_{a, b, c}$ has ``infinite" order.

\item The complement of  the  maximal invariant set ${\mathcal S}_{a, b, c}$  for
the case that $N_2=0$, c.f. Example \ref{rationalexample2},
 looks also  ``similar" to the one for the first case, in the sense that their restrictions on one period
both consist of finitely many mutually disjoint half-open intervals of same length, but
 the lengths are different for those two cases. Also for the case that $N_2=0$,
 the  maximal invariant set ${\mathcal S}_{a, b, c}$ is the union of  finitely many half-open intervals of same size.

\item  For both cases,  the black hole of
the piecewise linear transformation $R_{a, b, c}$
attracts the black hole  of the piecewise linear transformation $\tilde R_{a, b,c}$ when applying the piecewise linear transformation $R_{a, b, c}$
 finitely many times,
$$[c_1, c_1+b-a)+a\Z\overset{R_{a, b, c}}\longmapsto\cdots \overset{R_{a, b, c}}\longmapsto [c_0+a-b, c_0)+a\Z 
 $$
see \eqref{dabc1holes.tm.pf.eq20},
 \eqref{fromblackholetoblackhole1} and
\eqref{dabc1discreteholes.tm.pf.eq6**}. Moreover,
\begin{eqnarray} \label{sabcblackholes} {\mathcal S}_{a,  b, c} &=&\R\backslash \big( \cup_{n=0}^\infty  (R_{a, b, c})^n ([c_1, c_1+b-a)+a\Z)\big)\nonumber\\
&=&\R\backslash \big( \cup_{n=0}^L (R_{a, b, c})^n ([c_1, c_1+b-a)+a\Z)\big)\end{eqnarray}
for some nonnegative integer $L\ge 0$ when ${\mathcal S}_{a, b, c}\ne \emptyset$.

\end{itemize}

\bigskip

Having  explicit construction of  the maximal invariant set ${\mathcal S}_{a, b, c}$ in Theorems \ref{dabc1holes.tm} and \ref{dabc1discreteholes.tm},
we next consider covering property of the maximal invariant set ${\mathcal S}_{a, b, c}$ and
then its application to
characterize \eqref{sabcemptynonempty2}.


\begin{thm}\label{covering.tm2}
  Let  $(a,b,c)$ be a triple of positive numbers satisfying
$a<b<c, \lfloor c/b\rfloor\ge 2, b-a<c_0:=c-\lfloor c/b\rfloor b<a$ and $0\le c_1:= c-c_0-\lfloor (c-c_0)/a\rfloor a\le 2a-b$,
and let ${\mathcal S}_{a, b, c}$
 be as in \eqref{dabc1.def}.
If
 ${\mathcal S}_{a, b, c}\ne \emptyset$ and either $a/b\not\in \Q$ or $a/b=p/q$ and $c\in b\Z/q$ for some coprime integers $p$ and $q$,
then
\begin{equation} \label{covering.tm2.eq1}
\big({\mathcal S}_{a, b, c}\cap ([0, c_0+a-b)+a\Z)+ \lfloor c/b\rfloor b\big)
\cup \big( \cup_{k=0}^{\lfloor c/b\rfloor -1}
({\mathcal S}_{a, b, c}+ kb)\big) =\R.
\end{equation}
\end{thm}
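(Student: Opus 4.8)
The plan is to prove the covering \eqref{covering.tm2.eq1} pointwise. Throughout write $N:=\lfloor c/b\rfloor$, so $c=Nb+c_0$ with $b-a<c_0<a$ and $N\ge 2$. Fixing an arbitrary $t\in\R$, I will exhibit either an index $k$ with $0\le k\le N-1$ and $t-kb\in{\mathcal S}_{a,b,c}$, or show that $t-Nb\in{\mathcal S}_{a,b,c}\cap([0,c_0+a-b)+a\Z)$. The bridge to the matrix language is the shift identity \eqref{sabcinvariant.eq1}: for $k\in\Z$ one has $t-kb\in{\mathcal S}_{a,b,c}$ if and only if there is a binary vector ${\bf z}\in{\mathcal B}_b$ with ${\bf z}(-kb)=1$ and ${\bf M}_{a,b,c}(t){\bf z}={\bf 1}$ (apply $\tau_{kb}$ to the ${\mathcal B}_b^0$-solution at $t-kb$, and conversely apply $\tau_{-kb}$). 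Thus the whole statement reduces to analysing, near the origin, the support $K=\{\lambda\in b\Z:\ {\bf z}(\lambda)=1\}$ of a single binary solution ${\bf z}$ of ${\bf M}_{a,b,c}(t){\bf z}={\bf 1}$.

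First I would establish existence: for every $t\in\R$ the system ${\bf M}_{a,b,c}(t){\bf z}={\bf 1}$ admits a solution ${\bf z}\in{\mathcal B}_b$. By the same $\tau_{\lambda_0}$-translation (sending a $1$ at $\lambda_0\in K$ to the origin), the set of such $t$ equals ${\mathcal S}_{a,b,c}+b\Z$, so this amounts to ${\mathcal S}_{a,b,c}+b\Z=\R$, i.e. to ${\mathcal S}_{a,b,c}$ meeting every coset of $b\Z$. Here I would invoke Theorems \ref{dabc1holes.tm} and \ref{dabc1discreteholes.tm}, which under the present hypotheses present the (nonempty) set ${\mathcal S}_{a,b,c}$ as a finite union of half-open intervals per period $a$ with nonempty interior. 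When $a/b\notin\Q$ the coset $t+b\Z$ is dense modulo $a$ and therefore meets this interior; when $a/b=p/q$ and $c\in b\Z/q$ the set ${\mathcal S}_{a,b,c}$ is aligned to the grid $b\Z/q$ by \eqref{dabcicontinuoustodiscrete.eq1}, while $t+b\Z$ visits every residue class modulo $b/q$ on $[0,a)$ because $\gcd(p,q)=1$, so it meets one of the grid cells contained in ${\mathcal S}_{a,b,c}$. Either way ${\mathcal S}_{a,b,c}\cap(t+b\Z)\ne\emptyset$, and a binary solution at $t$ exists.

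Next comes the combinatorial core. Fix a binary solution ${\bf z}$ at $t$ with support $K$. By Proposition \ref{dabcbasic1.prop}, translating each point of $K$ back to the origin and reading off the first positive index through \eqref{lambdaabc.def} (exactly as in the derivation of \eqref{dabc1pointcharacterization.cor.pf.eq2}), consecutive elements of $K$ differ by either $Nb$ or $(N+1)b$. Since the block $\{-Nb,\dots,-b,0\}$ consists of $N+1$ consecutive points of $b\Z$, a gap of length at most $(N+1)b$ forbids $K$ from skipping it; hence $K\cap\{-Nb,\dots,0\}\ne\emptyset$. If $K$ meets $\{0,-b,\dots,-(N-1)b\}=\{-kb:\ 0\le k\le N-1\}$, say ${\bf z}(-kb)=1$, then $t-kb\in{\mathcal S}_{a,b,c}$ and $t$ lies in the $k$-th translate, as required. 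Otherwise the only point of $K$ in $\{-Nb,\dots,0\}$ is $-Nb$, so its successor in $K$ is $>-Nb$, not in $\{-(N-1)b,\dots,0\}$, hence $\ge b$, while the gap bound forces it to be $\le -Nb+(N+1)b=b$; thus the successor equals $b$. Consequently the ${\mathcal B}_b^0$-solution at $s:=t-Nb$, whose support is $K+Nb$ and is unique by Proposition \ref{uniqueness.cor}, has first positive index $(N+1)b$; by Proposition \ref{dabcbasic1.prop} this means $\lambda_{a,b,c}(s)=(N+1)b$, and since $c_0>b-a$, definition \eqref{lambdaabc.def} gives $s\in[0,c_0+a-b)+a\Z$. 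As also $s\in{\mathcal S}_{a,b,c}$, the point $t$ belongs to the extra translate $\bigl({\mathcal S}_{a,b,c}\cap([0,c_0+a-b)+a\Z)\bigr)+Nb$. In every case $t$ is covered, which establishes \eqref{covering.tm2.eq1}.

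I expect the existence step, ${\mathcal S}_{a,b,c}+b\Z=\R$, to be the main obstacle. The finishing gap argument is a clean, finite, local inspection of the support $K$ and carries no real difficulty, but it presupposes a binary solution at the given $t$; producing one for \emph{arbitrary} $t$ genuinely requires the global geometry of ${\mathcal S}_{a,b,c}$ furnished by Theorems \ref{dabc1holes.tm} and \ref{dabc1discreteholes.tm}, together with the separate density and grid-counting arguments in the irrational and rational cases. Care is also needed to confirm that these structure theorems apply under the hypotheses $\lfloor c/b\rfloor\ge 2$, $b-a<c_0<a$ and $0\le c_1\le 2a-b$, and that the black-hole interval $[c_0+a-b,c_0)+a\Z$ is indeed avoided by every ${\mathcal S}_{a,b,c}$-solution, as used implicitly in the dichotomy of \eqref{lambdaabc.def}.
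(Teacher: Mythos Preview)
Your argument is correct, and it takes a route that is closely related to, but genuinely distinct from, the paper's own proof.

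The paper proceeds \emph{forward}: it fixes a single $t_0\in{\mathcal S}_{a,b,c}$, writes the orbit $(R_{a,b,c})^n(t_0)=t_0+k_nb$ with $k_{n+1}-k_n\in\{\lfloor c/b\rfloor,\lfloor c/b\rfloor+1\}$, and observes that for every $l\ge 0$ the point $t_0+lb$ automatically lies in the right-hand side of \eqref{covering.tm2.eq1} (one brackets $l$ between consecutive $k_n$ and reads off which translate covers it from the size of the gap). It then concludes by noting that $\{t_0+lb:l\ge0\}$ is dense modulo $a$ in the irrational case, or exhausts $b\Z/q$ in the rational case, while the right-hand side is a finite union of left-closed right-open intervals by Theorems~\ref{dabc1holes.tm}--\ref{dabc1discreteholes.tm}; this forces the covering.

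You proceed \emph{backward}: fixing an arbitrary $t$, you first show ${\mathcal S}_{a,b,c}+b\Z=\R$ (the existence of a binary solution at $t$), and then run a local gap analysis on the support $K$ to locate the precise translate. Both routes invoke the same structure theorems and the same density/grid mechanism; the difference is that the paper combines the ``which translate'' identification with the density step in one pass, whereas you decouple them into two independent pieces. The paper's version is slightly more economical, but your gap analysis --- forcing $K\cap\{-Nb,\dots,0\}\ne\emptyset$ from the bound $\lambda_{a,b,c}\le (N+1)b$, and then pinning down the $s\in[0,c_0+a-b)+a\Z$ case when only $-Nb$ is hit --- is a clean, self-contained alternative to the paper's orbit bookkeeping.

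Your stated concern about applicability of Theorems~\ref{dabc1holes.tm} and~\ref{dabc1discreteholes.tm} is unwarranted: neither imposes a restriction on $c_1$, so they hold under the hypotheses here. The black-hole avoidance you flag is exactly Proposition~\ref{dabcbasic1.prop}, equation~\eqref{dabcbasic1.lem.eq1}, and Proposition~\ref{uniqueness.cor} legitimises reading $\lambda_{a,b,c}(t-Nb)$ off the support of your particular solution.
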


As a direct application of  the above theorem, we have that
$$
 \cup_{k=0}^{\lfloor c/b\rfloor -1}
({\mathcal S}_{a, b, c}+ kb) \subset \R \subset \cup_{k=0}^{\lfloor c/b\rfloor }
({\mathcal S}_{a, b, c}+ kb).
$$
So roughly speaking,   the maximal invariant  periodic set  ${\mathcal S}_{a, b, c}$ is either an empty set or  its $(\lfloor c/b\rfloor+1)$ copies
would cover the whole line.
Recall that the set ${\mathcal D}_{a, b, c}$ can be obtained from  the maximal invariant set ${\mathcal S}_{a, b, c}$
by some set operations (Theorem  \ref{dabcsabc.thm}). This together with  Theorem  \ref{covering.tm2} leads to the
following equivalence between the empty set property for ${\mathcal D}_{a, b, c}$ and
 an equality about the Lebesgue measure of the maximal invariant set ${\mathcal S}_{a, b, c}$.

\begin{thm}\label{sabcstar.tm}
 Let  $(a,b,c)$ be a triple of positive numbers satisfying
$a<b<c, b-a<c_0:=c-\lfloor c/b\rfloor b<a, 0<c_1:=\lfloor c/b\rfloor b-\lfloor (\lfloor c/b\rfloor b/a)\rfloor a<2a-b$
and $\lfloor c/b\rfloor\ge 2$. Assume that  ${\mathcal S}_{a, b, c}\ne \emptyset$ and
either $ a/b\not\in \Q$ or $a/b=p/q$ and $c/b\in \Z/q$ for some coprime integers $p$ and $q$.
Then
  ${\mathcal D}_{a, b, c}=\emptyset$  if and only if
\begin{equation}\label{sabcstar.tm.eq1} (\lfloor c/b\rfloor+1) |{\mathcal S}_{a, b, c}\cap [0,
c_0+a-b)|+
  \lfloor c/b\rfloor |{\mathcal S}_{a, b, c}\cap [c_0, a)|= a.\end{equation}
\end{thm}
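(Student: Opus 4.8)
The plan is to combine the decomposition of ${\mathcal D}_{a, b, c}$ from Theorem \ref{dabcsabc.thm} with the covering identity of Theorem \ref{covering.tm2}, and then to run a Lebesgue-measure counting argument on one period. Write $m=\lfloor c/b\rfloor$, put $\tilde{\mathcal S}={\mathcal S}_{a, b, c}\cap([0,c_0+a-b)+a\Z)$, and introduce the $a$-periodic sets $A=\tilde{\mathcal S}+mb$ and $B_k={\mathcal S}_{a, b, c}+kb$ for $0\le k\le m-1$. For an $a$-periodic measurable set $E$ I write $|E|_a:=|E\cap[0,a)|$ for its measure per period. Theorem \ref{covering.tm2} asserts $A\cup B_0\cup\cdots\cup B_{m-1}=\R$, so restricting to $[0,a)$ and using subadditivity gives $a\le|A|_a+\sum_{k=0}^{m-1}|B_k|_a$. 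Since translation preserves $|\cdot|_a$, and the black-hole property \eqref{rabc1invariant.tm.eq2} gives $|{\mathcal S}_{a, b, c}\cap[c_0+a-b,c_0)|=0$, this right-hand side equals exactly the quantity $\Sigma:=(\lfloor c/b\rfloor+1)|{\mathcal S}_{a, b, c}\cap[0,c_0+a-b)|+\lfloor c/b\rfloor|{\mathcal S}_{a, b, c}\cap[c_0,a)|$ from \eqref{sabcstar.tm.eq1}. Thus $a\le\Sigma$ holds unconditionally, and the theorem reduces to showing that ${\mathcal D}_{a, b, c}=\emptyset$ is equivalent to equality $\Sigma=a$.

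For the forward direction I would read Theorem \ref{dabcsabc.thm} as saying that ${\mathcal D}_{a, b, c}=\emptyset$ is equivalent to the two statements $A\cap B_0=\emptyset$ (vanishing of the first term, since $\tilde{\mathcal S}\cap({\mathcal S}_{a, b, c}-mb)=\emptyset$ iff $A$ misses ${\mathcal S}_{a, b, c}=B_0$) and ${\mathcal S}_{a, b, c}\cap({\mathcal S}_{a, b, c}-kb)=\emptyset$, i.e. $B_0\cap B_k=\emptyset$, for $1\le k\le m-1$. Using $a$-periodicity and translation by multiples of $b$, these force all pairwise intersections among $A,B_0,\ldots,B_{m-1}$ to be empty: indeed $B_j\cap B_l=(B_0\cap B_{l-j})+jb$, and for $t\in A\cap B_k$ the shifted point $s=t-mb$ lies in ${\mathcal S}_{a, b, c}\cap({\mathcal S}_{a, b, c}-(m-k)b)$, which is empty whenever $1\le k\le m-1$. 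Hence the cover is in fact a disjoint partition, the measures add exactly, and $\Sigma=a$.

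The reverse direction is where the work lies. If $\Sigma=a$, the inequality $a\le\Sigma$ from the cover is an equality, so the indicator sum $\chi_A+\sum_{k=0}^{m-1}\chi_{B_k}$ equals $1$ almost everywhere on $[0,a)$; consequently every pairwise intersection, in particular $A\cap B_0$ and each $B_0\cap B_k$, has Lebesgue measure zero. To deduce ${\mathcal D}_{a, b, c}=\emptyset$ I must upgrade ``measure zero'' to ``empty''. Here I would invoke the explicit construction \eqref{sabc.complement} (Theorems \ref{dabc1holes.tm} and \ref{dabc1discreteholes.tm}): because $R_{a,b,c}$ is a piecewise translation, ${\mathcal S}_{a, b, c}$ is, on each period, a finite union of half-open intervals of the common orientation $[\cdot,\cdot)$, and the same then holds for $A$, $B_0$ and every $B_k$. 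Since $[x,y)\cap[x',y')=[\max(x,x'),\min(y,y'))$ is either empty or a genuine interval of positive length, a finite union of such sets of measure zero must be empty. Applying this to $A\cap B_0$ and to $B_0\cap B_k$ makes both terms in the decomposition of Theorem \ref{dabcsabc.thm} vanish, giving ${\mathcal D}_{a, b, c}=\emptyset$.

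The main obstacle is precisely this final upgrade: measure-theoretic equality in the covering does not by itself yield the set-theoretic emptiness demanded by the definition of ${\mathcal D}_{a, b, c}$. The standing hypotheses (${\mathcal S}_{a, b, c}\ne\emptyset$ together with either $a/b\not\in\Q$, or $a/b=p/q$ with $c\in b\Z/q$) enter exactly here, since they are what guarantee through \eqref{sabc.complement} the finitely-many-half-open-intervals structure that rules out nonempty measure-zero intersections at shared endpoints. Throughout I would track the half-open orientation $[\cdot,\cdot)$ carefully, as it is what forbids the degenerate endpoint overlaps that the purely measure-theoretic argument cannot detect.
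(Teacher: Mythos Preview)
Your proposal is correct and follows essentially the same approach as the paper's own proof: both directions combine the decomposition of ${\mathcal D}_{a,b,c}$ from Theorem~\ref{dabcsabc.thm} with the covering property of Theorem~\ref{covering.tm2}, deriving the measure identity from disjointness in the forward direction and, in the reverse direction, upgrading measure-zero intersections to empty ones via the half-open-interval structure supplied by Theorems~\ref{dabc1holes.tm} and~\ref{dabc1discreteholes.tm}. Your packaging with $A,B_0,\dots,B_{m-1}$ is a mild reorganization of the paper's families $A_{\lambda_1},B_{\lambda_2}$, but the logical skeleton is the same.
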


For the triple $(a, b, c)=(\pi/4, 1, 23-11\pi/2)$ in Example \ref{irrationalexample1}, the equality
\eqref{sabcstar.tm.eq1} holds, and hence ${\mathcal G}(\chi_{[0, 23-11\pi/2)}, \pi\Z/4\times \Z)$
is a Gabor frame even though the maximal invariant set ${\mathcal S}_{a,b, c}$ is not an empty set.
For the triple $(a, b, c)=(\sqrt{3}/2, 1, 15\sqrt{3}/2)$ with irrational ratio between $a$ and $b$,
${\mathcal G}(\chi_{[0, 15 \sqrt{3}/2)}, \sqrt{3}\Z/2\times \Z)$ is not a Gabor frame as
\eqref{sabcstar.tm.eq1} does not hold. In fact,  in this case
\begin{eqnarray*}
{\mathcal D}_{a,b, c}={\mathcal S}_{a,b,c}
&=& \big([12-7\sqrt{3}, 7-4\sqrt{3}) \cup [ 8-9\sqrt{3}/2,3-3\sqrt{3}/2) \\
& & \cup [ 4-2\sqrt{3},11-6\sqrt{3})\big)
+\sqrt{3}\Z/2.
\end{eqnarray*}
For the triple $(a, b, c)=(6/7, 1, 23/7)$ in Example \ref{rationalexample2}
and the triples $(a, b, c)=(13, 17, 77)/17$ and $(13, 17, 73)/17$
in Example \ref{irrationalexample1},
the equality \eqref{sabcstar.tm.eq1} holds, but for the triple $(13, 17, 75)/17$ in Example \ref{irrationalexample1}
the equality \eqref{sabcstar.tm.eq1} does not hold. Thus  Gabor systems ${\mathcal G}(\chi_{[0, 23/7)}, 6\Z/7\times \Z)$,
${\mathcal G}(\chi_{[0, 77/17)}, 13\Z/17\times \Z)$ and ${\mathcal G}(\chi_{[0, 73/17)}, 13\Z/17\times \Z)$
are Gabor frames even though the maximal invariant sets are
nontrivial, but the Gabor system ${\mathcal G}(\chi_{[0, 75/17)}, 13\Z/17\times \Z)$ is not a Gabor frame.

\bigskip

In this section, we finally consider the restriction of transformations $R_{a, b, c}$ and $\tilde R_{a, b, c}$ on
 the maximal invariant set ${\mathcal S}_{a, b, c}$.
The
finite-interval property for the maximal invariant set ${\mathcal S}_{a, b, c}$ in Theorems
\ref{dabc1holes.tm} and  \ref{dabc1discreteholes.tm}
can also be interpreted
as  its complement  consists of finitely many  holes
on a period. So we may shrink those holes into points, which
maps the maximal invariant set ${\mathcal S}_{a, b, c}$ into the real line with marks.
More importantly, after performing the above holes-removal surgery,
the  set ${\mathcal K}_{a, b, c}$ of marks on the line is a {\em finite cyclic
group} if  $a/b=p/q$ and $c\in b\Z/q$ for some coprime integers $p$ and $q$,
and the set ${\mathcal K}_{a, b, c}$ of marks on the line is a  finite subset of infinite cyclic group if $a/b\not\in \Q$, and
  the  nonlinear application of the  transformation
$R_{a,b, c}$ on the maximal invariant set ${\mathcal S}_{a, b, c}$  becomes a  {\em rotation} on the circle
$\R/Y_{a,b,c}(a)\Z$ with marks ${\mathcal K}_{a, b, c}$,
see Appendix \ref{ergodic.appendix} for non-ergodicity of the piecewise linear transformation $R_{a, b, c}$.

  \begin{thm}\label{dabc1algebra.tm}
  Let  $(a,b,c)$ be a triple of positive numbers satisfying
$a<b<c, \lfloor c/b\rfloor\ge 2, b-a<c_0:=c-\lfloor c/b\rfloor b<a$ and $0\le c_1:= c-c_0-\lfloor (c-c_0)/a\rfloor a\le 2a-b$,  and let ${\mathcal S}_{a, b, c}$ and $Y_{a,b,c}$
 be as in \eqref{dabc1.def} and \eqref{xabc.def} respectively. Then the following statements hold.
 \begin{itemize}
\item[{(i)}] If
 ${\mathcal S}_{a, b, c}\ne \emptyset$ and either $a/b\not\in \Q$ or $a/b=p/q$ and $c\in b\Z/q$ for some coprime integers $p$ and $q$,
then under the isomorphism $Y_{a, b, c}$ from ${\mathcal S}_{a, b, c}$ to the line with marks, the action to apply the
piecewise linear transformation $R_{a, b,c}$ on ${\mathcal S}_{a, b, c}$
  becomes a shift on the line with marks; i.e.,
\begin{equation}\label{dabc1algebra.tm.eq1}
Y_{a, b, c}(R_{a, b, c}(t)+a\Z)= Y_{a, b, c}(t)+ Y_{a, b, c}
(c_1+b-a)+Y_{a,b,c}(a) \Z\quad {\rm for \ all}\ t\in
{\mathcal S}_{a, b, c}.
\end{equation}

\item[{(ii)}] If
 ${\mathcal S}_{a, b, c}\ne \emptyset$,  $a/b=p/q$ and $c\in b\Z/q$ for some coprime integers $p$ and $q$,
then marks on the line (i.e., images of  the gaps in the
complement of the set ${\mathcal S}_{a, b, c}$
 under
the isomorphism $Y_{a, b, c}$) form a finite cyclic group
generated by $Y_{a, b, c} (c_1+b-a)+Y_{a,b,c}(a) \Z$,
\begin{equation*}
{\mathcal K}_{a, b, c}= Y_{a, b, c} (c_1+b-a)\Z +Y_{a, b, c}(a)\Z
={\rm gcd}\big(Y_{a, b, c} (c_1+b-a), Y_{a, b, c} (a)\big)\Z,
\end{equation*}
 where ${\mathcal K}_{a, b, c}$ is
the set of marks on the line.

\item[{(iii)}] If
 ${\mathcal S}_{a, b, c}\ne \emptyset$ and  $a/b\not\in \Q$,
 the set ${\mathcal K}_{a, b, c}$ of marks on the line is given by
 \begin{equation*}
 {\mathcal K}_{a, b, c}=\cup_{n=1}^M  (n Y_{a, b, c} (c_1+b-a)+Y_{a, b, c}(a)\Z)
 \end{equation*}
 where $M$ is the unique positive integer such that
 $ MY_{a, b, c}(c_1+b-a)-Y_{a, b, c}(c_0)\in Y_{a, b, c}(a)$.

\end{itemize}
  \end{thm}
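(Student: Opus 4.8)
The plan is to realize the conjugated map $\bar R := Y_{a,b,c}\circ R_{a,b,c}\circ (Y_{a,b,c})^{-1}$ on the circle $\R/(Y_{a,b,c}(a)\Z)$ as a genuine \emph{rotation}, to compute its rotation number, and then to read off both descriptions of the mark set ${\mathcal K}_{a,b,c}$ as the forward orbit of that rotation. First I would record the elementary behaviour of the surgery map: from \eqref{xabc.def}, $Y_{a,b,c}$ is nondecreasing on ${\mathcal S}_{a,b,c}$, vanishes at $0$, is constant across each complementary hole (holes carry no ${\mathcal S}_{a,b,c}$-measure), and satisfies $Y_{a,b,c}(t+a)=Y_{a,b,c}(t)+Y_{a,b,c}(a)$ by the $a$-periodicity of ${\mathcal S}_{a,b,c}$; in particular $Y_{a,b,c}(t_2)-Y_{a,b,c}(t_1)=|[t_1,t_2)\cap{\mathcal S}_{a,b,c}|$ for $t_1\le t_2$, and $Y_{a,b,c}$ descends to a bijection of ${\mathcal S}_{a,b,c}/(a\Z)$ onto the circle with marks.

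The core of (i) is the constancy of $Y_{a,b,c}(R_{a,b,c}(t))-Y_{a,b,c}(t)$ modulo $Y_{a,b,c}(a)$. Fix $t_1<t_2$ in ${\mathcal S}_{a,b,c}$ lying in a single linearity piece of $R_{a,b,c}$, so that $R_{a,b,c}$ acts by one translation $t\mapsto t+\lambda$ with $\lambda\in\{\lfloor c/b\rfloor b,\ \lfloor c/b\rfloor b+b\}$ and $[t_1,t_2)$ avoids the black hole. Using the invertibility $\tilde R_{a,b,c}(R_{a,b,c}(t))=t$ off the black hole \eqref{invertibilityoftransformations} together with the $\tilde R_{a,b,c}$-invariance of ${\mathcal S}_{a,b,c}$ (Proposition \ref{rabcbasic1.thm}), I would show $[t_1+\lambda,t_2+\lambda)\cap{\mathcal S}_{a,b,c}=\big([t_1,t_2)\cap{\mathcal S}_{a,b,c}\big)+\lambda$, so that $Y_{a,b,c}(R_{a,b,c}(t))-Y_{a,b,c}(t)$ is locally constant on each piece; this is where the measure-preservation of Proposition \ref{rabcbasic1.thm}(iii) enters. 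Since $Y_{a,b,c}$ is continuous across the collapsing holes, the two local constants on $[0,c_0+a-b)$ and $[c_0,a)$ coincide: as $t\to(c_0+a-b)^-$ and $t\to c_0^+$ both values $R_{a,b,c}(t)$ tend to $c_0+c_1\ (\mathrm{mod}\ a)$, while $Y_{a,b,c}(c_0+a-b)=Y_{a,b,c}(c_0)$ because the black hole $[c_0+a-b,c_0)$ is disjoint from ${\mathcal S}_{a,b,c}$ (Proposition \ref{dabcbasic1.prop}). Evaluating the common constant $\theta$ as $t\to 0^+$ (such points exist by the density Lemma \ref{rabcbasic2.lem} in the irrational case, and directly in the rational case), with $R_{a,b,c}(t)=t+\lfloor c/b\rfloor b+b$ and $\lfloor c/b\rfloor b+b=\lfloor \lfloor c/b\rfloor b/a\rfloor a+c_1+b$, yields
\[
\theta\equiv Y_{a,b,c}(c_1+b)=Y_{a,b,c}(c_1+b-a)+Y_{a,b,c}(a)\equiv Y_{a,b,c}(c_1+b-a)\pmod{Y_{a,b,c}(a)},
\]
since $[c_1+b-a,c_1+b)$ has length $a$ and hence ${\mathcal S}_{a,b,c}$-measure $Y_{a,b,c}(a)$. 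This proves (i).

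For (ii) and (iii) I would identify ${\mathcal K}_{a,b,c}$ with the forward orbit of this rotation. By Theorems \ref{dabc1holes.tm} and \ref{dabc1discreteholes.tm} the complement of ${\mathcal S}_{a,b,c}$ is a single $R_{a,b,c}$-orbit of holes $G_0,\dots,G_D$ running from the $\tilde R_{a,b,c}$-black hole $G_0$ to the $R_{a,b,c}$-black hole $G_D$, so their marks are $\{Y_{a,b,c}(G_0)+n\theta\}$. As $G_0$ reduces to $[c_1,c_1+b-a)$ and carries no ${\mathcal S}_{a,b,c}$-measure, $Y_{a,b,c}(G_0)\equiv Y_{a,b,c}(c_1)=Y_{a,b,c}(c_1+b-a)=\theta$, so the marks are exactly $\{m\theta \pmod{Y_{a,b,c}(a)}\}$. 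In the rational case ($a/b=p/q$, $c\in b\Z/q$) the relation $\delta\delta'=0$ of Theorem \ref{dabc1discreteholes.tm} places a gap at the origin, so $0$ is a mark and the finite orbit is the cyclic group $Y_{a,b,c}(c_1+b-a)\Z+Y_{a,b,c}(a)\Z=\gcd\big(Y_{a,b,c}(c_1+b-a),Y_{a,b,c}(a)\big)\Z$, which is (ii). In the irrational case the orbit terminates precisely when a mark reaches $G_D$, whose image is $Y_{a,b,c}(c_0)$; the first index $M$ with $M\,Y_{a,b,c}(c_1+b-a)\equiv Y_{a,b,c}(c_0)\pmod{Y_{a,b,c}(a)}$ is then the number of distinct marks, giving ${\mathcal K}_{a,b,c}=\cup_{n=1}^{M}\big(n\,Y_{a,b,c}(c_1+b-a)+Y_{a,b,c}(a)\Z\big)$, which is (iii).

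I expect the main obstacle to be the constancy step, and specifically the translation identity $[t_1+\lambda,t_2+\lambda)\cap{\mathcal S}_{a,b,c}=\big([t_1,t_2)\cap{\mathcal S}_{a,b,c}\big)+\lambda$ together with its propagation across the collapsing holes. This is exactly the point at which one must combine the measure-preservation of $R_{a,b,c}$ on ${\mathcal S}_{a,b,c}$ with the precise combinatorics of how the holes are cyclically permuted (Theorems \ref{dabc1holes.tm} and \ref{dabc1discreteholes.tm}); the mere fact that $R_{a,b,c}$ is a piecewise translation would only produce an interval-exchange map, and the hole structure is what upgrades $\bar R$ to an honest rotation.
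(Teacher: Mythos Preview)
Your proposal is correct and follows essentially the same route as the paper. For (i) the paper splits into the two pieces $[0,c_0+a-b)$ and $[c_0,a)$ and computes directly, using measure-preservation on ${\mathcal S}_{a,b,c}$, that $Y_{a,b,c}(R_{a,b,c}(t))-Y_{a,b,c}(t)$ equals $Y_{a,b,c}(R_{a,b,c}(0))$ on the first piece and $Y_{a,b,c}(R_{a,b,c}(0))-Y_{a,b,c}(a)$ on the second; your ``local constancy plus gluing across the collapsed black hole'' is the same calculation phrased geometrically, and your evaluation of the shift at $t\to 0$ coincides with the paper's identification $Y_{a,b,c}(R_{a,b,c}(0))\equiv Y_{a,b,c}(c_1+b-a)$. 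For (ii) and (iii) the paper reads the marks off as the $Y$-images of the right endpoints $(R_{a,b,c})^n(c-c_0+b-a+\delta)$ from Theorems \ref{dabc1holes.tm}--\ref{dabc1discreteholes.tm} and then applies (i) to get the arithmetic progression $\{n\,Y_{a,b,c}(c_1+b-a)\}$; your orbit description is the same, and your observation that a gap always sits at the origin in the rational case (forcing $0$ into the orbit, hence a cyclic group) is exactly what the paper extracts from $\delta\delta'=0$ together with \eqref{dabc1discreteholes.tm.eq5} and (iii)${}'$.

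Two small points of care, neither fatal. First, in the rational case the initial hole is $[c_1+\delta',\,c_1+b-a+\delta)$ rather than $[c_1,c_1+b-a)$, so your line ``$G_0$ reduces to $[c_1,c_1+b-a)$'' is literally only the irrational case; the mark is still $Y_{a,b,c}(c_1)=Y_{a,b,c}(c_1+b-a)$ because the enlarged hole also carries no ${\mathcal S}_{a,b,c}$-measure, so the computation survives. Second, in the rational case the density substitute for Lemma \ref{rabcbasic2.lem} is Lemma \ref{dabc1rationalmax.prop}, which only guarantees points of ${\mathcal S}_{a,b,c}$ near $0$ from \emph{one} side; this is enough for your limit evaluation, but you should cite it rather than assert approach from $0^+$ specifically.
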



In the next four subsections, we prove  Theorems  \ref{dabc1holes.tm}, \ref{dabc1discreteholes.tm}, \ref{covering.tm2} and \ref{sabcstar.tm}, and
\ref{dabc1algebra.tm} respectively.


%

%

\subsection{Maximal invariant sets with irrational  time-frequency lattices}
\label{subsection2.3}

To prove Theorem  \ref{dabc1holes.tm}, we need a  characterization
of non-empty-set property for the maximal invariant set ${\mathcal S}_{a, b, c}$.

 \begin{lem} \label{dabc1holes.lem}
 Let $(a,b,c)$ be a triple of positive numbers satisfying
$a<b<c,   b-a<c_0:=c-\lfloor c/b\rfloor b<a, \lfloor c/b\rfloor\ge
2$ and $a/b\not\in \Q$. Then   ${\mathcal S}_{a, b, c}\ne
\emptyset$ if and only if there exists a nonnegative integer $D\le
\lfloor a/(b-a)\rfloor-1$ such that  $(R_{a, b,c})^{D} ([c- c_0,
c+b-c_0-a)+a\Z)=[c_0+a-b, c_0)+a\Z$ and
that  $(R_{a, b,c})^n ([c- c_0, c+b-c_0-a)+a\Z), 0\le n\le D-1$, has their closures being mutually disjoint
and  contained
 in $(0, c_0+a-b)\cup (c_0, a)+a\Z$.
\end{lem}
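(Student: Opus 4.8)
The plan is to prove the two implications separately, working with the forward orbit $H_n:=(R_{a,b,c})^n([c-c_0,c+b-c_0-a)+a\Z)$ of the black hole of $\tilde R_{a,b,c}$, and using three tools from the previous sections: the orbit-avoidance characterization of membership in ${\mathcal S}_{a,b,c}$ (Proposition \ref{dabc1pointcharacterization.prop}), the range and invertibility relations \eqref{rangeoftransformations}--\eqref{invertibilityoftransformations}, and the density of ${\mathcal S}_{a,b,c}$ at the origin (Lemma \ref{rabcbasic2.lem}). Note that both black holes $H_0=[c-c_0,c+b-c_0-a)+a\Z$ and $[c_0+a-b,c_0)+a\Z$ have length $b-a$, that $b<2a$ since $b-a<c_0<a$, and that $a/(b-a)\notin\Q$ because $a/b\notin\Q$. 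All sets in sight are $a$-periodic, so density of ${\mathcal S}_{a,b,c}$ at $0$ yields density at every point of $a\Z$.

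For the ``if'' direction I would set $C:=\R\setminus\bigcup_{n=0}^{D}H_n$ and show $C\subseteq{\mathcal S}_{a,b,c}$. First, $C\ne\emptyset$: by hypothesis $H_0,\dots,H_{D-1}$ are disjoint intervals of length $b-a$ inside $(0,c_0+a-b)\cup(c_0,a)+a\Z$, on each of which $R_{a,b,c}$ is a single translation, so every $H_n$ has length $b-a$ per period; hence $\bigcup_{n=0}^{D}H_n$ has measure $(D+1)(b-a)$ per period, and $D+1\le\lfloor a/(b-a)\rfloor<a/(b-a)$ (strict by irrationality) gives $(D+1)(b-a)<a$, so $C$ has positive measure. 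Second, $R_{a,b,c}(C)\subseteq C$ and $\tilde R_{a,b,c}(C)\subseteq C$: for $t\in C$ we have $t\notin H_0\cup H_D$, so $\tilde R_{a,b,c}(R_{a,b,c}(t))=t$ by \eqref{invertibilityoftransformations}; if $R_{a,b,c}(t)\in H_m$ with $m\ge1$ then $\tilde R_{a,b,c}(H_m)=H_{m-1}$ pulls $t$ back into $\bigcup_n H_n$, while $R_{a,b,c}(t)\in H_0$ (where $\tilde R_{a,b,c}$ is the identity) would give $t=R_{a,b,c}(t)\in H_0$, both contradictions; the argument for $\tilde R_{a,b,c}$ is symmetric. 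Since $C$ is invariant and avoids both black holes $H_0$ and $H_D$, Proposition \ref{dabc1pointcharacterization.prop} gives $C\subseteq{\mathcal S}_{a,b,c}$, so ${\mathcal S}_{a,b,c}\ne\emptyset$.

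For the ``only if'' direction, assume ${\mathcal S}_{a,b,c}\ne\emptyset$. By Proposition \ref{blackholestwo.prop} each $H_n$ is disjoint from ${\mathcal S}_{a,b,c}$, and by Lemma \ref{rabcbasic2.lem} points of ${\mathcal S}_{a,b,c}$ lie arbitrarily close to each point of $a\Z$ on both sides; hence no $H_n$ contains a one-sided neighborhood of a point of $a\Z$. An induction then shows that, as long as $H_0,\dots,H_n$ avoid the black hole $[c_0+a-b,c_0)+a\Z$, each is a single interval of length $b-a$ with closure in $(0,a)+a\Z$ (the density at $0$ forbids straddling the discontinuity of $R_{a,b,c}$ at $a\Z$), so $R_{a,b,c}$ acts on it by one translation and $H_{n+1}$ is again such an interval. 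Let $D$ be the first index at which $H_D$ meets $[c_0+a-b,c_0)+a\Z$. Once I know that $D<\infty$, that $H_D=[c_0+a-b,c_0)+a\Z$ exactly, and that $H_0,\dots,H_{D-1}$ are pairwise disjoint and lie in $(0,c_0+a-b)\cup(c_0,a)+a\Z$, the bound $D\le\lfloor a/(b-a)\rfloor-1$ follows from the same disjoint-measure count as above.

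The hard part is establishing those three facts, and this is where $a/b\notin\Q$ is indispensable. While the orbit avoids $[c_0+a-b,c_0)+a\Z$, the map $R_{a,b,c}$ restricted to the complement of this black hole is an exchange of the two intervals $(0,c_0+a-b)$ and $(c_0,a)+a\Z$ with translation lengths differing by $b$; since $a/b\notin\Q$ this two-interval exchange is conjugate to an irrational rotation, hence minimal and ergodic. A periodic cycle $(R_{a,b,c})^{P}(H_0)=H_0$ missing the black hole is impossible, since it would force $mb\in a\Z$ for some integer $m\ge1$, contradicting irrationality; and if the orbit avoided the black hole forever, then $\bigcup_n H_n$ would be a forward-invariant, positive-measure (by minimality, co-null) subset of the complement of the black hole, which together with ${\mathcal S}_{a,b,c}\cap\bigcup_n H_n=\emptyset$ and ${\mathcal S}_{a,b,c}\cap[c_0+a-b,c_0)=\emptyset$ would squeeze ${\mathcal S}_{a,b,c}$ into a null set, and then (using minimality to rule out a surviving exceptional orbit) into the empty set. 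The same ergodic, minimal rotation, applied jointly to the forward $\tilde R_{a,b,c}$-orbit of the other black hole (the time-reversal of the $H_n$-orbit, via \eqref{invertibilityoftransformations}), is what forces the first contact to be exact rather than a partial overlap and the pre-contact images to be mutually disjoint. I expect this ergodicity-and-minimality step --- turning ``the rotating interval of length $b-a$ must land exactly on the fixed black hole'' into a clean contradiction with ${\mathcal S}_{a,b,c}\ne\emptyset$ --- to be the main obstacle of the proof.
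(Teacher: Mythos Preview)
Your sufficiency argument is correct and essentially identical to the paper's: set up the complement of $\bigcup_{n=0}^{D}H_n$, use the measure bound $(D+1)(b-a)<a$ for nonemptiness, verify invariance via the invertibility relations, and invoke Proposition~\ref{dabc1pointcharacterization.prop}.

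For the necessity, your framework is right (inductive tracking of $H_n$ as a single interval, density at the origin), but the proposed ergodicity route is both ill-posed and unnecessary. The ``two-interval exchange'' is not a well-defined dynamical system: $R_{a,b,c}$ takes the complement of $[c_0+a-b,c_0)+a\Z$ bijectively onto the complement of the \emph{other} black hole $[c-c_0,c+b-c_0-a)+a\Z$ (this is exactly \eqref{rangeoftransformations}), so there is no self-map of a fixed space on which to invoke minimality or ergodicity. The rotation conjugacy you are reaching for (Theorem~\ref{dabc1algebra.tm}) lives on ${\mathcal S}_{a,b,c}$ and is proved only \emph{after} this lemma.

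The paper avoids all of this. The three ``hard facts'' fall to the same two elementary tools you already used:
\begin{itemize}
\item \emph{Finiteness of $D$}: the same measure count as in your sufficiency proof. If $H_0,\dots,H_{\lfloor a/(b-a)\rfloor-1}$ all miss the black hole, they are pairwise disjoint (via the $\tilde R_{a,b,c}$-pullback and the range relation \eqref{rangeoftransformations}) and together with the black hole have total measure $(\lfloor a/(b-a)\rfloor+1)(b-a)>a$, a contradiction.
\item \emph{Strict interior and exact landing}: your density argument at $a\Z$ is half the story. The paper also rules out the right endpoint $u_n$ equalling $c_0+a-b$ and the left endpoint $\tilde u_n$ equalling $c_0$, by tracking the forward orbit of $[0,\epsilon)$: under the inductive hypothesis $(R_{a,b,c})^{k+1}([0,\epsilon)+a\Z)=[u_k,u_k+\epsilon)+a\Z$, so if some $u_n=c_0+a-b$ then $[0,\epsilon)$ is mapped into the black hole and hence (pulling back via invertibility on ${\mathcal S}_{a,b,c}$) $[0,\epsilon)\cap{\mathcal S}_{a,b,c}=\emptyset$, contradicting Lemma~\ref{rabcbasic2.lem}. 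The same mechanism shows that at the first contact time $D$, a strict overlap (right endpoint in $(c_0+a-b,c_0)$ or left endpoint above $c_0+a-b$) would again kill a one-sided neighborhood of $0$ in ${\mathcal S}_{a,b,c}$; so $H_D$ must coincide exactly with $[c_0+a-b,c_0)+a\Z$.
\item \emph{Disjointness of closures}: the open intervals $H_n$ are disjoint by the $\tilde R_{a,b,c}$-pullback argument (any intersection pulls back to $H_0\cap R_{a,b,c}(H_0)=\emptyset$ by \eqref{rangeoftransformations}). Closure-disjointness then follows because each $H_n$ is flanked on both sides by points of ${\mathcal S}_{a,b,c}$ (again by pushing forward $(-\epsilon,0)$ and $(0,\epsilon)$ along the orbit and using Lemma~\ref{rabcbasic2.lem}), while every $H_m\cap{\mathcal S}_{a,b,c}=\emptyset$.
\end{itemize}
So the ``main obstacle'' dissolves: you already have both tools in hand, and no appeal to ergodicity is needed.
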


The sufficiency of the above lemma  follows from the invariance of the set
$\cup_{n=0}^{D}(R_{a,b,c})^n ([c-c_0, c-c_0+b-a)+a\Z)$ under transformations $R_{a, b, c}$ and $\tilde R_{a, b, c}$,
and the minimality of the set  $\R\backslash {\mathcal S}_{a, b, c}$ in Theorem \ref{dabc1maximal.cor.thm}.
To prove the necessity in Lemma \ref{dabc1holes.lem}, we let $D$ be the minimal nonnegative integer such that
$(R_{a, b,c})^{D} ([c- c_0,
c+b-c_0-a)+a\Z)\cap ([c_0+a-b, c_0)+a\Z)\ne \emptyset$. The existence of such an integer $D$
follows from the mutually disjointness of holes $(R_{a, b,c})^{n} ([c- c_0,
c+b-c_0-a)+a\Z), 0\le n\le D-1$ (and their closures), see \eqref{dabc1holes.tm.pf.eq13}.
We then prove that $(R_{a, b,c})^{D} ([c- c_0,
c+b-c_0-a)+a\Z)=[c_0+a-b, c_0)+a\Z$ by applying Lemma \ref{rabcbasic2.lem} and Propositions \ref{rabcbasic1.thm} and \ref{blackholestwo.prop}, see \eqref{dabc1holes.tm.pf.eq20}.
Now assuming that
Lemma \ref{dabc1holes.lem} holds, we  start our proof of Theorem \ref{dabc1holes.tm}.

\begin{proof}[Proof of Theorem \ref{dabc1holes.tm}]
Assume that ${\mathcal S}_{a, b, c}\ne \emptyset$.
By Lemma \ref{dabc1holes.lem}, there exists a nonnegative integer $D\le \lfloor a/(b-a)\rfloor-1$
 such that $(R_{a, b,c})^{D} ([c- c_0,
c+b-c_0-a)+a\Z)=[c_0+a-b, c_0)+a\Z$, and
$(R_{a, b,c})^n ([c- c_0, c+b-c_0-a)+a\Z), 0\le n\le D-1$, have their closures being mutually disjoint
and contained in $(0, c_0+a-b)\cup (c_0, a)+a\Z$.
Then
$\cup_{n=0}^{D} (R_{a,b,c})^n
([c-c_0, c-c_0+b-a)+a\Z)$ is minimal set that contains black holes of transformations $R_{a, b, c}$ and $\tilde R_{a, b, c}$,
and that is invariant under those transformations $R_{a, b, c}$ and $\tilde R_{a, b, c}$.
Here the invariance holds as
  \begin{equation} \label{dabc1holes.tm.pff.eq3} (R_{a,b,c})^n
([c-c_0, c-c_0+b-a)+a\Z)= (\tilde R_{a,b,c})^{D-n}
([c_0+a-b, c_0)+a\Z)\end{equation}
for all $0\le n\le D$ by Proposition \ref{rabcinvertibility.prop} and the
property that
\begin{equation}  \label{dabc1holes.tm.pff.eq2} (R_{a, b,c})^n ([c- c_0, c+b-c_0-a]+a\Z)\subset (0, c_0+a-b)\cup (c_0, a)+a\Z
\end{equation}
for all $0\le n\le D-1$.
This together with Theorem \ref{dabc1maximal.cor.thm} proves that
\begin{equation}\label{dabc1holes.tm.pff.eq1}
{\mathcal S}_{a, b, c}  =  \R\backslash \cup_{n=0}^{D} (R_{a,b,c})^n
([c-c_0, c-c_0+b-a)+a\Z).
\end{equation}
Hence the conclusions \eqref{dabc1holes.tm.eq-1}, \eqref{dabc1holes.tm.eq0}
and \eqref{dabc1holes.tm.eq1} follow from  \eqref{rabcnewplus.def}, \eqref{tilderabcnewplus.def}
and
\eqref{dabc1holes.tm.pff.eq3}--\eqref{dabc1holes.tm.pff.eq1}.
\end{proof}

We finish this subsection with the proof of Lemma \ref{dabc1holes.lem}.

\begin{proof} [Proof of Lemma \ref{dabc1holes.lem}]\  ($\Longleftarrow$)\
By \eqref{rabcbasic1.thm.pf.eq-1},
$\big|(R_{a,b,c})^n ([c-c_0, c-c_0+b-a)+a\Z)\cap [0,a)\big|\le b-a$
for all $0\le n\le D$, which together with the periodicity \eqref{rabcnewplus.def} of the transformation $R_{a, b, c}$ implies that
$$|A_{D}\cap [0,a)|\le (D+1)(b-a)<a,$$
where  $A_{D}=\cup_{n=0}^{D}(R_{a,b,c})^n ([c-c_0, c-c_0+b-a)+a\Z)$.
Hence
\begin{equation}\label{dabc1holes.tm.pf.eq3}
A_{D}\ne \R.\end{equation}
By \eqref{dabc1holes.tm.pf.eq3},  the sufficiency reduces to proving that
 \begin{equation}\label{dabc1holes.tm.pf.eq4} \R\backslash A_{D}\subset {\mathcal S}_{a, b, c}.\end{equation}
(In particular,  $\R\backslash A_{D}= {\mathcal S}_{a, b, c}$
by  Theorem \ref{dabc1maximal.cor.thm} and Proposition \ref{dabc1pointcharacterization.prop}). Recall that the black hole
$[c-c_0, c-c_0+b-a)+a\Z$ of the transformation  $\tilde R_{a,b,c}$
is contained in $A_{D}$, and that the black hole $[c_0+a-b,
c_0)+a\Z$ of the transformation $R_{a,b,c}$ is also contained in
$A_{D}$ as it is same as $(R_{a,b,c})^{D} ([c-c_0, c-c_0+b-a)+a\Z)
$ by the assumption.
Then we obtain from   \eqref{blackholesoftransformations} and \eqref{invertibilityoftransformations}  that
\begin{equation} \label{dabc1holes.tm.pf.eq5}
\tilde R_{a,b,c} (\R\backslash A_{D})\subset \R\backslash A_{D}.
\end{equation}
Similarly we get from \eqref{rabcnewplus.def}  and \eqref{invertibilityoftransformations}  that
$R_{a,b,c} (t) \not\in [c-c_0, c-c_0+b-a)+a\Z$ and that $R_{a,b,c} (t)  \not\in \cup_{n=1}^{D}(R_{a,b,c})^n ([c-c_0, c-c_0+b-a)+a\Z)$
for any $t\not\in A_{D}$. This implies that
 \begin{equation} \label{dabc1holes.tm.pf.eq5b}
  R_{a, b,c} (\R\backslash A_{D})\subset \R\backslash A_{D}.
 \end{equation}
 Take any $t\in \R\backslash A_{D}$, then $(R_{a, b, c})^n(t)$ and $(\tilde R_{a, b, c})^n(t), n\ge 0$,
belong to $\R\backslash A_{D}$ by \eqref{dabc1holes.tm.pf.eq5} and \eqref{dabc1holes.tm.pf.eq5b}
and hence they do not fall in the black holes of the transformations $R_{a, b, c}$ and $\tilde R_{a, b,c}$. Therefore
$t\in {\mathcal S}_{a, b, c}$ by Proposition
\ref{dabc1pointcharacterization.prop}, and   hence \eqref{dabc1holes.tm.pf.eq4} is established.

($\Longrightarrow$)\quad First we need the following claim:

\begin{claim}\label{dabc1holes.lem.pf.claim1}
{ There exists a nonnegative integer $n\le \lfloor a/(b-a)\rfloor-1$ such that
\begin{equation}\label{dabc1holes.tm.pf.eq11}
\big((R_{a,b,c})^n([c-c_0, c-c_0+b-a)+a\Z)\big)\cap ([c_0+a-b, c_0)+a\Z)\ne \emptyset.
\end{equation}
}
\end{claim}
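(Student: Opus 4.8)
The plan is to argue by contradiction through a measure/pigeonhole count, the essential input being the injectivity of $R_{a,b,c}$ off its black hole together with the fact that the set being iterated is exactly the black hole of the inverse transformation $\tilde R_{a,b,c}$. Write $K=\lfloor a/(b-a)\rfloor$, abbreviate the two black holes by $B:=[c_0+a-b,c_0)+a\Z$ (the black hole of $R_{a,b,c}$) and $I:=[c-c_0,c+b-c_0-a)+a\Z$ (the black hole of $\tilde R_{a,b,c}$), and set $J_n:=(R_{a,b,c})^n(I)$ for $n\ge 0$. Each of $B$ and $I$ meets a period $[0,a)$ in a set of measure $b-a$. Suppose, contrary to the claim, that $J_n\cap B=\emptyset$ for every $0\le n\le K-1$. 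The goal is then to show that $B,J_0,\dots,J_{K-1}$ form $K+1$ pairwise disjoint $a$-periodic sets, each of measure $b-a$ per period, whose total measure $(K+1)(b-a)$ exceeds $a$.

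First I would control the measures. Since each $J_n$ with $0\le n\le K-1$ avoids $B$ by hypothesis, the measure-preservation of $R_{a,b,c}$ away from its black hole (the equality \eqref{rabcbasic1.thm.pf.eq0}, also recorded in Proposition \ref{rabcbasic1.thm}(iii)) applies iteratively and gives $|J_n\cap[0,a)|=|I\cap[0,a)|=b-a$ for all $0\le n\le K-1$. I emphasize that arc-connectedness of the $J_n$ is never used; only their measures and the functional relations between them matter.

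The hard part will be the pairwise disjointness of $J_0,\dots,J_{K-1}$, which I would establish by a descent using $\tilde R_{a,b,c}$. Because every $J_m$ with $m\le K-1$ avoids $B$, the invertibility relation \eqref{invertibilityoftransformations} yields $\tilde R_{a,b,c}(J_m)=J_{m-1}$ for $1\le m\le K-1$. Hence if $J_n\cap J_m\ne\emptyset$ with $0\le n<m\le K-1$, applying $\tilde R_{a,b,c}$ a total of $n$ times sends a common point into $J_0\cap J_{m-n}=I\cap J_k$ with $1\le k:=m-n\le K-1$, so it suffices to rule out $I\cap J_k\ne\emptyset$ for $k\ge 1$. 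For such a point $t\in I\cap J_k$, write $t=R_{a,b,c}(s)$ with $s\in J_{k-1}\subset\R\backslash B$; then \eqref{invertibilityoftransformations} gives $\tilde R_{a,b,c}(t)=s$, while $t\in I$ forces $\tilde R_{a,b,c}(t)=t$ by the black-hole relation \eqref{blackholesoftransformations}. Thus $s=t$ and $R_{a,b,c}(t)=t$, so $t\in B$ by \eqref{blackholesoftransformations}; this contradicts $t\in I$ and $I\cap B=\emptyset$ (the $n=0$ instance of the contradiction hypothesis), completing the disjointness step.

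Finally, combining the two steps, $B,J_0,\dots,J_{K-1}$ are $K+1$ pairwise disjoint $a$-periodic sets, each of measure $b-a$ on $[0,a)$, so their total measure on a period is $(K+1)(b-a)$. Since $K+1=\lfloor a/(b-a)\rfloor+1>a/(b-a)$, this total exceeds $a$, which is impossible. Therefore the contradiction hypothesis fails, i.e.\ some $J_n$ with $0\le n\le K-1=\lfloor a/(b-a)\rfloor-1$ meets $B$, which is precisely \eqref{dabc1holes.tm.pf.eq11}. I expect the disjointness descent to be the only delicate point; the measure bookkeeping and the final counting are routine once injectivity off the two black holes is in hand.
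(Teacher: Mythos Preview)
Your proof is correct and follows essentially the same pigeonhole/measure strategy as the paper: assume the iterates miss the black hole $B$, show they are pairwise disjoint and each has measure $b-a$ per period, then count $(K+1)(b-a)>a$. The only difference is in how disjointness is pinned down: the paper descends to a point in $R_{a,b,c}(I)\cap I$ and kills this using the range property \eqref{rangeoftransformations}, whereas you descend to $I\cap J_k$ and derive a fixed point of $R_{a,b,c}$, forcing membership in $B$. Both routes work; one minor nit is that the implication ``$R_{a,b,c}(t)=t\Rightarrow t\in B$'' is the converse of \eqref{blackholesoftransformations} and really comes from the full definition \eqref{rabcnewplus.def} (the shifts $\lfloor c/b\rfloor b$ and $\lfloor c/b\rfloor b+b$ are nonzero), not from \eqref{blackholesoftransformations} alone.
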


\begin{proof} Suppose, on the contrary, that
\begin{equation}\label{dabc1holes.tm.pf.eq12}
\big((R_{a,b,c})^n([c-c_0, c-c_0+b-a)+a\Z)\big)\cap ([c_0+a-b, c_0)+a\Z)= \emptyset\end{equation}
for all $0\le n\le \lfloor a/(b-a)\rfloor-1$.
We first need to prove  that
\begin{equation}\label{dabc1holes.tm.pf.eq13}
(R_{a,b,c})^n ([c-c_0, c-c_0+b-a)+a\Z)\big)\cap
(R_{a,b,c})^{n'} ([c-c_0, c-c_0+b-a)+a\Z)\big)=\emptyset
\end{equation}
for all $0\le n\ne n'\le \lfloor a/(b-a)\rfloor-1$.
Suppose on the contrary that \eqref{dabc1holes.tm.pf.eq13} does not hold.
Then
 there exists $t\in \big((R_{a,b,c})^n ([c-c_0, c-c_0+b-a)+a\Z)\big)\cap
\big((R_{a,b,c})^{n'} ([c-c_0, c-c_0+b-a)+a\Z)\big)$. By \eqref{invertibilityoftransformations} and \eqref{dabc1holes.tm.pf.eq12},
$(\tilde R_{a,b,c})^{\max(n,n')-1}(t)\in  \big(R_{a,b,c} ([c-c_0, c-c_0+b-a)+a\Z)\big)\cap \big([c-c_0, c-c_0+b-a)+a\Z\big)$,
which is a contradiction as $\big(R_{a,b,c} ([c-c_0, c-c_0+b-a)+a\Z)\big)\cap \big([c-c_0, c-c_0+b-a)+a\Z\big)
\subset ([c-c_0+b, c-c_0+2a)+a\Z)\cap  \big([c-c_0, c-c_0+b-a)+a\Z\big)=\emptyset$ by \eqref{rabcnewplus.def} and \eqref{dabc1holes.tm.pf.eq12}.
Hence \eqref{dabc1holes.tm.pf.eq13} is established.

By \eqref{rabcbasic1.thm.pf.eq0} and \eqref{dabc1holes.tm.pf.eq12}, we have that
\begin{equation}\label{dabc1holes.tm.pf.eq14}
\big|\big((R_{a,b,c})^n([c-c_0, c-c_0+b-a)+a\Z)\big)\cap [0,a)\big|=b-a
\end{equation}
for all $0\le n\le \lfloor a/(b-a)\rfloor-1$.
This together with  \eqref{dabc1holes.tm.pf.eq13}
implies that
\begin{eqnarray*}
a & \ge &  \big|\cup_{n=0}^{\lfloor a/(b-a)\rfloor-1}\big((R_{a,b,c})^n([c-c_0, c-c_0+b-a)+a\Z)\big)\cap [0,a)\big|\nonumber\\
& & +|[c_0+a-b, c_0)|\nonumber\\
 & \ge &  \sum_{n=0}^{\lfloor a/(b-a)\rfloor-1}\big|\big((R_{a,b,c})^n([c-c_0, c-c_0+b-a)+a\Z)\big)\cap [0,a)\big|\nonumber\\
  & &  +(b-a)\nonumber\\
 & = &  (\lfloor a/(b-a)\rfloor+1)(b-a)>a,
\end{eqnarray*}
which is a contradiction. 
This completes the proof of Claim \ref{dabc1holes.lem.pf.claim1}.
\end{proof}

We return to  work on the proof of the sufficiency. Let $D$ be
the smallest nonnegative integer such that
\begin{equation} \label{dabc1holes.tm.pf.eq16} \big((R_{a,b,c})^{D}([c-c_0, c-c_0+b-a)+a\Z)\big)\cap ([c_0+a-b, c_0)+a\Z)\ne \emptyset.
\end{equation}
Then $D\le \lfloor a/(b-a)\rfloor-1$ by
\eqref{dabc1holes.tm.pf.eq11}. By the definition of the integer
$D$, we have that
\begin{equation} \label{dabc1holes.tm.pf.eq17} \big((R_{a,b,c})^{n}([c-c_0, c-c_0+b-a)+a\Z)\big)\cap ([c_0+a-b, c_0)+a\Z)= \emptyset
\end{equation}
for all $0\le n<D-1$. Following the argument used to establish \eqref{dabc1holes.tm.pf.eq13}, we have
that
\begin{equation} \label{dabc1holes.tm.pf.eq18-}
\Big((R_{a,b,c})^{n}\big([c-c_0, c-c_0+b-a)+a\Z)\big)\Big) \cap \Big((R_{a,b,c})^{n'}\big([c-c_0, c-c_0+b-a)+a\Z)\big)\Big)
=\emptyset
\end{equation}
for all $0\le n\ne n'\le D-1$.
We claim that for all $0\le n<D-1$,
\begin{equation} \label{dabc1holes.tm.pf.eq18}
(R_{a,b,c})^n([c-c_0, c-c_0+b-a)+a\Z)=[\tilde u_n, u_n)+a\Z
\end{equation}
with
\begin{equation}\label{dabc1holes.tm.pf.eq19} u_n=\tilde u_n+b-a\quad {\rm and}\quad
[\tilde u_n, u_n]\subset (0, c_0+a-b) \ {\rm or} \ (c_0, a);\end{equation}
that is, $(R_{a, b, c})^n ([c-c_0, c-c_0+b-a)+a\Z$ are periodic holes of length $b-a$ `strictly' contained in $(0, c_0+a-b)\cup (c_0, a)+a\Z$.
Let $u_0$ be the unique number in $(0,a]$ such that $ c-c_0+b-a-u_0\in a\Z$, and
$\tilde u_0<u_0$ be  the largest number such that $c-c_0-\tilde u_0\in a\Z$. Then
$[c-c_0, c-c_0+b-a)+a\Z=[\tilde u_0, u_0)+a\Z$. Furthermore  $\tilde u_0>0$ and $u_0<a$ as otherwise either
 $((0, \epsilon)+a\Z)\cap {\mathcal S}_{a, b, c}=\emptyset$
or $((-\epsilon, 0)+a\Z)\cap {\mathcal S}_{a, b, c}=\emptyset$  for some small $\epsilon>0$
by  Proposition \ref{blackholestwo.prop}, which contradicts to the conclusion \eqref{rabcbasic2.lem.eq1} in Lemma \ref{rabcbasic2.lem}.
This together with \eqref{dabc1holes.tm.pf.eq17} and the definitions of $\tilde u_0$ and $u_0$ implies  that  $u_0=\tilde u_0+b-a$ and
$[\tilde u_0, u_0]\subset (0, c_0+a-b] \ {\rm or} \ [c_0, a)$.
Therefore it suffices to  prove $u_0\ne c_0+a-b$ and $\tilde u_0\ne c_0$.
Suppose on the contrary that $u_0=c_0+a-b$, then
$(R_{a,b,c}([0,\epsilon)+a\Z))\cap {\mathcal S}_{a, b, c} =
([ u_0, u_0+\epsilon)+a\Z)\cap {\mathcal S}_{a, b, c}\subset
([c_0+a-b, c_0)+a\Z)\cap{\mathcal S}_{a, b, c}=\emptyset$
for sufficiently small $\epsilon\in (0, c_0+a-b)$ by \eqref{blackholestwo.eq1}.
Thus $([0,\epsilon)+a\Z)\cap {\mathcal S}_{a, b, c}=\emptyset$ by \eqref{tildelambdaabc.def}
and Proposition \ref{rabcbasic1.thm}, which contradicts to \eqref{rabcbasic2.lem.eq1} in Lemma \ref{rabcbasic2.lem}.
This proves that  $u_0\ne c_0+a-b$. Similarly we can prove that $\tilde u_0\ne c_0$.
Therefore the conclusions \eqref{dabc1holes.tm.pf.eq18} and \eqref{dabc1holes.tm.pf.eq19} hold for $n=0$.
Inductively, we assume that
$(R_{a,b,c})^n([c-c_0, c-c_0+b-a)+a\Z)=[\tilde u_n, u_n)+a\Z$
with $u_n=\tilde u_n+b-a$ and $[\tilde u_n, u_n]\subset (0, c_0+a-b) \ {\rm or} \ (c_0, a)$.
Then we see that
$(R_{a,b,c})^{n+1}([c-c_0, c-c_0+b-a)+a\Z)=[R_{a,b,c}\tilde u_n, R_{a,b,c} u_n)+a\Z$
with $R_{a,b,c}u_n-R_{a,b,c} \tilde u_n=b-a$. Let
$u_{n+1}$ be the unique number in $(0,a]$ such that $R_{a,b,c} u_n-u_{n+1}\in a\Z$, and
$\tilde u_{n+1}<u_{n+1}$ be  the largest number such that $R_{a,b,c} \tilde u_n-\tilde u_{n+1}\in a\Z$. Then
$(R_{a,b,c})^{n+1}([c-c_0, c-c_0+b-a)+a\Z)=[\tilde u_{n+1}, u_{n+1})+a\Z$. Similarly we can show that
$0<\tilde u_{n+1}<u_{n+1}<a$ and $[\tilde u_{n+1}, u_{n+1})\cap [c_0+a-b, c_0)=\emptyset$, and hence
$ u_{n+1}=\tilde u_{n+1}+b-a$ and $[\tilde u_{n+1}, u_{n+1}]\subset (0, c_0+a-b]$ or  $[\tilde u_{n+1}, u_{n+1}]\subset [c_0, a)$.
Now we prove that $u_{n+1}\ne c_0+a-b$ and $\tilde u_{n+1}\ne c_0$.
Suppose on the contrary that $u_{n+1}=c_0+a-b$. From the inductive hypothesis, we have that for $0\le k\le n$,
$(R_{a,b,c})^{k+1}([0, \epsilon)+a\Z)=[u_k, u_k+\epsilon)+a\Z$  where $\epsilon>0$ is  sufficiently small.
This together with \eqref{invertibilityoftransformations}, the inductive hypothesis and Proposition \ref{rabcbasic1.thm}
implies that $([0, \epsilon)+a\Z)\cap {\mathcal S}_{a, b, c}=\emptyset$, which contradicts to \eqref{rabcbasic2.lem.eq1}
in Lemma \ref{rabcbasic2.lem}.
This proves that  $u_{n+1}\ne c_0+a-b$. Similarly we can prove that $\tilde u_{n+1}\ne c_0$.
Hence we can proceed  the inductive proof of the conclusion \eqref{dabc1holes.tm.pf.eq17}.

Next we prove that
\begin{equation}\label{dabc1holes.tm.pf.eq20}
(R_{a,b,c})^{D}([c-c_0, c-c_0+b-a)+a\Z)=[c_0+a-b, c_0)+a\Z.
\end{equation}
From \eqref{rabcnewplus.def}, \eqref{dabc1holes.tm.pf.eq18} and \eqref{dabc1holes.tm.pf.eq19} it follows that
\begin{equation}\label{dabc1holes.tm.pf.eq21}
(R_{a,b,c})^{D}([c-c_0, c-c_0+b-a)+a\Z)=[\tilde u_{D}, u_{D})+a\Z
\end{equation}
for some $u_{D}\in (0,a]$ and $u_{D}-\tilde u_{D}=b-a$. 
By \eqref{dabc1holes.tm.pf.eq16},   $[\tilde u_{D}, u_{D})\cap [c_0+a-b, c_0)\ne \emptyset$, which implies that
$u_{D}>c_0+a-b$.  Suppose that $c_0+a-b<u_{D}<c_0$, then
$(R_{a,b,c})^{D+1} ([0, \epsilon)+a\Z)=[\tilde u_{D}, \tilde u_{D}+\epsilon)+a\Z$
for sufficiently small $\epsilon$ by  \eqref{dabc1holes.tm.pf.eq18} and \eqref{dabc1holes.tm.pf.eq19}.
This together with \eqref{invertibilityoftransformations}, \eqref{dabc1holes.tm.pf.eq18} and \eqref{dabc1holes.tm.pf.eq19}
 and Proposition \ref{rabcbasic1.thm} implies that $([0, \epsilon)+a\Z)\cap {\mathcal S}_{a, b, c}=\emptyset$,
 which contradicts to \eqref{rabcbasic2.lem.eq1} in Lemma \ref{rabcbasic2.lem}.
Therefore $u_D\ge c_0$.
 Similarly, we can prove that $\tilde u_{D}\le c_0+a-b$.  Hence \eqref{dabc1holes.tm.pf.eq20}
 follows as $u_{D}-\tilde u_{D}=b-a$.

Finally  we prove that  $(R_{a, b, c})^n ([c-c_0, c-c_0+b-a)+a\Z), 0\le n\le D$,
have their closures being mutually disjoint. By Proposition \ref{blackholestwo.prop}, we have that
$$(R_{a, b, c})^n ([c-c_0, c-c_0+b-a)+a\Z)\cap {\mathcal S}_{a, b, c}=\emptyset, \ 0\le n\le D.$$
From \eqref{dabc1holes.tm.pf.eq18}, \eqref{dabc1holes.tm.pf.eq19} and
Lemma \ref{rabcbasic2.lem}, we obtain that
$$(\tilde u_n-\epsilon, \tilde u_n)\cap {\mathcal S}_{a, b, c}\ne \emptyset \quad {\rm and}
\quad (u_n, u_n+\epsilon)\cap {\mathcal S}_{a, b, c}\ne \emptyset, \ 0\le n\le D $$
for any sufficiently small $\epsilon>0$. Combining the above two observations with
\eqref{dabc1holes.tm.pf.eq18-}--\eqref{dabc1holes.tm.pf.eq20} 
prove mutual disjointness for the closures of holes  $(R_{a, b, c})^n ([c-c_0, c-c_0+b-a)+a\Z), 0\le n\le D$.
%
%
%
%
\end{proof}

\subsection{Maximal invariant sets with rational  time-frequency  lattices} 

To prove  Theorem \ref{dabc1discreteholes.tm}, we need   the following result about the maximality of
the set ${\mathcal S}_{a, b, c}$ under the transformation $R_{a, b, c}$ in the case that $a/b\in \Q$, c.f.  Theorem \ref{dabc1maximal.cor.thm}
for the maximality of the set ${\mathcal S}_{a, b, c}$ under {\bf two} transformations $R_{a, b, c}$
and $\tilde R_{a, b, c}$ in the case that $a/b\not\in \Q$.

\begin{lem}\label{dabc0rationalmax.prop} Let $a<b<c,   b-a<c_0:=c-\lfloor c/b\rfloor b <a, \lfloor c/b\rfloor\ge 2$,  $a/b=p/q$  and $c\in b\Z/q$
for some coprime integers $p$ and $q$. Then
\begin{itemize}
\item [{(i)}]\
${\mathcal S}_{a, b, c}$ is the maximal set
that is invariant under the transformation $R_{a,  b, c}$ and that has empty intersection with
the black hole $[c_0+a-b, c_0)+a\Z$  of the transformation $R_{a, b, c}$.

\item [{(ii)}] $  \R\backslash {\mathcal S}_{a, b, c}$ is the minimal set
such that it is invariant under the piecewise linear transformation $R_{a, b, c}$ and that it contains the black holes
of  piecewise linear transformations $R_{a, b, c}$ and $\tilde R_{a, b, c}$.
\end{itemize}
\end{lem}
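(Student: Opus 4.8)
The plan is to reduce the whole statement to a finite dynamical picture and to show that, in the rational case, the single transformation $R_{a,b,c}$ already "sees" the entire orbit structure that in Theorem~\ref{dabc1maximal.cor.thm} required both $R_{a,b,c}$ and $\tilde R_{a,b,c}$. Write $B_R=[c_0+a-b,c_0)+a\Z$ and $B_{\tilde R}=[c-c_0,c-c_0+b-a)+a\Z$ for the two black holes. Since $R_{a,b,c}(t)-t\in b\Z$ for every $t$ by \eqref{rabcnewplus.def}, and since $pb\in a\Z$ when $a/b=p/q$, the forward orbit $\{(R_{a,b,c})^n(t)\}_{n\ge0}$ stays inside the set $\{t+kb:k\in\Z\}$, which reduces modulo $a\Z$ to exactly $p$ points of the circle $\R/a\Z$. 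Thus the map induced by $R_{a,b,c}$ restricts to a self-map of this $p$-point set, i.e. a finite functional graph, on which I can argue combinatorially; the reduction \eqref{sabcrational.eq}--\eqref{dabcicontinuoustodiscrete.eq1b} guarantees that the relevant sets are governed by this finite picture. The three structural facts I will use repeatedly are: $R_{a,b,c}$ is injective off $B_R$, with left inverse $\tilde R_{a,b,c}$ by \eqref{invertibilityoftransformations}; it maps the complement of $B_R$ exactly onto the complement of $B_{\tilde R}$ by the range property \eqref{rangeoftransformations}; and it is the identity on $B_R$ by \eqref{blackholesoftransformations}.

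First I would establish the pivotal characterization that $t\in{\mathcal S}_{a,b,c}$ if and only if the forward orbit $\{(R_{a,b,c})^n(t)\}_{n\ge0}$ avoids $B_R$. If the forward orbit avoids $B_R$ then, being confined to a finite set with $R_{a,b,c}$ injective off $B_R$, a pigeonhole-plus-injectivity argument forces it to be purely periodic, so $t$ lies on a cycle avoiding $B_R$; each cycle point is the $R_{a,b,c}$-image of its (off-$B_R$) predecessor, hence avoids $B_{\tilde R}$ by \eqref{rangeoftransformations}, and the backward traversal of the cycle is exactly the $\tilde R_{a,b,c}$-orbit, which therefore avoids both black holes. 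By Proposition~\ref{dabc1pointcharacterization.prop} this yields $t\in{\mathcal S}_{a,b,c}$, and the converse is immediate from the same proposition. Conclusion (i) is then a one-line consequence: if $E$ is invariant under $R_{a,b,c}$ with $E\cap B_R=\emptyset$, every $t\in E$ has its forward orbit inside $E$, hence avoiding $B_R$, so $t\in{\mathcal S}_{a,b,c}$; and ${\mathcal S}_{a,b,c}$ itself is such a set by \eqref{dabcbasic1.lem.eq1} and Proposition~\ref{rabcbasic1.thm}, making it the maximal one.

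For (ii) I would read off the full functional graph. The cycles avoiding $B_R$ are precisely ${\mathcal S}_{a,b,c}$; everything else is the fixed points filling $B_R$ together with transient tails feeding into them. The crucial step is to identify the sources of these tails: a point of the finite coset with no $R_{a,b,c}$-preimage lies neither in the image $R_{a,b,c}(\R\backslash B_R)=\R\backslash B_{\tilde R}$ of the off-$B_R$ part, nor is it a nontrivial image of a $B_R$ point (where $R_{a,b,c}$ is the identity), so it must lie in $B_{\tilde R}$. Since off $B_R$ each point has a unique preimage, every transient point traces back along a simple path to such a source in $B_{\tilde R}$, and the forward $R_{a,b,c}$-orbit of that source recovers the whole tail. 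Hence $\R\backslash{\mathcal S}_{a,b,c}=B_R\cup\bigcup_{n\ge0}(R_{a,b,c})^n(B_{\tilde R}\backslash B_R)$, which is invariant under $R_{a,b,c}$, contains both black holes, and lies in $\bigcup_{n\ge0}(R_{a,b,c})^n(B_R\cup B_{\tilde R})$. Consequently any $R_{a,b,c}$-invariant set $F$ containing both black holes contains all these forward orbits, so $\R\backslash{\mathcal S}_{a,b,c}\subset F$, which is the asserted minimality.

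The main obstacle I anticipate is the bookkeeping in (ii): faithfully translating abstract invariance and minimality into the finite functional-graph language, and in particular pinning down that the sources of the transient tails are exactly the points of $B_{\tilde R}\backslash B_R$. This hinges on the sharp range identity \eqref{rangeoftransformations} (not merely injectivity) together with the fact that $R_{a,b,c}$ is the identity on $B_R$. Note that the irrational-case density tool Lemma~\ref{rabcbasic2.lem} is unavailable here, which is exactly why the finite cyclic structure furnished by $a/b=p/q$ and $c\in b\Z/q$ must carry the argument instead.
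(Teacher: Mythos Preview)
Your argument is correct and takes a genuinely different route from the paper's.

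For (i) the paper argues globally: it iterates the set $X_{n+1}=R_{a,b,c}(X_n)\backslash B_R$ starting from $X_0=\R\backslash B_R$, uses that the finite grid $\{0,b/q,\dots,(p-1)b/q\}$ forces stabilization $X_D=X_{D+1}$, and then verifies directly that $X_D$ (equivalently $A_D+[0,b/q)+a\Z$) is invariant under \emph{both} $R_{a,b,c}$ and $\tilde R_{a,b,c}$ with empty intersection with both black holes, invoking Proposition~\ref{dabc1pointcharacterization.prop} at that stage. You argue locally: each forward $R_{a,b,c}$-orbit lives in a $p$-point coset of $\R/a\Z$, so if it avoids $B_R$ then pigeonhole plus injectivity off $B_R$ forces pure periodicity, and the range identity \eqref{rangeoftransformations} then supplies the backward $\tilde R_{a,b,c}$-orbit for free. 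Your route makes transparent \emph{why} the rational hypothesis lets one drop $\tilde R_{a,b,c}$ from the invariance condition. The paper's route is more constructive and directly yields the explicit formula ${\mathcal S}_{a,b,c}=A_D+[0,b/q)+a\Z$.

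For (ii) the paper invests effort in proving the sharp equality $(R_{a,b,c})^{L}(B_{\tilde R})=B_R$ for some $L$ (cf.\ \eqref{dabc0rationalmax.prop.pf.eq15}), from which $\R\backslash{\mathcal S}_{a,b,c}=\bigcup_{n=0}^{L}(R_{a,b,c})^{n}(B_{\tilde R})$ and minimality follow; this equality is later reused in Theorem~\ref{dabc1discreteholes.tm}. Your functional-graph picture (cycles in ${\mathcal S}_{a,b,c}$, fixed points filling $B_R$, transient tails traced back to sources in $B_{\tilde R}\backslash B_R$) yields minimality more quickly but does not by itself produce that exact $L$-step identity. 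One wording nit: ``off $B_R$ each point has a unique preimage'' should read ``at most one,'' since points of $B_{\tilde R}\backslash B_R$ have none --- but that is precisely what makes them the sources, so the logic stands.
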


%
%

To prove  Theorem \ref{dabc1discreteholes.tm}, we also need
the following result about the dense property of
the  maximal invariant set ${\mathcal S}_{a, b, c}$ around the origin in the case that $a/b\in \Q$, c.f.  Lemma  \ref{rabcbasic2.lem}
for the dense property of the set ${\mathcal S}_{a, b, c}$   in the case that $a/b\not\in \Q$.

\begin{lem}\label{dabc1rationalmax.prop} Let $a<b<c,   b-a<c_0 <a, \lfloor c/b\rfloor\ge 2$,  $a/b=p/q$  and $c/b\in \Z/q$
for some coprime integers $p$ and $q$. Assume that ${\mathcal S}_{a, b, c}\ne \emptyset$. Then the following statements hold.
\begin{itemize}
\item[{(i)}]  At least one of two intervals $[0, b/q)$ and $c_0+[0, b/q)$ is contained in
${\mathcal S}_{a, b, c}$.

\item[{(ii)}] At least one of two intervals $[-b/q, 0)$ and $c_0+a-b+[-b/q, 0)$ is contained in
${\mathcal S}_{a, b, c}$.

\item[{(iii)}] At least one  of two intervals
$[0, b/q)$ and $[-b/q, 0)$ is contained in ${\mathcal S}_{a, b, c}$.
\end{itemize}
\end{lem}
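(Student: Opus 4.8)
The plan is to turn all three statements into discrete membership questions and then analyze the cycle structure of $R_{a,b,c}$ near the two discontinuity points $0$ and $c_0$. Since $a/b=p/q$ and $c\in b\Z/q$, \eqref{sabcrational.eq} and \eqref{dabcicontinuoustodiscrete.eq1} show that ${\mathcal S}_{a,b,c}$ and its complement are unions of the half-open atoms $[\lambda,\lambda+b/q)$ with $\lambda\in b\Z/q$. Thus each of the six intervals in (i)--(iii) is such an atom, and its membership in ${\mathcal S}_{a,b,c}$ is decided by its left endpoint; the endpoints in question are $0$ and $c_0$ for (i), $-b/q$ and $c_0+a-b-b/q$ for (ii), and $0$ and $-b/q$ for (iii). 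Because $c_0\in b\Z/q$ and $b-a<c_0<a$ with $b-a\in b\Z/q$, one has $b-a+b/q\le c_0\le a-b/q$, so $[0,b/q)\subseteq[0,c_0+a-b)$ and $[-b/q,0)\equiv[a-b/q,a)\subseteq[c_0,a)$ modulo $a$; hence these four atoms flank the two endpoints $c_0+a-b$ and $c_0$ of the black hole $[c_0+a-b,c_0)$ and the origin.

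Next I would set the dynamical framework. By Proposition \ref{dabcbasic1.prop}, ${\mathcal S}_{a,b,c}$ meets neither the black hole $[c_0+a-b,c_0)+a\Z$ of $R_{a,b,c}$ nor the black hole $[c-c_0,c-c_0+b-a)+a\Z$ of $\tilde R_{a,b,c}$. By Lemma \ref{dabc0rationalmax.prop} it is the maximal forward-$R_{a,b,c}$-invariant set avoiding the first black hole, so $t\in{\mathcal S}_{a,b,c}$ if and only if $(R_{a,b,c})^n(t)\notin[c_0+a-b,c_0)+a\Z$ for every $n\ge0$; and by Proposition \ref{rabcbasic1.thm}, $R_{a,b,c}$ is a bijection of ${\mathcal S}_{a,b,c}$ whose inverse is $\tilde R_{a,b,c}$. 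Reducing modulo $a$, $R_{a,b,c}$ rotates the arc $[0,c_0+a-b)$ by $c_1+b$ and the arc $[c_0,a)$ by $c_1$, and by \eqref{rangeoftransformations} it carries $[0,a)\setminus[c_0+a-b,c_0)$ bijectively onto $[0,a)\setminus[c_1,c_1+b-a)$. Since a period contains only finitely many atoms and $R_{a,b,c}$ permutes the atoms of ${\mathcal S}_{a,b,c}$, the set ${\mathcal S}_{a,b,c}\cap[0,a)$ is a disjoint union of $R_{a,b,c}$-cycles; as ${\mathcal S}_{a,b,c}\ne\emptyset$, at least one such cycle exists.

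The heart of the argument, and the step I expect to be the main obstacle, is a crossing argument that locates the atoms flanking $0$ and $c_0$. For (i), the two atoms $[0,b/q)$ and $c_0+[0,b/q)$ are precisely the first atoms of the two branches of $R_{a,b,c}$ on $[0,c_0+a-b)$ and $[c_0,a)$. Assuming $0\notin{\mathcal S}_{a,b,c}$, the characterization above yields a smallest $n_0\ge1$ with $(R_{a,b,c})^{n_0}(0)\in[c_0+a-b,c_0)$; reading this chain backwards through $\tilde R_{a,b,c}$ by the invertibility \eqref{invertibilityoftransformations}, and using Proposition \ref{blackholestwo.prop} to forbid the forward orbit of the $\tilde R_{a,b,c}$-black hole from meeting ${\mathcal S}_{a,b,c}$, pins down the atom that enters $[c_0+a-b,c_0)$ and forces $c_0\in{\mathcal S}_{a,b,c}$; the symmetric deduction applies when $c_0\notin{\mathcal S}_{a,b,c}$. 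Statement (ii) is the mirror image of (i): one applies the very same reasoning to the inverse map $\tilde R_{a,b,c}$ (whose two branch-start atoms are $[-b/q,0)$ and $c_0+a-b+[-b/q,0)$), or equivalently uses the reflection $t\mapsto c-t$, which conjugates $R_{a,b,c}$ to $\tilde R_{a,b,c}$, interchanges the two black holes, and sends ${\mathcal S}_{a,b,c}$ to $c-{\mathcal S}_{a,b,c}$. Statement (iii) combines the two analyses at the origin, where $[0,b/q)$ is the branch-start atom of $R_{a,b,c}$ and $[-b/q,0)$ is the branch-start atom of $\tilde R_{a,b,c}$, so the same crossing argument at the single point $0$ shows one of them lies in ${\mathcal S}_{a,b,c}$. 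The genuine difficulties are purely in the bookkeeping: tracking which branch of $R_{a,b,c}$ or $\tilde R_{a,b,c}$ acts on each boundary atom, and disposing of the degenerate cases $c_1=0$ and an overlap of the two black holes, none of which obstructs the adjacency relations recorded above.
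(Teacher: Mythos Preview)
Your reduction to atoms via \eqref{dabcicontinuoustodiscrete.eq1} and the dynamical framework you set up are correct, but the ``crossing argument'' you sketch is not a proof: the sentence ``reading this chain backwards \ldots\ pins down the atom that enters $[c_0+a-b,c_0)$ and forces $c_0\in{\mathcal S}_{a,b,c}$'' contains no mechanism. Knowing that the forward $R_{a,b,c}$-orbit of $0$ eventually enters the black hole tells you $0\notin{\mathcal S}_{a,b,c}$, which you already assumed; it gives no information about $c_0$. Proposition~\ref{blackholestwo.prop} says the forward $R_{a,b,c}$-orbit of the $\tilde R_{a,b,c}$-black hole avoids ${\mathcal S}_{a,b,c}$, but again this does not by itself link the two atoms $[0,b/q)$ and $c_0+[0,b/q)$. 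For (iii) the gap is wider: you assert ``the same crossing argument at the single point $0$'' suffices, but here the two atoms flank the \emph{same} point, so there is no second branch to cross into.

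The paper's argument for (i) and (ii) is short and entirely different: a \emph{shift argument} exploiting the maximality in Lemma~\ref{dabc0rationalmax.prop}. If both $0$ and $c_0$ lie outside ${\mathcal S}_{a,b,c}$, then ${\mathcal S}_{a,b,c}\subset\big([b/q,c_0+a-b)\cup[c_0+b/q,a)\big)+a\Z$, and on this set $R_{a,b,c}(t-b/q)=R_{a,b,c}(t)-b/q$ (the shift by $b/q$ never crosses the discontinuities at $0$ and $c_0$). Hence ${\mathcal S}_{a,b,c}-b/q$ is again $R_{a,b,c}$-invariant and avoids the black hole, so by maximality ${\mathcal S}_{a,b,c}-b/q\subset{\mathcal S}_{a,b,c}$; but shifting the smallest element of ${\mathcal S}_{a,b,c}\cap[0,a)$ down by $b/q$ produces a point not in ${\mathcal S}_{a,b,c}$, a contradiction. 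Part (ii) is the same with the shift $+b/q$. This is the missing idea your proposal needs.

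Part (iii) is genuinely harder than (i) and (ii), and your one-line reduction does not capture it. The paper first uses (i) and (ii) to force $[c_0,c_0+b/q)$ and $[c_0+a-b-b/q,c_0+a-b)$ into ${\mathcal S}_{a,b,c}$, then proves a structural claim (the analogue of \eqref{dabc1holes.tm.pf.eq20}) that the holes $(R_{a,b,c})^n([c-c_0,c-c_0+b-a)+a\Z)$ stay as single intervals of length $b-a$ until some $n=D$ with $(R_{a,b,c})^D([c-c_0,c-c_0+b-a)+a\Z)=[c_0+a-b,c_0)+a\Z$. Only with this structure in hand can one trace the $\tilde R_{a,b,c}$-orbit of the point $c_0+a-b-b/(2q)\in{\mathcal S}_{a,b,c}$ back through the $D+1$ steps to land at $-b/(2q)$, contradicting $[-b/q,0)\not\subset{\mathcal S}_{a,b,c}$. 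The ``crossing'' you gesture at is real, but it happens along this explicitly constructed chain of holes, not at a single step.
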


Now supposing that Lemmas \ref{dabc0rationalmax.prop} and \ref{dabc1rationalmax.prop} hold,
we start the proof of Theorem \ref{dabc1discreteholes.tm}, c.f. Examples \ref{rationalexample1} and  \ref{rationalexample2} in Section \ref{gaborirrational.section}.

\begin{proof}[Proof of Theorem \ref{dabc1discreteholes.tm}]\
Let  $\delta\in [0, c_0+a-b] $ and $\delta'\in [c_0-a, 0]$ be so
chosen that
$[\delta', \delta)$ is the maximal interval contained in $\R\backslash {\mathcal S}_{a, b, c}$.
Then $\delta, \delta'\in b\Z/q$ by \eqref{dabcicontinuoustodiscrete.eq1},
and they satisfy \eqref{dabc1discreteholes.tm.eq1}
by the assumption  ${\mathcal S}_{a, b, c}\ne\emptyset$ and Lemma \ref{dabc1rationalmax.prop}, since at least one of two intervals
$[0, b/q)$ and $[-b/q, 0)$ is contained in ${\mathcal S}_{a, b, c}$.
Therefore the first conclusion (i) holds.

Now we
divide the following three cases:   (1) $\delta'=0$ and $\delta\ne 0$; (2)
$\delta=0$ and $\delta'\ne 0$;  and (3) $\delta=\delta'=0$, to verify the conclusions (ii)--(vi).

\smallskip

  \noindent
{\bf Case 1}\ {\em $\delta'=0$ and $\delta\ne 0$}

In this case,
\begin{equation}\label{dabc1discreteholes.tm.pf.case1eq1}
[-b/q, 0)\subset {\mathcal S}_{a, b, c} \quad{\rm and} \quad [0,\delta)\cap {\mathcal S}_{a, b, c}=\emptyset.
\end{equation}
Then
\begin{equation} \label{dabc1discreteholes.tm.pf.case1eq2} [c-c_0-b/q, c-c_0)=R_{a, b,c} [-b/q, 0)\subset
 R_{a, b, c}{\mathcal S}_{a, b, c}=
{\mathcal S}_{a, b, c}
\end{equation}
by  \eqref{rabcnewplus.def}, \eqref{dabc1discreteholes.tm.pf.case1eq1}, and Proposition \ref{rabcbasic1.thm};
\begin{eqnarray} \label{dabc1discreteholes.tm.pf.case1eq3} & & [c-c_0+b-a+\delta , c-c_0+b-a+\delta +b/q)\nonumber\\
& = &
\left\{\begin{array}{ll}
R_{a, b,c} [\delta , \delta +b/q)-a & {\rm if} \ 0<\delta <c_0+a-b \\
R_{a, b,c} [c_0, c_0+b/q) & {\rm if} \ \delta =c_0+a-b
\end{array}\right. \nonumber\\
& \subset &  R_{a, b, c} {\mathcal S}_{a, b, c}=
{\mathcal S}_{a, b, c}
\end{eqnarray}
by  \eqref{rabcnewplus.def}, \eqref{dabc1discreteholes.tm.pf.case1eq1}, Proposition \ref{rabcbasic1.thm},  the maximality
of the interval $[0, \delta)$
 in $\R\backslash{\mathcal S}_{a, b, c}$,
 and the first conclusion in Lemma \ref{dabc1rationalmax.prop};
and
\begin{eqnarray}\label{dabc1discreteholes.tm.pf.case1eq4}
 & & [c-c_0, c-c_0+b-a+\delta )\cap {\mathcal S}_{a, b, c}\nonumber\\
& = & \big (R_{a, b,c} [-a, \delta -a)  
\cup
 [c-c_0, c-c_0+b-a)\big)\cap  {\mathcal S}_{a, b, c}\nonumber\\
 & = & \ R_{a, b,c} ([-a, \delta -a) \cap   {\mathcal S}_{a, b, c})=\emptyset,
\end{eqnarray}
where the first equality follows from \eqref{rabcnewplus.def}, Proposition \ref{rabcbasic1.thm} and the assumption
$[0, \delta)\subset [0, c_0+a-b)$, and
the second equality holds by Propositions  \ref{dabcbasic1.prop} and \ref{rabcinvertibility.prop}.
Thus $[c-c_0, c-c_0+b-a+\delta)$
is a gap (i.e., an interval with empty set intersection with ${\mathcal S}_{a, b, c}$)
  with length $b-a+\delta$
  and boundary intervals of length $b/q$ at each side in the set ${\mathcal S}_{a, b, c}$.

\smallskip
{\em Proof of the conclusion (ii).} \quad Let $N_1$ be the smallest nonnegative integer such that
$(R_{a, b, c})^{N_1}([c-c_0, c+b-c_0-a+\delta)+a\Z) \cap ([c_0+a-b, c_0)+a\Z)\ne \emptyset$ if it exists, and ${N_1}=+\infty$ otherwise.
Mimicking the argument used in the proof of Claim \ref{dabc1holes.lem.pf.claim1}, we have that $N_1<\infty$.
We divide two cases to prove the  conclusion (ii).

\smallskip  {\bf Case 1a}:  $N_1=0$.

  In this case it follows from \eqref{dabc1discreteholes.tm.pf.case1eq2}, \eqref{dabc1discreteholes.tm.pf.case1eq3}
 \eqref{dabc1discreteholes.tm.pf.case1eq4} that
 $(R_{a, b, c})^{N_1}[c-c_0, c-c_0+b-a+\delta)=[c-c_0, c-c_0+b-a+\delta)$
is a gap   of length $b-a+\delta$
with boundary intervals of length $b/q$ at each side in the set ${\mathcal S}_{a, b, c}$.
This, together with $[c_0, c_0+b/q)\subset {\mathcal S}_{a, b, c}$ by Lemma \ref{dabc1rationalmax.prop}, and
the definition of the nonnegative integer $N_1$,
proves that
\begin{equation} \label{dabc1discreteholes.tm.pf.caseeq1eq5}
(R_{a, b, c})^{N_1}[c-c_0, c-c_0+b-a+\delta)+a\Z= [c_0+a-b-\delta, c_0)+a\Z
\end{equation}
and hence the conclusions \eqref{dabc1discreteholes.tm.eq2} and \eqref{dabc1discreteholes.tm.eq3} in the case that  $N_1=0$.

\smallskip  {\bf Case 1b}:  $1\le N_1<\infty$.

 In this case
 \begin{equation}\label{dabc1discreteholes.tm.pf.case1eq6}
  (R_{a, b, c})^n([c-c_0, c+b-c_0-a+\delta)+a\Z)\cap ([c_0+a-b, c_0)+a\Z)=\emptyset\end{equation}
  for all $0\le n\le {N_1}-1$.
 Let us verify that
 \begin{equation}\label{dabc1discreteholes.tm.pf.case1eq7} (R_{a, b, c})^n([c-c_0, c+b-c_0-a+\delta)+a\Z)=[b_n+a-b-\delta, b_n)+a\Z\end{equation}
  with
  \begin{equation}\label{dabc1discreteholes.tm.pf.case1eq8} [b_n+a-b-\delta-b/q, b_n+b/q)\subset [0, c_0+a-b)\cup [c_0, a),\end{equation}
     \begin{equation}\label{dabc1discreteholes.tm.pf.case1eq9-} ([b_n+a-b-\delta,b_n)+a\Z)\cap  {\mathcal S}_{a, b, c}=\emptyset\end{equation}
     and
      \begin{equation}\label{dabc1discreteholes.tm.pf.case1eq9} [b_n+a-b-\delta-b/q,b_n+a-b-\delta)+a\Z,
      [b_n, b_n+b/q)+a\Z\subset {\mathcal S}_{a, b, c}\end{equation}
   for all  $0\le n\le {N_1}-1$.
 For $n=0$, write $(R_{a, b, c})^n([c-c_0, c+b-c_0-a+\delta)+a\Z)=[c-c_0, c+b-c_0-a+\delta)+a\Z= [b_0+a-b-\delta,b_0)+a\Z$ with $b_0\in (0, a]$.
 Then the conclusions \eqref{dabc1discreteholes.tm.pf.case1eq7}, \eqref{dabc1discreteholes.tm.pf.case1eq8},
 \eqref{dabc1discreteholes.tm.pf.case1eq9-}
  and \eqref{dabc1discreteholes.tm.pf.case1eq9}
 for $n=0$
 follow from \eqref{dabcbasic1.lem.eq1}, \eqref{dabc1discreteholes.tm.pf.case1eq2}, \eqref{dabc1discreteholes.tm.pf.case1eq3},
 \eqref{dabc1discreteholes.tm.pf.case1eq4} and \eqref{dabc1discreteholes.tm.pf.case1eq6}.
 Inductively we assume that the conclusions \eqref{dabc1discreteholes.tm.pf.case1eq7}, \eqref{dabc1discreteholes.tm.pf.case1eq8},
 \eqref{dabc1discreteholes.tm.pf.case1eq9-} and \eqref{dabc1discreteholes.tm.pf.case1eq9} hold
 for all $0\le n\le k\le {N_1}-2$.
 Then for $n=k+1$,
\begin{eqnarray*}
  & & (R_{a, b, c})^{n}([c-c_0, c+b-c_0-a)+a\Z)\nonumber\\
  & = & R_{a, b, c} [b_k+a-b-\delta, b_k)+a\Z \quad {\rm (by \  \eqref{dabc1discreteholes.tm.pf.case1eq7} \ with} \ n=k)\nonumber\\
  & = & [R_{a, b, c}(b_{k}+a-b-\delta), R_{a, b, c}(b_{k}+a-b-\delta)+b-a+\delta)+a\Z\nonumber\\
   & & \qquad {\rm (by \  \eqref{dabc1discreteholes.tm.pf.case1eq8} \ with} \ n=k)\nonumber\\
 &  =: &  [b_{k+1}+a-b-\delta, b_{k+1})+a\Z
\end{eqnarray*}
 for some $b_{k+1}\in (0, a]$,
 \begin{eqnarray*}
  & & ([b_{k+1}+a-b-\delta, b_{k+1})+a\Z)\cap {\mathcal S}_{a, b, c}\nonumber\\
  & = & R_{a, b, c}\big( ([b_k+a-b-\delta, b_k)+a\Z)\cap {\mathcal S}_{a, b, c}\big)\nonumber\\
  & &   \quad {\rm (by \ \eqref{dabc1discreteholes.tm.pf.case1eq6},  \eqref{dabc1discreteholes.tm.pf.case1eq8}\ for} \ n=k,
\ {\rm and \ Proposition\ \ref{rabcinvertibility.prop}})\\
  & = & \emptyset,
\end{eqnarray*}
\begin{eqnarray*}  & & [b_{k+1}+a-b-\delta-b/q, b_{k+1}+a-b-\delta)+a\Z\nonumber\\
 & = &  [ R_{a, b, c}(b_{k}+a-b-\delta)-b/q,
R_{a, b, c}(b_{k}+a-b-\delta))+a\Z\nonumber\\
&= &  R_{a, b, c}[b_{k}+a-b-\delta-b/q, b_{k}+a-b-\delta)+a\Z\nonumber\\
 & & \qquad {\rm (by \  \eqref{dabc1discreteholes.tm.pf.case1eq9-} \ with} \ n=k)\nonumber\\
 &=& (R_{a, b, c})^{k+1} [c-c_0-b/q, c-c_0)+a\Z
\subset {\mathcal S}_{a, b, c} \quad {\rm  (by \ \eqref{dabc1discreteholes.tm.pf.case1eq2})}
  \end{eqnarray*}
  and similarly
\begin{eqnarray*}   & &  [b_{k+1}, b_{k+1}+b/q)+a\Z\nonumber\\
&= &  (R_{a, b, c})^{k+1} [c-c_0+b-a+\delta , c-c_0+b-a+\delta +b/q) +a\Z\subset {\mathcal S}_{a, b, c}
  \end{eqnarray*}
 by \eqref{dabc1discreteholes.tm.pf.case1eq3}, \eqref{dabc1discreteholes.tm.pf.case1eq7},
  \eqref{dabc1discreteholes.tm.pf.case1eq8} and \eqref{dabc1discreteholes.tm.pf.case1eq9} for $n=k$,  the definition
 \eqref{rabcnewplus.def} of the transformation
 $R_{a, b, c}$, and the invariance  \eqref{dabcbasic1.lem.eq1} of the set ${\mathcal S}_{a, b, c}$ under the transformation
 $R_{a, b, c}$.
 This together with \eqref{dabc1discreteholes.tm.pf.case1eq6} completes the inductive proof
of \eqref{dabc1discreteholes.tm.pf.case1eq7}, \eqref{dabc1discreteholes.tm.pf.case1eq8}, \eqref{dabc1discreteholes.tm.pf.case1eq9-},
 and  \eqref{dabc1discreteholes.tm.pf.case1eq9}.

Recall that $ (R_{a, b, c})^{N_1}[c-c_0, c-c_0+b-a+\delta)\cap [c_0+a-b, c_0)+a\Z\ne \emptyset$
and $[c_0, c_0+b/q)+a\Z\subset {\mathcal S}_{a, b, c}$ by the first conclusion in Lemma \ref{dabc1rationalmax.prop}.
Then applying \eqref{dabc1discreteholes.tm.pf.case1eq7} and \eqref{dabc1discreteholes.tm.pf.case1eq8}  with $n=N_1-1$, we obtain that
 \begin{equation}\label{dabc1discreteholes.tm.pf.case1eq10} (R_{a, b, c})^{N_1}[c-c_0, c-c_0+b-a+\delta)+a\Z
 =[c_0+a-b-\delta, c_0)+a\Z, \end{equation}
 because
  \begin{eqnarray}\label{dabc1discreteholes.tm.pf.case1eq10+}
  & & (R_{a, b, c})^{N_1}([c-c_0, c-c_0+b-a+\delta)+a\Z)\cap
   {\mathcal S}_{a, b, c}\\
& = & R_{a, b, c}\big( ([b_{N_1-1}+a-b-\delta, b_{N_1-1}+a-a)+a\Z)\cap {\mathcal S}_{a, b, c}\big)=\emptyset \nonumber
 \end{eqnarray}
 and
 \begin{eqnarray}\label{dabc1discreteholes.tm.pf.case1eq11} & &  [c_0+a-b-\delta-b/q,  c_0+a-b-\delta)+a\Z\\
 &= &
 R_{a, b, c} ([b_{N_1-1}-b+a-b/q, b_{N_1-1}-b+a)+a\Z)\subset {\mathcal S}_{a, b, c}.\nonumber
 \end{eqnarray}
Notice that
\begin{eqnarray}\label{dabc1discreteholes.tm.pf.case1eq11+}  & &  [b_n, b_n+b/q)+a\Z=R_{a, b, c} ([b_{n-1}, b_{n-1}+b/q)+a\Z\\
& = &\cdots=(R_{a, b, c})^n (c-c_0+b-a+\delta)+[0,b/q)+a\Z.\nonumber
\end{eqnarray}
 The conclusion (ii), particularly the equalities in
 \eqref{dabc1discreteholes.tm.eq2} and \eqref{dabc1discreteholes.tm.eq3}, in the case that $1\le N_1<\infty$
 then follows from \eqref{dabc1discreteholes.tm.pf.case1eq7}, \eqref{dabc1discreteholes.tm.pf.case1eq10}
 and \eqref{dabc1discreteholes.tm.pf.case1eq11+}.

%

  \bigskip

{\em Proof of the conclusion (iii).}\quad Let $N_2$ be the minimal nonnegative integer such that
$(R_{a, b, c})^{N_2}([c_0+a-b-\delta, c_0+a-b)+a\Z)\cap ([0, \delta)+a\Z)\ne \emptyset$
if it exists and $N_2=+\infty$ otherwise. Mimicking the argument used to  prove
Claim \ref{dabc1holes.lem.pf.claim1}, we have that $N_2<\infty$.
We divide three cases to verify \eqref{dabc1discreteholes.tm.eq4} and \eqref{dabc1discreteholes.tm.eq5}.

\smallskip  {\bf Case 1c}:  $N_2=0$.

In this case, it follows from
\eqref{dabc1discreteholes.tm.pf.case1eq1} and \eqref{dabc1discreteholes.tm.pf.case1eq11}
that
\begin{equation} \label{dabc1discreteholes.tm.pf.case1eq13}
 [c_0+a-b-\delta, c_0+a-b)+a\Z=[0, \delta)+a\Z,\end{equation}
 and hence \eqref{dabc1discreteholes.tm.eq4} and \eqref{dabc1discreteholes.tm.eq5} hold for $N_2=0$.

 \smallskip  {\bf Case 1d}:  $N_2=1$.

In this case, we have that
\begin{equation}\label{dabc1discreteholes.tm.pf.case1eq14}  R_{a, b, c} [c_0+a-b-\delta, c_0+a-b)+a\Z
=[\tilde b_1-\delta, \tilde b_1)+a\Z\end{equation}
for some $\tilde b_1\in (0, a]$ with $\tilde b_1-\delta-R_{a, b, c}(c_0+a-b-\delta)\in a\Z$.
 Recall that $[c_0+a-b-\delta-b/q,c_0+a-b-\delta)\subset {\mathcal S}_{a, b, c}$
and $[c_0, c_0+b/q)\subset {\mathcal S}_{a, b, c}$ by \eqref{dabc1discreteholes.tm.pf.case1eq1},
\eqref{dabc1discreteholes.tm.pf.case1eq11} and
Lemma \ref{dabc1rationalmax.prop},
we then obtain from \eqref{dabc1discreteholes.tm.pf.case1eq14} and Proposition \ref{rabcbasic1.thm} that
\begin{equation}\label{dabc1discreteholes.tm.pf.eq14} [\tilde b_1-\delta-b/q, \tilde b_1-\delta)+a\Z\subset {\mathcal S}_{a, b, c}
\  {\rm and} \ [\tilde b_1, \tilde b_1+b/q)+a\Z\subset {\mathcal S}_{a, b, c}.\end{equation}
Therefore $[\tilde b_1-\delta, \tilde b_1)$
is a  gap 
  of length $\delta$
  with boundary intervals of length $b/q$  at each side in the set ${\mathcal S}_{a, b, c}$.
  Thus
  \begin{equation*} [\tilde b_1-\delta, \tilde b_1)\cap ([c_0+a-b, c_0)+a\Z)=\emptyset
  \end{equation*}
as the gap containing $[c_0+a-b, c_0)$ is $(R_{a, b,c})^{N_1}[c-c_0,
c-c_0+b-a+\delta)+a\Z$ which has length $b-a+\delta$ and  boundary
intervals of length $b/q$  at each side in ${\mathcal S}_{a, b, c}$.
 By the definition of the nonnegative  integer $N_2$,
$([\tilde b_1-\delta, \tilde b_1)+a\Z)\cap ([0, \delta)+a\Z)\ne
\emptyset $. This together with
\eqref{dabc1discreteholes.tm.pf.eq14} and $[-b/q, 0)\in {\mathcal S}_{a, b, c}$ implies that $\tilde b_1=\delta$ and
\begin{equation}\label{dabc1discreteholes.tm.pf.eq15}
R_{a, b, c}[ c_0+a-b-\delta, c_0+a-b)+a\Z=[0, \delta)+a\Z.
\end{equation}
 The conclusion \eqref{dabc1discreteholes.tm.eq4} and \eqref{dabc1discreteholes.tm.eq5} for $N_2=1$ follow from
 \eqref{dabc1discreteholes.tm.pf.case1eq14}
 and \eqref{dabc1discreteholes.tm.pf.eq15}.

  \smallskip  {\bf Case 1e}:  $2\le N_2<+\infty$.

In this case, following the arguments for the first conclusion of this theorem and the arguments in the case that $N_2=1$, 
we may inductively show that
\begin{equation}\label{dabc1discreteholes.tm.pf.eq16}
(R_{a, b, c})^m ([c_0+a-b-\delta, c_0+a-b)+a\Z)=[\tilde b_m-\delta, \tilde b_m)+a\Z, 1\le m\le N_2-1,\end{equation}
for some $\tilde b_m\in (0, a]$ with $[\tilde b_m-\delta-b/q, \tilde b_m+b/q)\subset [0, c_0+a-b)\cup [c_0, a)$,
$[\tilde b_m-\delta-b/q, \tilde b_m-\delta)\subset {\mathcal S}_{a, b, c}$ and $[\tilde b_m, \tilde b_m+b/q)\subset {\mathcal S}_{a, b, c}$.
Applying \eqref{dabc1discreteholes.tm.pf.eq16} with $m=N_2-1$, and recalling the definition of the integer $N_2$
we obtain that
\begin{equation}\label{dabc1discreteholes.tm.pf.eq17} (R_{a, b, c})^{N_2}( [c_0+a-b-\delta, c_0+a-b)+a\Z)=[0, \delta)+a\Z
\end{equation}
and
\begin{equation}\label{dabc1discreteholes.tm.pf.eq18} [-b/q, 0), [\delta, \delta+b/q)\in {\mathcal S}_{a, b, c}.
\end{equation}
Therefore the conclusions \eqref{dabc1discreteholes.tm.eq4} and \eqref{dabc1discreteholes.tm.eq5} for $2\le N_2<\infty$ follow from
 \eqref{dabc1discreteholes.tm.pf.eq16},   \eqref{dabc1discreteholes.tm.pf.eq17}
 and \eqref{dabc1discreteholes.tm.pf.eq18}.



  \smallskip

  {\em Proof of the conclusion (iv).}\quad
First we prove  the mutually disjoint property for
$ (R_{a, b, c})^{n}([c-c_0, c+b-c_0-a+\delta)+a\Z), 0\le n\le N_1$, i.e.,
  \begin{equation}\label{dabc1discreteholes.tm.pf.case1eq12}
  (R_{a, b, c})^{n_1}([c-c_0, c+b-c_0-a+\delta)+a\Z)\cap (R_{a, b, c})^{n_2}([c-c_0, c+b-c_0-a+\delta)+a\Z)=\emptyset
  \end{equation}
  for all $0\le n_1\ne n_2\le N_1$.
 By \eqref{dabc1discreteholes.tm.pf.case1eq7}--\eqref{dabc1discreteholes.tm.pf.case1eq11}
 and Proposition \ref{blackholestwo.prop},
$ (R_{a, b, c})^{n}[c-c_0, c-c_0+b-a+\delta), 0\le n\le N_1$, are gaps
  with length $b-a+\delta$
  and boundary intervals of length $b/q$ at each side  contained in the set ${\mathcal S}_{a, b, c}$.
Then  for any $0\le n_1\ne n_2\le N_1$, either $(R_{a, b, c})^{n_1}([c-c_0, c+b-c_0-a+\delta)+a\Z)=
 (R_{a, b, c})^{n_2}([c-c_0, c+b-c_0-a+\delta)+a\Z)$
 or $(R_{a, b, c})^{n_1}([c-c_0, c+b-c_0-a+\delta)+a\Z)\cap (R_{a, b, c})^{n_2}([c-c_0, c+b-c_0-a+\delta)+a\Z)=\emptyset$.
 Suppose that $(R_{a, b, c})^{n_1}([c-c_0, c+b-c_0-a+\delta)+a\Z)= (R_{a, b, c})^{n_2}([c-c_0, c+b-c_0-a+\delta)+a\Z)$
 for some $0\le n_1<n_2\le N_1$. Then $n_2\le N_1-1$ by \eqref{dabc1discreteholes.tm.pf.case1eq6} and
  \eqref{dabc1discreteholes.tm.pf.case1eq10}.
  Thus by \eqref{dabc1discreteholes.tm.pf.case1eq6} and the one-to-one correspondence of
  the transformation $R_{a, b, c}$ on the complement of $[c_0+a-b, c_0)+a\Z$, we have that
  $(R_{a, b, c})^{n_2-n_1}[c-c_0, c-c_0+b-a+\delta)+a\Z =[c-c_0, c-c_0+b-a+\delta)+a\Z$,
  which contradicts to the range property \eqref{rangeoftransformations}
  as $ ((R_{a, b, c})^{n_2-n_1}[c-c_0, c-c_0+b-a+\delta)+a\Z\big)\cap ([c_0+a-b, c_0)+a\Z)=\emptyset$
  by \eqref{dabc1discreteholes.tm.pf.case1eq6}. This proves \eqref{dabc1discreteholes.tm.pf.case1eq12}.

  Next we verify that
 $(R_{a, b,c})^m ([c_0+a-b-\delta, c_0+a-b)+a\Z), 0\le m\le N_2$, are mutually disjoint when $1\le N_2<\infty$.
Recall that
  $(R_{a, b, c})^{m}( [c_0+a-b-\delta, c_0+a-b)+a\Z), 1\le m\le N_2$, are gaps
  of length $\delta$
    with boundary  intervals of length $b/q$ on each side
     contained in the set ${\mathcal S}_{a, b, c}$ by \eqref{dabc1discreteholes.tm.pf.eq15}, \eqref{dabc1discreteholes.tm.pf.eq16}
and \eqref{dabc1discreteholes.tm.pf.eq17}.
  Therefore  any two of  gaps
  $(R_{a, b, c})^{m}( [c_0+a-b-\delta, c_0+a-b)+a\Z), 1\le m\le N_2$,
  are either identical or has empty-set intersection.
  If there exist $1\le m_1< m_2\le N_2$ such that gaps
   $(R_{a, b, c})^{m_1}( [c_0+a-b-\delta, c_0+a-b)+a\Z)$ and $(R_{a, b, c})^{m_2}( [c_0+a-b-\delta, c_0+a-b)+a\Z)$ are identical,
   then the front parts of those gaps of length $\min(\delta, b-a)$ should be  identical too, i.e.,
    $(R_{a, b, c})^{m_1}( [c_0+a-b-\delta, c_0+a-b-\delta+\min(\delta, b-a))+a\Z=(R_{a, b, c})^{m_2}
    ( [c_0+a-b-\delta, c_0+a-b-\delta+\min(\delta, b-a))+a\Z$.
    Recall from  \eqref{dabc1discreteholes.tm.pf.caseeq1eq5}
    and \eqref{dabc1discreteholes.tm.pf.case1eq10} that
    $(R_{a, b, c})^{m}( [c_0+a-b-\delta, c_0+a-b-\delta+\min(\delta, b-a)+a\Z)= (R_{a, b, c})^{m+N_1}
    ([c-c_0, c-c_0+\min(\delta, b-a))+a\Z), 0\le m\le N_2$
    and
    $(R_{a, b, c})^{n}
    ([c-c_0, c-c_0+\min(\delta, b-a))+a\Z)\cap ([c_0+a-b, c_0)+a\Z)=\emptyset$ for all $0\le n\le N_1+N_2$.
    Therefore
   by  the one-to-one correspondence of the transformation
 $R_{a, b, c}$ on the complement of its black hole  in Proposition \ref{rabcinvertibility.prop},
 $(R_{a, b, c})^{m_2-m_1}[c-c_0, c-c_0+\min(\delta, b-a) )+a\Z =[c-c_0, c-c_0+\min(\delta, b-a))+a\Z$, which contradicts to
 the range property \eqref{rangeoftransformations} of the transformation $R_{a, b, c}$.
  This proves that
  $(R_{a, b, c})^{m}( [c_0+a-b-\delta, c_0+a-b)+a\Z), 1\le m\le N_2$, are mutually disjoint.

 From the above argument, we see  
that $(R_{a, b,c})^n ([c-c_0, c-c_0+b-a+\delta)+a\Z), 0\le n\le N_1$, are mutually disjoint,
and  that  $(R_{a, b, c})^{m}( [c_0+a-b-\delta, c_0+a-b)+a\Z), 1\le m\le N_2$, are mutually disjoint.
Then verification of the mutually disjoint property in the conclusion (iv)
reduces to showing that   any gap of the form $(R_{a, b,c})^n ([c-c_0, c-c_0+b-a+\delta)+a\Z), 0\le n\le N_1$,
has empty intersection with any gap of the form
 $(R_{a, b, c})^{m}( [c_0+a-b-\delta, c_0+a-b)+a\Z), 1\le m\le N_2$. This is true
because
  $(R_{a, b, c})^{n}[c-c_0, c-c_0+b-a+\delta), 0\le n\le N_1-1$, are gaps of length $b-a+\delta$
 with boundary  intervals of length $b/q$ at each side contained in the set ${\mathcal S}_{a, b, c}$ by
 \eqref{dabc1discreteholes.tm.pf.case1eq7}--\eqref{dabc1discreteholes.tm.pf.case1eq11}
 and
  $(R_{a, b, c})^{m}( [c_0+a-b-\delta, c_0+a-b)+a\Z), 1\le m\le N_2$, are gaps
  of length $\delta$
    with boundary  intervals of length $b/q$ on each side
     contained in the set ${\mathcal S}_{a, b, c}$ by \eqref{dabc1discreteholes.tm.pf.eq15}, \eqref{dabc1discreteholes.tm.pf.eq16}
and \eqref{dabc1discreteholes.tm.pf.eq17}.

  \smallskip {\em Proof of the conclusion (v).}\quad
Write $\delta=l(b-a)+\tilde \delta$ for some $0\le l\in \Z$ and $\tilde \delta\in (0, b-a]$.
From \eqref{dabc1discreteholes.tm.eq2}--\eqref{dabc1discreteholes.tm.eq5}, we obtain that
\begin{eqnarray*}
 & & (R_{a, b, c})^n([c-c_0, c-c_0+b-a)+a\Z)\nonumber\\
\qquad & = & \left\{ \begin{array}{l}
b_{n-\tilde l(N_1+N_2+1)}+\tilde l(b-a)-\delta+[a-b, 0)+a\Z \\
\quad  {\rm if} \ 0\le n- \tilde l(N_1+N_2+1)\le N_1 \ {\rm for \ some} \ 0\le \tilde l\le l, \\
\tilde b_{n- \tilde l(N_1+N_2+1)-N_1}+\tilde l (b-a)-\delta+[a-b, 0)+a\Z \\
\quad
{\rm if} \ 1\le n- \tilde l(N_1+N_2+1)-N_1\le N_2 \ {\rm for \ some} \ 0\le \tilde l\le l-1, \\
(({\tilde b}_{n-  l(N_1+N_2+1)-N_1}+[-\tilde \delta, 0))\cup [c_0+a-b, c_0-\tilde \delta))+a\Z \\
\quad
{\rm if} \ 1\le n-  l(N_1+N_2+1)-N_1\le N_2 ,\\
((b_{n-  (l+1)(N_1+N_2+1)}+[-\tilde \delta,0))\cup [c_0+a-b, c_0-\tilde \delta))+a\Z \\
\quad
{\rm if} \ 0\le n-  (l+1)(N_1+N_2+1)\le N_1,
\end{array}
\right.
\end{eqnarray*}
where $b_n=
(R_{a, b, c})^{n}(c-c_0+b-a+\delta),0\le n\le N_1$,
and $\tilde b_m=(R_{a, b, c})^{m}(c_0), 1\le m\le N_2$, c.f. Example \ref{rationalexample1} in Section \ref{gaborirrational.section}.
Therefore 
\begin{eqnarray}\label{blacktoblack.eq}
 & & (\cup_{n=0}^{N_1} ((R_{a, b, c})^n[c-c_0, c-c_0+b-a+\delta)+a\Z)\nonumber\\
 & &  \quad  \cup (\cup_{m=1}^{N_2}
(R_{a, b, c})^m[c_0+a-b-\delta, c_0+a-b)+a\Z)\nonumber\\
& & = \cup_{n=0}^{N_1+N_2} (R_{a, b, c})^n([c-c_0, c-c_0+b-a+\delta)+a\Z)\nonumber\\
& &  =
\cup_{n=0}^{(l+1)(N_1+N_2+1)+N_1} (R_{a, b, c})^n[c-c_0, c-c_0+b-a)+a\Z,\end{eqnarray}
and
\begin{equation}\label{fromblackholetoblackhole1}
(R_{a, b, c})^{(l+1)(N_1+N_2+1)+N_1}[c-c_0, c-c_0+b-a)+a\Z=[c_0+a-b, c_0)+a\Z.\end{equation}
Hence the union of  the  gaps
$((R_{a, b, c})^n[c-c_0, c-c_0+b-a+\delta)+a\Z), 0\le n\le N_1$,
and $(R_{a, b, c})^m[c_0+a-b-\delta, c_0+a-b)+a\Z), 1\le m\le N_2$, is invariant under the transformation $R_{a, b, c}$
and contains the black holes of the transformations $R_{a, b, c}$ and $\tilde R_{a, b, c}$.
It also indicates that any points not in that union will not be in that union under the transformation $R_{a, b, c}$.
 This together with Lemma \ref{dabc0rationalmax.prop} proves \eqref{dabc1discreteholes.tm.eq7} and hence the conclusion (v).

 \smallskip {\em Proof of the conclusion (vi).}\quad
From the arguments to prove the conclusions (ii) and (iii),  the boundary intervals of the mutually disjoint gaps $((R_{a, b, c})^n[c-c_0, c-c_0+b-a+\delta)+a\Z), 0\le n\le N_1$
and $(R_{a, b, c})^m[c_0+a-b-\delta, c_0+a-b)+a\Z), 1\le m\le N_2$,
of length $b/q$ at each side are contained in the set ${\mathcal S}_{a, b, c}$.
Therefore the set ${\mathcal S}_{a, b, c}$ is the union of intervals $[b_n, b_n+h_n)+a\Z, 0\le n\le N_1$
and $[\tilde b_m, \tilde b_m+\tilde h_m)+a\Z$, where $0<h_n\in b\Z/q, 0\le n\le N_1$, and $0<\tilde h_m\in a\Z/q, 1\le m\le N_2$,
 are chosen so that $ [b_n+h_n, b_n+h_n+b/q)+a\Z, 0\le n\le N_1$ and
$ [\tilde b_m+\tilde h_m, \tilde b_m+\tilde h_m+b/q)+a\Z, 1\le m\le N_2$,
are contained in gaps.
As $[0, \delta)+a\Z$ and $[c_0+a-b, c_0)+a\Z$ are contained in the union of the mutually disjoint gaps, each of  the intervals
$[b_n, b_n+h_n)+a\Z, 0\le n\le N_1$,
and $[\tilde b_m, \tilde b_m+\tilde h_m)+a\Z, 1\le m\le N_2$, is contained  either  in $[0, c_0+a-b)+a\Z$ or  in $[c_0, a)+a\Z$,
and its boundary interval of length $b/q$ at each side is not contained in the set ${\mathcal S}_{a, b, c}$.
Recall that $b_n-(R_{a, b,c})^n(c-c_0+b-a+\delta)\in a\Z, 0\le n\le N_1$,
and $\tilde b_m- (R_{a, b,c})^{m+N_1}(c-c_0+b-a+\delta)\in a\Z, 1\le m\le N_2$, from the second and third conclusions of this theorem.
Hence  the interval $[b_n, b_n+h_n)+a\Z= (R_{a, b,c})^n [b_0, b_0+h_0)+a\Z$
and $[\tilde b_m, \tilde b_m+\tilde h_m)+a\Z= (R_{a, b,c})^{m+N_1} [b_0, b_0+h_0)+a\Z$. This  together with the
measure-preserving property in Proposition \ref{rabcbasic1.thm} implies that
the length of intervals contained in the set ${\mathcal S}_{a, b, c}$ are the same, i.e.,
\begin{equation} h_n=\tilde h_m=h\ {\rm for \ all} \  0\le n\le N_1\ {\rm and} \ 1\le m\le N_2
\end{equation}
where $0<h\in a\Z/q$. Hence \eqref{dabc1discreteholes.tm.eq8} holds.
Finally we verify \eqref{dabc1discreteholes.tm.eq9}. Note that   the measure of the gaps
contained in $[0, a)$ is equal to $(N_1+1)(b-a+\delta)+ N_2 \delta$, while the measure of the intervals
contained in ${\mathcal S}_{a, b, c}\cap [0, a)$
is $(N_1+N_2+1) h$. Therefore
\begin{equation} (N_1+N_2+1) h+ (N_1+1) (b-a+\delta)+N_2\delta=a.
\end{equation}
Thus \eqref{dabc1discreteholes.tm.eq9} follows.
This completes the proof of the conclusion (ii)--(vi) for the case that $\delta'=0$ and $\delta\ne 0$.

\bigskip

  {\bf Case 2} \  {\em $\delta=0$ and $\delta'\ne 0$}.

We follow the argument used  for  Case 1 to prove the desired conclusion. We  omit the details  here.

\bigskip

{\bf Case 3}\ {\em $\delta=\delta'=0$.} 

In this case, we have that
\begin{equation}\label{dabc1discreteholes.tm.pf.case1}
[-b/q, b/q)\subset {\mathcal S}_{a, b, c}.
\end{equation}
Then
\begin{equation*} [c-c_0-b/q, c-c_0)\subset {\mathcal S}_{a, b, c} \ {\rm and}
\ [c+b-c_0-a, c+b-c_0-a+b/q)\subset {\mathcal S}_{a, b, c}
\end{equation*}
as  $[c-c_0-b/q, c-c_0)=R_{a, b, c} [-b/q, 0)$
and  $[c+b-c_0-a, c+b-c_0-a+b/q)=R_{a, b, c} [0, b/q)$ and
 the set ${\mathcal S}_{a, b, c}$  is invariant under the transformation $R_{a, b, c}$ by
  Proposition \ref{rabcbasic1.thm}.

\bigskip
  {\em Proof of the conclusion (ii).}\quad
Let $N_1$ be the smallest nonnegative integer such that
$(R_{a, b, c})^{N_1}([c-c_0, c+b-c_0-a)+a\Z) \cap ([c_0+a-b, c_0)+a\Z)\ne \emptyset$ if it exists, and ${N_1}=+\infty$ otherwise.
Following the argument in Case 1,  we can show that $N_1<+\infty$, and
\begin{equation}\label{dabc1discreteholes.tm.pf.eq4***}
(R_{a, b, c})^n([c-c_0, c+b-c_0-a)+a\Z), 0\le n\le N_1, {\rm are\ mutually\ disjoint},
\end{equation} 
 \begin{equation}\label{dabc1discreteholes.tm.pf.eq4**} (R_{a, b, c})^n([c-c_0, c+b-c_0-a)+a\Z)=[b_n-b+a, b_n)+a\Z\end{equation}
  for some $b_n\in (0, a)$ with  $b_n-(R_{a, b, c})^n(c+b-c_0-a)\in a\Z$,  $[b_n-b+a-b/q, b_n+b/q)\subset [0, c_0+a-b)\cup [c_0, a)$,
  \begin{equation}\label{dabc1discreteholes.tm.pf.eq5**} [b_n+a-b-b/q, b_n+a-b)+a\Z, [b_n, b_n+b/q)+a\Z
  \subset {\mathcal S}_{a, b, c}\end{equation}
  for $0\le n\le N_1$, and
  \begin{equation}\label{dabc1discreteholes.tm.pf.eq6**} [b_{N_1}+a-b, b_{N_1})=[c_0+a-b, c_0).\end{equation}
  Therefore the conclusion (ii) follows.

 {\em Proof of the conclusion (iii)${}^\prime$.}\quad
 Let $N_2$ be the smallest positive integer such that $(R_{a, b, c})^{N_2}(c_0)\in a\Z$ if it exists and $N_2=+\infty$ otherwise.
To prove (iii)${}^\prime$, it suffices to prove that
\begin{equation} \label{N2neinftydelta0} N_2<+\infty.\end{equation}
To prove \eqref{N2neinftydelta0}, we need the following mutually disjoint property  when $N_2<\infty$:
\begin{equation}\label{dabc1discreteholes.tm.pf.eq10**} (R_{a, b, c})^m(c_0)+[-b/(2q),
b/(2q))+a\Z, 0\le m\le N_2, \ {\rm are\ mutually\ disjoint}.\end{equation}

\begin{proof} Suppose, on the contrary, that the  mutual disjoint property in \eqref{dabc1discreteholes.tm.pf.eq10**} does not hold.
Then there exist $1\le m_1<m_2\le N_2$ such that
 \begin{equation}\label{dabc1discreteholes.tm.pf.eq11**} (R_{a, b, c})^{m_1}(c_0)+a\Z=(R_{a, b, c})^{m_2}(c_0)+a\Z.\end{equation}
This implies that $m_2<N_2$ by the definition of the integer $N_2$.
Recall  that $c_0\in {\mathcal S}_{a, b, c}$ by applying  \eqref{dabc1discreteholes.tm.pf.eq5**} and \eqref{dabc1discreteholes.tm.pf.eq6**} with $n=N_1$.
Then  $(R_{a, b, c})^{m}(c_0)\in {\mathcal S}_{a, b, c}$ by
 Proposition \ref{rabcbasic1.thm}. This, together with
the one-to-one correspondence for the transformation $R_{a, b, c}$ onto ${\mathcal S}_{a, b, c}$ given in
 Proposition \ref{rabcbasic1.thm}, leads to
  \begin{equation}\label{dabc1discreteholes.tm.pf.eq11***}c_0+a\Z=(R_{a, b, c})^{m_2-m_1}(c_0)+a\Z\end{equation}
   by \eqref{dabc1discreteholes.tm.pf.eq11**}.
Recall that $c_0\in (R_{a,b,c})^{N_1+1}(0)+a\Z$.
Then applying  the one-to-one correspondence of the transformation $R_{a, b, c}$ on the invariant set ${\mathcal S}_{a, b, c}$ again, we obtain from
\eqref{dabc1discreteholes.tm.pf.eq11***}  that
$(R_{a, b, c})^{m_2-m_1}(0)\in a\Z$.
This is a contradiction as $m_2-m_1\le N_1-1$ and
  $N_1+N_2+1$ is the
smallest positive integer $n$ such that $(R_{a,b,c})^{n}(0)\in a\Z$.
\end{proof}

Now we prove that $N_2\le p-1$. Suppose on the contrary that  $N_2\ge p$. Following  the argument in the proof of \eqref{dabc1discreteholes.tm.pf.eq10**}, we obtain
that
$(R_{a, b, c})^m(c_0)+[-b/(2q),
b/(2q))+a\Z, 0\le m\le p$ are mutually disjoint. This is a contradiction as there are at most $p$ elements in the set $[0, a)\cap b\Z/q$,
and hence proves the conclusion (iii)${}^\prime$.

\bigskip
 {\em Proof of the conclusion (iv)${}^\prime$.}\quad
By   \eqref{dabc1discreteholes.tm.pf.eq4***} and  \eqref{dabc1discreteholes.tm.pf.eq10**},
 verification of the mutually joint property in the conclusion (iv)${}^\prime$
reduces to showing that  $(R_{a, b, c})^m(c_0)\not\in \cup_{n=0}^{N_1}
(R_{a, b,c})^n ([c-c_0, c-c_0+b-a)+a\Z)$ for all $0\le m\le N_2$,  which is true
 as $(R_{a, b, c})^m(c_0)\in {\mathcal S}_{a, b, c}$ for all $m\ge 0$ and
 $(R_{a, b,c})^n ([c-c_0, c-c_0+b-a)+a\Z)\cap {\mathcal S}_{a, b, c}=\emptyset$
 for all $0\le n\le N_1$.

\bigskip
 {\em Proof of the conclusion (v).}\quad We can follow the argument in the first case $\delta>0$ and $\delta'=0$, as
 $\cup_{n=0}^{N_1} (R_{a, b,c})^n ([c-c_0, c-c_0+b-a)+a\Z)$ is invariant under the transformation $R_{a, b, c}$ and contains the
 black holes of the transformations $R_{a, b, c}$ and $\tilde R_{a, b, c}$. We omit the details here.

\bigskip
  {\em Proof of the conclusion (vi).}\quad By \eqref{dabc1discreteholes.tm.pf.eq4**}--\eqref{dabc1discreteholes.tm.pf.eq6**},
   gaps of length $b-a$ are $(R_{a, b, c})^n(c+b-c_0-a)+[a-b, 0)+a\Z, 0\le n\le N_1$,
  while  gaps of length zero are located at $(R_{a,b, c})^m(c_0)+a\Z=(R_{a, b, c})^{m+N_1}(c+b-c_0-a)+a\Z, 1\le m\le N_2$.
  By the conclusions (iii)${}^\prime$ and (v), we can divide the set ${\mathcal S}_{a, b, c}$ as the union of disjoint union of  intervals
  who are left-closed and right-open such that each interval has its left endpoint being the same of the right endpoint of a gap,
 each interval has its right endpoint being the same as  the left  endpoint of a gap, and
   each interval has its interior not containing the location of any gaps of length zero; that is
   \begin{equation}\label{delta0.pf.veq1}
   {\mathcal S}_{a, b, c}=\cup_{n=0}^{N_1+N_2}
   [b_n,
 b_n+h_n)+a\Z,\end{equation}
   where
   \begin{equation}\label{delta0.pf.veq2} b_n\in (R_{a, b, c})^n(c+b-c_0-a)+a\Z\end{equation}
   and
 \begin{equation}\label{delta0.pf.veq3} b_n+h_n   \in
\{b_{n_1}+a-b+a\Z\}_{n_1=0}^{N_1} \cup \{(R_{a, b,
c})^{m_1}(c_0)+a\Z\}_{m_1=1}^{N_2}\end{equation} for all $0\le
n\le N_1+N_2$, and
\begin{equation}\label{delta0.pf.veq4}
(R_{a,b, c})^m(c_0)\not\in \cup_{n=0}^{N_1+N_2} (b_n, b_n+h_n)+a\Z
\end{equation}
for all $1\le m\le N_2$.
Therefore the proof of \eqref{dabc1discreteholes.tm.eq8} reduces to  establishing
\begin{equation} \label{delta0.pf.veq5}
h_n=h_0, \ 0\le n\le N_1+N_2.
\end{equation}
By the measure-preserving property in Proposition \ref{rabcbasic1.thm} it  suffices to prove that
\begin{equation} \label{delta0.pf.veq6}
[b_{n+1}, b_{n+1}+h_{n+1})+a\Z\subset R_{a, b, c}([b_n, b_n+h_n)+a\Z), 0\le n\le N_1+N_2-1
\end{equation}
and
\begin{equation} \label{delta0.pf.veq7}
[b_{n}, b_{n}+h_{n})+a\Z\subset \tilde R_{a, b, c}([b_{n+1}, b_{n+1}+h_{n+1})+a\Z), 0\le n\le N_1+N_2-1.
\end{equation}
By \eqref{delta0.pf.veq2} and the measure-preserving property in Proposition \ref{rabcbasic1.thm},
we have that $R_{a, b, c}([b_n, b_n+h_n)+a\Z)=[b_{n+1}, b_{n+1}+h_n)+a\Z$.
Observe that for each $0\le n\le N_1+N_2$,  $b_{n+1}+h_n$
is the left endpoint of a gap because
$$b_{n+1}+h_n=\left\{\begin{array}{ll} R_{a, b, c}(b_n+h_n)  & {\rm if}\
b_{n}+h_{n}\not\in \{0, c_0+a-b\}+a\Z\\
R_{a, b, c}(c_0) & {\rm if}\ b_{n}+h_n \in
c_0+a-b+a\Z,\end{array}\right.$$ and $b_{n}+h_n\not\in
b_{N_1+N_2}+h_{N_1+N_2}+a\Z=a\Z$ for all $0\le n\le N_1+N_2-1$ by
the conclusion (iv)${}^\prime$. This together with
\eqref{delta0.pf.veq4} and the fact that $R_{a, b, c}([b_n,
b_n+h_n)+a\Z)\subset {\mathcal S}_{a, b, c}$ proves
\eqref{delta0.pf.veq6}.

Similarly, by \eqref{delta0.pf.veq2} and the measure-preserving
property in Proposition \ref{rabcbasic1.thm}, we have that $0\le n\le
N_1+N_2-1$,
 $\tilde R_{a, b, c}([b_{n+1}, b_{n+1}+h_{n+1})+a\Z)=[b_{n},
b_{n}+h_{n+1})+a\Z$ and  $b_{n}+h_{n+1}$ is the left endpoint of a gap
because
$$b_{n}+h_{n+1}=\left\{\begin{array}{ll} \tilde R_{a, b, c}(b_{n+1}+h_{n+1})  & {\rm if}\
b_{n+1}+h_{n+1}\not\in \{c, c-c_0 \}+a\Z\\
\tilde b_{N_1} & {\rm if}\  b_{n+1}+h_{n+1} \in
c+a\Z,\end{array}\right.$$ and $b_{n+1}+h_{n+1}\not\in c-c_0+
a\Z=b_0-b+a+a\Z$ for all $0\le n\le N_1+N_2-1$ by the conclusion
(iv)${}^\prime$. Thus  \eqref{delta0.pf.veq7} follows. This proves
\eqref{delta0.pf.veq5} and hence
\eqref{dabc1discreteholes.tm.eq8}.

The equation \eqref{dabc1discreteholes.tm.eq9} holds by \eqref{dabc1discreteholes.tm.eq8} and the conclusion (iv)
in this theorem.
\end{proof}

%
Now it remains to prove Lemmas \ref{dabc0rationalmax.prop} and \ref{dabc1rationalmax.prop}.

\begin{proof} [Proof of Lemma \ref{dabc0rationalmax.prop}]  (i)\ \ Clearly it suffices to prove
that 
\begin{equation}\label{dabc0rationalmax.prop.eq1}{\mathcal S}_{a, b, c}=(R_{a, b, c})^D\R\backslash ([c_0+a-b, c_0)+a\Z)\end{equation}
for some nonnegative integer $D$.
For $n\ge 0$, write
\begin{equation}\label{dabc0rationalmax.prop.pf.eq1}
(R_{a, b, c})^{n}\R \backslash ([c_0+a-b, c_0)+a\Z)= A_n+[0, b/q)+a\Z\end{equation}
where $A_n\subset \{0, b/q, \ldots, (p-1)b/q\}\backslash  [c_0+a-b, c_0)$. The existence of such a finite set $A_n$ follows from
 the assumption on the triple $(a, b, c)$ and the definition \eqref{rabcnewplus.def} of the transformation $R_{a, b,c}$.
Clearly, 
\begin{equation}\label {dabc0rationalmax.prop.pf.eq2}
A_{n+1}\subset  A_n, \ n\ge 0.
\end{equation}
Notice that there are at most $2p-q$ elements in the set $\{0, b/q, \ldots, (p-1)b/q\}\backslash  [c_0+a-b, c_0)$.
This together with  \eqref{dabc0rationalmax.prop.pf.eq2} implies that
\begin{equation}\label{dabc0rationalmax.prop.pf.eq3}
 A_{D+1}= A_{D}
\end{equation}
for some   nonnegative integer $D\le 2p-q$. Without loss of
generality, we assume that $D$ is the smallest nonnegative integer
such that \eqref{dabc0rationalmax.prop.pf.eq3} holds. By Propositions
\ref{rabcbasic1.thm} and  \ref{dabcbasic1.prop}, we
have that $ {\mathcal S}_{a, b, c}=(R_{a, b, c})^{D} {\mathcal S}_{a, b,c}\subset (R_{a, b, c})^{D} \R$ and
${\mathcal S}_{a, b,c}\cap ([c_0+a-b, c_0)+a\Z)=\emptyset$. This leads to the
following inclusion
\begin{equation}\label{dabc0rationalmax.prop.pf.eq4}
 {\mathcal S}_{a, b, c}\subset
A_{D}+[0, b/q)+a\Z .\end{equation}

Clearly the conclusion \eqref{dabc0rationalmax.prop.eq1} follows
in the case that $A_D=\emptyset$.
Now we consider the case that $A_D\ne \emptyset$ and  prove
\begin{equation}\label{dabc0rationalmax.prop.pf.eq5}
A_{D}+a\Z \subset {\mathcal S}_{a, b, c}.\end{equation}
For $n\ge 0$, it follows from \eqref{rabcnewplus.def} and \eqref{dabc0rationalmax.prop.pf.eq1}
 that
\begin{eqnarray*}
& & A_{n+1}+[0, b/q)+a\Z \\
& = &  R_{a, b, c}\big ( (R_{a, b, c})^n \R\backslash ([c_0+a-b, c_0)+a\Z)\big)\backslash
([c_0+a-b, c_0)+a\Z)\\
 & =&  R_{a, b, c} (A_n+[0, b/q)+a\Z)\backslash
([c_0+a-b, c_0)+a\Z)\\
 &= & R_{a, b, c}(A_n+a\Z) \backslash ([c_0+a-b, c_0)+a\Z)+[0, b/q).
\end{eqnarray*}
Therefore
\begin{equation}\label{dabc0rationalmax.prop.pf.eq6}
A_{n+1}+a\Z = R_{a, b, c}(A_n+a\Z)\backslash ([c_0+a-b, c_0)+a\Z), \ n\ge 0.
\end{equation}
Applying \eqref{dabc0rationalmax.prop.pf.eq6} with $n=D$  and using
\eqref{dabc0rationalmax.prop.pf.eq3},
 we obtain that
$$
A_{D}+a\Z = A_{D+1}+a\Z= R_{a, b, c}(A_D+a\Z) \backslash ([c_0+a-b, c_0)+a\Z).$$
Thus
\begin{equation}\label{dabc0rationalmax.prop.pf.eq7}
 R_{a, b, c}(A_D)+a\Z =
A_D +a\Z
\end{equation}
as the cardinality of the  sets $A_D$ and
$R_{a, b, c}(A_D+a\Z)\cap [0, a)$ are the same.
This together with \eqref{rangeoftransformations} and \eqref{dabc0rationalmax.prop.pf.eq6} implies that
\begin{equation} \label{dabc0rationalmax.prop.pf.eq8} (A_D+a\Z)\cap ([c-c_0, c+b-c_0+a)+a\Z)=\emptyset . \end{equation}
Hence
\begin{equation}\label{dabc0rationalmax.prop.pf.eq9}
 \tilde  R_{a, b, c}(A_D)+a\Z= A_D+a\Z
\end{equation}
by \eqref{dabc0rationalmax.prop.pf.eq1},  \eqref{dabc0rationalmax.prop.pf.eq7}, \eqref{dabc0rationalmax.prop.pf.eq8}
and Proposition \ref{rabcbasic1.thm}.

Take  $t\in A_D$. Then
$(R_{a, b, c})^n (t)$ and $(\tilde R_{a, b, c})^{n} (t) , n\ge 0$, belong to the set  $A_D+a\Z$ by
\eqref{dabc0rationalmax.prop.pf.eq7} and \eqref{dabc0rationalmax.prop.pf.eq9}, and hence they do not belong to
the black holes of the transformations $R_{a,b, c}$ and $\tilde R_{a, b,c}$. Therefore
 $t\in {\mathcal S}_{a, b, c}$ by Proposition \ref{dabc1pointcharacterization.prop}
and hence \eqref{dabc0rationalmax.prop.pf.eq5} is established.

Notice that
\begin{equation}\label{dabc0rationalmax.prop.pf.eq12}
A_{D}+[0, b/q)+a\Z \subset {\mathcal S}_{a, b, c}
\end{equation}
by  \eqref{sabcrational.eq} and
\eqref{dabc0rationalmax.prop.pf.eq5}. Then
in the case that $A_D\ne \emptyset$,  the conclusion
\eqref{dabc0rationalmax.prop.eq1} follows from \eqref{dabc0rationalmax.prop.pf.eq1},
\eqref{dabc0rationalmax.prop.pf.eq4} and
\eqref{dabc0rationalmax.prop.pf.eq12}.

\bigskip

(ii) The desired minimality follows from
\begin{equation}\label{dabc0rationalmax.prop.pf.eq12}
\R\backslash {\mathcal S}_{a, b, c}= \cup_{n=0}^\infty (R_{a, b, c})^n ([c-c_0, c+b-c_0-a)+a\Z).
\end{equation}
By Proposition \ref{blackholestwo.prop},
\begin{equation*} 
\cup_{n=0}^\infty (R_{a, b, c})^n ([c-c_0, c+b-c_0-a)+a\Z)\subset \R\backslash {\mathcal S}_{a, b, c}.
\end{equation*}
Then it suffices to prove
\begin{equation}\label{dabc0rationalmax.prop.pf.eq13}
 \R\backslash {\mathcal S}_{a, b, c}\subset\cup_{n=0}^L (R_{a, b, c})^n ([c-c_0, c+b-c_0-a)+a\Z)
\end{equation}
with
\begin{equation}\label{dabc0rationalmax.prop.pf.eq15}
(R_{a, b, c})^L([c-c_0, c+b-c_0-a)+a\Z)=[c_0+a-b, c_0)+a\Z\end{equation}
for some nonnegative integer $L\ge 0$, c.f. \eqref{dabc1holes.tm.eq-1} and \eqref{dabc1holes.tm.eq1} in Theorem  \ref{dabc1holes.tm}.
First we prove  that
\begin{equation}\label{dabc0rationalmax.prop.pf.eq14}
(R_{a, b, c})^L([c-c_0, c+b-c_0-a)+a\Z)\subset [c_0+a-b, c_0)+a\Z\end{equation}
for some nonnegative integer $L$.
By
\eqref{dabcicontinuoustodiscrete.eq1} and finite cardinality of  the set $b\Z/q\cap [0, a)$,
we only need  to verify that  for any $t\in [c-c_0, c+b-c_0-a)+a\Z)\cap b\Z/q$ there exists $L_0$ such that
$(R_{a, b, c})^{L_0}(t)\in [c_0+a-b, c_0)+a\Z$. Suppose on the contrary that
$(R_{a, b, c})^n(t)\not\in [c_0+a-b, c_0)+a\Z$ for all $n\ge 0$.
Then $(R_{a, b, c})^n (t)\not\in [c-c_0, c+b-c_0-a)+a\Z, n\ge 1$, by Proposition \ref{rabcinvertibility.prop},
which together with finite cardinality of  the set $b\Z/q\cap [0, a)$
implies  the existences of positive integers $1\le L_1<L_2$ such that $(R_{a, b, c})^{L_1}(t)-(R_{a, b, c})^{L_2}(t)\in a\Z$.
Set $t_0=(R_{a, b, c})^{L_1}(t)$. Then
$(R_{a, b, c})^n (t_0)$ and $(\tilde R_{a, b, c})^n (t_0), n\ge 1$, do not belong to  black holes of the transformations $R_{a, b, c}$
and $\tilde R_{a, b, c}$ as
$(R_{a, b, c})^n (t_0)-(R_{a, b, c})^{m+L_1}(t)\in a\Z$  and
 $(\tilde R_{a, b, c})^{\tilde n} (t_0)-(R_{a, b, c})^{L_2-\tilde m}(t)\in a\Z$
where  $m=n-\lfloor n/(L_2-L_1)\rfloor (L_2-L_1)$,
and $\tilde m=\tilde  n- \lfloor \tilde n/(L_2-L_1)\rfloor (L_2-L_1)$.
Thus $t_0\in {\mathcal S}_{a, b, c}$ by Proposition \ref{rabcbasic1.thm.eq1b}, which contradicts to \eqref{dabc0rationalmax.prop.pf.eq13}.

Next we prove \eqref{dabc0rationalmax.prop.pf.eq15}.
By \eqref{dabc0rationalmax.prop.pf.eq14}, it suffices to prove that
$$
(R_{a, b, c})^{L}(t_1)-(R_{a, b, c})^{L}(t_2)\not\in a\Z
$$
for any distinct $t_1, t_2\in ([c-c_0, c+b-c_0-a)+a\Z)\cap b\Z/q$.
Let $k_1, k_2$ be  minimal nonnegative integers such that
$$(R_{a, b, c})^{k_1}(t_1)=(R_{a, b, c})^{L}(t_1)\quad {\rm and} \quad  (R_{a, b, c})^{k_2}(t_2)=(R_{a, b, c})^{L}(t_2).$$
Without loss of generality, we assume that $k_1\le k_2$.
By one-to-one correspondence of the transformation $R_{a, b, c}$ given in Proposition \ref{rabcinvertibility.prop}
and the selection of integers $k_1$ and $k_2$,
$$t_1= ({\tilde R}_{a, b, c})^{k_1} \big((R_{a, b, c})^{k_1}(t_1)\big)\in  (R_{a, b, c})^{k_2-k_1}(t_2)+a\Z,$$
which is a contradiction by the assumption $t_1\not\in t_2+a\Z$ (if $k_2=k_1$)
and the range property of the transformation $R_{a, b, c}$ in Proposition
\ref{rabcinvertibility.prop} (if $k_2>k_1$). Hence \eqref{dabc0rationalmax.prop.pf.eq15} is proved.

By \eqref{dabc0rationalmax.prop.pf.eq15}, $\R\backslash \big(\cup_{n=0}^L (R_{a, b, c})^n ([c-c_0, c-c_0+b-a)+a\Z)\big)$
has empty intersection with $[c_0+a-b, c_0)+a\Z$, the black  hole  of the transformation $R_{a, b, c}$. Then by the first conclusion
of this lemma, the proof of \eqref{dabc0rationalmax.prop.pf.eq13} reduces to the invariance of the set
$\R\backslash \big(\cup_{n=0}^L (R_{a, b, c})^n ([c-c_0, c-c_0+b-a)+a\Z)\big)$ under the transformation $R_{a, b, c}$.
Suppose, on the contrary, that there exists $t\not\in \cup_{n=0}^L (R_{a, b, c})^n ([c-c_0, c-c_0+b-a)+a\Z)$
such that $R_{a, b, c}(t)\in \cup_{n=0}^L (R_{a, b, c})^n ([c-c_0, c-c_0+b-a)+a\Z)$. Then
$t\not\in [c_0+a-b, c_0)+a\Z$ by \eqref{dabc0rationalmax.prop.pf.eq15},
and $R_{a, b, c}(t)= (R_{a, b, c})^n (s)$ for some $0\le n\le L$ and $s\in [c-c_0, c-c_0+b-a)+a\Z$.
It follows that $n\ge 1$ from the range property of the transformation $R_{a, b, c}$ in Proposition \ref{rabcinvertibility.prop}.
If we further select the integer $n$ to be the smallest positive integer such that $R_{a, b, c}(t)= (R_{a, b, c})^n (s)$
for some $s\in [c-c_0, c-c_0+b-a)+a\Z$. Then
 $t=(R_{a, b, c})^{n-1}(s)$ by the one-to-one correspondence of the transformation $R_{a, b, c}$
given in  Proposition \ref{rabcinvertibility.prop}, which contradicts to the assumption that
$t\not\in \cup_{n=0}^L (R_{a, b, c})^n ([c-c_0, c-c_0+b-a)+a\Z)$.
\end{proof}

We finish this subsection by the proof of Lemma \ref{dabc1rationalmax.prop}.

\begin{proof}  [Proof of Lemma \ref{dabc1rationalmax.prop}]\ (i)\quad Suppose on the contrary that both $[0, b/q)$ and $[c_0, c_0+b/q)$ are not contained in
${\mathcal S}_{a, b, c}$. Then
$0, c_0\not\in {\mathcal S}_{a, b, c}$ by \eqref{dabcicontinuoustodiscrete.eq1}, which together with \eqref{dabcicontinuoustodiscrete.eq1}
and Proposition \ref{dabcbasic1.prop} implies that
\begin{equation}\label{dabc1rationalmax.prop.pf.eq1} {\mathcal S}_{a, b, c}\subset ([b/q, c_0+a-b)\cup [c_0+b/q, a))+a\Z.\end{equation}
Thus
\begin{equation}\label{dabc1rationalmax.prop.pf.eq2} R_{a, b, c}({\mathcal S}_{a, b, c}-b/q)\subset {\mathcal S}_{a, b, c}-b/q\end{equation}
as $R_{a, b, c}(t-b/q)=R_{a, b, c}(t)-b/q$ for all $t\in {\mathcal S}_{a, b, c}\subset ([b/q, c_0+a-b)\cup [c_0+b/q, a))+a\Z$.
 Thus both ${\mathcal S}_{a, b, c}$ and ${\mathcal S}_{a, b, c}-b/q$ are invariant under the transformation
 $R_{a, b,c}$  and have empty intersection with the black hole $[c_0+a-b, c_0)+a\Z$ of the transformation $R_{a, b,c}$ by
  \eqref{dabc1rationalmax.prop.pf.eq2} and Proposition \ref{rabcbasic1.thm}. Hence
 by the maximality of the set ${\mathcal S}_{a, b, c}$ in  Lemma \ref{dabc0rationalmax.prop},
$ {\mathcal S}_{a, b, c}-b/q\subset {\mathcal S}_{a, b, c}$,
which contradicts to \eqref{dabc1rationalmax.prop.pf.eq1} because $t_0-b/q\in {\mathcal S}_{a, b, c}-b/q$
but $t_0-b/q\not\in {\mathcal S}_{a, b, c}$ by \eqref{dabcicontinuoustodiscrete.eq1} and
\eqref{dabc1rationalmax.prop.pf.eq1} where $t_0$ is the smallest positive number in ${\mathcal S}_{a, b, c}\cap [0, a)$.

(ii)\quad Suppose on the contrary that both $[-b/q, 0)$ and
$[c_0+a-b-b/q, c_0+a-b)$ are not contained in ${\mathcal S}_{a, b, c}$. Then $a-b/q, c_0+a-b-b/q\not\in {\mathcal S}_{a, b, c}$ by \eqref{dabcicontinuoustodiscrete.eq1}. Following the
above argument in the proof of the first conclusion, $R_{a, b,
c}({\mathcal S}_{a, b, c}+b/q)= {\mathcal S}_{a, b, c}+b/q$  and $({\mathcal S}_{a, b, c}+b/q )\cap ([c_0+a-b,
c_0)+a\Z)=\emptyset$. Hence the set ${\mathcal S}_{a, b, c}+b/q$ is invariant under the transformation
 $R_{a, b,c}$  and has empty intersection with the black hole $[c_0+a-b, c_0)+a\Z$ of the transformation $R_{a, b,c}$,
 which contradicts to the maximality of the set ${\mathcal S}_{a, b, c}$ established in  Lemma \ref{dabc0rationalmax.prop}.

(iii)\quad
Suppose on the contrary that both
$[0, b/q)$ and $[-b/q, 0)$ are not contained in ${\mathcal S}_{a, b, c}$.
Then
\begin{equation}\label{dabc1rationalmax.prop.pf.eq5}
[c_0, c_0+b/q)\subset {\mathcal S}_{a, b, c}\ {\rm and }\   [c_0+a-b-b/q, c_0+a-b)\in {\mathcal S}_{a, b, c}
\end{equation}
by the first two conclusions of this lemma.
First we show that there exists a nonnegative integer $1\le D\le (2p-q)/(q-p)$
such that
\begin{equation}\label{dabc1rationalmax.prop.pf.eq6}
(R_{a, b, c})^D([c-c_0, c+b-c_0-a)+a\Z)\cap
([c_0+a-b, c_0)+a\Z)\ne \emptyset.\end{equation}
Suppose on the contrary that \eqref{dabc1rationalmax.prop.pf.eq6} does not hold.
Then  $(R_{a, b, c})^n([c-c_0, c+b-c_0-a)+a\Z)\cap
([c_0+a-b, c_0)+a\Z)=\emptyset$ for all $0\le n\le (2p-q)/(q-p)$.
Following the argument in  \eqref{dabc1holes.tm.pf.eq13},  we have that
$(R_{a, b, c})^n([c-c_0, c+b-c_0-a)+a\Z), 0\le n\le (2p-q)/(q-p)$, are mutually disjoint.
This together with
$(R_{a, b, c})^n([c-c_0, c+b-c_0-a)+a\Z)= (R_{a, b, c})^n([c-c_0, c+b-c_0-a)\cap b\Z/q)+[0, b/q)+a\Z, 0\le n\le (2p-q)/(q-p)$,
implies that
$|\cup_{0\le n\le (2p-q)/(q-p)} (R_{a, b, c})^n([c-c_0, c+b-c_0-a)+a\Z)\cap ([0, a)\backslash [c_0+a-b, c_0))|= \lfloor p/(q-p)\rfloor (q-p)/q>
|[0, a)\backslash [c_0+a-b, c_0)|$, which is a contradiction. This proves \eqref{dabc1rationalmax.prop.pf.eq6}.

By \eqref{dabc1rationalmax.prop.pf.eq6}, we may
assume that the nonnegative integer $D$  in \eqref{dabc1rationalmax.prop.pf.eq6} is the minimal integer such that
\eqref{dabc1rationalmax.prop.pf.eq6} holds.
Following the above argument, we  may conclude that
\begin{equation}\label{dabc1rationalmax.prop.pf.eq7}
(R_{a, b, c})^n([c-c_0, c+b-c_0-a)+a\Z), 0\le n\le D,\ {\rm  are\ mutually\ disjoint}.
\end{equation}
 Now  let us verify the following claim:
\begin{equation}\label{dabc1rationalmax.prop.pf.eq8}
(R_{a, b, c})^n([c-c_0, c+b-c_0-a)+a\Z)=[b_n+a-b, b_n)+ a\Z\end{equation}
for some  $b_n\in (0, a], 0\le n\le D$,
and
\begin{equation}\label{dabc1rationalmax.prop.pf.eq8b}
(R_{a, b, c})^D([c-c_0, c+b-c_0-a)+a\Z)=[c_0+a-b, c_0)+ a\Z.\end{equation}

\begin{proof}[Proof of Claims \eqref{dabc1rationalmax.prop.pf.eq8}
and \eqref{dabc1rationalmax.prop.pf.eq8b}]
 If $D=0$, then
\eqref{dabc1rationalmax.prop.pf.eq8} and
\eqref{dabc1rationalmax.prop.pf.eq8b} follow from
\eqref{dabc1rationalmax.prop.pf.eq5} and the definition of the
nonnegative integer $D$. Now we consider $D\ge 1$. Let
$T_0=[c-c_0, c+b-c_0-a)+a\Z$ and  define $T_n, 1\le n\le D$,
inductively by
\begin{equation}\label{dabc1rationalmax.prop.pf.eq9}
T_n=\left\{\begin{array}{ll}
R_{a, b,c}(T_{n-1})  &  {\rm if} \ 0\not\in T_{n-1},\\
R_{a, b,c}(T_{n-1})\cup ( [c-c_0, c+b-c_0-a)+a\Z) & {\rm if} \ 0\in  T_{n-1}.
\end{array}\right.
\end{equation}
Clearly
$T_0=[b_0+a-b, b_0)+a\Z$ for some $b_0\in (0, a]$. Inductively, we assume that
$T_n=[\tilde b_n, b_n)+a\Z$ for some $\tilde b_n, b_n$ with $b_n\in (0, a]$ and $b-a\le b_n-\tilde b_n<a$.
If $0\not\in T_n$, then either $[\tilde b_n, b_n)\subset [0, c_0+a-b)$ or $[\tilde b_n, b_n)\subset [c_0, a)$. This implies that
\begin{eqnarray}\label{dabc1rationalmax.prop.pf.eq10}
T_{n+1} & = & R_{a, b,c}(T_{n})=[R_{a, b,c}(\tilde b_n), R_{a, b,c}(\tilde b_n)+b_n-\tilde b_n)+a\Z\nonumber\\
& =: & [\tilde b_{n+1}, b_{n+1})+a\Z\end{eqnarray}
for some $\tilde b_{n+1}, b_{n+1}$ with $b_{n+1}\in (0, a]$ and $b_{n+1}-\tilde b_{n+1}=b_n-\tilde b_n$.
If $0\in T_n$, then $\tilde b_n\le 0$. Moreover $\tilde b_n\ge c_0-a$ and $b_n\le c_0+a-b$, as otherwise $T_n$
 has nonempty intersection with the black hole
$[c_0+a-b, c_0)+a\Z$
of the transformation $R_{a, b, c}$, which contradicts to \eqref{dabc1rationalmax.prop.pf.eq6} and the observation that
$T_n\subset \cup_{m=0}^n (R_{a, b, c})^m([c-c_0, c+b-c_0-a)+a\Z)$.
Therefore
\begin{eqnarray}\label{dabc1rationalmax.prop.pf.eq11}
T_{n+1} & = & R_{a, b,c}(T_{n})\cup ( [c-c_0, c+b-c_0-a)+a\Z)\nonumber\\
&  =  & [\tilde b_n+\lfloor c/b\rfloor b, b_n+\lfloor c/b\rfloor b+b-a)+a\Z\nonumber\\
& =: & [\tilde b_{n+1}, b_{n+1})+a\Z\end{eqnarray}
for some $\tilde b_{n+1}, b_{n+1}$ with $b_{n+1}\in (0, a]$ and $b_{n+1}-\tilde b_{n+1}=b_n-\tilde b_n+b-a$.
Combining \eqref{dabc1rationalmax.prop.pf.eq10} and \eqref{dabc1rationalmax.prop.pf.eq11} proceeds the inductive proof that
\begin{equation}\label{dabc1rationalmax.prop.pf.eq12}
T_n=[\tilde b_n, b_n)+a\Z\ {\rm for \ all} \  0\le n\le D,\end{equation}
 where $b_n\in (0, a]$ and $ b_n-\tilde b_n\in [b-a, a)$.
Observe that $ (R_{a, b,c})^D([c-c_0, c+b-c_0-a)+a\Z)\subset  T_D\subset
\cup_{n=0}^D(R_{a, b,c})^n([c-c_0, c+b-c_0-a)+a\Z)$. Then
$T_D$ has nonempty intersection with the black hole $[c_0+a-b, c_0)+a\Z$ of the transformation $R_{a, b, c}$ by
\eqref{dabc1rationalmax.prop.pf.eq7}.
This together with \eqref{dabc1rationalmax.prop.pf.eq12}
implies that either $[c_0+a-b-b/q, c_0+a-b)\subset T_D$, or $[c_0, c_0+b/q)\subset T_D$ or $T_D=[c_0+a-b, c_0)+a\Z$.
Recall that $T_D\cap {\mathcal S}_{a, b, c}=\emptyset$ by  Proposition \ref{blackholestwo.prop}. Then both
$[c_0+a-b-b/q, c_0+a-b)$ and $[c_0, c_0+b/q)$ have empty intersection with $T_D$ by \eqref{dabc1rationalmax.prop.pf.eq5}.
Thus
\begin{equation}\label{dabc1rationalmax.prop.pf.eq13} T_D=[c_0+a-b, c_0)+a\Z.\end{equation}
This together with
\eqref{dabc1rationalmax.prop.pf.eq9}, \eqref{dabc1rationalmax.prop.pf.eq10} and
 \eqref{dabc1rationalmax.prop.pf.eq11} implies that
 \begin{equation}\label{dabc1rationalmax.prop.pf.eq14-} \tilde b_n>0\ {\rm  and} \ b_n-\tilde b_n=b-a\ {\rm for\ all} \ 0\le n\le D.\end{equation}
The desired conclusions \eqref{dabc1rationalmax.prop.pf.eq8} and \eqref{dabc1rationalmax.prop.pf.eq8b} then follow.
\end{proof}

Let us return to the proof of the conclusion (iii).  By
\eqref{dabcicontinuoustodiscrete.eq1},
\eqref{dabc1rationalmax.prop.pf.eq5}, \eqref{dabc1rationalmax.prop.pf.eq7},
\eqref{dabc1rationalmax.prop.pf.eq8} and  Proposition
\ref{blackholestwo.prop},
 either $[b_n+a-b, b_n)\subset [b/q, c_0+a-b)$ or $[b_n+a-b, b_n)\subset [c_0+b/q, a)$.
This implies that
\begin{equation} \label{dabc1rationalmax.prop.pf.eq14}
R_{a, b, c}(b_n+a-b-b/(2q))+a\Z 
=b_{n+1}+a-b-b/(2q)+a\Z
\end{equation}
for all $0\le n\le D-1$. By \eqref{dabc1rationalmax.prop.pf.eq5},
\eqref{dabc1rationalmax.prop.pf.eq8},
\eqref{dabc1rationalmax.prop.pf.eq8b},
\eqref{dabc1rationalmax.prop.pf.eq14}, and Propositions
\ref{rabcinvertibility.prop} and \ref{rabcbasic1.thm}, we have
that
\begin{eqnarray*}
 & & (\tilde R_{a, b, c})^n (c_0+a-b-b/(2q))+a\Z\nonumber\\
&= & (\tilde R_{a, b, c})^n (b_D+a-b-b/(2q))+a\Z\nonumber\\
& = & (\tilde R_{a, b, c})^{n-1} (b_{D-1}+a-b-b/(2q))+a\Z=\cdots\nonumber\\
& = &
b_{D-n}+a-b-b/(2q)+a\Z\subset {\mathcal S}_{a, b, c},\quad  0\le n\le D.
\end{eqnarray*}
Hence
$-b/(2q)+a\Z= \tilde R_{a, b, c}(c-c_0-b/(2q))+a\Z=(\tilde R_{a, b, c})^{D+1} (c_0+a-b-b/(2q))+a\Z\in {\mathcal S}_{a, b, c}$,
which together with
\eqref{dabcicontinuoustodiscrete.eq1} implies that $[-b/q, 0)\in {\mathcal S}_{a, b, c}$. This is  a contradiction.
\end{proof}

\subsection{Covering property of maximal invariant sets} 
In this subsection, we prove Theorems \ref{covering.tm2} and
 \ref{sabcstar.tm}.

\begin{proof}[Proof of Theorem \ref{covering.tm2}] \ Set
 $A_{\lambda}:={\mathcal S}_{a, b, c}\cap [0, c_0+a-b)+\lambda+a\Z$
and
$B_{\lambda}:={\mathcal S}_{a, b, c}\cap [c_0, a)+\lambda+a\Z$ for $\lambda\in b\Z$.
 We divide the proof into  two cases.

{\bf Case 1}: \ {\em $a/b\not\in \Q$.}

Take $t_0\in {\mathcal S}_{a, b, c}$. Then  $(R_{a, b,c})^n(t_0)\in {\mathcal S}_{a, b, c}$ by
Proposition \ref{rabcbasic1.thm}. Write $(R_{a, b,c})^n(t_0)=t_0+k_n b$,
where the  strictly increasing  sequence $\{k_n\}_{n=0}^\infty$  of nonnegative integers is defined inductively by
$k_0=0$ and
\begin{equation} \label{dbac2covering.cor.pf.eq1}
k_{n+1}-k_{n}=\left\{ \begin{array}{ll}
\lfloor c/b\rfloor+1 & {\rm  if} \ t_0+k_{n}b\in [0, c_0+a-b)+a\Z\\
\lfloor c/b\rfloor & {\rm  if} \ t_0+k_{n}b\in [c_0,a)+a\Z\end{array}\right.
\end{equation}
for $n\ge 0$.  Then
for any nonnegative  integer $l$,
\begin{eqnarray}\label{dbac2covering.cor.pf.eq2}
t_0+lb & = &  t_0+k_nb+(l-k_n)b \nonumber\\
&\in&  \big( \cup_{\lambda_2\in [0, (\lfloor c/b\rfloor -1)b]\cap b\Z} B_{\lambda_2}
 \cup \big(\cup_{\lambda_1\in [0, \lfloor c/b\rfloor b]\cap b\Z} A_{\lambda_1}
\big)
\end{eqnarray}
by \eqref{dbac2covering.cor.pf.eq1},
where $k_n$ is so chosen that $k_n\le l<k_{n+1}$.
Therefore
\begin{eqnarray*}\label{dbac2covering.cor.pf.eq3}
 & & \big\{t_0+lb-\lfloor (t_0+lb)/a\rfloor a|\ 0\le l \in \Z\big\}\nonumber\\
 & \subset & \big( \cup_{\lambda_2\in [0, (\lfloor c/b\rfloor -1)b]\cap b\Z} B_{\lambda_2}\cap [0, a)\big)
 \cup \big(\cup_{\lambda_1\in [0, \lfloor c/b\rfloor b]\cap b\Z} A_{\lambda_1}\cap [0, a)
\big)
\end{eqnarray*}
by \eqref{dabcperiodic.section2} and \eqref{dbac2covering.cor.pf.eq2}. Notice that
the left hand side of the above inclusion is a dense subset of $[0, a)$ by the
assumption $a/b\not\in \Q$, while its right hand side is the  union of  finitely many  intervals that are right-open  and left-closed
by Theorem \ref{dabc1holes.tm}.
Thus
\begin{eqnarray}\label{dbac2covering.cor.pf.eq4} [0, a)&= &
\big( \cup_{k=0}^{\lfloor c/b\rfloor -1}
({\mathcal S}_{a, b ,c}+ kb)\cap [0, a)\big)\nonumber\\
& & \cup \big(({\mathcal S}_{a, b, c}\cap [0, c_0+a-b) +
\lfloor c/b\rfloor b)\cap [0, a)\big)\end{eqnarray} and the
conclusion \eqref{covering.tm2.eq1} follows.
%

\smallskip
{\bf Case 2}: \ {\em $a/b=p/q$ and $c/b\in \Z/q$ for some coprime integers $p$ and $q$.}

Take $t_0\in {\mathcal S}_{a, b, c}\cap b\Z/q$. The existence of such a point $t_0$ follows from
\eqref{dabcicontinuoustodiscrete.eq1} and
the assumption that ${\mathcal S}_{a, b, c}\ne \emptyset$. Following the argument in
\eqref{dbac2covering.cor.pf.eq3}, we have that
\begin{equation}\label{dabc1todabc2discrete.characterization.pf.eq4}
t_0+lb  \in   \big(\cup_{\lambda_1\in [0, \lfloor c/b\rfloor b]\cap b\Z} A_{\lambda_1}
 \big)\cup \big(\cup_{\lambda_2\in [0, (\lfloor c/b\rfloor-1)b]\cap b\Z} B_{\lambda_2}
\big)
\end{equation}
for all $0\le l\in \Z$. Observe that
$\{t_0, t_0+b, \ldots, t_0+(p-1)b\}+a\Z=b \Z/q$.
The above observation together with \eqref{dabc1todabc2discrete.characterization.pf.eq4} implies that
\begin{equation} \label{dabc1todabc2discrete.characterization.pf.eq5}
b \Z/q \subset \big(\cup_{\lambda_1\in [0, \lfloor c/b\rfloor b]\cap b\Z} A_{\lambda_1}
 \big)\cup \big(\cup_{\lambda_2\in [0, (\lfloor c/b\rfloor-1)b]\cap b\Z} B_{\lambda_2}
\big).
\end{equation}
Combining \eqref{dabcicontinuoustodiscrete.eq1} and \eqref{dabc1todabc2discrete.characterization.pf.eq5}
proves the desired covering property \eqref{covering.tm2.eq1}.
\end{proof}

\begin{proof}[Proof of Theorem \ref{sabcstar.tm}]
($\Longrightarrow$)\  By Proposition \ref{dabcsabc.thm} and  the assumption that ${\mathcal D}_{a, b, c}=\emptyset$, we have
that
\begin{equation}\label{dabc1todabc2.characterization.pf.eq1}
t+\lambda\not\in {\mathcal S}_{a, b, c}\ \ {\rm for \ all} \ t\in {\mathcal S}_{a, b, c}\cap [0, c_0+a-b)\ {\rm and} \ \lambda\in [b,
 \lfloor c/b\rfloor b]\cap b\Z;\end{equation}
and
 \begin{equation}\label{dabc1todabc2.characterization.pf.eq4}
t+\lambda\not\in {\mathcal S}_{a, b, c}\ \ {\rm for \ all} \
t\in {\mathcal S}_{a, b, c}\cap [c_0, a)\ {\rm  and} \ \lambda\in [b,
 \lfloor c/b\rfloor b-b]\cap b\Z.\end{equation}
Therefore the sets
${\mathcal S}_{a, b, c} \cap [0, c_0+a-b)+\lambda_1+a\Z, \lambda_1\in [0,
 \lfloor c/b\rfloor b]\cap b\Z$, and
${\mathcal S}_{a, b, c}\cap [c_0, a)+\lambda_2+a\Z, \lambda_2\in [0,
 \lfloor c/b\rfloor b-b]\cap b\Z$,
are mutually disjoint.
This together with the covering property  in Theorem \ref{covering.tm2}
and the periodic property \eqref{dabcperiodic.section2} for the set ${\mathcal S}_{a, b, c}$
implies that
\begin{eqnarray*}
a & = & \sum_{\lambda_1\in [0,
 \lfloor c/b\rfloor b]\cap b\Z} |({\mathcal S}_{a, b, c} \cap [0, c_0+a-b)+\lambda_1+a\Z)\cap [0, a)|\nonumber\\
 & & + \sum_{\lambda_2\in [0,
 \lfloor c/b\rfloor b-b]\cap b\Z} |( {\mathcal S}_{a, b, c}\cap [c_0, a)+\lambda_2+a\Z)\cap [0,a)|\nonumber\\
 & = & \sum_{\lambda_1\in [0,
 \lfloor c/b\rfloor b]\cap b\Z} |({\mathcal S}_{a, b, c} \cap [0, c_0+a-b)+\lambda_1+a\Z)\cap [\lambda_1,  a+\lambda_1)|\nonumber\\
 & & + \sum_{\lambda_2\in [0,
 \lfloor c/b\rfloor b-b]\cap b\Z} |( {\mathcal S}_{a, b, c}\cap [c_0, a)+\lambda_2+a\Z)\cap [\lambda_2,a+\lambda_2)|\nonumber\\
  & = & ( \lfloor c/b\rfloor +1)|{\mathcal S}_{a, b, c} \cap [0, c_0+a-b)|+
 \lfloor c/b\rfloor | {\mathcal S}_{a, b, c}\cap [c_0, a)|
 \end{eqnarray*}
 and hence \eqref{sabcstar.tm.eq1} follows.

$(\Longleftarrow)$ Set
$A_\lambda=({\mathcal S}_{a, b, c} \cap [0, c_0+a-b)+\lambda+a\Z)\cap [0, a)$ and
$B_\lambda=({\mathcal S}_{a, b, c} \cap [c_0, a)+\lambda+a\Z)\cap [0, a), \lambda\in b\Z$.
By Theorem \ref{covering.tm2}, the sets
 $A_{\lambda_1}, \lambda_1\in [0,
 \lfloor c/b\rfloor b]\cap b\Z$ and
$B_{\lambda_2}, \lambda_2\in [0,
 \lfloor c/b\rfloor b-b]\cap b\Z$  form a covering for the interval $[0, a)$.
 This together with the assumption \eqref{sabcstar.tm.eq1}  and  the periodic property
\eqref{dabcperiodic.section2} for the set ${\mathcal S}_{a, b, c}$ implies that
 \begin{eqnarray*} a & = & ( \lfloor c/b\rfloor +1)|{\mathcal S}_{a, b, c} \cap [0, c_0+a-b)|+
 \lfloor c/b\rfloor | {\mathcal S}_{a, b, c}\cap [c_0, a)|\\
  & = &  \sum_{\lambda_1\in [0,
 \lfloor c/b\rfloor b]\cap b\Z} |A_{\lambda_1}| + \sum_{\lambda_2\in [0,
 \lfloor c/b\rfloor b-b]\cap b\Z} |B_{\lambda_2}|\nonumber\\
 & \ge  & \big| \big(\cup_{\lambda_1\in [0,
 \lfloor c/b\rfloor b]\cap b\Z} A_{\lambda_1})\cup \big(\cup_{\lambda_2\in [0,
 \lfloor c/b\rfloor b-b]\cap b\Z} B_{\lambda_2} \big)| =a.
 \end{eqnarray*}
 This implies that  the intersection of any two of those sets $A_{\lambda_1}, \lambda_1\in [0,
 \lfloor c/b\rfloor b]\cap b\Z$ and
$B_{\lambda_2}, \lambda_2\in [0,
 \lfloor c/b\rfloor b-b]\cap b\Z$, has zero Lebesgue measure. Hence they have empty intersection as those sets are
 finite union of intervals that are left-closed and right-open by Theorems \ref{dabc1holes.tm} and \ref{dabc1discreteholes.tm}.
 This together with Proposition \ref{dabcsabc.thm} proves that ${\mathcal D}_{a, b, c}=\emptyset$.
\end{proof}

\subsection{Algebraic property of maximal invariant sets} 
In this subsection, we prove Theorem \ref{dabc1algebra.tm}.

\begin{proof} [Proof of  Theorem \ref{dabc1algebra.tm}]
 (i): \quad   By  \eqref{xabc.def} and \eqref{dabcperiodic.section2}, we have that
\begin{equation}\label{commutativity.lem.pf.eq1}Y_{a,b,c}(t+a)= Y_{a, b, c}(t)+ Y_{a,b,c}(a) \quad {\rm for \ all}\ t\in {\mathcal S}_{a, b, c}.
\end{equation}
Now we divide two cases to verify \eqref{dabc1algebra.tm.eq1}  for
$t\in [0, a)\cap {\mathcal S}_{a, b, c}$.

{\bf Case 1}:\ $ a/b\not\in \Q$.

 For $t\in  [0, c_0+a-b)\cap {\mathcal S}_{a, b, c}$, we obtain from  \eqref{rabcnewplus.def},
   \eqref{rabcbasic1.thm.eq1}, \eqref{xabc.def} and
 \eqref{rabcbasic1.thm.pf.eq0}
 that
 \begin{eqnarray}\label{commutativity.lem.pf.eq2}
 Y_{a,b,c}(R_{a, b, c}( t)) & = &   |[0, R_{a,b,c}(t))\cap {\mathcal S}_{a, b, c}|\nonumber\\
 & = &  Y_{a, b, c}(R_{a,b,c}(0))+ |[R_{a, b, c} (0), R_{a, b, c} (t))\cap {\mathcal S}_{a, b, c}|\nonumber \\
  & = &  Y_{a, b, c}(R_{a,b,c}(0))+ |R_{a, b, c} ([0,t)\cap {\mathcal S}_{a, b, c})|\nonumber\\
& = &  
 Y_{a, b, c}(R_{a,b,c}(0))+Y_{a, b,c}(t).
  \end{eqnarray}
  Similarly for $t\in  [c_0, a)\cap {\mathcal S}_{a, b, c}$, we get
 \begin{eqnarray}\label{commutativity.lem.pf.eq3}
   Y_{a,b,c}(R_{a, b, c}( t))  
  & = & |[R_{a, b, c} (c_0), R_{a, b, c} (t))\cap {\mathcal S}_{a, b, c}|+  Y_{a, b, c}(c_0+\lfloor c/b\rfloor b)\nonumber \\
  & = &  |R_{a, b, c} ([c_0,t)\cap {\mathcal S}_{a, b, c})|+ |[0, c_0+\lfloor c/b\rfloor b+a)\cap {\mathcal S}_{a, b, c}| \nonumber\\
  & & -
   |[c_0+\lfloor c/b\rfloor b, c_0+\lfloor c/b\rfloor b+a)\cap {\mathcal S}_{a, b, c}|
\nonumber\\
  & = & |[c_0,t)\cap {\mathcal S}_{a, b, c}|+  |[0, \lfloor c/b\rfloor b+b)\cap {\mathcal S}_{a, b, c}|\nonumber\\
  & & + |R_{a,b,c}([0, c_0+a-b))\cap {\mathcal S}_{a, b, c}|- Y_{a,b,c}(a)
\nonumber\\
& = &  
 Y_{a, b,c}(t)+Y_{a, b, c}(R_{a,b,c}(0))-Y_{a,b,c}(a).
 \end{eqnarray}
 Then \eqref{dabc1algebra.tm.eq1} follows from
 \eqref{commutativity.lem.pf.eq2} and \eqref{commutativity.lem.pf.eq3}.

\smallskip

{\bf Case 2}: \ {\em $a/b=p/q$ and $c/b\in \Z/q$ for some coprime integers $p$ and $q$.}

For every $t\in [0, c_0+a-b)\cap {\mathcal S}_{a, b, c}$,
 we obtain from
\eqref{rabcbasic1.thm.eq1} and \eqref{rabcbasic1.thm.eq1c} in Proposition
 \ref{rabcbasic1.thm},    \eqref{dabc1discreteholes.tm.eq8}
 in Theorem \ref{dabc1discreteholes.tm}, and \eqref{commutativity.lem.pf.eq1}
 that
 \begin{eqnarray}\label{dabc1discretealgebraic.tm.pf.eq2}
 Y_{a,b,c}(R_{a, b, c}( t)) 
&= & Y_{a, b, c}(R_{a,b,c}(0))+Y_{a, b,c}(t)\nonumber \\
&\in &
Y_{a, b, c}(c_1+b-a)+Y_{a, b,c}(t)+Y_{a, b, c}(a)\Z.
 \end{eqnarray}
  Similarly for  every $t\in  [c_0, a)\cap {\mathcal S}_{a, b, c}$, we
  have that
 \begin{eqnarray}\label{dabc1discretealgebraic.tm.pf.eq3}
   Y_{a,b,c}(R_{a, b, c}( t))  
& = &  
 Y_{a, b,c}(t)+Y_{a, b, c}(R_{a,b,c}(0))-Y_{a,b,c}(a)\nonumber \\
&\in &
Y_{a, b, c}(c_1+b-a)+Y_{a, b,c}(t)+Y_{a, b, c}(a)\Z.
 \end{eqnarray}
 Then \eqref{dabc1algebra.tm.eq1} follows from
\eqref{dabc1discretealgebraic.tm.pf.eq2} and \eqref{dabc1discretealgebraic.tm.pf.eq3}.

\smallskip

(ii): \quad Let $N_1, N_2, \delta, \delta'$ be as in Theorem
\ref{dabc1discreteholes.tm}.  By
\eqref{dabc1discreteholes.tm.eq8}, the set ${\mathcal K}_{a, b, c}$ of
marks is given by
\begin{equation*}
{\mathcal K}_{a, b, c}=\{Y_{a, b, c} ((R_{a, b, c})^n(c-c_0+b-a+\delta))|
0\le n\le N_1+N_2\}+Y_{a, b, c}(a)\Z.
\end{equation*}
This, together with the first conclusion of this theorem and the fact that $[0,
\delta)$ is contained in a gap, implies that
\begin{eqnarray}\label{dabc1discretealgebraic.tm.pf.eq4}
{\mathcal K}_{a, b, c} & = & Y_{a, b, c}( R_{a, b, c}(\delta)-a)\nonumber\\
& & + \{n Y_{a, b, c}(c_1+b-a)| 0\le n\le N_1+N_2\}+Y_{a,
b,
c}(a)\Z\nonumber\\
& = & \{n Y_{a, b, c}(c_1+b-a)|\ 1\le n\le
N_1+N_2+1\}+Y_{a, b, c}(a)\Z.
\end{eqnarray}
On the other hand, it follows from
\eqref{dabc1discreteholes.tm.eq8} and
\eqref{dabc1discreteholes.tm.eq9} that $(R_{a, b,
c})^{N_1+N_2}(c-c_0+b+\delta)\in \delta+a\Z$. This together with
the first conclusion of this theorem implies that
\begin{equation}\label{dabc1discretealgebraic.tm.pf.eq5}
(N_1+N_2+1) Y_{a, b, c}(c_1+b-a)\in  Y_{a, b, c}^d(a)\Z.
\end{equation}
Combining \eqref{dabc1discretealgebraic.tm.pf.eq4} and
\eqref{dabc1discretealgebraic.tm.pf.eq5} proves that ${\mathcal K}_{a, b, c}$
form a finite cyclic group generated by $ Y_{a, b, c}(c_1+b-a)+ Y_{a, b, c}(a)\Z$.

(iii) \quad By Theorem \ref{dabc1holes.tm},
there exists a nonnegative integer $D$ such that
$(R_{a, b, c})^n(c-c_0+b-a)+[a-b, 0]+a\Z, 0\le n\le D$, are mutually disjoint, and
\begin{equation}\label{dabc1discretealgebraic.tm.pf.eq6}
(R_{a, b, c})^D(c-c_0+b-a)+[a-b,0)+a\Z=[c_0+a-b, c_0)+a\Z.\end{equation}
Therefore $(R_{a, b, c})^n(c-c_0+b-a)\in {\mathcal S}_{a, b, c}$ for all $0\le n\le D$, and
\begin{eqnarray}\label{dabc1discretealgebraic.tm.pf.eq7}
{\mathcal K}_{a, b,c} & = & \cup_{n=0}^D \{ Y_{a, b, c} ((R_{a, b, c})^n(c-c_0+b-a))+ Y_{a, b, c}(a) \Z\}\nonumber\\
& = &  \cup_{n=0}^D
\{ (n+1) Y_{a, b, c} (c-c_0+b-a)+ Y_{a, b, c}(a) \Z\}\nonumber\\
&= &
\cup_{m=1}^{D+1}
\{ m Y_{a, b, c} (c_1+b-a)+ Y_{a, b, c}(a) \Z\}
\end{eqnarray}
where the second equality follows from
 the first conclusion of this theorem.
Moreover, it follows \eqref{dabc1discretealgebraic.tm.pf.eq6} that
\begin{equation}\label{dabc1discretealgebraic.tm.pf.eq8}
(D+1) Y_{a, b, c} (c_1+b-a)-Y_{a, b, c}(c_0)\in Y_{a, b, c}(a)\Z.\end{equation}
Also we notice that the nonnegative integer $D$ satisfying \eqref{dabc1discretealgebraic.tm.pf.eq8}
is unique as $(R_{a, b, c})^n(0)\not\in a\Z$ for all positive integers $n$ by the assumption $a/b\not\in \Q$.
This uniqueness, together with \eqref{dabc1discretealgebraic.tm.pf.eq7} and
 \eqref{dabc1discretealgebraic.tm.pf.eq8}, proves the conclusion (iii).
\end{proof}

\begin{rem}\label{dabc1discretealgebraic.tm.rem} {\rm
Let $N_1$ and $N_2$ be as in Theorem \ref{dabc1discreteholes.tm}.
Notice that the mutually disjoint properties (iv) and
(iv)${}^\prime$ in Theorem \ref{dabc1discreteholes.tm}, $n Y_{a,
b, c}(c_1+b-a)\not\in Y_{a, b, c}(a)\Z$ for all $0\le
n\le N_1+N_2$. Thus ${\mathcal K}_{a, b, c}$ is a cyclic group of order
$N_1+N_2+1$. }\end{rem}

\section{Maximal Invariant Sets with Irrational Time-Frequency Lattice}
\label{gaborrational1.section}

In this section, we prove Theorem \ref{newmaintheorem4}.
To  do so, we need
 characterize
  the non-triviality \eqref{sabcemptynonempty1} of the maximal invariant set ${\mathcal S}_{a, b, c}$.
By Theorems \ref{dabc1holes.tm} and \ref{dabc1discreteholes.tm},  after
 performing the   holes-removal surgery,  the maximal invariant  set ${\mathcal S}_{a, b, c}$
  becomes
  the real line with marks.
This suggests that  for the case that $a/b\not\in \Q$ we can
expand the line with marks by  inserting holes $[0, b-a)$
  at  every location of marks to recover the maximal invariant set ${\mathcal S}_{a, b, c}$.
Using  the equivalence between the application of the piecewise linear transformation $R_{a, b, c}$ on  the set ${\mathcal S}_{a, b, c}$
and a rotation on the  circle with marks given in  Theorem  \ref{dabc1algebra.tm}, we can characterize
 the non-triviality \eqref{sabcemptynonempty1}  of the maximal invariant set ${\mathcal S}_{a, b, c}$
 via two  nonnegative integer parameters $d_1$ and $d_2$
 for the case that $a/b\not\in \Q$.

\begin{thm}
\label{dabc1characterization.tm}
 Let $(a,b,c)$ be a triple of positive numbers satisfying
$a<b<c, \lfloor c/b\rfloor\ge
2,  b-a<c_0:=c-\lfloor c/b\rfloor b<a, 0< c_1:= c-c_0-\lfloor (c-c_0) /a\rfloor a < 2a-b$, and
$a/b\not\in \Q$. Then
 ${\mathcal S}_{a, b, c}\ne \emptyset$ if and only if
 there exist nonnegative integers $d_1$ and $d_2$ such that
   \begin{equation}\label{cnecessary.eq1}
c-(d_1+1)(\lfloor c/b\rfloor+1) (b-a)-(d_2+1) \lfloor c/b\rfloor (b-a)\in a\Z,
\end{equation}
\begin{equation}\label{cnecessary.eq2}
\lfloor c/b\rfloor b+(d_1+1)(b-a)<c<\lfloor c/b\rfloor b+b- (d_2+1) (b-a),
\end{equation}
and
\begin{equation}\label{cnecessary.eq3}
 \#E_{a,b,c}= d_1,
\end{equation}
where  $E_{a, b, c}$ is defined
as in \eqref{eabc.def}.
\end{thm}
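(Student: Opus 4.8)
The plan is to convert the non-emptiness of $\mathcal{S}_{a,b,c}$ into a counting statement about a circle rotation, using the explicit hole structure of Theorem \ref{dabc1holes.tm} together with the ``rotation'' interpretation of $R_{a,b,c}$ from Theorem \ref{dabc1algebra.tm}. By Lemma \ref{dabc1holes.lem}, $\mathcal{S}_{a,b,c}\ne\emptyset$ is equivalent to the existence of a nonnegative integer $D$ such that $(R_{a,b,c})^D$ carries the black hole $[c-c_0,c-c_0+b-a)+a\Z$ of $\tilde R_{a,b,c}$ onto the black hole $[c_0+a-b,c_0)+a\Z$ of $R_{a,b,c}$, while the intermediate images $(R_{a,b,c})^n([c-c_0,c-c_0+b-a)+a\Z)$, $0\le n\le D-1$, are mutually disjoint holes of length $b-a$ strictly contained in $(0,c_0+a-b)\cup(c_0,a)+a\Z$. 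First I would introduce $d_1$ (resp.\ $d_2$) as the number of these intermediate holes whose reduction modulo $a$ lies in $[0,c_0+a-b)$ (resp.\ $[c_0,a)$), so that $D=d_1+d_2$ and the number of holes per period is $d_1+d_2+1$.

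With this bookkeeping the three conditions acquire transparent meanings, which I would establish one at a time. Since $b-a<a$, one has $b\equiv b-a\pmod a$, so one application of $R_{a,b,c}$ (see \eqref{rabcnewplus.def}) advances a hole by $(\lfloor c/b\rfloor+1)(b-a)$ or $\lfloor c/b\rfloor(b-a)$ modulo $a$ according to whether its left endpoint lies in $[0,c_0+a-b)$ or $[c_0,a)$. Summing the $D$ shifts that carry $c-c_0\equiv c_1$ to $c_0+a-b\equiv c_0-(b-a)$ and using $\lfloor c/b\rfloor(b-a)\equiv c_1\pmod a$ yields, after reduction, exactly that $(d_1+d_2+1)c_1-c_0+(d_1+1)(b-a)\in a\Z$, i.e.\ that $m$ in \eqref{eabc.def} is an integer; this is the content of condition (a), the residual non-degeneracy clause $a\ne c-\cdots$ excluding a degenerate closing-up. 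Condition (b) is then feasibility: the region $[0,c_0+a-b)$ has length $c_0-(b-a)$ and must accommodate $d_1$ holes with positive leftover measure, which is $c_0>(d_1+1)(b-a)$, while $[c_0,a)$ of length $a-c_0$ must hold $d_2$ holes, which rearranges to $c_0<b-(d_2+1)(b-a)$. These also guarantee that the interval $[0,c_0-(d_1+1)(b-a))$ appearing in \eqref{eabc.def} is nondegenerate.

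The decisive step is condition (c), which I would obtain by passing to the surgered picture of Theorem \ref{dabc1algebra.tm}. Under $Y_{a,b,c}$ of \eqref{xabc.def} the period becomes the circle $\R/(Y_{a,b,c}(a)\Z)$ with $Y_{a,b,c}(a)=a-(d_1+d_2+1)(b-a)$, the action of $R_{a,b,c}$ becomes the rotation by $Y_{a,b,c}(R_{a,b,c}(0))=Y_{a,b,c}(c_1+b-a)$, and the marks (images of the holes) are the orbit $\{\,nY_{a,b,c}(c_1+b-a):1\le n\le d_1+d_2+1\,\}$. I would identify the rotation amount with $c_1-m(b-a)$ by computing $Y_{a,b,c}(c_1+b-a)$ as the length $c_1+b-a$ diminished by $(m+1)(b-a)$, accounting for the $m+1$ holes it contains, and identify the arc $Y_{a,b,c}([0,c_0+a-b)\cap\mathcal{S}_{a,b,c})$ with $[0,c_0-(d_1+1)(b-a))$. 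Then $E_{a,b,c}$ is precisely the set of iterates $n(c_1-m(b-a))$ of the rotation landing in the $[0,c_0+a-b)$ arc, and a hole lies in $[0,c_0+a-b)$ exactly when the corresponding iterate lands there; hence the number of holes in that region equals $\#E_{a,b,c}$, and self-consistency of the definition of $d_1$ forces $\#E_{a,b,c}=d_1$, which is condition (c).

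For the converse I would run the construction in reverse (the augmentation operation): given $d_1,d_2$ satisfying (a)--(c), I reinsert holes of size $b-a$ at the $d_1+d_2+1$ marked positions dictated by the rotation orbit, obtaining a set whose complement is a union of such holes; conditions (a) and (b) make these holes disjoint, of the correct total measure, and strictly interior to the two regions, while (c) places exactly $d_1$ of them in $[0,c_0+a-b)$, so that by the maximality characterization of Theorem \ref{dabc1maximal.cor.thm} (equivalently, by verifying the hypotheses of Lemma \ref{dabc1holes.lem}) this set is $\mathcal{S}_{a,b,c}$, which is therefore nonempty. The main obstacle I anticipate is the bookkeeping in condition (c): the length of the target arc $[0,c_0-(d_1+1)(b-a))$ itself depends on the unknown $d_1$, so the argument must simultaneously resolve this circular dependence and track the wraparound integer $m$ correctly; showing that the rotation-orbit count in the surgered coordinates coincides with the hole count in the original coordinates, uniformly across the wraparounds, is the delicate point.
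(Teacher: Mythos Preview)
Your approach is essentially the paper's: for the necessity you extract $d_1,d_2$ from the hole structure of Theorem~\ref{dabc1holes.tm} and read off \eqref{cnecessary.eq1}--\eqref{cnecessary.eq3} via the rotation picture of Theorem~\ref{dabc1algebra.tm}; for the sufficiency you invert the surgery by reinserting holes of size $b-a$ at the rotation-orbit marks and verify Lemma~\ref{dabc1holes.lem}. Two small corrections are worth making. First, the ``$a\ne c-\cdots$'' clause you invoke is not part of this theorem at all---it appears only in Theorem~\ref{newmaintheorem4}, where it encodes the additional requirement $\mathcal{D}_{a,b,c}\ne\emptyset$ coming from Theorem~\ref{sabcstar.tm}; here you are only characterizing $\mathcal{S}_{a,b,c}\ne\emptyset$, so drop that remark. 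Second, in the converse the mutual disjointness of the $d_1+d_2+1$ marks on the circle $\R/(a-(d_1+d_2+1)(b-a))\Z$ is not a consequence of \eqref{cnecessary.eq1}--\eqref{cnecessary.eq2} as you suggest, but relies on $a/b\notin\Q$; the paper isolates this as the separate claim \eqref{dabc1characterization.tm.pf.eq17}, and the subsequent inductive verification that $(R_{a,b,c})^n(c-c_0)+a\Z=y_{l(n)}+a\Z$ (equation \eqref{dabc1characterization.tm.pf.eq222}) is where most of the work in the sufficiency actually lies.
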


 The nonnegative  integers $d_1$ and $d_2$ in  Theorem \ref{dabc1characterization.tm}  satisfy $(d_1+d_2+1)<a/(b-a)$
by \eqref{cnecessary.eq2}, and they are uniquely
determined by the triple $(a, b, c)$ of positive numbers
by \eqref{cnecessary.eq1} and the assumptions that $\lfloor c/b\rfloor\ge 2$ and $a/b\not\in \Q$. We also notice  that
the  nonnegative integer parameters $d_1$ and $d_2$ in  Theorem \ref{dabc1characterization.tm}
are indeed the numbers of holes contained in $[0, c_0+a-b)$ and $[c_0, a)$ respectively.

\smallskip

In next two subsections, we prove Theorem  \ref{dabc1characterization.tm},
and apply Theorems \ref{sabcstar.tm} and \ref{dabc1characterization.tm}
  to  prove  Theorem
 \ref{newmaintheorem4} respectively.

\subsection{Nontrivial maximal invariant sets with irrational time-frequency lattices}
In this subsection, we prove  Theorem \ref{dabc1characterization.tm}.

\begin{proof}[Proof of Theorem \ref{dabc1characterization.tm}]
($\Longrightarrow$)\ Assume that  ${\mathcal S}_{a, b, c}\ne
\emptyset$. Let $D$ be the nonnegative integer in Theorem
\ref{dabc1holes.tm}. Then
$(R_{a,b,c})^D([c-c_0, c+b-c_0-a)+a\Z)=[c_0+a-b,c_0)+a\Z$,
and the periodic holes
$ (R_{a,b,c})^n([c-c_0, c+b-c_0-a)+a\Z)=(R_{a,b,c})^n(c-c_0)+[0,
b-a)+a\Z, 0\le n\le D$,
are mutually disjoint. 
This implies that
\begin{equation}\label{dabc1characterization.tm.pf.eq3}
c_0+a-b-\big(c-c_0+ d_1(\lfloor c/b\rfloor+1) b+d_2 \lfloor c/b\rfloor b\big)\in a\Z,
\end{equation}
where $d_1, d_2$ are the numbers of the indices $n\in [0, D-1]$ such that $(R_{a, b,c})^n ([c- c_0, c+b-c_0-a)+a\Z)$
is contained  in $[0, c_0+a-b)+a\Z$  and in  $[c_0, a)+a\Z$  respectively.
Then   \eqref{cnecessary.eq1}  follows from \eqref{dabc1characterization.tm.pf.eq3}.

\smallskip
Observe that
$$D=d_1+d_2$$
as for every $0\le n\le D-1$ the periodic hole
$(R_{a,b,c})^n([c-c_0, c+b-c_0-a)+a\Z)$ is  contained either in $[0,
c_0+a-b)+a\Z$ or in $[c_0, a)+a\Z$ by  Theorem \ref{dabc1holes.tm}.

Notice that the periodic holes $(R_{a,b,c})^n([c-c_0,
c+b-c_0-a)+a\Z), 0\le n\le d_1+d_2$, are  mutually disjoint and
have length $b-a$ by Theorem \ref{dabc1holes.tm} and that
there are $d_1$  (resp. $d_2$) holes contained in $[0, c_0+a-b)$
(resp. $[c_0, a)$) by the definition of integer parameters $d_1$ and $d_2$.
Therefore $d_1(b-a)<c_0+a-b$ and $d_2(b-a)<a-c_0$, which proves
\eqref{cnecessary.eq2}.

\smallskip

Let $\theta_{a,b,c}:=Y_{a,b,c}(c-c_0+b)$ be as in Theorem \ref{dabc1algebra.tm},
and  $\tilde \theta_{a, b,c}=Y_{a,b,c}(c_1)$. Recall that there
are $d_1+d_2+1$ holes contained in the interval $[0,a)$ by Theorem
\ref{dabc1holes.tm}. This together with
\eqref{commutativity.lem.pf.eq1} implies that
\begin{equation}\label{dabc1characterization.tm.pf.eq5}
Y_{a,b,c}(a)=a-(d_1+d_2+1)(b-a)\end{equation}
and
\begin{equation} \label{dabc1characterization.tm.pf.eq6-}
\tilde\theta_{a, b,c}-Y_{a,b,c}(c-c_0)\in 
Y_{a, b,c}(a)\Z, 
\end{equation}
where the inclusion holds because 
 $[c-c_0, c+b-c_0-a)$ is a black hole of the transformation $\tilde R_{a, b,c}$, and
\begin{equation} \label{dabc1characterization.tm.pf.eq6+}
Y_{a,b,c}(c-c_0)-\theta_{a,b,c}=-Y_{a, b,c}(a).\end{equation}
From Theorem \ref{dabc1algebra.tm}, we see that the marks are located at  $Y_{a,b,c}(c-c_0)
+n\theta_{a, b,c}+(a-(d_1+d_2+1)(b-a))\Z, 0\le n\le d_1+d_2$.
This together with \eqref{dabc1characterization.tm.pf.eq5}, \eqref{dabc1characterization.tm.pf.eq6-}
and \eqref{dabc1characterization.tm.pf.eq6+} implies that
the locations of marks are
$n\tilde \theta_{a,b,c}+ (a-(d_1+d_2+1)(b-a))\Z, 1\le n\le d_1+d_2+1$.

Recall that  there are  $d_1$ holes contained in  $[0, c_0+a-b)$. Then $Y_{a,b,c}(c_0+a-b)= c_0+a-b-d_1(b-a)$, which implies
that
\begin{equation}\label{dabc1characterization.tm.pf.eq7}c_0-(d_1+1)(b-a)-(d_1+d_2+1) \tilde \theta_{a,b,c}\in (a-(d_1+d_2+1)(b-a))\Z.\end{equation}
Let $m$ be the number of holes
$(R_{a,b,c})^n([c-c_0, c+b-c_0-a)+a\Z)=(R_{a,b,c})^n([c_1, c_1+b-a)+a\Z), 0\le n\le d_1+d_2$, contained in
$[0, c_1)+a\Z$. Due to the one-to-one correspondence between holes  and marks, $m$ is also  the cardinality
of the set
$\big\{1\le n\le d_1+d_2+1\big| n\tilde \theta_{a, b, c}\in [0, \tilde\theta_{a, b,c})+(a-(d_1+d_2+1)(b-a))\Z\big\}$.
This, together with the observation that $[c_1, c_1+b-a)$
is the black hole of the transformation $\tilde R_{a, b, c}$, implies that
\begin{equation} \label{dabc1characterization.tm.pf.eq8} \tilde \theta_{a, b, c}=c_1-m(b-a).
\end{equation}
Let $\tilde m$ be the unique integer  such that
$(d_1+d_2+1)\tilde \theta_{a,b,c}\in \tilde m (a-(d_1+d_2+1)(b-a))+ [0, a-(d_1+d_2+1)(b-a))$. We want to prove that
\begin{equation}\label{dabc1characterization.tm.pf.eq9} \tilde m=m.
\end{equation}
For any $1\le l\le \tilde m$, there exists one and only one $1\le n_l\le d_1+d_2+1$ such that
$  n_l \tilde \theta_{a, b,c}\in l(a-(d_1+d_2+1)(b-a))+[0, \tilde \theta_{a,b,c})$, which implies that
$\tilde m\le m$. Now we prove that $m\le \tilde m$. Suppose on the contrary that $m>\tilde m$. Then there exists an integer $1\le n\le d_1+d_2+1$
such that $n\tilde \theta_{a,b,c}\in [0, \tilde \theta_{a, b,c})+ (a-(d_1+d_2+1)(b-a)) (\Z\backslash \{1, \ldots,\tilde m\})$.
This implies that $n\tilde \theta_{a, b,c}\ge (\tilde m+1)  (a-(d_1+d_2+1)(b-a))$, which is a contradiction as $\tilde\theta_{a,b,c}\le
n\tilde \theta_{a, b,c}\le (d_1+d_2+1) \tilde \theta_{a, b,c} <(\tilde m+1) (a-(d_1+d_2+1)(b-a))$ by the definition of the integer $\tilde m$,
and hence \eqref{dabc1characterization.tm.pf.eq9} is established.

From  \eqref{dabc1characterization.tm.pf.eq7}, \eqref{dabc1characterization.tm.pf.eq8}
and  \eqref{dabc1characterization.tm.pf.eq9},
\begin{eqnarray*}  & &  (d_1+d_2+1) (c_1-m(b-a)) =  (d_1+d_2+1) \tilde\theta_{a, b,c}\\
& = &  c_0-(d_1+1)(b-a)+m(a-(d_1+d_2+1)(b-a)), \end{eqnarray*}
which implies that
\begin{equation} \label{dabc1characterization.tm.pf.eq10} m a = (d_1+d_2+1)c_1-c_0+(d_1+1)(b-a).
\end{equation}
Then the condition \eqref{cnecessary.eq3} follows from  \eqref{dabc1characterization.tm.pf.eq7}, \eqref{dabc1characterization.tm.pf.eq8}
and  \eqref{dabc1characterization.tm.pf.eq10}, and the definition of the integer $d_1$.

\bigskip

($\Longleftarrow$)\ Let  $c_1=c-c_0-\lfloor (c-c_0)/a\rfloor a$,
and let $d_1$ and $d_2$ be as in  \eqref{cnecessary.eq1} and
\eqref{cnecessary.eq2}. Then $c_1\ne 0$  by $a/b\not\in \Q$, and
$-a <-c_0+b-a\le (d_1+d_2+1) c_1- c_0+(d_1+1)(b-a) \le (d_1+d_2+1)
c_1 <(d_1+d_2+1) a$ by \eqref{cnecessary.eq1} and
\eqref{cnecessary.eq2}. Also from \eqref{cnecessary.eq1} and
\eqref{cnecessary.eq2}, we see that $(d_1+d_2+1) c_1-
c_0+(d_1+1)(b-a)\in (d_1+d_2+1)\lfloor c/b\rfloor b-c+\lfloor
c/b\rfloor b+(d_1+1)b +a\Z=a\Z$. Thus
\begin{equation} \label{dabc1characterization.tm.pf.eq14}
(d_1+d_2+1) c_1- c_0+(d_1+1)(b-a)=m a\end{equation}
for some integer  $0\le m\le d_1+d_2$.
Set
\begin{equation}\label{dabc1characterization.tm.pf.eq14+}
\tilde\theta_{a,b,c}=
c_1 -m(b-a).\end{equation}
Then
\begin{equation}\label{dabc1characterization.tm.pf.eq15}
(d_1+d_2+1) \tilde \theta_{a,b,c}= c_0-(d_1+1)(b-a)+m(a-(d_1+d_2+1)(b-a))
\end{equation}
by \eqref{dabc1characterization.tm.pf.eq14}.
This together with $ 0\le m\le d_1+d_2$ and $0<c_0-(d_1+1)(b-a)< a-(d_1+d_2+1)(b-a)$  implies that
\begin{equation} \label{dabc1characterization.tm.pf.eq16}
\tilde\theta_{a, b, c}\in (0, a-(d_1+d_2+1)(b-a)).
\end{equation}

We claim that
\begin{equation} \label{dabc1characterization.tm.pf.eq17}
(n-n')\tilde \theta_{a, b,c}\not\in
(a-(d_1+d_2+1)(b-a))\Z
\end{equation}
for all $1\le n\ne n'\le d_1+d_2+1$. Suppose on the contrary that
\eqref{dabc1characterization.tm.pf.eq17} are not true. Then
$k\tilde \theta_{a,b,c}=l (a-(d_1+d_2+1)(b-a))$
for some integers $l\in \Z$ and $ k\in [1, d_1+d_2]\cap \Z$. Then
$k(m-\lfloor c/b\rfloor)= l(d_1+d_2+1)$ and $k(m- \lfloor (\lfloor c/b\rfloor b/a)\rfloor)=l(d_1+d_2+2)$
by  the assumption $a/b\not\in \Q$. Thus
$l=k (\lfloor (\lfloor c/b\rfloor b/a)\rfloor-\lfloor c/b\rfloor)$, which is a contradiction as
$1\le l< k$ by \eqref{dabc1characterization.tm.pf.eq16}.

Denote ${\mathcal K}_{a, b,c}:=\{n\tilde \theta_{a, b,c}\}_{n=1}^{d_1+d_2+1}+(a-(d_1+d_2+1)(b-a))\Z$ and rewrite ${\mathcal K}_{a, b,c}$ as
$\{z_n\}_{n=1}^{d_1+d_2+1}+(a-(d_1+d_2+1)(b-a))\Z$ where
$0<z_1< z_2<\ldots<z_{d_1+d_2+1}<a-(d_1+d_2+1)(b-a)$.
The existence of such an increasing sequence $\{z_n\}_{n=1}^{d_1+d_2+1}$ follows from \eqref{dabc1characterization.tm.pf.eq17}.
 Given any $\delta\in (0, c_0-(d_1+1)(b-a))$ (respectively $\delta\in (c_0-(d_1+1)(b-a), a-(d_1+d_2+1)(b-a))$),  it follows from
  \eqref{dabc1characterization.tm.pf.eq15} and \eqref{dabc1characterization.tm.pf.eq16} that for  any integer $k\in [0, m]$
  (resp. $k\in [0, m-1]$)
there is one and only one
integer $n\in [1,  d_1+d_2+1]$ such that
$n\tilde \theta_{a, b,c}\in k(a-(d_1+d_2+1)(b-a))+[\delta, \delta+\tilde \theta_{a, b, c})$
and  for $k\in \Z\backslash [0, m]$ (resp. $k\in \Z\backslash [0, m-1]$), and
 there is no integer $n\not\in [1, d_1+d_2+1]$ such that
$n\tilde \theta_{a, b,c}\in k(a-(d_1+d_2+1)(b-a))+[\delta, \delta+\tilde \theta_{a, b, c})$.
%
The above observations together with  \eqref{dabc1characterization.tm.pf.eq16} and \eqref{dabc1characterization.tm.pf.eq17}
 imply that
\begin{equation}\label{dabc1characterization.tm.pf.eq19+} \#\big([\delta, \delta+\tilde \theta_{a, b, c})\cap
\big( \{z_k\}_{k=1}^{d_1+d_2+1}+\{0,a-(d_1+d_2+1)(b-a)\} \big)=m+1\end{equation}
 for $\delta\in (0, c_0-(d_1+1)(b-a))$,
 and
\begin{equation}\label{dabc1characterization.tm.pf.eq19++} \#\big([\delta, \delta+\tilde \theta_{a, b, c})\cap
\big( \{z_k\}_{k=1}^{d_1+d_2+1}+\{0,a-(d_1+d_2+1)(b-a)\} \big)=m\end{equation}
 for $\delta\in (c_0-(d_1+1)(b-a), a-(d_1+d_2+1)(b-a))$.

Now let us expand marks located at  $\{z_l\}_{l=1}^{d_1+d_2+1}+(a-(d_1+d_2+1)(b-a))\Z$ to holes
of length $b-a$ located at $\{y_l\}_{l=1}^{d_1+d_2+1}+a\Z$ on the real line by
\begin{equation}\label{dabc1characterization.tm.pf.eq212} y_l=z_l+(l-1)(b-a), 1\le l\le d_1+d_2+1.\end{equation}
Clearly  $0<y_1<y_2<\ldots<y_{d_1+d_2+1}<a$.
Now let us prove
\begin{equation}\label{dabc1characterization.tm.pf.eq222}
(R_{a,b,c})^n(c-c_0)+a\Z= y_{l(n)}+a\Z\quad {\rm for \ all}\ 0\le
n\le d_1+d_2,
\end{equation}
by induction on  $0\le n\le d_1+d_2$, where $l(n)\in [1, d_1+d_2+1]$ is the unique integer such that $z_{l(n)}\in
(n+1) \tilde \theta_{a, b,c}+(a-(d_1+d_2+1)(b-a))\Z$.
Applying \eqref{dabc1characterization.tm.pf.eq19+}
with sufficiently small $\delta$ and recalling $z_1>0$ proves  that
$z_{m+1}=\tilde\theta_{a, b,c}$.
Combining  \eqref{cnecessary.eq3} and \eqref{dabc1characterization.tm.pf.eq15} gives
\begin{equation}\label{dabc1characterization.tm.pf.eq20} z_{d_1+1}=c_0+a-b-d_1(b-a)= (d_1+d_2+1)\tilde \theta_{a, b, c}-m(a-(d_1+d_2+1)(b-a)).
\end{equation}
Thus
\begin{equation}\label{dabc1characterization.tm.pf.eq201}
y_{l(0)}=y_{m+1}=z_{m+1}+m(b-a)=\tilde \theta_{a, b,c}+m(b-a)=c_1\end{equation}
and
\begin{equation}\label{dabc1characterization.tm.pf.eq202}
y_{l(d_1+d_2)}= y_{d_1+1}=z_{d_1+1}+d_1(b-a)=c_0+a-b.
\end{equation}
The conclusion \eqref{dabc1characterization.tm.pf.eq222}  for $n=0$
follows from \eqref{dabc1characterization.tm.pf.eq201}.
Inductively we assume that \eqref{dabc1characterization.tm.pf.eq222} holds for $n=k\le d_1+d_2-1$.
Then $z_{l(k)}\ne c_0-(d_1+1) (b-a)$ by \eqref{dabc1characterization.tm.pf.eq15}, \eqref{dabc1characterization.tm.pf.eq17}
and the observation that $l(k)\ne d_1+d_2+1$.
If $z_{l(k)}<  c_0-(d_1+1) (b-a)$,
then $y_{l(k)}<c_0+a-b$  by \eqref{dabc1characterization.tm.pf.eq202}  and
\begin{eqnarray} (R_{a, b,c})^{k+1}(c-c_0)  & =  &  R_{a, b,c}((R_{a, b,c})^{k}(c-c_0))
 \in
 R_{a, b,c}(y_{l(k)})+a\Z\nonumber\\
 & =  & y_{l(k)}+\lfloor c/b\rfloor b +b +a\Z\nonumber\\
 & = &  z_{l(k)}+ \tilde\theta_{a, b, c} +(m+l(k))(b-a)+a\Z.
 \end{eqnarray}
 Note that $z_{l(k+1)}= z_{l(k)}+\tilde \theta_{a, b, c}$ or $z_{l(k+1)}=z_{k(l)}+\tilde \theta_{a, b, c}-(a-(d_1+d_2+1)(b-a))$.
 For the first case, $l(k+1)=l(k)+m+1$ as $[z_{l(k)}, z_{l(k)}+\tilde \theta_{a, b, c})\cap
 \big( \{z_k\}_{k=1}^{d_1+d_2+1}+\{0,a-(d_1+d_2+1)(b-a)\} \big)=m+1$ by \eqref{dabc1characterization.tm.pf.eq19+}
 and hence
\begin{eqnarray} (R_{a, b,c})^{k+1}(c-c_0) & \in &   z_{l(k)}+ \tilde\theta_{a, b, c} +(m+l(k))(b-a)+a\Z\nonumber\\
 & = & z_{l(k+1)}+(l(k+1)-1)(b-a)+a\Z= y_{l(k+1)}+a\Z.
\end{eqnarray}
Similarly for the second case, $l(k+1)=l(k)+m+1-(d_1+d_2+1)$ since
$\#\big([0, z_{l(k+1)})+(a-(d_1+d_2+1)(b-a))\cap
 \big( \{z_k\}_{k=1}^{d_1+d_2+1}+\{0,a-(d_1+d_2+1)(b-a)\} \big)\big)=
\#\big(\big( [0, z_{l(k)})\cap
 \{z_k\}_{k=1}^{d_1+d_2+1}\} \big)\cup \big( [z_{l(k)}, z_{l(k)}+\tilde \theta_{a, b, c})\cap
 \{z_k\}_{k=1}^{d_1+d_2+1}\} \big)\big)= l(k)-1+m+1=m+l(k)$ by \eqref{dabc1characterization.tm.pf.eq19+}.
Thus
 \begin{eqnarray} (R_{a, b,c})^{k+1}(c-c_0) & \in &   z_{l(k)}+ \tilde\theta_{a, b, c} +(m+l(k))(b-a)+a\Z\nonumber\\
 &= & z_{l(k+1)}+(a-(d_1+d_2+1)(b-a))\nonumber\\
 & &  +(l(k+1)+(d_1+d_2+1)-1)(b-a)+a\Z\nonumber\\
 & = &  y_{l(k+1)}+a\Z.
\end{eqnarray}
This shows that the inductive conclusion holds when $z_{l(k)}<c_0-(d_1+1)(b-a)$.
Similarly we can show that the inductive conclusion holds when $z_{l(k)}>c_0-(d_1+1) (b-a)$.

From \eqref{dabc1characterization.tm.pf.eq222}, we see that for any $0\le n\le d_1+d_2-1$,
$(R_{a, b,c})^n([c-c_0, c-c_0+b-a))+a\Z= [y_{l(n)}, y_{l(n)}+b-a)+a\Z$
is contained either in $[0, c_0+a-b)+a\Z$ or $[c_0, a)+a\Z$,
and $(R_{a, b,c})^D([c-c_0, c-c_0+b-a))+a\Z=[c_0+a-b, c_0)+a\Z$. Therefore
${\mathcal S}_{a, b, c}$ is the complement of $\cup_{n=0}^{d_1+d_2} ([y_{l(n)}, y_{l(n)}+b-a)+a\Z) $
and hence  it is not an empty set.
\end{proof}

\subsection{Proof of Theorem \ref{newmaintheorem4}}
 (XII):\quad
We observe that ${\mathcal G}(\chi_{[0,c)}, a\Z\times \Z/b)$ is not a Gabor frame if and only if ${\mathcal D}_{a, b, c}\ne \emptyset$
if and only if ${\mathcal S}_{a, b, c}\ne \emptyset$ and
\eqref{sabcstar.tm.eq1} does not hold
if and only if the triple $(a, b,c)$ satisfies \eqref{cnecessary.eq1}, \eqref{cnecessary.eq2}, \eqref{cnecessary.eq3}, and
$c-(\lfloor c/b\rfloor+1)(d_1+1)(b-a)- \lfloor c/b\rfloor (d_1+1)(b-a)\ne a$. In the above argument,
 the first equivalence holds by  Theorem \ref{framenullspace1.tm}, the second one follows from   \eqref{dabc2subsetdabc1}
  and Theorem  \ref{sabcstar.tm}, and
the last one is obtained from Theorem \ref{dabc1characterization.tm}
and the observation that \eqref{sabcstar.tm.eq1} holds if and only if
$c-(\lfloor c/b\rfloor+1)(d_1+1)(b-a)- \lfloor c/b\rfloor (d_2+1)(b-a)=a$
as  there are $d_1$ holes of length $b-a$
in ${\mathcal S}_{a, b, c}\cap [0, c_0+a-b)$ and
$d_2$ holes of length $b-a$
in ${\mathcal S}_{a, b, c}\cap [c_0, a)$ by Theorem \ref{dabc1holes.tm}.

\section{Maximal Invariant Sets with Rational Time-Frequency Lattice}
\label{rational.section}

In this section, we  prove Theorem \ref{newmaintheorem5}.
To do so,  let us consider  how to expand the line with marks 
for the case that $a/b=p/q$  for some coprime integers $p$ and $q$ and  $c/b\in
\Z/q$.   Unlike for the case that $b/a\not\in \Q$, we  insert gaps of either large or small sizes
 at each location of marks and also insert a gap at the origin, which makes
the augmentation operation rather delicate and complicated. But on the other hand, we get the help from the finite cyclic group structure for the marks
 stated in Theorem \ref{dabc1algebra.tm}.

\begin{thm}
\label{dabc1discretecharacterization.tm}
 Let  $(a,b,c)$ be a triple of positive numbers satisfying
$a<b<c, b-a<c_0:=c-\lfloor c/b\rfloor b<a, 0<c_1:=\lfloor c/b\rfloor b-\lfloor (\lfloor c/b\rfloor b/a)\rfloor a<2a-b,
  \lfloor c/b\rfloor\ge
2, a/b=p/q$  for some coprime integers $p$ and $q$, and  $c/b\in
\Z/q$. Then
 ${\mathcal S}_{a, b, c}\ne \emptyset$ if and only if the triple $(a, b,  c)$  of positive numbers is  one of the following three types:
\begin{itemize}
\item  [{1)}] $c_0<{\rm gcd}(a, c_1)$.

\item[{2)}] $b-c_0<{\rm gcd} (a, c_1+b)$.

\item[{3)}]
 There exist nonnegative integers $d_1, d_2, d_3, d_4$ such that
   \begin{equation}\label{cnecessarydiscrete.eq1}
   0< B_d:=a-(d_1+d_2+1)(b-a)\in N b\Z/q,
\end{equation}
\begin{equation}\label{cnecessarydiscrete.eq2}
 Nc_1+(d_1+d_3+1)(b-a)\in a \Z,
\end{equation}
\begin{equation}\label{cnecessarydiscrete.eq3}
 (d_1+d_2+1)(N c_1+(d_1+d_3+1)(b-a))-(d_1+d_3+1)a\in N a\Z,
\end{equation}
\begin{equation} \label{cnecessarydiscrete.eq5}
c_0=(d_1+1)(b-a)+(d_1+d_3+1)B_d/N+\delta
\end{equation}
for some $\delta\in (-\min(B_d/N, a-c_0), \min(B_d/N, c_0+b-a))$,
\begin{equation}\label{cnecessarydiscrete.eq6}
{\rm gcd}(Nc_1+(d_1+d_3+1)(b-a), Na)=a,
\end{equation}
and
\begin{equation}\label{cnecessarydiscrete.eq7}
 \#E_{a,b,c}^d= d_1,
\end{equation}
where $N=d_1+d_2+d_3+d_4+2$
 and $E_{a, b, c}^d$ is
defined as in \eqref{eabcd.def}.
\end{itemize}
\end{thm}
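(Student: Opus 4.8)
The plan is to establish both implications by mirroring the proof of Theorem \ref{dabc1characterization.tm} for the irrational case, while replacing the \emph{free} placement of marks there by the \emph{finite cyclic group} structure supplied by Theorem \ref{dabc1algebra.tm}(ii), and by carefully tracking the two distinct gap sizes produced by Theorem \ref{dabc1discreteholes.tm}. The three types 1), 2), 3) will correspond to the geometric possibilities for the complement $\R\setminus{\mathcal S}_{a, b, c}$: a single gap length on the positive side of the origin, a single gap length on the negative side, and genuinely two gap lengths.

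For necessity, I would suppose ${\mathcal S}_{a, b, c}\ne\emptyset$ and apply Theorem \ref{dabc1discreteholes.tm} to extract $\delta\in[0,c_0+a-b]\cap b\Z/q$, $\delta'\in[c_0-a,0]\cap b\Z/q$ with $\delta\delta'=0$, together with nonnegative integers $N_1,N_2$, so that per period $\R\setminus{\mathcal S}_{a, b, c}$ consists of $N_1+1$ gaps of length $b-a+\delta-\delta'$ and $N_2$ gaps of length $\delta-\delta'$, while ${\mathcal S}_{a, b, c}$ is a disjoint union of intervals of common length $h$ satisfying \eqref{dabc1discreteholes.tm.eq9}. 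The key split is by the number of distinct gap lengths. When $N_2=0$ only the large gaps survive, all of one size; computing the holes-removal image $Y_{a,b,c}(a)=(N_1+N_2+1)h$ and the cyclic generator ${\rm gcd}(Y_{a,b,c}(c_1+b-a),Y_{a,b,c}(a))$ from Theorem \ref{dabc1algebra.tm}(ii) then collapses this generator to ${\rm gcd}(a,c_1)-c_0$ in the subcase $\delta>0=\delta'$ (forcing $c_0<{\rm gcd}(a,c_1)$, type 1) and to ${\rm gcd}(a,c_1+b)+c_0-b$ in the subcase $\delta=0>\delta'$ (forcing $b-c_0<{\rm gcd}(a,c_1+b)$, type 2). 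When two lengths are genuinely present, I set $N:=N_1+N_2+1=d_1+d_2+d_3+d_4+2$ by letting $d_1,d_2,d_3,d_4$ count the large and small gaps lying in the two regions $[0,c_0+a-b)$ and $[c_0,a)$; then \eqref{cnecessarydiscrete.eq1}--\eqref{cnecessarydiscrete.eq3} and \eqref{cnecessarydiscrete.eq6} record respectively the period length, the closing-up of the rotation after $N$ steps, the location of the origin-gap, and the full order $N$ of the rotation, while \eqref{cnecessarydiscrete.eq5} fixes the signed origin-gap size $\delta$. The counting identity \eqref{cnecessarydiscrete.eq7}, $\#E^d_{a,b,c}=d_1$, is obtained exactly as \eqref{cnecessary.eq3} was, by counting which iterates of the black hole under $R_{a,b,c}$ fall into $[0,c_0+a-b)$, this time performed inside the finite cyclic group of marks rather than along a dense orbit.

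For sufficiency I would reverse the surgery. Given the integer data of type 1), 2) or 3), I first lay down marks forming the prescribed cyclic group on the circle $\R/(Y_{a,b,c}(a)\Z)$, invoking \eqref{cnecessarydiscrete.eq6} to guarantee exactly $N$ distinct marks and \eqref{cnecessarydiscrete.eq3} to guarantee that the rotation by (the image of) $Nc_1+(d_1+d_3+1)(b-a)$ closes up after $N$ steps. I then expand the marks into gaps, inserting size $|b-a|+|\delta|$ at the first $d_1+d_2+1$ marks and size $|\delta|$ at the remaining ones, positioned by a shift rule $y_l=z_l+(\text{accumulated gap length})$ as in \eqref{dabc1characterization.tm.pf.eq212}. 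The heart of the verification is an induction establishing $(R_{a,b,c})^n(c-c_0)+a\Z=y_{l(n)}+a\Z$ for $0\le n\le N-1$, the analogue of \eqref{dabc1characterization.tm.pf.eq222}, in which \eqref{cnecessarydiscrete.eq5} and the counting set $E^d_{a,b,c}$ pin down both the origin-gap and the index function $l(n)$. Once this holds, the union of the inserted gaps is $R_{a,b,c}$-invariant and contains both black holes, so by the maximality in Lemma \ref{dabc0rationalmax.prop} and the orbit criterion of Proposition \ref{dabc1pointcharacterization.prop} its complement is a nonempty ${\mathcal S}_{a, b, c}$.

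The main obstacle will be the sufficiency induction in type 3). Because the marks now sit in a cyclic group rather than freely, neighbouring gaps may be forced to abut or to wrap across the origin, and one must verify that the two distinct gap sizes land at precisely the positions dictated by the rotation. This requires the delicate interplay of \eqref{cnecessarydiscrete.eq5} (fixing $\delta$) with \eqref{cnecessarydiscrete.eq7}, and a separation of the subcase $\delta=0$—where the small gaps degenerate to points, as in conclusions (iii)$'$ and (iv)$'$ of Theorem \ref{dabc1discreteholes.tm}—from $\delta\ne0$. This case analysis, together with the bookkeeping that all inserted intervals have the common length $h$, is where essentially all of the computational work lies; the degenerate types 1) and 2), by contrast, reduce to the single-gap $\gcd$ computations already sketched above.
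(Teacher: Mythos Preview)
Your proposal is correct and follows essentially the same architecture as the paper: necessity by case-splitting on the pair $(\delta,\delta')$ from Theorem \ref{dabc1discreteholes.tm} (the extreme values $\delta=c_0+a-b$ and $\delta'=c_0-a$ producing types 1) and 2), the interior and zero cases producing type 3) with the parameters $d_1,\ldots,d_4$ defined exactly as you describe), and sufficiency for type 3) by the mark-and-expand induction you sketch, with the three subcases $\delta>0$, $\delta<0$, $\delta=0$ handled separately. The one place where the paper is more direct than your plan is sufficiency in types 1) and 2): rather than running the augmentation procedure, it simply writes down the candidate invariant set explicitly---$[c_0,\gcd(a,c_1))+\gcd(a,c_1)\Z$ for type 1) and $[0,\gcd(a,c_1+b)-b+c_0)+\gcd(a,c_1+b)\Z$ for type 2)---and verifies in one line that $R_{a,b,c}$ maps each into itself and that each misses the black hole, invoking the maximality in Lemma \ref{dabc0rationalmax.prop} to conclude. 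This shortcut avoids any induction in the single-gap-size cases and is worth adopting.
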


%
%
%

In  Theorem \ref{dabc1discretecharacterization.tm}, we insert a gap of
large size at the origin for the first two cases, while
a gap of small size is inserted at the origin for the third case.
For the first two cases,  no gaps of small size
have been inserted at any location of  marks and the size of gaps inserted is always $c_0$
for the first case and $b-c_0$ for the second case.
For the  third case, the nonnegative integer parameters $d_1, d_2$
are indeed the numbers of gaps of  size $b-a+|\delta|$ inserted in $[0, c_0+a-b)$ and  $[c_0, a)$ respectively, and
  the nonnegative integer parameters $d_3, d_4$ are the numbers of gaps of  size $|\delta|$ inserted in $[0, c_0+a-b)$
and $[c_0, a)$, excluding
the one inserted at the origin, respectively.

In next two subsections, we give the proofs of  Theorems \ref{dabc1discretecharacterization.tm} and \ref{newmaintheorem5}.

\subsection{Nontrivial maximal invariant sets with rational time-frequency lattices}
\label{proofoftheorem3.10.section}
In this subsection, we prove Theorem \ref{dabc1discretecharacterization.tm}.
The necessity of Theorem \ref{dabc1discretecharacterization.tm} follows essentially from Theorems \ref{dabc1discreteholes.tm} and \ref{dabc1algebra.tm}.
We examine  five cases to verify the sufficiency. For the case 1) $c_0<{\rm gcd}(c_1, a)$, we show that
$[c_0, {\rm gcd}(c_1, a))+{\rm gcd}(c_1, a) \Z$ is an invariant set under the transformation $R_{a, b, c}$
and it has empty intersection with  black holes of transformations $R_{a, b, c}$
and $\tilde R_{a, b, c}$. This together with the maximality
of the set ${\mathcal S}_{a, b,c}$ in Lemma \ref{dabc0rationalmax.prop} implies that ${\mathcal S}_{a, b, c}\ne \emptyset$.
Similarly for the case 2) $b-c_0<{\rm gcd}(a, c_1+b)$, we  verify that $[0,{\rm gcd}(a, c_1+b)-b+c_0) + {\rm gcd}(a, c_1+b)\Z$
is invariant under the transformation $R_{a, b, c}$
and it has empty intersection with  black holes of  transformations $R_{a, b, c}$ and $\tilde R_{a, b, c}$.
For the case 3), we start from putting marks at $h\Z$ and insert gaps
$\frac{b-a+|\delta|}{2} ({\rm sgn}(\delta+b/(2q))+[-1, 1])$
at marks located at $lmh+NhZ, 1\le l\le d_1+d_2+1$, and
$\frac{|\delta|}{2} ({\rm sgn}(\delta+b/(2q))+[-1, 1])$
at other marked locations,
where $N=d_1+d_2+d_3+d_4+2$, $h= (a-(d_1+d_2+1)(b-a))/N-|\delta|$ and $m=(Nc_1+(d_1+d_3+1)(b-a))/a$. We then show that the gaps just inserted
form a set that is invariant under the transformation $R_{a, b, c}$ and that contains
 black holes of the transformations $\tilde R_{a, b, c}$ and $R_{a, b, c}$.

\begin{proof}[Proof of Theorem \ref{dabc1discretecharacterization.tm}]

($\Longrightarrow$)\ Let $N_1,  N_2, \delta, \delta', h$ be as in Theorem \ref{dabc1discreteholes.tm}.

\bigskip
{\bf Case 1}:\ {\em  $\delta=c_0+a-b$ and $\delta'=0$.}

In this case, $N_2=0$  and
$(R_{a, b, c})^n([c_1, c_1+c_0)+a\Z), 0\le n\le N_1$, are mutually disjoint  gap with
$(R_{a, b, c})^{N_1}([c_1, c_1+c_0)+a\Z)=[0, c_0)+a\Z$ by  Theorem \ref{dabc1discreteholes.tm}. Therefore $N_1\ne 0$ as $c_1>0$.
Observe that
$$(R_{a, b, c})^n([c_1, c_1+c_0)+a\Z)=[c_1, c_1+c_0)+n(c_1-a)+a\Z, 0\le n\le N_1$$
because $-a<c_1-a<0$ and
$(R_{a, b, c})^n([c_1, c_1+c_0)+a\Z)\subset [c_0, a)+a\Z$ for all $0\le n\le N_1-1$.
Replacing $n$ by $N_1$  in the above  equality and recalling that
$(R_{a, b, c})^{N_1}([c_1, c_1+c_0)+a\Z)=[0, c_0)+a\Z$ gives
$$c_1+N_1(c_1-a)\in a\Z,$$
which implies that
\begin{equation}\label{dabc1discretecharacterization.tm.pf.case-1.eq2}
c_1=-N_1(c_1-a)+ka\end{equation}
for some integer $k$.
Write $c_1/b=r/q$ and let $m={\rm gcd}(p, r)$.
Then it follows from \eqref{dabc1discretecharacterization.tm.pf.case-1.eq2}
that $N_1+1\in  p\Z/m$.
This together with  mutual disjointness of the gaps  $[c_1, c_1+c_0)+n(c_1-a)+a\Z, 0\le n\le N_1$,
implies that
\begin{equation}\label{dabc1discretecharacterization.tm.pf.case-1.eq3} N_1+1=p/m,\end{equation}
 as otherwise $p/m\le N_1$ and  $[c_1, c_1+c_0)+a\Z=[c_1, c_1+c_0)+(p/m)(c_1-a)+a\Z$.
Observe that
  $$\cup_{n=0}^{N_1} (n(c_1-a)+a\Z)=\cup_{n=0}^{p/m-1} (n(c_1-a)+a\Z)=\{0, mb/q, \ldots, (p-m)b/q\}+a\Z.$$
 Therefore the
 mutual disjointness of the gaps  $[c_1, c_1+c_0)+n(c_1-a)+a\Z, 0\le n\le N_1$,
 becomes $c_0\le mb/q={\rm gcd}(c_1, a)$.
We notice that $\cup_{n=0}^{N_1}[c_1, c_1+c_0)+n(c_1-a)+a\Z=\cup_{n=0}^{p/m-1} ([0,mb/q)+mnb/q+a\Z)=\R$ if
$c_0=mb/q$. This proves the desired first condition $c_0<{\rm gcd}(c_1, a)$  in Theorem
\ref{dabc1discretecharacterization.tm}.

\bigskip
{\bf Case 2}:\ {\em  $\delta=0$ and $\delta'=c_0-a$.}

In this case, $N_2=0$  and
$(R_{a, b, c})^n([c_1+c_0-a, c_1+b-a)+a\Z), 0\le n\le N_1$, are mutually disjoint gap with
$(R_{a, b, c})^{N_1}([c_1+c_0-a, c_1+b-a)+a\Z)=[c_0+a-b, a)+a\Z$ by  Theorem \ref{dabc1discreteholes.tm}.
Therefore $N_1\ge 1$ as $c_1<2a-b$. 
Observe that
$$(R_{a, b, c})^n([c_1+c_0-a, c_1+b-a)+a\Z)=[c_1+c_0-a, c_1+b-a)+n(c_1+b-a)+a\Z$$
for all $0\le n\le N_1$,
because $0<c_1+b-a<a$ and
$(R_{a, b, c})^n([c_1+c_0-a, c_1+b-a)+a\Z)\subset [0, c_0+a-b)+a\Z$ for all $0\le n\le N_1-1$.
Replacing $n$ by $N_1$  in the above  equality and recalling that
$(R_{a, b, c})^{N_1}([c_1+c_0-a, c_1+b-a)+a\Z)=[c_0+a-b, a)+a\Z$ give
\begin{equation}\label{dabc1discretecharacterization.tm.pf.case0.eq2}
(N_1+1)(c_1+b-a)\in a\Z.\end{equation}
This together with mutual disjointness of $[c_1+c_0-a, c_1+b-a)+n(c_1+b-a)+a\Z, 0\le n\le N_1$, implies that
$$N_1+1=a/{\rm gcd}(c_1+b, a),$$
as otherwise $ a/{\rm gcd}(c_1+b, a)\le N_1$ and
$[c_1+c_0-a, c_1+b-a)+a\Z=[c_1+c_0-a, c_1+b-a)+ (a/{\rm gcd}(c_1+b, a)) (c_1+b-a)+a\Z$, which is a contradiction.
Therefore mutual disjointness of the gaps
$(R_{a, b, c})^n([c_1+c_0-a, c_1+b-a)+a\Z), 0\le n\le N_1$,
becomes  mutual disjointness of the gaps
$[c_1+c_0-a, c_1+b-a)+i {\rm gcd}(c_1+b, a)+a\Z, 0\le i\le a/{\rm gcd}(c_1+b, a)-1$,
which holds if and only if
$b-c_0\le {\rm gcd}(c_1+b, a)$.
Also we notice that
\begin{eqnarray*}  & & \cup_{n=0}^{N_1} (R_{a, b, c})^n([c_1+c_0-a, c_1+b-a)+a\Z)\\
& = &
\cup_{i=0}^{a/{\rm gcd}(c_1+b, a)-1}([c_1+c_0-a, c_1+b-a)+i {\rm gcd}(c_1+b, a)+a\Z)\\
&= & c_1+b-a+[-{\rm gcd}(c_1+b, a), 0)+ {\rm gcd}(c_1+b, a)\Z=\R
\end{eqnarray*}
if $b-c_0={\rm gcd}(c_1+b, a)$, which contradicts to ${\mathcal S}_{a, b, c}\ne \emptyset$.
This leads to the desired second condition $b-c_0< {\rm gcd}(c_1+b, a)$ in Theorem \ref{dabc1discretecharacterization.tm}.

\bigskip
{\bf Case 3}:\ {\em  $0<\delta<c_0+a-b$ and $\delta'=0$.}

By Theorem \ref{dabc1discreteholes.tm}, $N_1\ge 0$ and $N_2\ge 1$.
Denote by $d_1, d_2$
 the number of  big gaps $(R_{a, b, c})^n(c-c_0+[0, b-a+\delta)), 0\le n\le N_1-1$,
of length $b-a+\delta$ contained in $[0, c_0+a-b-\delta)+a\Z$ and in
$[c_0, a)+a\Z$ respectively, and similarly denote  by $d_3$ and $d_4$
the number of small gaps
 $(R_{a, b, c})^m([c_0+\lfloor c/b\rfloor b-\delta, c_0+\lfloor c/b\rfloor b)), 0\le m\le N_2-1$,
of length $\delta$ contained in $[\delta, c_0+a-b-\delta)+a\Z$  and in
$[c_0, a)+a\Z$ respectively.
 Now let us verify that
\eqref{cnecessarydiscrete.eq1}--\eqref{cnecessarydiscrete.eq7} hold for the above nonnegative integer parameters $d_1, d_2, d_3, d_4$.

\smallskip

{\em Proof of \eqref{cnecessarydiscrete.eq1}}. \quad
By Theorem \ref{dabc1discreteholes.tm},  for every $0\le
n\le N_1-1$, the gap $(R_{a, b, c})^n(c-c_0+[0, b-a+\delta)+a\Z)$
is either contained in $[0, c_0+a-b)+a\Z$ or $[c_0, a)+a\Z$. Hence
\begin{equation}\label{dabc1discretecharacterization.lem.pf.eq1} N_1=d_1+d_2.
\end{equation}
 Similarly
\begin{equation}\label{dabc1discretecharacterization.lem.pf.eq2}  N_2-1=d_3+d_4
\end{equation}
as
for any  $0\le m\le N_2-1$, the gap
 $(R_{a, b, c})^m([c_0+\lfloor c/b\rfloor b-\delta, c_0+\lfloor c/b\rfloor b))=
(R_{a, b, c})^{m+1}([c_0+a-b-\delta, c_0)\backslash [c_0+a-b, c_0)+a\Z)$
 of length $\delta$ is contained either in  $[0, c_0+a-b-\delta)+a\Z$ and in
 $[c_0, a)+a\Z$.
Combining \eqref{dabc1discreteholes.tm.eq1}, \eqref{dabc1discreteholes.tm.eq8},
\eqref{dabc1discreteholes.tm.eq9},
 \eqref{dabc1discretecharacterization.lem.pf.eq1},    and
\eqref{dabc1discretecharacterization.lem.pf.eq2},
we obtain that there are $(d_1+d_2+1)$ gaps of length $b-a+\delta$ and
$(d_3+d_4+1)$ gaps of length $\delta$, and
$N:=(d_1+d_2+d_3+d_4+2)$ intervals of length $h$ on one period $[0, a)$. Therefore
\begin{equation}\label{deltb+h} 0<a-(d_1+d_2+1)(b-a) =  N (h+\delta) \in  N b\Z/q.
\end{equation}
This proves    \eqref{cnecessarydiscrete.eq1}.

 {\em Proof of \eqref{cnecessarydiscrete.eq2}}. \quad  By
\eqref{dabc1discreteholes.tm.eq3},
\eqref{dabc1discreteholes.tm.eq5} and the definition of
nonnegative integers $d_i, 1\le i\le 4$, we obtain that
\begin{equation}\label{dabc1discretecharacterization.lem.pf.eq3}
c-c_0+b-a+\delta+ d_1 (\lfloor c/b\rfloor+1)b+d_2 \lfloor
c/b\rfloor b\in c_0+ a\Z,
\end{equation}
and
\begin{equation}\label{dabc1discretecharacterization.lem.pf.eq4}
c_0+\lfloor c/b\rfloor b -\delta + d_3 (\lfloor
c/b\rfloor+1)b+d_4\lfloor c/b\rfloor b\in a\Z.
\end{equation}
Adding \eqref{dabc1discretecharacterization.lem.pf.eq3} and
\eqref{dabc1discretecharacterization.lem.pf.eq4} leads to
 $c+(d_1+d_3+1) (\lfloor c/b\rfloor+1)b+
(d_2+d_4) \lfloor c/b\rfloor  b\in c_0+a\Z$. Then
 $Nc_1+(d_1+d_3+1)(b-a)\in a\Z$ and
 \eqref{cnecessarydiscrete.eq2} is true.

{\em Proof of \eqref{cnecessarydiscrete.eq5}}.\quad
 By Theorem \ref{dabc1discreteholes.tm} and the definition of the integers $d_1$ and $d_3$,
the interval $[0, c_0+a-b-\delta)$  is covered by $d_1$ gaps of length $b-a+\delta$,
$d_3+1$ gaps of length $\delta$, and $d_1+d_3+1$ intervals of
length $h$. This together with \eqref{deltb+h} leads to
\begin{eqnarray} \label{dabc1discretecharacterization.lem.pf.eq5}
c_0+a-b-\delta
&= & d_1(b-a+\delta)+ (d_3+1)\delta+ (d_1+d_3+1) h\nonumber\\
&= & d_1(b-a)+(d_1+d_3+1)B_d/N.
\end{eqnarray}
This proves  \eqref{cnecessarydiscrete.eq5}.

 {\em Proof of
\eqref{cnecessarydiscrete.eq3}}. \quad Substituting the expression  in \eqref{dabc1discretecharacterization.lem.pf.eq5} into
\eqref{dabc1discretecharacterization.lem.pf.eq3}, we obtain that
\begin{eqnarray}
a\Z &\ni & c-c_0+b-a+\delta-c_0+ d_1 (\lfloor c/b\rfloor+1)b+d_2 \lfloor c/b\rfloor b-d_1a\nonumber\\
& = & d_1 (\lfloor c/b\rfloor+1)b+d_2 \lfloor c/b\rfloor b+\lfloor
c/b\rfloor
b+b-(d_1+1)a\nonumber\\
& & -(d_1+1)(b-a)-(d_1+d_3+1)B_d/N
\nonumber\\
& = & (d_1+d_2+1) \lfloor c/b\rfloor b-(d_1+d_3+1) B_d/N.
\end{eqnarray}
Multiplying $N$ at both sides of the above equation leads to the
desired inclusion \eqref{cnecessarydiscrete.eq3}.

{\em Proof of \eqref{cnecessarydiscrete.eq6}}.\quad  Define
\begin{equation}\label{cnecessarydiscrete.eq6.pf.case1.eq1} m=(N c_1+(d_1+d_3+1)(b-a))/a. \end{equation}
Then $m$ is a positive integer in $[1, N-1]$ by
\eqref{cnecessarydiscrete.eq2} and
the observation that
 $0<N c_1+(d_1+d_3+1)(b-a)< N (2a-b)+(d_1+d_3+1)(b-a)=Na-(d_2+d_4+1)(b-a)
\le Na$. Let $Y_{a, b, c}$ be as in Theorem
\ref{dabc1algebra.tm} and let $m_1$ be the nonnegative
integer in $[0, N-1]$ such that $Y_{a, b, c}(c_1+b-a)\in
m_1h+ Y_{a, b, c}(a) \Z$. We claim the following:
\begin{equation}\label{cnecessarydiscrete.eq6.pf.case1.eq3} m_1=m.
\end{equation}
 Recall that $(R_{a, b, c})^{N_1}([c_1,
c_1+b-a+\delta)+a\Z)=[c_0+a-b-\delta, c_0)+a\Z$ by Theorem
\ref{dabc1discreteholes.tm}, and that there are $d_1+d_3+1$ gaps in
the interval $[0, c_0+a-b-\delta)$. This together with Theorem
\ref{dabc1algebra.tm} that
\begin{equation} \label{cnecessarydiscrete.eq6.pf.case1.eq4}
(d_1+d_2+1) m_1h -(d_1+d_3+1)h\in Y_{a, b, c}(a) \Z=Nh \Z.
\end{equation}
Then the number of gaps of length $b-a+\delta$ contained $[0, c_1)$
is $((d_1+d_2+1) m_1h -(d_1+d_3+1)h)/Y_{a, b, c}(a)= ((d_1+d_2+1)
m_1-(d_1+d_3+1))/N$. This implies that  there are $m_1$ gaps
 contained in $[0, c_1)$ with $((d_1+d_2+1)
m_1-(d_1+d_3+1))/N$ of them are gaps of  length $b-a+\delta$.
Hence
\begin{eqnarray*}
c_1 &= & m_1 h+ \big(m_1-\frac{(d_1+d_2+1) m_1-(d_1+d_3+1)}{N}\big) \delta\\
& & +\frac{(d_1+d_2+1) m_1-(d_1+d_3+1)}{N}(b-a+\delta)\nonumber\\
& = &
(m_1 a-(d_1+d_3+1)(b-a))/N.
\end{eqnarray*}
This together with \eqref{cnecessarydiscrete.eq6.pf.case1.eq1}
proves \eqref{cnecessarydiscrete.eq6.pf.case1.eq3}.

We return to the proof of \eqref{cnecessarydiscrete.eq6}.
The above claim \eqref{cnecessarydiscrete.eq6.pf.case1.eq3}, together with
 \eqref{dabc1discretecharacterization.lem.pf.eq1},
\eqref{dabc1discretecharacterization.lem.pf.eq2},  and the observation
that ${\mathcal K}_{a, b, c}$ is a cyclic group generated by $Y_{a, b, c}(c_1+b-a)$  and  has order $N_1+N_2+1=N$
by Theorem \ref{dabc1algebra.tm}  and
Remark \ref{dabc1discretealgebraic.tm.rem},  proves   \eqref{cnecessarydiscrete.eq6}.

{\em Proof of \eqref{cnecessarydiscrete.eq7}}.\quad  Applying
\eqref{cnecessarydiscrete.eq6.pf.case1.eq1} and
\eqref{cnecessarydiscrete.eq6.pf.case1.eq3}, and recalling that $[c_1, c_1+b-a)$ is a hole in the complement
of the maximal invariant set ${\mathcal S}_{a, b, c}$,
we have that
\begin{equation}
Y_{a, b, c}(c_1)-mh\in N h\Z.
\end{equation}
Then $n\in E_{a, b, c}^d$ if and only if $(R_{a, b, c})^n [c_1,
c_1+b-a+\delta)$ is a big gap contained in $[0, c_0+a-b)+a\Z$. This
implies that the cardinality of  the set $E_{a, b, c}^d$ is equal to $d_1$
from the definition of the nonnegative integer $d_1$.

This completes the proof of the necessity for the case that
$\delta\in (0, c_0+a-b)$ and $\delta'=0$.

\bigskip
{\bf Case 4}:\ {\em  $\delta'\in (c_0-a, 0)$ and $\delta=0$.}

By Theorem \ref{dabc1discreteholes.tm}, $N_1\ge 0$ and $N_2\ge 1$.
Denote by $d_1, d_2$
 the numbers of  big gaps $(R_{a, b, c})^n(c-c_0+[\delta', b-a)), 0\le n\le N_1-1$,
of length $b-a-\delta'$ contained in $[0, c_0+a-b)+a\Z$ and in
$[c_0-\delta', a+\delta')+a\Z$ respectively, and similarly denote  by $d_3$ and $d_4$
the numbers of small gaps
 $(R_{a, b, c})^m([c_0+\lfloor c/b\rfloor b, c_0+\lfloor c/b\rfloor b-\delta')), 0\le m\le N_2-1$,
of length $-\delta'$ contained in $[0, c_0+a-b)+a\Z$  and in
$[c_0-\delta', a+\delta')+a\Z$ respectively.
 We may
follow the argument for the first case 
 and prove the desired properties
\eqref{cnecessarydiscrete.eq1}--\eqref{cnecessarydiscrete.eq7}
with the above nonnegative integers $d_1, d_2, d_3$ and $d_4$.

\smallskip

{\bf Case 5}:\ $\delta=\delta'=0$.

By Theorem \ref{dabc1discreteholes.tm}, $N_1\ge 0$ and $N_2\ge 1$.
Denote by $d_1, d_2$
 the numbers of  gaps $(R_{a, b, c})^n(c-c_0+[0, b-a)), 0\le n\le N_1-1$,
of length $b-a$ contained in $[0, c_0+a-b)+a\Z$ and in
$[c_0, a)+a\Z$ respectively, and similarly denote  by $d_3$ and $d_4$
the numbers of   $(R_{a, b, c})^m(c_0), 1\le m\le N_2$, contained in
$(0, c_0+a-b)+a\Z$ or $[c_0, a)+a\Z$  respectively.
Then     we may follow
the  argument for the first case line by line
and establish
\eqref{cnecessarydiscrete.eq1}--\eqref{cnecessarydiscrete.eq7}
with the above nonnegative integer parameters $d_1, d_2, d_3$ and $d_4$.

This completes the proof of the necessity for the case that
$\delta=\delta'=0$ and also the proof of the necessity.

\bigskip

($\Longleftarrow$)\ We examine five cases to prove the sufficiency.

{\bf Case 1}: \ $c_0<{\rm gcd}(c_1, a)$.

Let $N+1=a/{\rm gcd}(c_1, a)$ for some nonnegative integer $N$ and define
$T=\big(\cup_{n=0}^N [c_0, {\rm gcd}(c_1, a))+n(a-c_1)\big)+a\Z$.
Then
\begin{eqnarray}
T&=&\big(\cup_{i=0}^N [c_0, {\rm gcd}(c_1, a))+i {\rm gcd}(c_1, a)\big)+a\Z\nonumber\\
&= &
[c_0, {\rm gcd}(c_1,a))+ {\rm gcd}(c_1,a)\Z,
\end{eqnarray}
and
$T$ has empty intersection with  black holes of
the transformations $R_{a, b, c}$ and $\tilde R_{a, b, c}$, since
$T\cap [0, c_0)=T\cap [c_1, c_0+c_1)=\emptyset$,
and for any $t\in T$,
$$
R_{a, b, c}(t)  =  t+c_1 \in [c_0, {\rm gcd}(c_1, a))+ c_1+ {\rm gcd}(c_1, a)\Z =T.
$$
Therefore $T\subset {\mathcal S}_{a, b, c}$ (in fact $T={\mathcal S}_{a, b, c}$)
as ${\mathcal S}_{a, b, c}$ is the maximal invariant set that has empty intersection with the black hole of
the transformation $R_{a, b, c}$  by Lemma \ref{dabc0rationalmax.prop}.
Thus ${\mathcal S}_{a, b, c}$ is not an empty set as
the restriction of the set $T$ on $[0, a)$ consists of $a/{\rm gcd}(c_1, a)$ intervals of length ${\rm gcd}(c_1, a)-c_0>0$.

{\bf Case 2}: \ $b-c_0<{\rm gcd}(a, c_1+b)$.

Write $a/{\rm gcd}(a, c_1+b)=N+1$ for some nonnegative integer $N$ and define
$T'=\big(\cup_{i=0}^N [0, {\rm gcd}(a, c_1+b)-b+c_0)+i(c_1+b-a)\big)+a\Z=
[0, {\rm gcd}(a, c_1+b)-b+c_0)+ {\rm gcd}(a, c_1+b)\Z$. Then
$T'$ has empty intersection with  black holes of
the transformations $R_{a, b, c}$ and $\tilde R_{a, b, c}$, since
$T'\cap [c_1, c_1+b-a)=T'\cap [c_0+a-b, a)=\emptyset$,
and for any $t\in T$,
\begin{eqnarray*}
R_{a, b, c}(t) & = &  t+c_1+b 
 \in
([0, {\rm gcd}(a, c_1+b)-b+c_0)\\
& & \quad + c_1+b+{\rm gcd}(a, c_1+b) \Z=T'.
\end{eqnarray*}
Therefore $T'\subset {\mathcal S}_{a, b, c}$     (in fact $T'={\mathcal S}_{a, b, c}$)
as ${\mathcal S}_{a, b, c}$ is the maximal invariant set that has empty intersection with the black holes of
the transformations $R_{a, b, c}$ and $\tilde R_{a, b, c}$ by Lemma \ref{dabc0rationalmax.prop}.
Thus ${\mathcal S}_{a, b, c}$ is not an empty set as
the restriction of the set $T'$ on $[0, a)$ consists of $N+1$ intervals of length ${\rm gcd}(a, c_1+b)-b+c_0>0$.

{\bf Case 3}: \ There exist nonnegative integers $d_1, d_2, d_3, d_4$
and $\delta\in (0, \min(B_d/N, c_0+a-b))$ satisfying
\eqref{cnecessarydiscrete.eq1}--\eqref{cnecessarydiscrete.eq7}.

In this case, we  set
\begin{equation}\label{hdef}
h=(a-(d_1+d_2+1)(b-a))/{N}-\delta,\end{equation}
\begin{equation} \label{mdef}
m=( N c_1+(d_1+d_3+1)(b-a))/a,
\end{equation}
and
\begin{equation}\label{tildemdef}
\tilde m= ((d_1+d_2+1) m-(d_1+d_3+1))/N.
\end{equation}
 Then
\begin{equation} \label{hproperty1}
0<h\in b\Z/q
\end{equation}
by \eqref{cnecessarydiscrete.eq1} and
\eqref{cnecessarydiscrete.eq5}; $m$ is a positive integer no
larger than $N-1$, i.e.,
\begin{equation}\label{mproperty}
m\in \Z\cap [1, N-1]
\end{equation}
as $ 0<Nc_1/a\le m<( N (2a-b)+(d_1+d_3+1)(b-a))/a 
< N$; and
$\tilde m$ is a nonnegative integer no larger than $m$,
\begin{equation}\label{tildemproperty} \tilde m\in [0, m]\cap
\Z\end{equation} by \eqref{cnecessarydiscrete.eq3}.
 Moreover,
\begin{eqnarray}\label{c1property} & & m\frac{a-(d_1+d_2+1)(b-a)}{N}+ \tilde m
(b-a)\nonumber \\
& = & \frac{m}{N} a -\frac{d_1+d_3+1}{N}(b-a)=c_1.
\end{eqnarray}
 In order to expand the real line $\R$ with marks at $h\Z$ to create an
invariant set under the transformation $R_{a,b,c}$,
 we insert gaps $[0, b-a+\delta)$ located at $lmh+Nh\Z, 1\le
l\le d_1+d_2+1$, and gaps $[0, \delta)$ otherwise. Recall from
\eqref{cnecessarydiscrete.eq6} that $(l-l')mh\not\in Nh\Z$ for all
$1\le l\ne l'\le N$. Therefore
 we  have inserted $d_1+d_2+1$ gaps $[0, b-a+\delta)$ and
$d_3+d_4+1$ gaps $[0, \delta)$ on the interval $[0, Nh)$. Thus
after performing the above expansion,  the interval $[0, Nh)$ with marks on $[0, Nh)\cap h\Z$ becomes the interval $[0,
Nh+(d_1+d_2+1)(b-a+\delta)+(d_3+d_4+1)\delta)=[0, a)$ with gaps  $[y_i, y_i+h_i),0 \le i\le N-1$, where $0=y_0\le y_1 \le \ldots
\le y_{D}$ and $h_i\in \{b-a+\delta, \delta\}, 0\le i\le N-1$. Now we
want to prove that
\begin{equation}\label{ym=c1}
y_m=c_1.
\end{equation}
For that purpose, we need the following claim:

\begin{claim}\label{dabc1discretecharacterization.lem.pf.claim1}
 For $s\in [0, N-1]\cap \Z$, the cardinality of the set $\{l\in [1, d_1+d_2+1]\cap \Z|\ lmh\in sh+[0, mh)+Nh\Z\}$ is equal to
$\tilde m+1$ if $1\le s\le d_1+d_3+1$, and  $\tilde m$ otherwise.
\end{claim}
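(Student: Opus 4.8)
The plan is to reduce the claim to an elementary lattice-point count modulo $N$ and to evaluate that count by a telescoping argument. Writing $\alpha:=d_1+d_2+1$ and $\beta:=d_1+d_3+1$, and dividing the defining membership $lmh\in sh+[0,mh)+Nh\Z$ by $h>0$ (legitimate since $h>0$ by \eqref{hdef} and \eqref{hproperty1}), the set in question is
\[
\{\,l\in[1,\alpha]\cap\Z\ :\ lm\bmod N\in W_s\,\},\qquad W_s:=\{s,s+1,\dots,s+m-1\}\ (\mathrm{mod}\ N).
\]
Denote its cardinality by $f(s)$, an $N$-periodic function of $s$. Two arithmetic facts drive the argument. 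First, ${\rm gcd}(m,N)=1$: this follows from \eqref{cnecessarydiscrete.eq6} together with $ma=Nc_1+(d_1+d_3+1)(b-a)$ coming from \eqref{mdef}, so that ${\rm gcd}(ma,Na)=a$ forces ${\rm gcd}(m,N)=1$. Second, $\alpha m\equiv\beta\ (\mathrm{mod}\ N)$ with $\tilde m=(\alpha m-\beta)/N$ an integer, which is exactly the content of \eqref{cnecessarydiscrete.eq3} and \eqref{tildemdef}. I also record $1\le\beta\le N-1$ and $1\le m\le N-1$ by \eqref{mproperty}.

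First I would compute the forward difference $f(s+1)-f(s)$. Since $W_{s+1}$ is obtained from $W_s$ by deleting the residue $s$ and adjoining $s+m$, this difference equals
\[
\#\{l\in[1,\alpha]:lm\equiv s+m\ (\mathrm{mod}\ N)\}-\#\{l\in[1,\alpha]:lm\equiv s\ (\mathrm{mod}\ N)\}.
\]
Reindexing the first count by $l\mapsto l-1$ turns it into $\#\{l\in[0,\alpha-1]:lm\equiv s\ (\mathrm{mod}\ N)\}$, so the two counts run over the index blocks $[0,\alpha-1]$ and $[1,\alpha]$, which agree on $[1,\alpha-1]$. The difference therefore telescopes to its two boundary terms,
\[
f(s+1)-f(s)=\mathbf{1}[\,s\equiv 0\ (\mathrm{mod}\ N)\,]-\mathbf{1}[\,s\equiv \alpha m\ (\mathrm{mod}\ N)\,].
\]
On $s\in[0,N-1]$ the first indicator fires only at $s=0$ and the second only at $s=\alpha m\bmod N=\beta$ (using $\alpha m\equiv\beta$); since $\beta\ge 1$ these sites are distinct. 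Hence $f$ is constant except for a single $+1$ step from $s=0$ to $s=1$ and a single $-1$ step from $s=\beta$ to $s=\beta+1$, giving $f(s)=f(0)+1$ for $1\le s\le\beta$ and $f(s)=f(0)$ for $s=0$ and for $\beta+1\le s\le N-1$. The borderline case $\beta=N-1$ is harmless: the $-1$ step then occurs across $s=N-1\to N\equiv 0$ and is absorbed by $N$-periodicity.

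It remains to identify $f(0)=\tilde m$, which I would obtain from a global count. Summing over a full period, $\sum_{s=0}^{N-1}f(s)=\sum_{l=1}^{\alpha}\#\{s:lm\bmod N\in W_s\}=\alpha m$, because each fixed residue lies in exactly $m$ of the windows $W_s$. On the other hand the step structure just established gives $\sum_{s=0}^{N-1}f(s)=Nf(0)+\beta$. Equating the two and using $\alpha m=N\tilde m+\beta$ yields $f(0)=\tilde m$, which is precisely the asserted value $\tilde m$ for $s=0$ and $s\in[\beta+1,N-1]$, and $\tilde m+1$ for $s\in[1,\beta]=[1,d_1+d_3+1]$; this proves Claim \ref{dabc1discretecharacterization.lem.pf.claim1}. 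The claim is the combinatorial engine behind \eqref{ym=c1}: inserting the prescribed gaps at the marks $jh$ shifts the $m$-th mark to $y_m=mh+\sum_{j=0}^{m-1}h_j$, and the claim with $s=0$ counts the big gaps among the first $m$ marks as $\tilde m$, whence $y_m=m(h+\delta)+\tilde m(b-a)=c_1$ by \eqref{c1property}. I expect the main obstacle to be locating the clean telescoping identity and then tracking the modular window-shift bookkeeping without sign or off-by-one errors—in particular verifying that the jump sites are $s=0$ and $s=\beta$ (rather than $s=\alpha m$ literally), that $\beta\in[1,N-1]$ so they never coincide, and that the $\beta=N-1$ boundary is consistent with periodicity; everything else is a routine global-count normalization.
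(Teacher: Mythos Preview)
Your proof is correct and takes a genuinely different route from the paper's. The paper argues directly: for each period index $i\in\Z$ there is a unique integer $k_i=\lfloor (iN+m+s-1)/m\rfloor$ with $k_im\in[s+iN,s+iN+m)$, and then it determines for which $i$ one has $1\le k_i\le d_1+d_2+1$, reducing the count to the number of integers in the interval $[(1-s)/N,\ \tilde m+(d_1+d_3+1-s)/N]$, which is read off case by case. Your approach instead computes the discrete derivative $f(s+1)-f(s)$ via the reindexing $l\mapsto l-1$, showing that $f$ is piecewise constant with a $+1$ jump at $s=0$ and a $-1$ jump at $s=\beta$, and then pins down the base value $f(0)=\tilde m$ by the double count $\sum_s f(s)=\alpha m=N f(0)+\beta$. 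The paper's argument is more constructive (one sees the relevant $l$'s explicitly), while yours is more structural (it explains \emph{why} $f$ takes exactly two values and where the transition occurs). One small remark: you list $\gcd(m,N)=1$ as a driving fact, but your telescoping and global-count steps never use it; only $1\le m\le N-1$ is needed to ensure the window shift $W_s\mapsto W_{s+1}$ is a genuine single-element swap.
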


\begin{proof}
For any $i\in \Z$, let $k_i=\lfloor (iN+m+s-1)/m\rfloor$
the unique integer such that $k_i mh\in [sh, sh+mh)+ iNh$.
Therefore $1\le k_i\le d_1+d_2+1$ if and only if
$m\le iN+m+s-1\le (d_1+d_2+1)m+m-1$ if and only if
 $1-s\le iN\le (d_1+d_2+1)m-s= N\tilde m+ (d_1+d_3+1-s)$.
Therefore
\begin{eqnarray*}
& & \#\{l\in [1, d_1+d_2+1]\cap \Z|\ lmh\in sh+[0, mh)+Nh\Z\}
\\
& = & \sum_{i\in \Z} \#\{l\in [1, d_1+d_2+1]\cap \Z|\ lmh\in sh+[0, mh)+iN h\}\\
& = & \sum_{i\in \Z} \#\{k_i\in [1, d_1+d_2+1]\cap \Z\}\\
& = & \#([(1-s)/N, \tilde m+(d_1+d_3+1-s)/N]\cap \Z).
 \end{eqnarray*}
Counting the number of integers in the interval $[(1-s)/N, \tilde m+(d_1+d_3+1-s)/N]$ 
 proves the claim.
%
%
%
\end{proof}

We return to the proof of the equality \eqref{ym=c1}. By Claim \ref{dabc1discretecharacterization.lem.pf.claim1},
we have inserted $\tilde m$ interval of length $b-a+\delta$
and $m-\tilde m$ interval of length $\delta$ in the marked interval
$[0, mh)$. So after performing the expansion,  the mark located at $mh$ on
the line becomes the gap located at $mh+(m-\tilde m)\delta+\tilde
m (b-a+\delta)$, which is equal to $c_1$ by \eqref{c1property}.
This completes the proof of the equality \eqref{ym=c1}.

\smallskip
Next we show that
\begin{equation}\label{yd1d2=c0}
y_{d_1+d_3+1}= c_0+a-b-\delta.
\end{equation}
By \eqref{cnecessarydiscrete.eq7}, we have inserted $d_1$ gaps of
length $b-a+\delta$ and $(d_1+d_3+1)-d_1$ intervals of length
$\delta$ in the marked interval $[0, (d_1+d_3+1)h)$. Therefore the mark
located at $(d_1+d_3+1)h$ becomes
$(d_1+d_3+1)h+d_1(b-a+\delta)+(d_3+1)\delta=c_0+a-b-\delta$ after
inserting gaps, where the last equality follows from
\eqref{cnecessarydiscrete.eq5}. Hence
 \eqref{yd1d2=c0} follows.

\smallskip
Then we  prove by induction on $0\le k\le N-1$ that
\begin{equation}\label{biggap.eqk}
(R_{a, b, c})^{k} (c-c_0)+a\Z=y_{l(k)}+a\Z, \ 0\le k\le d_1+d_2,
\end{equation}
and
\begin{equation}\label{smallgap.eqk}
(R_{a, b, c})^{m} (c_0+a-b-\delta)+a\Z=y_{l(m+d_1+d_2)}+a\Z,\
1\le m\le d_3+d_4+1,
\end{equation}
where $l(k)\in (k+1)m+N\Z$.
We remark that $l(0)=m, l(d_1+d_2+d_3+d_4+1)=l(N-1)=0$ and $l(d_1+d_2)=d_1+d_3+1$, where the last equality follows from \eqref{cnecessarydiscrete.eq3}.

\begin{proof}[Proof of \eqref{biggap.eqk} and \eqref{smallgap.eqk}]
The conclusion \eqref{biggap.eqk} for $k=0$ from \eqref{ym=c1} and
the observation that $l(0)=m$. Inductively, we assume that the
conclusion \eqref{biggap.eqk} holds for some $0\le k\le
d_1+d_2-1$.  Then $l(k)\ne 0, d_1+d_3+1$ as $l(d_1+d_2)=d_1+d_3+1$
and $l(N-1)=0$.  If $0<l(k)<d_1+d_3+1$, then
\begin{eqnarray}\label{biggap.eqk.pf.eq2}
y_{l(k+1)} & = &
 y_{l(k)}+ (\tilde m+1)(b-a+\delta)+ (m-\tilde m-1) \delta+ mh \nonumber \\
& = &  y_{l(k)}+c_1+b-a
\end{eqnarray}
if $l(k+1)-l(k)=m$,
and
\begin{eqnarray} \label{biggap.eqk.pf.eq3}
y_{l(k+1)}+a & = & y_{l(k)}+ (\tilde m+1)(b-a+\delta)+ (m-\tilde m-1) \delta+ mh \nonumber \\
& = &  y_{l(k)}+c_1+b-a
\end{eqnarray}
if $l(k+1)-l(k)=m-N$, where \eqref{biggap.eqk.pf.eq2} and
\eqref{biggap.eqk.pf.eq3} hold
as we have inserted $\tilde m+1$ gaps of size $b-a+\delta$
and $m-(\tilde m+1)$ gaps of size $\delta$ on $[l(k)h, (l(k)+m)h)$ by
Claim \ref{dabc1discretecharacterization.lem.pf.claim1}.
Also we obtain from \eqref{yd1d2=c0} that  $y_{l(k)}\in [0, c_0+a-b-\delta)$ when $0<l(k)<d_1+d_3+1$,
which together with the inductive hypothesis implies that
\begin{eqnarray}\label{biggap.eqk.pf.eq1}
(R_{a, b, c})^{k+1} (c-c_0)+a\Z & = & R_{a, b ,c} (y_{l(k)})+a\Z\nonumber\\
&= &
y_{l(k)}+c_1+b-a+a\Z.
\end{eqnarray}
Combining  \eqref{biggap.eqk.pf.eq2}, \eqref{biggap.eqk.pf.eq3} and \eqref{biggap.eqk.pf.eq1} leads to
\begin{equation}\label{biggap.eqk.pf.eq4}
(R_{a, b, c})^{k+1} (c-c_0)+a\Z = y_{l(k+1)}+a\Z
\end{equation}
if $0<l(k)<d_1+d_3+1$.

Similarly if $d_1+d_3+1<l(k)\le N-1$,  we have that
\begin{eqnarray}\label{biggap.eqk.pf.eq5}
y_{l(k+1)}-y_{l(k)}\in  c_1+a\Z
\end{eqnarray}
because  we have inserted $\tilde m$ gaps of size $b-a+\delta$
and $m-\tilde m$ gaps of size $\delta$ on $[l(k)h, (l(k)+m)h)$ by
Claim \ref{dabc1discretecharacterization.lem.pf.claim1}; and
\begin{equation}\label{biggap.eqk.pf.eq6}
(R_{a, b, c})^{k+1} (c-c_0)+a\Z
=
y_{l(k)}+c_1+a\Z,
\end{equation}
since $y_{l(k)}\in [c_0, a)$ by \eqref{yd1d2=c0}.
Combining \eqref{biggap.eqk.pf.eq5} and \eqref{biggap.eqk.pf.eq6} yields
\begin{equation}\label{biggap.eqk.pf.eq7}
(R_{a, b, c})^{k+1} (c-c_0)+a\Z = y_{l(k+1)}+a\Z
\end{equation}
if $d_1+d_3+1<l(k)\le N-1$. Therefore we can proceed our inductive proof
by \eqref{biggap.eqk.pf.eq4} and \eqref{biggap.eqk.pf.eq7}. This completes the proof of the equalities in
\eqref{biggap.eqk}.

Notice that
$y_{l(d_1+d_2)}= y_{d_1+d_3+1}= c_0+a-b-\delta$ by \eqref{yd1d2=c0}. Hence
\begin{eqnarray}\label{kd1d2.eq}
 & \!&\! (R_{a, b, c})^{m} (c_0+a-b-\delta)+a\Z= (R_{a, b, c})^{m}(y_{l(d_1+d_2)})+a\Z\nonumber\\
\quad &=\! &\!
(R_{a, b, c})^{m+d_1+d_2}(y_{l(0)})+a\Z=  (R_{a, b, c})^{m+d_1+d_2}(c-c_0)+a\Z\end{eqnarray}
for all $1\le m\le d_3+d_4+1$. Then we can follow the argument to prove \eqref{biggap.eqk}
to  show that \eqref{smallgap.eqk} holds.
\end{proof}

Finally from \eqref{biggap.eqk} and \eqref{smallgap.eqk}
the mutually disjoint gaps we have inserted are
$(R_{a, b, c})^k(c-c_0)+[0, b-a+\delta)+a\Z, 0\le k\le d_1+d_2$, and
$(R_{a, b, c})^m(c_0+a-b-\delta)+[0, \delta)+a\Z, 1\le m\le d_3+d_4+1$.
 Moreover
$(R_{a, b, c})^{d_1+d_2}(c-c_0)+[0, b-a+\delta)+a\Z=[c_0+a-b-\delta, c_0)+a\Z$
by \eqref{yd1d2=c0} and
$l(d_1+d_2)=d_1+d_3+1$; and
$(R_{a, b, c})^{d_3+d_4+1}(c_0+a-b-\delta)+[0, \delta)+a\Z=(R_{a, b, c})^{N-1}(c_0+a-b-\delta)+[0, \delta)+a\Z
=[0, \delta)+a\Z$
by \eqref{kd1d2.eq} and $l(N-1)=0$.
Notice that the union of the above gaps is invariant under the transformation
$R_{a, b, c}$
and contains the black holes of the transformations  $R_{a, b, c}$ and $\tilde R_{a, b, c}$. Therefore
its complement is the set ${\mathcal S}_{a, b ,c}$ by Lemma \ref{dabc0rationalmax.prop}, whose restriction on $[0, a)$
has Lebesgue measure $Nh$. Thus the conclusion that ${\mathcal S}_{a, b, c}\ne \emptyset$  is established for this case.

\bigskip

{\bf Case 4}: \ There exist nonnegative integers $d_1, d_2, d_3, d_4$
and $\delta\in (-\min(B_d/N, a-c_0), 0)$ satisfying
\eqref{cnecessarydiscrete.eq1}--\eqref{cnecessarydiscrete.eq7}.

  Set
$h=(a-(d_1+d_2+1)(b-a))/{N}+\delta$ and $m=( N c_1+(d_1+d_3+1)(b-a))/a$. We expand
the real line $\R$ with marks at $h\Z$  by inserting gaps $[\delta+a-b, 0)$ located at $lmh+Nh\Z, 1\le
l\le d_1+d_2+1$, and gaps $[\delta, 0)$ otherwise.
After performing the above augmentation operation,  the interval $[0, Nh)$
with marks $[0, Nh)\cap h\Z$ becomes the interval 
$[0, a)$ with gaps   $[y_i+h_i, y_i),0 \le i\le N_1$, where $0< y_1 \le \ldots
\le y_{N}=a$ and $h_i\in \{\delta+a-b, \delta\}, 1\le i\le N$.
We follow the argument used in  Case 3 to show that $y_m=c_1+b-a, y_{d_1+d_3+1}=c_0-\delta$
and
for  $0\le k\le N-1$,
\begin{equation}\label{biggap.eqk.case4}
(R_{a, b, c})^{k} (c-c_0+b)+a\Z=y_{l(k)}+a\Z, \ 0\le k\le d_1+d_2,
\end{equation}
and
\begin{equation}\label{smallgap.eqk.case4}
(R_{a, b, c})^{m} (c_0-\delta)+a\Z=y_{l(k)}+a\Z,\
1\le m\le d_3+d_4+1,
\end{equation}
where $l(k)\in (k+1)m+N\Z$.
Therefore
 ${\mathcal S}_{a, b ,c}$ is the complement of  $\big(\cup_{n=0}^{d_1+d_2} [y_{l(k)}+a-b+\delta, y_{l(k)})+a\Z\big)\cup
\big(\cup_{m=1}^{d_3+d_4+1}[y_{l(k)}+\delta, y_{l(k)})+a\Z\big)$,  whose restriction on $[0, a)$ has Lebesgue measure $Nh>0$.
This prove the sufficiency   for Case 4.

\bigskip

{\bf Case 5}: \ There exist nonnegative integers $d_1, d_2, d_3, d_4$
and $\delta=0$ satisfying
\eqref{cnecessarydiscrete.eq1}--\eqref{cnecessarydiscrete.eq7}.

In this case, we set $h=(a-(d_1+d_2+1)(b-a))/{N}$ and $m=( N c_1+(d_1+d_3+1)(b-a))/a$, and
 expand the real line $\R$ with marks at $h\Z$  by
 inserting gaps $[0, b-a)$ located at $lmh+Nh\Z, 1\le
l\le d_1+d_2+1$, and gaps of zero length otherwise, c.f. the fourth subfigure of Figure \ref{holesremoval.fig} in Section \ref{maintheorem.section}. Then
after performing the above operation, 
the interval $[0, Nh)$ becomes the interval $[0, a)$ with gaps $[y_i, y_i+h_i),0 \le i\le N-1$, where $0=y_0\le y_1 \le \ldots
\le y_{N-1}$ and $h_i\in \{b-a, 0\}, 0\le i\le N-1$.
We follow the argument in Case 3
to show that $y_m=c_1,
y_{d_1+d_3}= c_0+a-b$ and  by induction on $0\le k\le N-1$ that
\begin{equation}\label{biggap.eqk.case5}
(R_{a, b, c})^{k} (c-c_0)+a\Z=y_{l(k)}+a\Z, \ 0\le k\le d_1+d_2,
\end{equation}
and
\begin{equation}\label{smallgap.eqk.case5}
(R_{a, b, c})^{m} (c_0)+a\Z=y_{l(m+d_1+d_2)}+a\Z,\
1\le m\le d_3+d_4+1,
\end{equation}
where $l(k)\in (k+1)m+N\Z$. Thus the union of gaps of size $b-a$ is  $\cup_{n=0}^{d_1+d_2} (R_{a, b, c})^n ([c-c_0, c-c_0+b-a)+a\Z$ with
$(R_{a, b, c})^{d_1+d_2}( [c-c_0, c-c_0+b-a)+a\Z)=[c_0+a-b, c_0)+a\Z$. Therefore
${\mathcal S}_{a, b, c}$ is the complement of the above union of finite gaps and the sufficiency in the fifth case follows.
\end{proof}


\subsection{Proof of Theorem \ref{newmaintheorem5}}
%

 (XIII): \quad
By Theorem \ref{framenullspace1.tm}, ${\mathcal G}(\chi_{[0,c)}, a\Z\times \Z/b)$ is not a Gabor frame if and only if
${\mathcal D}_{a, b, c}\ne\emptyset$, which in turn becomes
${\mathcal S}_{a, b, c}\ne \emptyset$ and $(\lfloor c/b\rfloor +1) |{\mathcal S}_{a, b, c}\cap [0, c_0+a-b)|+
\lfloor c/b\rfloor  |{\mathcal S}_{a, b, c}\cap [c_0, a)|\ne a$ by
\eqref{dabc2subsetdabc1} and Theorem \ref{sabcstar.tm}.
For the case that the triple  $(a, b, c)$ satisfies the first condition in Theorem \ref{dabc1discretecharacterization.tm},
it follows from the argument used in the proof of Theorem \ref{dabc1discretecharacterization.tm}
 that ${\mathcal S}_{a, b, c}\cap [0, c_0+a-b)=\emptyset$ and
${\mathcal S}_{a, b, c}\cap [c_0, a)= \cup_{i=0}^N [c_0, {\rm gcd}(a, c_1))+i{\rm gcd}(a, c_1)$, where $N+1=a/{\rm gcd}(a,c_1)$.
Hence \eqref{sabcstar.tm.eq1} holds if and only if $(N+1) \lfloor c/b\rfloor ({\rm gcd}(a, c_1)-c_0)=a$ if and only if
$ \lfloor c/b\rfloor ({\rm gcd}(a, c_1)-c_0)= {\rm gcd}(a, c_1)$.
For the case that  the triple  $(a, b, c)$ satisfies the second condition in Theorem \ref{dabc1discretecharacterization.tm},
 ${\mathcal S}_{a, b, c}\cap [c_0, a)=\emptyset$ and
${\mathcal S}_{a, b, c}\cap [0, c_0+a-b)= \cup_{i=0}^N [0, {\rm gcd}(c_1+b, a)-b+c_0)+i{\rm gcd}(c_1+b, a)$, where $N=a/{\rm gcd}(c_1+b, a)-1$. Hence
 \eqref{sabcstar.tm.eq1} holds if and only  if $(N+1) (\lfloor c/b\rfloor+1) ({\rm gcd}(c_1+b, a)-b+c_0)=a$ if and only if
$ (\lfloor c/b\rfloor+1)( {\rm gcd}(c_1+b, a)-b+c_0) = {\rm gcd}(c_1+b, a)$.
For the case that the triple $(a, b, c)$ satisfies the third condition in Theorem
\ref{dabc1discretecharacterization.tm}, there are $d_1+d_3+1$ intervals of length $h$ contained in $[0, c_0+b-a)$
and $d_2+d_4+1$ intervals of length $h$ contained in $[c_0, a)$, where $h+|\delta|=B_d/N$. Therefore
\eqref{sabcstar.tm.eq1} holds if and only if
$(\lfloor c/b\rfloor +1) (d_1+d_3+1) h + \lfloor c/b\rfloor (d_2+d_4+1) h= a$ if and only if $h=a/(\lfloor c/b\rfloor N+(d_1+d_3+1))$
if and only if $a/(\lfloor c/b\rfloor N+(d_1+d_3+1))+|\delta|=B_d/N$.
Therefore the conclusion (XIII) holds by Theorem \ref{dabc1discretecharacterization.tm}.

\bigskip
(XIV):\quad This conclusion is given in \cite[Section 3.3.6.1]{janssen03}.

\begin{appendix}

\section{Algorithm}
\label{algorithm.appendix}

In this appendix, we provide a finite-step process to
verify whether the  Gabor system  ${\mathcal G}_(\chi_{[0,c)}, a\Z\times \Z/b)$
is a  Gabor frame for any given triple of $(a, b, c)$ of positive numbers.

By Theorems \ref{newmaintheorem1}--\ref{newmaintheorem5},
it suffices to consider triples satisfying  assumptions in  Conclusion (XII) of Theorem \ref{newmaintheorem4} and
in Conclusion (XIII)  of Theorem \ref{newmaintheorem5}.
For a triple $(a, b, c)$ of positive numbers
 satisfying  assumptions in Conclusion (XII) of Theorem \ref{newmaintheorem4}, 
we let  $A_0=[c_1, c_1+b-a)$ (the black hole of the piecewise linear transformation
   $\tilde R_{a, b, c}$ in \eqref{tilderabcnewplus.def})
 and $S_0=[0,a)\backslash A_0$, and define holes $A_k:=[\alpha_k, \beta_k)\subset [0, a)$ and  invariant sets $S_k\subset [0, a), k\ge 1$, iteratively
$$A_k=\left\{\begin{array}{l}  R_{a, b, c}(\alpha_{k-1})-\lfloor(R_{a, b, c}(\alpha_{k-1})/a)\rfloor a+[0, b-a)  \\
  \hfill  {\rm if}\ A_{k-1}\subset [0, c_0+a-b)\  {\rm or } \ A_{k-1}\subset [c_0, a),\\
A_{k-1}  \hfill   {\rm if}\ A_{k-1}= [c_0+a-b, c_0),
\\  \! [0, a) \hfill  {\rm otherwise},
\end{array}
\right.
$$
and
$S_k=S_{k-1}\backslash A_k, k\ge 1$.
By Theorem  \ref{dabc1holes.tm}, there exists a nonnegative integer $L\le a/(b-a)$ such that
$A_k$ is invariant for $k\ge L$, which implies that
 $S_L$ is the maximal invariant set ${\mathcal S}_{a, b, c}$ in \eqref{dabc1.def} and $S_k=S_L$ for all $k\ge L$.
Thus by  Theorems \ref{framenullspace1.tm} and \ref{sabcstar.tm},
${\mathcal G}(\chi_{[0,c)}, a\Z\times \Z/b)$ is a Gabor frame if and only if either $S_L=\emptyset$ or
$(\lfloor c/b\rfloor+1) |S_L\cap[0, c_0+a-b)|+ \lfloor c/b\rfloor  |S_L\cap[c_0, a)|=a$.

For a triple $(a, b, c)$ of positive numbers that satisfies assumptions in Conclusion (XIII) of Theorem \ref{newmaintheorem5},
we  let  $B_0=[c_1, c_1+b-a)$
 and $T_0=[0,a)\backslash B_0$, and define holes
 $B_k:=[\gamma_k, \delta_k)\subset [0, a)$ and  invariant sets $T_k\subset [0, a), k\ge 1$, iteratively
$$B_k:= 
\left\{\begin{array}{l} R_{a, b, c}(\gamma_{k-1})- \lfloor( R_{a, b, c}(\gamma_{k-1})/a)\rfloor a+[0, \delta_{k-1}-\gamma_{k-1})\\
 \hfill   {\rm if}\ 0\le \gamma_{k-1}<\delta_{k-1}\le c_0+a-b, \\
 R_{a, b, c}(\gamma_{k-1})- \lfloor(R_{a, b, c}(\gamma_{k-1})/a)\rfloor a+[0, c_0+a-b-\gamma_{k-1})\\
\hfill \qquad   {\rm if}\ 0\le \gamma_{k-1}<c_0+a-b<\delta_{k-1}\le c_0, \\
  B_{k-1} \hfill   {\rm if}\ c_0+a-b\le \gamma_{k-1}<\delta_{k-1}\le c_0,\\
  c- \lfloor c/a\rfloor a +[0, \delta_{k-1}-c_0)\\
\hfill
 {\rm if} \  c_0+b-a\le \gamma_{k-1}<c_0<\delta_{k-1}\le a,
 \\
 R_{a, b, c}(\gamma_{k-1})-  \lfloor( R_{a, b, c}(\gamma_{k-1})/a)\rfloor a +[0, \delta_{k-1}-\gamma_{k-1})\\
 \hfill \quad    {\rm if}\ c_0\le \gamma_{k-1}<\delta_{k-1}\le a,\\
 \! [0, a)\quad  \hfill  {\rm otherwise},
\end{array}
\right.
$$
and
$T_k=T_{k-1}\backslash B_k,  k\ge 1$. 
By \eqref{sabcblackholes} and Theorem   \ref{dabc1discreteholes.tm}, there exists a nonnegative integer $L$ such that
$B_k\subset [c_0+a-b, c_0)$ or $B_k=[0,a)$ for all $k\ge L$, which implies that
 $T_L$ is the maximal invariant set ${\mathcal S}_{a, b, c}$ in \eqref{dabc1.def}.
Therefore by Proposition \ref{dabcsabc.thm} and  Theorems \ref{framenullspace1.tm} and \ref{sabcstar.tm},
${\mathcal G}(\chi_{[0,c)}, a\Z\times \Z/b)$ is a Gabor frame if and only if either $T_L=\emptyset$ or
$(\lfloor c/b\rfloor+1) |T_L\cap[0, c_0+a-b)|+ \lfloor c/b\rfloor  |T_L\cap[c_0, a)|= a$.

\section{Sampling signals in a shift-invariant space} 
\label{sampling.appendix}

An interesting problem in sampling is to identify generators  $\phi$ and sampling-shift lattices $a\Z\times b\Z$
such that any signal $f$ in the shift-invariant space
 \begin{equation}\label{appendixB.sis.def}
V_2(\phi, b\Z):=\Big\{\sum_{\lambda\in b\Z} d(\lambda) \phi(t-\lambda): \ \sum_{\lambda\in b\Z} |d(\lambda)|^2<\infty\Big\}
\end{equation}
 can be stably recovered from its equally-spaced
samples $f(t_0+\mu), \mu\in a\Z$, for any initial sampling position $t_0$, that is,
there exist positive constants $A$ and $B$ such that
\begin{equation}\label{appendix.stability.def}
A\|f\|_2\le (\sum_{\mu\in a\Z} |f(t_0+\mu)|^2\big)^{1/2}\le B \|f\|_2
\end{equation}
for all $f\in V_2(\phi, b\Z)$ and $t_0\in \R$.
The range of sampling-shift parameters $a$ and $b$ is fully known only for few generators $\phi$. For instance,
the classical Whittaker-Shannon-Kotel'nikov sampling theorem states that
\eqref{appendix.stability.def} hold for signals bandlimited to $[-\sigma, \sigma]$ if and only if $a\le b=\pi/\sigma$. For
 the uniform spline generator $\underbrace{\chi_{[0, b)}*\cdots* \chi_{[0, b)}}_{n \ {\rm times}}$
 obtained by convoluting the characteristic function on $[0, b)$ for $n$ times,
 \eqref{appendix.stability.def} hold if and only if $a<b$ \cite{akramgrochenig00, sunaicm10}.
In this appendix, we consider the above range problem for the generator $\chi_I$, the characteristic function on an interval $I$.
Our results indicate that it is almost equivalent to the $abc$-problem for Gabor systems, and hence geometry of the range of
sampling-shift  parameters could be arbitrary complicated.

\smallskip

We say that $\{\phi(\cdot-\lambda):\ \lambda\in b\Z\}$ is a {\em Riesz basis} for the shift-invariant space $V_2(\phi, b\Z)$
if there exist positive constants $A$ and $B$ such that
\begin{equation}
A\big(\sum_{\lambda\in b\Z} |d(\lambda)|^2\big)^{1/2}\le \big\|\sum_{\lambda\in b\Z} d(\lambda) \phi(\cdot-\lambda)\big\|_2\le B
\big(\sum_{\lambda\in b\Z} |d(\lambda)|^2\big)^{1/2}
\end{equation}
for all square-summable sequences $(d(\lambda))_{\lambda\in b\Z}$.
For an interval $I=[d, c+d)$,
$\{\chi_I(\cdot-\lambda): \lambda\in b\Z\}$ is  a Riesz basis for the shift-invariant space $V_2(\chi_I, b\Z)$ except that $2\le c/b\in \Z$.
Therefore except that $2\le c/b\in \Z$,
any signal $f$ in $V_2(\chi_I, b\Z)$ can be stably recovered from its equally-spaced
samples $f(t_0+\mu), \mu\in a\Z$, for any initial sampling position $t_0$ if and only if
infinite matrices
${\bf M}_{a, b, c}(t), t\in \R$, in \eqref{infinitematrix.def} have the uniform stability property \eqref{framestabilitymatrix.tm.eq1}, c.f.
\cite{akramgrochenig01, uieee00,  walter92}. This together with  the characterization of  frame property of the Gabor system
${\mathcal G}(\chi_I, a\Z\times \Z/b)$ in \cite{RS97} leads to the following equivalence between our sampling problem associated with
the box generator $\chi_I$
and the $abc$-problem for Gabor systems.

\begin{prop}\label{sampling.prop} Let  $a, b>0$ and $I$ be an interval with length $c>0$. Except that $2\le c/b\in \Z$,
the following two statements are equivalent.
\begin{itemize}
\item [{(i)}]
Any signal $f$ in the shift-invariant space $V_2(\chi_I, b\Z)$ can be stably recovered from  equally-spaced samples
$f(t_0+\mu), \mu\in a\Z$, for any initial sampling position $t_0\in \R$.
\item [{(ii)}]
 ${\mathcal G}(\chi_I, a\Z\times \Z/b)$ is a Gabor frame for $L^2(\R)$.
 \end{itemize}
\end{prop}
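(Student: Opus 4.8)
The plan is to reduce statement (i) to the uniform stability \eqref{framestabilitymatrix.tm.eq1} of the infinite matrices ${\bf M}_{a,b,c}(t)$, $t\in\R$, and then to identify that stability with statement (ii) through the Ron--Shen characterization of the Gabor frame property (Lemma \ref{framestabilitymatrix.lem}, and its general form in \cite{RS97}). First I would use the shift-invariance of both the sampling property and the Gabor frame property to assume without loss of generality that $I=[0,c)$: writing $I=[d,c+d)$ one has $V_2(\chi_I,b\Z)=\{g(\cdot-d):\ g\in V_2(\chi_{[0,c)},b\Z)\}$, so sampling $f=g(\cdot-d)$ on $t_0+a\Z$ amounts to sampling $g$ at the initial position $t_0-d$; since $t_0$ ranges over all of $\R$ and $\|f\|_2=\|g\|_2$, statement (i) for $I$ is equivalent to statement (i) for $[0,c)$, and the same reduction applies to (ii).

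The heart of the argument is a unitary identification of the sampling operator with ${\bf M}_{a,b,c}(t_0)$. Denote by $Cd:=\sum_{\lambda\in b\Z}d(\lambda)\chi_{[0,c)}(\cdot-\lambda)$ the synthesis operator from $\ell^2(b\Z)$ onto $V_2(\chi_{[0,c)},b\Z)$, and by $T_{t_0}f:=(f(t_0+\mu))_{\mu\in a\Z}$ the sampling operator. For $f=Cd$ and the reflected sequence $\tilde d:=(d(-\lambda))_{\lambda\in b\Z}$, the change of index $\lambda\mapsto-\lambda$ gives $({\bf M}_{a,b,c}(t_0)\tilde d)(\mu)=\sum_{\lambda}\chi_{[0,c)}(t_0-\mu+\lambda)d(-\lambda)=f(t_0-\mu)$. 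Since $\mu\mapsto-\mu$ and $\lambda\mapsto-\lambda$ are unitary on $\ell^2(a\Z)$ and $\ell^2(b\Z)$, this yields $\|T_{t_0}Cd\|_2=\|{\bf M}_{a,b,c}(t_0)\tilde d\|_2$ and $\|\tilde d\|_2=\|d\|_2$ for all $d$ and all $t_0\in\R$; that is, $T_{t_0}C$ and ${\bf M}_{a,b,c}(t_0)$ are unitarily equivalent.

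Here the exclusion $2\le c/b\in\Z$ is used precisely to guarantee that $\{\chi_{[0,c)}(\cdot-\lambda):\lambda\in b\Z\}$ is a Riesz basis for $V_2(\chi_{[0,c)},b\Z)$; equivalently $C$ is bounded, injective, and bounded below, so that $C$ is a bijection onto $V_2$ with $\|Cd\|_2\asymp\|d\|_2$ and constants depending only on $b,c$. With this in hand I would run the equivalence in both directions. As $f=Cd$ runs over all of $V_2(\chi_{[0,c)},b\Z)$ and $\tilde d$ runs over all of $\ell^2(b\Z)$, the two-sided bound $A\|f\|_2\le\|T_{t_0}f\|_2\le B\|f\|_2$ valid for every $f$ and every $t_0$, combined with $\|f\|_2=\|Cd\|_2\asymp\|\tilde d\|_2$ and the identity $\|T_{t_0}f\|_2=\|{\bf M}_{a,b,c}(t_0)\tilde d\|_2$, is equivalent to the existence of positive constants $A',B'$ with $A'\|{\bf z}\|_2\le\|{\bf M}_{a,b,c}(t_0){\bf z}\|_2\le B'\|{\bf z}\|_2$ for all ${\bf z}\in\ell^2(b\Z)$ and all $t_0\in\R$, which is exactly \eqref{framestabilitymatrix.tm.eq1}. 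Finally, \eqref{framestabilitymatrix.tm.eq1} is equivalent to the Gabor frame property (ii) by the Ron--Shen characterization, completing the proof.

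I expect the main obstacle to be the Riesz-basis step rather than the bookkeeping: one must confirm that $\{\chi_{[0,c)}(\cdot-\lambda)\}_\lambda$ fails to be a Riesz sequence exactly when $2\le c/b\in\Z$, where the bracket symbol $\sum_{k\in\Z}|\widehat{\chi_{[0,c)}}(\xi+k/b)|^2$ develops zeros and the representation $f=Cd$ ceases to be both unique and norm-equivalent. Outside this exceptional set the transfer between $\|f\|_2$ and $\|d\|_2$ — and hence the entire equivalence — goes through, whereas on it the lower bound of $C$ is lost and the reduction collapses, which is why the statement must carry that exclusion. The remaining care is routine: tracking the two reflections and matching the passage of $t_0$ over $\R$ with the supremum and infimum over $t$ in \eqref{framestabilitymatrix.tm.eq1}.
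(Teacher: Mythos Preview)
Your proposal is correct and follows essentially the same route as the paper's argument: reduce (i), via the Riesz-basis property of $\{\chi_I(\cdot-\lambda)\}_{\lambda\in b\Z}$ (valid precisely when $2\le c/b\in\Z$ fails), to the uniform stability \eqref{framestabilitymatrix.tm.eq1} of ${\bf M}_{a,b,c}(t)$, and then invoke the Ron--Shen characterization \cite{RS97} to identify that stability with (ii). Your explicit unitary identification $\|T_{t_0}Cd\|_2=\|{\bf M}_{a,b,c}(t_0)\tilde d\|_2$ via the two reflections is a clean way to make precise what the paper sketches in one sentence with references to \cite{akramgrochenig01, uieee00, walter92}.
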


If $I=[d, c+d)$ with $2\le c/b\in \Z$, then the shift-invariant space $V_2(\chi_I, b\Z)$ is not closed in $L^2(\R)$, but its closure is
the shift-invariant space generated by $\chi_{I'}$ where $I'=[d, d+b)$.
Therefore for the case that  $I=[d, c+d)$ with $2\le c/b\in \Z$,
any signal $f$ in $V_2(\chi_I, b\Z)$ can be stably recovered from  equally-spaced samples
$f(t_0+\mu), \mu\in a\Z$, for any initial sampling position $t_0\in \R$ if and only if
any signal $f$ in $V_2(\chi_{[d, b+d)}, b\Z)$ can be stably recovered from  equally-spaced samples
$f(t_0+\mu), \mu\in a\Z$, for any initial sampling position $t_0\in \R$  if and only if $a\le b$.
This together with Theorems \ref{newmaintheorem1}--\ref{newmaintheorem5} and Proposition \ref{sampling.prop}
provides the full  classification of sampling-shift lattices $a\Z\times b\Z$ such that
any signal $f$ in $V_2(\chi_I, b\Z)$ can be stably recovered from  equally-spaced samples
$f(t_0+\mu), \mu\in a\Z$, for any initial sampling position $t_0\in \R$.

\smallskip

We remark that two statements in Proposition \ref{sampling.prop} are not equivalent for the case that $2\le c/b\in \Z$ and $a\le b$. The reason is that
under that assumption on the triple $(a, b, c)$,
${\mathcal G}(\chi_{I}, a\Z\times \Z/b)$ is not a  Gabor frame by Theorems \ref{newmaintheorem1} and \ref{newmaintheorem2}, while
any signal $f$ in $V_2(\chi_I, b\Z)$ can be stably recovered from  equally-spaced samples
$f(t_0+\mu), \mu\in a\Z$, for any initial sampling position $t_0\in \R$.

Oversampling, i.e., $a<b$, helps for  perfect reconstruction of band-limited signals and spline signals  from their equally-spaced samples
\cite{akramgrochenig00, akramgrochenig01, uieee00}. Our results
 indicate that oversampling does not always implies the stability of  sampling and reconstruction process for
  signals in the shift-invariant space $V_2(\chi_I, b\Z)$.

\section{Non-ergodicity of a piecewise linear transformation}
\label{ergodic.appendix}

 Recall that the piecewise linear transformation $R_{a, b, c}$ is non-expansive and non-measure-preserving  map on the real line,
 In this appendix, we study non-ergodicity associated with this new map on the line. Particularly, we establish the equivalence between the empty set property for
 the maximal invariant set ${\mathcal S}_{a, b,c}$  and non-ergodicity of the piecewise linear transformation $R_{a, b, c}$.
 The reader may refer to \cite{walter92book} for ergodic theory of various dynamic systems.

\begin{thm} \label{ergodic.appendix.tm}
Let   $(a,b,c)$ be a triple of positive numbers satisfying
$a<b<c, b-a<c_0:=c-\lfloor c/b\rfloor b<a, 0<c_1:=\lfloor c/b\rfloor b-\lfloor (\lfloor c/b\rfloor b/a)\rfloor a<2a-b$ and $\lfloor c/b\rfloor\ge
2$, and let $R_{a, b, c}$ and ${\mathcal S}_{a, b,c}$   be the piecewise linear transformation in \eqref{rabcnewplus.def}
and its maximal invariant set in \eqref{dabc1.def} respectively.
Then ${\mathcal S}_{a, b, c}=\emptyset$ if and only if for any $t\in \R$  there exists $t_0\in [c_0+a-b, c_0)$ such that
\begin{equation}\label{ergodic.appendix.tm.eq1}
\lim_{n\to \infty} \frac{ \sum_{k=0}^{n-1} f( (R_{a, b, c})^k(t))}{n}= f(t_0)
\end{equation}
for all  continuous periodic functions $f$ with period $a$.
\end{thm}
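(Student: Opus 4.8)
The plan is to translate the equivalence into a statement about where forward orbits of $R_{a,b,c}$ accumulate. Write $H_R:=[c_0+a-b,c_0)+a\Z$ for the black hole of $R_{a,b,c}$ and $H_{\tilde R}:=[c-c_0,c-c_0+b-a)+a\Z=[c_1,c_1+b-a)+a\Z$ for the black hole of $\tilde R_{a,b,c}$; by \eqref{rabcnewplus.def} every point of $H_R$ is a fixed point of $R_{a,b,c}$, and $H_R\cap H_{\tilde R}=\emptyset$ in the setting of Proposition \ref{rabcinvertibility.prop}. The first, purely formal, observation is that if $R_{a,b,c}^N(t)\in H_R$ for some $N$, then $R_{a,b,c}^n(t)$ equals a fixed point $t_\ast\in H_R$ for all $n\ge N$, so that for every continuous $a$-periodic $f$ the Cesàro average $\frac1n\sum_{k=0}^{n-1}f(R_{a,b,c}^k(t))\to f(t_\ast)=f(t_0)$, where $t_0\in[c_0+a-b,c_0)$ is the $a$-representative of $t_\ast$. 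Thus the right-hand side of the theorem follows once we show that ${\mathcal S}_{a,b,c}=\emptyset$ forces \emph{every} forward orbit into $H_R$.

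\emph{Forward direction.} Set $A:=\{t\in\R:\ R_{a,b,c}^n(t)\notin H_R\ \text{for all}\ n\ge0\}$. Using the black-hole, range and invertibility properties \eqref{blackholesoftransformations}--\eqref{invertibilityoftransformations} of Proposition \ref{rabcinvertibility.prop}, I would first check that $A$ is invariant under both $R_{a,b,c}$ and $\tilde R_{a,b,c}$, is disjoint from $H_R$, and that $R_{a,b,c}(s)\notin H_{\tilde R}$ for \emph{every} $s\in\R$ (since $R_{a,b,c}(\R\setminus H_R)=\R\setminus H_{\tilde R}$ and $H_R\cap H_{\tilde R}=\emptyset$). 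The crux is then the claim $A\cap H_{\tilde R}=\emptyset$. Granting it, any $s\in A$ has its entire forward $\tilde R_{a,b,c}$-orbit inside $A$, hence avoiding $H_R$ and $H_{\tilde R}$, while its forward $R_{a,b,c}$-orbit avoids $H_R$; by the point characterization Proposition \ref{dabc1pointcharacterization.prop} this gives $s\in{\mathcal S}_{a,b,c}$, so $A\subseteq{\mathcal S}_{a,b,c}$ and in fact $A={\mathcal S}_{a,b,c}$. When ${\mathcal S}_{a,b,c}=\emptyset$ we conclude $A=\emptyset$, i.e. every forward $R_{a,b,c}$-orbit meets $H_R$, which by the first paragraph yields \eqref{ergodic.appendix.tm.eq1}.

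The claim $A\cap H_{\tilde R}=\emptyset$ --- equivalently, that every point of $H_{\tilde R}$ is swallowed by $H_R$ under finitely many applications of $R_{a,b,c}$ --- is the main obstacle, and it is where the two lattice cases diverge. When $a/b=p/q$ and $c\in b\Z/q$ this is exactly the content of \eqref{dabc0rationalmax.prop.pf.eq15} in the proof of Lemma \ref{dabc0rationalmax.prop}, whose argument (finiteness of $b\Z/q\cap[0,a)$, a coincidence $R_{a,b,c}^{L_1}(t)\equiv R_{a,b,c}^{L_2}(t)\bmod a$, and Proposition \ref{blackholestwo.prop}) never uses ${\mathcal S}_{a,b,c}\ne\emptyset$, and since all relevant sets are unions of $b/q$-intervals by \eqref{dabcicontinuoustodiscrete.eq1} it gives $R_{a,b,c}^L(H_{\tilde R})=H_R$ and hence $A\cap H_{\tilde R}=\emptyset$. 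In the irrational case $a/b\notin\Q$ no finite lattice is available, and I would argue by contradiction: a point of $A\cap H_{\tilde R}$ produces (after one step, as $R_{a,b,c}(s)\notin H_{\tilde R}$ always) an infinite forward orbit confined to the compact complement $[0,a)\setminus(H_R\cup H_{\tilde R})$, on which $R_{a,b,c}$ acts injectively and measure-preservingly by \eqref{rabcbasic1.thm.pf.eq0}; a near-return of this orbit, handled by the itinerary/linearity device already used in the proofs of Lemma \ref{rabcbasic2.lem} and Lemma \ref{dabc1holes.lem}, manufactures a genuine point of ${\mathcal S}_{a,b,c}$ lying in some $R_{a,b,c}^n(H_{\tilde R})$, contradicting both ${\mathcal S}_{a,b,c}=\emptyset$ and Proposition \ref{blackholestwo.prop}. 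A measure-theoretic shortcut gives at least $|A\cap H_{\tilde R}|=0$, since $R_{a,b,c}\colon A\to A\setminus H_{\tilde R}$ is a measure-preserving bijection; upgrading this from measure zero to the empty set is precisely the delicate point I expect to be the hardest part of the proof.

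\emph{Reverse direction.} If ${\mathcal S}_{a,b,c}\ne\emptyset$, pick $t\in{\mathcal S}_{a,b,c}$; its forward orbit stays in ${\mathcal S}_{a,b,c}$ (Proposition \ref{rabcbasic1.thm}) and never meets $H_R$. By Theorem \ref{dabc1algebra.tm} the map $Y_{a,b,c}$ conjugates $R_{a,b,c}|_{{\mathcal S}_{a,b,c}}$ to the rotation $S(\theta_{a,b,c})$ of the circle $\R/Y_{a,b,c}(a)\Z$, pushing normalized Lebesgue measure on ${\mathcal S}_{a,b,c}$ to Haar measure. One checks the rotation number $Y_{a,b,c}(c_1+b-a)/Y_{a,b,c}(a)$ is irrational when $a/b\notin\Q$ (an exact period would force $R_{a,b,c}^\ell(t)=t$, impossible since the shifts are strictly positive integer multiples of $b$ while $a/b\notin\Q$) and rational when $a/b=p/q$; accordingly the orbit either equidistributes by Weyl's theorem or is a finite cycle of $\ell\ge2$ distinct points of ${\mathcal S}_{a,b,c}$ (the case $\ell=1$ being excluded, as it would make $t$ a fixed point lying in $H_R$). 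Now choose a continuous $a$-periodic $f$ with $0\le f\le1$, $f\equiv0$ on a neighborhood of $H_R$, and $f>0$ on a positive-measure subset of ${\mathcal S}_{a,b,c}$ (respectively with $f(t)>0$ in the finite-orbit case). Then $\liminf_n\frac1n\sum_{k=0}^{n-1}f(R_{a,b,c}^k(t))>0=f(t_0)$ for every $t_0\in[c_0+a-b,c_0)$, so no $t_0$ can satisfy \eqref{ergodic.appendix.tm.eq1}; the right-hand side fails, completing the contrapositive and hence the equivalence.
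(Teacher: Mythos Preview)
Your reverse direction is essentially correct, though it leans on heavier machinery (Theorem \ref{dabc1algebra.tm} and Weyl equidistribution) than the paper's elementary argument. The paper instead takes the assumed limit point $t_\infty$ and splits according to whether $t_\infty$ lies in the open interior $(c_0+a-b,c_0)$ or at the endpoint $c_0+a-b$, and in the endpoint case uses a hat function $f_\epsilon$ together with the irrationality of $a/b$ to rule out an orbit that clusters at $c_0+a-b$ without ever entering $H_R$. One small slip in your write-up: your exclusion of the period-$1$ case is misphrased, since $R(t)\equiv t\pmod a$ does not mean $t$ is a fixed point of $R$; the correct reason is that $R(t)-t\in\{\lfloor c/b\rfloor b,(\lfloor c/b\rfloor+1)b\}$, and under the standing hypotheses $0<c_1<2a-b$ neither of these lies in $a\Z$.

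The forward direction, however, has a genuine gap---and it is precisely the step you flag as ``the delicate point''. Your strategy is to prove $A={\mathcal S}_{a,b,c}$ via the claim $A\cap H_{\tilde R}=\emptyset$, but in the irrational case your sketch (``a near-return \ldots\ manufactures a genuine point of ${\mathcal S}_{a,b,c}$'') is not a proof: the itinerary device from Lemma \ref{rabcbasic2.lem} starts from a point already known to be in ${\mathcal S}_{a,b,c}$, and there is no evident mechanism by which a near-return of an \emph{arbitrary} $H_R$-avoiding orbit produces a point whose full two-sided orbit avoids both black holes. Your measure argument gives $|A\cap H_{\tilde R}|=0$, which is not enough.

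The paper bypasses this entirely with a compactness argument parallel to Lemma \ref{dabc2toqabc.lem}. If some forward $R_{a,b,c}$-orbit never meets $H_R$, one builds a one-sided binary vector ${\bf x}_t$ supported on that orbit and obtains $({\bf M}_{a,b,c}(t){\bf x}_t)(\mu)=1$ for all $\mu\ge 0$. Defining $\tilde Q_{a,b,c}$ as the supremum over $t,{\bf x}$ of the maximal run of consecutive $1$'s in ${\bf M}_{a,b,c}(t){\bf x}$, this gives $\tilde Q_{a,b,c}=+\infty$. One then proves, exactly as in Lemma \ref{dabc2toqabc.lem}, that $\tilde Q_{a,b,c}<\infty$ is equivalent to ${\mathcal S}_{a,b,c}=\emptyset$: from arbitrarily long runs of $1$'s one extracts (after recentering and passing to subsequences) a limit pair $(t_\infty,{\bf x}_\infty)$ with ${\bf M}_{a,b,c}(t_\infty){\bf x}_\infty={\bf 1}$, hence $t_\infty\in{\mathcal S}_{a,b,c}$. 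This uniform argument needs no rational/irrational case split and does not attempt to identify $A$ with ${\mathcal S}_{a,b,c}$.
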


\begin{proof} (i)$\Longrightarrow$(ii)\quad  Clearly it suffices to prove  that for any $t\in \R$ there exists a nonnegative integer $N$  such that
\begin{equation}\label{ergodic.appendix.tm.pf.eq1}
(R_{a, b, c})^N(t)\in [c_0+a-b, c_0)+a\Z.
\end{equation}
Suppose, on the contrary, that such a nonnegative integer $N$ does not exist. Then
$(R_{a, b, c})^n(t)\not\in [c_0+a-b, c_0)+a\Z$ for all $n\ge 0$.
Define ${\bf x}=({\bf x}_t(\lambda))_{\lambda\in b\Z}$ by
 ${\bf x}_t(\lambda)=1$ if $\lambda=(R_{a, b, c})^n(t)-t$ for some  nonnegative integer $n$,
 and ${\bf x}_t(\lambda)=0$ otherwise.
 Then ${\bf x}_t\in {\mathcal B}_b$ as $(R_{a, b, c})^n(t)-t\in b\Z$.
 Following the argument in  Proposition \ref{dabc1pointcharacterization.prop}, we have that
\begin{equation} \label{ergodic.appendix.tm.pf.eq2}
{\bf M}_{a, b, c}(t) {\bf x}_t(\mu)=1\quad {\rm  for\ all} \ \  0\le \mu\in a\Z.\end{equation}
Similar to the index $Q_{a, b, c}$ in \eqref{qabc.def},  we define
$$\tilde Q_{a, b, c}:= 
\sup_{t\in \R} \sup_{{\bf x}\in {\mathcal B}_b} \tilde Q_{a, b, c}(t, {\bf x})$$
where
$\tilde K(t, {\bf  x})=\{\mu\in a\Z:  {\bf M}_{a, b, c}(t) {\bf x}(\mu)=1\}$
and
$$\tilde Q(t, {\bf x})=\left\{\begin{array}{ll} 0 & {\rm if} \ \tilde K(t, {\bf x})=\emptyset\\
\sup\{n\in N| \ [\mu, \mu+na)\subset \tilde K(t, {\bf x}) & \\
\quad \quad \ {\rm for \ some} \ \mu\in a\Z\} & {\rm otherwise}.
\end{array}\right.$$
Following the argument in Lemma \ref{dabc2toqabc.lem}, we can establish the following equivalence:
\begin{equation} \label{ergodic.appendix.tm.pf.eq3}
{\mathcal S}_{a, b, c}=\emptyset\ {\rm if\ and \ only\ if} \ \tilde Q_{a, b, c}<\infty.
\end{equation}
Therefore combining \eqref{ergodic.appendix.tm.pf.eq2} and \eqref{ergodic.appendix.tm.pf.eq3}
leads to the conclusion that ${\mathcal S}_{a, b, c}\ne \emptyset$, which is a contradiction.

\smallskip

(ii)$\Longrightarrow$(i)
We examine two cases  $a/b\in \Q$ and $a/b\not\in \Q$  to verify the empty-set property for ${\mathcal S}_{a, b, c}$.

\smallskip
{\bf Case 1}: $a/b\in \Q$.

Suppose, on the contrary, that ${\mathcal S}_{a, b, c}\ne \emptyset$.
Write $a/b=p/q$ for some coprime integers $p$ and $q$, take $t\in {\mathcal S}_{a, b, c}$, and define
$t_k=(R_{a, b, c})^k (t), k\ge 0$.
Then $t_k \not\in [c_0+a-b, c_0)+a\Z$ for all $k\ge 0$ by Proposition \ref{dabc1pointcharacterization.prop}.
Observe that for any $k\ge 0$, there exists an integer $0\le r_k<p$ such that
$t_k-t-\frac{r_k}{q}b\in a\Z$ as $(t_k-t)/b\in \Z$ and $a=(p/q)b$.
Therefore there exist two integers $n_1<n_2$ such that $t_{n_1}-t_{n_2}\in a\Z$, which together with the one-to-one correspondence of the
transformation $R_{a, b, c}$ on ${\mathcal S}_{a, b, c}$ implies that $t_{n_2-n_1}-t_0\in a\Z$.
Thus
\begin{equation}
\lim_{n\to \infty} \frac{\sum_{k=0}^{n-1} f((R_{a, b, c})^k(t_{n_1}))}{n}=
\frac{\sum_{k=0}^{n_2-n_1-1} f((R_{a, b, c})^k(t_0))}{n_2-n_1},\end{equation}
and the limit in
\eqref{ergodic.appendix.tm.eq1} does not hold for any continuous periodic function $f$
 that it is positive on $[c_0+a-b-\epsilon, c_0)$ and take zero value on $[0, c_0+a-b-\epsilon)$ and $[c_0, a)$,
 where $\epsilon>0$ is sufficiently small. This is a contradiction.

\smallskip
{\bf Case 2}: $a/b\not\in \Q$.

Take $t_0\in \R$,  and let $t_\infty\in [c_0+a-b, c_0)$ such that
\begin{equation}\label{ergodic.appendix.tm.pf.eq4}
\lim_{n\to \infty} \frac{\sum_{k=0}^{n-1} f((R_{a, b, c})^k(t_0))}{n}=f(t_\infty)
\end{equation}
for all continuous  periodic function $f$ with period $a$.
We exam three subcases  to verify that $t_0\not\in {\mathcal S}_{a, b, c}$.

\smallskip
{\bf  Case 2a}: \ $t_\infty\in (c_0+a-b, c_0)$.

Take  a continuous periodic function
$f$ such that it is positive on $(c_0+a-b, c_0)$ and take zero value on $[0, c_0+a-b)$ and $[c_0, a)$.
Then it follows that \eqref{ergodic.appendix.tm.pf.eq4} that $f((R_{a, b, c})^{k_0}(t_0))>0$ for  some $k_0\ge 0$, which implies that
$(R_{a, b, c})^{k_0}(t_0)$ belongs to the black hole $[c_0+a-b, c_0)+a\Z$ of the transformation $R_{a, b, c}$.
Hence $t_0\not\in {\mathcal S}_{a, b, c}$ because ${\mathcal S}_{a, b, c}$ is invariant on the transformation $R_{a, b, c}$, and
${\mathcal S}_{a, b, c}$ has empty intersection with its black hole $[c_0+a-b, c_0)+a\Z$. 

\smallskip
{\bf Case 2b}: \ $t_\infty=c_0+a-b$ and $(R_{a, b, c})^{k_0}(t_0)\in [c_0+a-b+a, c_0)+a\Z$ for some nonnegative integer $k_0$.

In this case,  $t_0\not\in {\mathcal S}_{a, b, c}$ as
${\mathcal S}_{a, b, c}$ is invariant on the transformation $R_{a, b, c}$ and $[c_0+a-b, c_0)\cap {\mathcal S}_{a, b, c}=\emptyset$.

\smallskip
{\em Case 2c}: \ $t_\infty=c_0+a-b$ and $t_k:=(R_{a, b, c})^{k}(t_0)\not\in [c_0+a-b, c_0)+a\Z$ for all nonnegative $k\ge 0$.

Take a sufficiently small positive number $\epsilon>0$ and  a periodic function $f_\epsilon(t)$  whose restriction on $[0, a)$ is given by
$\max(0, 1-|t-c_0+a-b|/\epsilon)$. The function $f_\epsilon$ is the hat function supported in $[-\epsilon, \epsilon]+c_0+a-b+a\Z$.
By \eqref{ergodic.appendix.tm.pf.eq4},
for any given $L$ there exists an integer $k$ such that $t_{k+l}\in (c_0+b-a-\epsilon, c_0+b-a)+a\Z$
for all $0\le l\le L$, as otherwise
$$\lim_{n\to \infty} \frac{\sum_{k=0}^{n-1} f((R_{a, b, c})^k(t))}{n} \le \frac{L-1}{L}\ne 1=f(t_\infty).$$
As $t_{k+l}\in (c_0+b-a-\epsilon, c_0+b-a)+a\Z$ for all $0\le l \le L$, we have
that
$t_{k+l}= (R_{a, b, c})^l(t_k)= t_k+ l (\lfloor c/b\rfloor+1) b, 0\le l\le L$. Thus
$l (\lfloor c/b\rfloor+1) b\in (-\epsilon, \epsilon)+a\Z$ for all $0\le l\le L$, which is a contradiction
as $L$ can be chosen sufficiently large and $a/b\not\in \Q$.
\end{proof}

\end{appendix}

\bigskip
{\bf Acknowledgement}:\
The authors would like to thank Professors Akram Aldroubi, Hans Feichtinger, Deguang Han and Charles Micchelli for their remarks and suggestions.
The  project  is partially  supported by the National Science Foundation of China (No. 10871180),
NSFC-NSF (No. 10911120394), and the National Science Foundation (DMS-1109063).

\end{document}